\documentclass[a4paper,UKenglish,cleveref, autoref, thm-restate]{lipics-v2021}

\title{The Complexity of Nested Reset Counter Systems} 

\author{A. R. Balasubramanian}{Max Planck Institute for Software Systems (MPI-SWS), Kaiserslautern, Germany \and \url{https://arbalan96.github.io/} }{bayikudi@mpi-sws.org}{https://orcid.org/0000-0002-7258-5445}{This research was sponsored in part by the Deutsche Forschungsgemeinschaft project \href{https://gepris.dfg.de/gepris/projekt/389792660}{389792660} TRR 248–CPEC}

\author{Franzisco Schmidt}{Technical University of Munich (TUM), Munich, Germany}{franzisco.schmidt@tum.de}{}{}

\authorrunning{A. R. Balasubramanian and F. Schmidt} 

\Copyright{A. R. Balasubramanian and Franzisco Schmidt} 

\ccsdesc[500]{Theory of computation~Problems, reductions and completeness}
\ccsdesc[500]{Theory of computation~Models of computation}

\keywords{Nested counter systems, Fast-growing hierarchy, Complexity} 

\category{} 

\relatedversion{} 

\nolinenumbers 

\EventEditors{Claudia Faggian and Joost-Pieter Katoen}
\EventNoEds{2}
\EventLongTitle{41st Annual Symposium on Logic in Computer Science (LICS 2026)}
\EventShortTitle{LICS 2026}
\EventAcronym{LICS}
\EventYear{2026}
\EventDate{July 20--23, 2026}
\EventLocation{Lisbon, Portugal}
\EventLogo{}
\SeriesVolume{380}
\ArticleNo{60}

\usepackage{todonotes}
\usepackage{diagbox}
\usepackage{comment}
\usepackage{amssymb}
\usepackage{tikz}
\usepackage{stmaryrd}
\usepackage{mathtools}
\usepackage{mathalpha}
\usepackage{extarrows}
\usepackage{thmtools}
\usepackage{pifont}
\usepackage{tikz}
\usepackage{tcolorbox}
\usepackage{mdframed}
\usetikzlibrary{arrows.meta, bending, positioning}
\input{macros}

\begin{document}

\maketitle

\begin{abstract}
 Nested counter systems (NCS) are a generalization of counter systems to higher-order counters. Here, a higher-order counter is allowed to have other (lower-order) counters as elements, instead of  just a number. Such systems can be viewed as working on trees, where the height of the tree naturally corresponds to the highest order counter that the system is working with. It is known that the coverability problem for NCS, which asks if a given final tree can be covered from a given initial tree, is $\mathbf{F}_{\epsilon_0}$-complete. Here $\mathbf{F}_{\epsilon_0}$ is a class in the fast-growing hierarchy of complexity classes.

In this paper, we consider an extension of NCS called nested reset counter systems (NRCS) that extends NCS with resets. We show that coverability for NRCS over order-$k$ counters is $\mathbf{F}_{\Omega_k}$-complete where $\Omega_k$ is the tower of height $k$ of the $\omega$ ordinal. This gives the first natural hierarchy of complete problems for all of these classes.
Furthermore, to prove our upper bounds, we also develop length function theorems for any fixed amount of applications of the multiset operation on finite sets.

As an application of our results, we improve existing upper bounds for various problems from XML processing, graph transformation systems, $\pi$-calculus, logic and parameterized verification. Furthermore, using our completeness results for $k$-NRCS, we also prove $\mathbf{F}_{\Omega_k}$-completeness of the considered problems from the realms of parameterized verification and logic, for all $k$.
\end{abstract}

\section{Introduction}

The last decade has seen the rise of a significant research program in theoretical computer science, namely understanding the precise complexity of problems which are \emph{non-elementary}, i.e., problems whose running times cannot be bounded by any fixed tower of exponentials of the input size. This has resulted in the development of a rich collection of mathematical techniques in order to prove both upper and lower bounds for various decision problems.
Examples of such problems include reachability for Petri nets~\cite{LerouxS19,Leroux21,CzerwinskiO21}, coverability for lossy counter machines and lossy channel systems~\cite{SchmitzS11,FigueiraFSS11,Schmitz19}, emptiness of alternating 1-register and 1-clock automata~\cite{LasotaW08,DemriL09}, coverability for unordered data nets, $\nu$-Petri nets and various other enriched nets~\cite{Rosa-Velardo17,LazicS16,HaddadSS12}, analysis problems for fragments of $\pi$-calculus~\cite{Meyer08,Balasubramanian22,AmadioM02}, broadcast networks~\cite{Balasubramanian21,DelzannoSZ10}, bisimulation equivalence of pushdown automata~\cite{JancarS19}, rewriting systems~\cite{Hague14,GenestMSZ08} 
and logics~\cite{lics/Balasubramanian21,GreatiR24,GreatiR25,LazicS15,LazicOW16}. 
We refer the reader to the excellent survey by Schmitz~\cite{Schmitz16hierarchies}
for a wide collection of such problems from automata theory, logic, formal languages and verification, which have been proven to be complete for various classes in the \emph{fast-growing hierarchy} of complexity classes. This hierarchy allows for a finer classification of the complexity of problems lying beyond the elementary realm.

When such results are viewed from a tractability perspective, they are naturally negative. However, what we get out of them are fundamental insights that allow us to transfer techniques across a wide range of domains. Furthermore, there are various non-elementary problems for which tools have been developed (such as MONA for solving WS1S~\cite{ElgaardKM98}) and considerable effort has been spent in developing approximations for well-studied and important non-elementary problems such as the Petri net reachability problem~\cite{Blondin20,BlondinH17,BlondinHO21,BlondinFHH16,FracaH15,HaaseH14,EsparzaLMMN14}. This implies that understanding the precise complexity of a problem can help us solve it by reducing it to other well-studied problems from different fields. 

In this spirit, one of the most fundamental tools that allow us to understand the complexity of problems is the notion of \emph{completeness}. Indeed, once we have proven a problem to be complete for a certain class, we can now use this problem to prove hardness results for problems of interest to us, instead of starting from scratch, i.e., from Turing machines or Minsky machines. (A similar argument applies for obtaining upper bounds as well). Completeness is such a fundamental notion of complexity classes, that once a complexity class has been introduced, almost always ``natural'' complete problems for it are sought, i.e., complete problems that are not defined explicitly by Turing machines. 
For instance, in their invited paper for CONCUR 2013~\cite{SchmitzS13},
Schmitz and Schnoebelen explicitly state the program of investigating ``master'' complete problems for classes in the fast-growing hierarchy. Various results have been proven in the context of this program, allowing the complexity classes along this hierarchy to be populated with a wide variety of problems~\cite{Schmitz16hierarchies}.

This paper is a contribution to this line of research. We consider an infinite hierarchy of complexity classes and produce the first natural hierarchy of complete problems for all of these classes. More precisely, for every fixed exponential tower of the ordinal $\omega$ of height $k$ (hereafter denoted by $\Omega_k$), we provide a complete problem for the complexity class $\fastf{\Omega_k}$. Some of the milestones along this hierarchy include $\fastf{\Omega_1} = \fastf{\omega}$, also called the Ackermann class and $\fastf{\Omega_2} = \fastf{\omega^\omega}$,
also called the Hyper-Ackermann class. Prior to our results, (natural) complete problems were known only for the first 3 levels of this hierarchy. We generalize these results and provide complete problems \emph{for all of these levels}. 

The specific hierarchy of problems that we prove to be complete are inspired by a model of computation working over \emph{higher-order counters}. Intuitively, an 1-order counter is a usual counter over the natural numbers, which we can either increment or decrement by a constant. A 2-order counter is a counter in which we can increment or decrement 1-order counters. A 3-order counter similarly then works over a 2-order counter and so on. Such systems can be naturally viewed as labelled trees, where a tree of height 1 can model a collection of 1-order counters (The root models the state of the machine
and the number of  children of the root with label $\ell$ denotes the value of the counter $\ell$). A tree of height 2 can similarly model a collection of 2-order counters and so on. Such a model is called a nested counter system, which has been considered in~\cite{DeckerT16}.

In this paper, we extend this model to also allow \emph{reset operations} to the counters. For instance, over 1-order counters, this would simply correspond to resetting the value of a counter to 0, giving rise to the equivalent model of reset Petri nets~\cite{Schnoebelen10}. The resulting systems that we get out of  adding resets to nested counter systems are called \emph{nested reset counter systems} (NRCS). The main result of this paper is to show that the coverability problem for $k$-dimensional nested reset counter systems, is $\fastf{\Omega_k}$-complete for any $k$. This generalizes the known result for $k = 1$~\cite[Section 6.1.2]{Schmitz16hierarchies}.

In order to prove our upper bound results, we develop \emph{length function} bounds for \emph{nested multiset well-quasi-orders}. Intuitively, a nested multiset well-quasi-order is obtained by starting from the empty set and then closing it under disjoint sums as well as applications of any fixed number of the multiset operation on well-quasi-orders. As one of our main contributions, we provide upper bounds on the so-called \emph{controlled bad sequences} for such well-quasi-orders.

We demonstrate the applicability of our results and the NRCS model in understanding the complexity of other problems, by providing various applications: As a first step, we use the upper bound techniques developed in this paper to improve existing upper bounds in the literature for a hierarchy of problems from various fields such as XML processing~\cite{GenestMSZ08}, graph transformation systems~\cite{KoenigS}, $\pi$-calculus~\cite{Meyer08}, parameterized verification~\cite{DelzannoSZ10} and logic~\cite{DeckerT16}. Furthermore, we use our completeness result for NRCS to show that the hierarchy of problems from parameterized verification and logic are also complete for $\fastf{\Omega_k}$ for each $k$, thereby adding  more natural problems for all of these complexity classes.
This demonstrates the efficacy of NRCS to serve as master problems for all of these classes, thereby contributing significantly to the program of understanding decision problems beyond the elementary regime. 

\subparagraph{Related work.} The closest work to ours is~\cite{DeckerT16}, where the authors proved that the coverability problem for $k$-dimensional nested counter systems (without resets!) is $\fastf{\Omega_{k-1}}$-hard and in $\fastf{\Omega_{2k}}$. Our result significantly improves upon their upper bound by bringing it down to $\fastf{\Omega_k}$. 

Length function theorems for various well-quasi-orders are known in the literature - for instance for the product ordering over the natural numbers~\cite{FigueiraFSS11}, for the subword ordering~\cite{SchmitzS11}, for powersets and multiset of natural numbers~\cite{Rosa-Velardo17}, for the priority subword ordering~\cite{PriorityChannelSystems} and for multisets of words~\cite{GreatiR25}. However, none of those results except for the priority subword ordering are applicable in our setting, because they can only be used on problems lying in classes below $\fastf{\Omega_4}$.
The priority subword ordering is a general tool that can be used to place a problem in $\fastf{\epsilon_0}$ (roughly speaking, $\epsilon_0$ is the limit of $\Omega_1,\Omega_2,\cdots$); however, since tight bounds are not known for it, we cannot use them in our setting here.

As mentioned before, many results are known regarding complete problems
for the first three levels $\fastf{\Omega_1}, \fastf{\Omega_2}$ and $ \fastf{\Omega_3}$ of the hierarchy and also for classes below $\fastf{\Omega_1}$, as well as for classes between
$\fastf{\Omega_1}$ and $\fastf{\Omega_2}$. However, prior to our results, no natural complete problems
were known for $\fastf{\Omega_k}$ for any $k > 3$. Curiously, however, we do know of complete problems for 
$\fastf{\epsilon_0}$~\cite{PriorityChannelSystems,DeckerT16,Balasubramanian22}. But a finer understanding of the classes in between $\fastf{\Omega_3}$ and $\fastf{\epsilon_0}$
was missing prior to our work.
As we demonstrate, this is not due to a lack of interesting problems - any fixed problem in the hierarchy of problems mentioned above in parameterized verification, logic and other fields fall between $\fastf{\Omega_3}$ and $\fastf{\epsilon_0}$.

\subparagraph{Our techniques.} As mentioned above, there are no known natural complete
problems for $\fastf{\Omega_k}$ for any $k > 3$. Hence, our lower bounds are from first principles, i.e., proven by reductions from bounded counter machines. To this end, we show that $k$-NRCS can be used to perform \emph{weak computations} of Hardy functions and their inverses - this allows us to build a very large but ultimately bounded counter
which we can use to simulated bounded counter machines. 

Similar techniques have been used for proving lower bounds for other problems in the literature; for instance, the ideas closest to ours are from~\cite{DeckerT16} for proving hardness of $k$-NCS (note the absence of resets). However, the hardness result was only for 
the class $\fastf{\Omega_{k-1}}$. Furthermore, their construction seems to contain a subtle flaw: To perform (weak) computations of Hardy functions, the underlying machines need to be able to represent and manipulate ordinals. One important operation that would be necessary to implement such computations would be to (lossily) identify the smallest term in the CNF of a represented ordinal. The construction given in~\cite{DeckerT16} seems to use the \emph{structural ordering} on ordinals in order to isolate the smallest term, i.e., when it compares two terms, it is only guaranteed to return a term that is smaller than both in the structural ordering. 
This indicates that there will be no reliable run of the Hardy computation, when two terms are incomparable under the structural ordering (for instance, when comparing $\omega^\omega$ and $\omega^2$, the construction in~\cite{DeckerT16} seems to always return some ordinal below $\omega$, whereas it must allow to return any ordinal below $\omega^2$). To overcome this, we develop novel gadgets that allow us to compare two different terms in the representation of an ordinal and (lossily) pick the smaller one among them according to the usual ordering. This requires various new ideas that allows to construct such ``comparator'' gadgets for ordinals below $\Omega_k$, in an inductive fashion. Furthermore, since a $k$-NRCS can be easily simulated by a $2k$-NCS, this also allows to show a $\fastf{\Omega_{k}}$-hardness result for coverability of $2k$-NCS, thereby correcting the proof in~\cite{DeckerT16} and 
recovering the $\fastf{\epsilon_0}$-completeness result of~\cite{DeckerT16}.

Turning to upper bounds, we use the framework of well-structured transition systems
to show that coverability for $k$-NRCS is decidable and bound its running time by length function theorems for normed well-quasi-orders (nwqos). To estimate the latter, we interpret trees of height $k$ as $k$-nested applications of the multiset operator to a finite set. Then, using the framework developed by Schmitz, Schnoebelen and others~\cite{FigueiraFSS11,SchmitzS11,Rosa-Velardo17,GreatiR25}, we develop length function theorems for such nested multiset nwqos, which allows us to prove that coverability of $k$-NRCS is in $\fastf{\Omega_k}$. While~\cite{Rosa-Velardo17} provides such a length function theorem for $2$-nested multiset nwqos, we show how to generalize the insights there to all $k$, by designing the right operators for defining $k$-nested multiset nwqos. 

Finally, we show that the upper bound techniques developed here are of wider applicability beyond just $k$-NRCS. We use it to reduce the current upper bounds for hierarchies of problems from XML processing, graph transformation systems and $\pi$-calculus. Combining these techniques with the lower bound of $k$-NRCS also allow us to prove other complete problems from the fields of parameterized verification and logic for all $\fastf{\Omega_k}$.
We believe this is evidence that the model of NRCS is well-suited to understand complexity classes in the fast-growing hierarchy. 

The rest of this paper is structured as follows: In Section~\ref{sec:prelims} we introduce the model of $k$-NRCS. In Section~\ref{sec:lower-bound}, we prove the $\fastf{\Omega_k}$ lower bound for coverability of $k$-NRCS. In Section~\ref{sec:upper-bounds}, we provide the corresponding tight upper bounds. In Section~\ref{sec:applications}, we present the applications of $k$-NRCS to other models from the literature. We then conclude in Section~\ref{sec:conclusion}.

\section{Preliminaries}\label{sec:prelims}

In this section, we introduce our model of nested reset counter systems (NRCS) and state our main result regarding them. We begin by defining NRCS.

\subsection{Nested Reset Counter Systems}

Nested Reset Counter Systems (NRCS) are a generalization of usual counter systems
to \emph{higher-order counters}, i.e., counters which can themselves contain other (lower-order) counters.
The essential idea is that an 1-order counter is a usual counter over the natural numbers, which can either add or subtract 1. A 2-order counter is a counter which can add or subtract 1-order counters and so on. As mentioned in the Introduction, a system with $k$-order counters can be readily interpreted as labelled trees of height $k$. Addition and subtraction operations on the counters then correspond to 
creating and destroying nodes in the respective tree.

Viewed this way, an NRCS is similar to the model of nested counter systems (NCS)~\cite[Section 3]{DeckerT16} (which itself is similar to the model of nested Petri nets~\cite{LomazovaS99}). However, in the NCS model, only addition and subtraction operations are allowed. An NRCS extends an NCS by also allowing (higher-order) counters to be reset. We now proceed to formally define the syntax and semantics of an NRCS.

A $k$-NRCS is a tuple $\net = (Q,\delta_{u},\delta_r)$ where $Q$ is a finite set of \emph{states}, $\delta_u \subseteq \bigcup_{1 \le i,j \le k+1} (Q^i \times Q^j)$ is a finite
set of \emph{update transitions} and $\delta_r \subseteq \bigcup_{1 \le i \le k} (Q^i \times Q \times Q^i)$ is a finite set of \emph{reset transitions}. The union of both $\delta_u$ and $\delta_r$ will be called the set of \emph{transitions} of $\net$.
For ease of notation,
we will depict an update transition $((p_0,\dots,p_i),(q_0,\dots,q_j))$
as $(p_0,\dots,p_i) \xdashrightarrow{u} (q_0,\dots,q_j)$.
Similarly, we will depict a reset transition $((p_0,\dots,p_i),(p),(q_0,\dots,q_i))$
as $(p_0,\dots,p_i) \xdashrightarrow{r, p} (q_0,\dots,q_i)$.

The set $\configs$ of configurations of $\mathcal{N}$ is defined as the set of all finite, rooted, unordered labelled trees of height at most $k$, with labels from the set $Q$. (Throughout this paper, we will refer to a node at level $i$ in a tree if it is at distance $i$ from the root).
The operational semantics of $\mathcal{N}$ is defined in terms of a step relation $\act{} \subseteq \configs \times \configs$ on its configurations. This relation $\act{}$ will be based on the transitions of $\mathcal{N}$, which we distinguish into three cases.

Suppose $t = (p_0,\dots,p_i) \xdashrightarrow{u} (q_0,\dots,q_j)$ is an update transition
such that $i \le j \le k$ (If $i = j$ then $t$ is a \emph{renaming transition} and if $i < j$, then $t$ is an \emph{incrementing transition}). We say that there is a step between a configuration $C$ and
a configuration $C'$ by means of the transition $t$ (denoted by $C \act{t} C'$), if there
is a path $v_0,\dots,v_i$ in $C$ starting at the root such that for each $0 \le \ell \le i$, $v_\ell$ is labelled by $p_\ell$ and $C'$ is obtained from $C$ by 1) for every $0 \le \ell \le i$, changing
the label of $v_\ell$ to $q_\ell$ and 2) for every $i+1 \le \ell \le j$, creating a new node
$v_\ell$ with label $q_\ell$ and attaching it as a child to $v_{\ell-1}$.

Suppose $t = (p_0,\dots,p_i) \xdashrightarrow{u} (q_0,\dots,q_j)$ is an update transition
such that $j < i \le k$ (We say that $t$ is a \emph{decrementing transition}). Then $C \act{t} C'$, if there
is a path $v_0,\dots,v_i$ in $C$ starting at the root such that for each $0 \le \ell \le i$, $v_\ell$ is labelled by $p_\ell$ and $C'$ is obtained from $C$ by 1) for every $0 \le \ell \le j$, changing
the label of $v_\ell$ to $q_\ell$ and 2) removing the subtree rooted at $v_{j+1}$.

Finally, suppose $t = (p_0,\dots,p_i) \xdashrightarrow{r, p} (q_0,\dots,q_i)$ is a reset transition with $0 \le i < k$. Then $C \act{t} C'$, if there
is a path $v_0,\dots,v_i$ in $C$ starting at the root such that for each $0 \le \ell \le i$, $v_\ell$ is labelled by $p_\ell$ and $C'$ is obtained from $C$ by 1) for every $0 \le \ell \le i$, changing
the label of $v_\ell$ to $q_\ell$ and 2) removing every subtree rooted at a child $v$ of $v_i$ such that $v$ is labelled by $p$.

We say that $C \act{} C'$ if $C \act{t} C'$ for some transition $t$. We say that $C \act{*} C'$ if there is a sequence of steps $C \act{t_1} C_1 \act{t_2} C_2 \dots C_{n-1} \act{t_n} C'$ for some $t_1,t_2,\dots,t_n$ and some $C_1,\dots,C_n$. If $C \act{*} C'$ holds, we say that $C$ can reach $C'$. Finally, we say that a $k$-NRCS $\net$ is a $k$-NCS, if its set of reset transitions is empty.

\begin{figure}
\begin{center}
  

\begin{tikzpicture}[
  scale=0.9,
  transform shape,
  level distance=1.6em,
  sibling distance=1.4em,
  every node/.style={inner sep=1.5pt, outer sep=0pt},
  >=stealth,
  x=0.25cm
]

\node (T1) at (0,0) {$q_0$}
  child { node {$q_1$} 
    child { node {$q_3$}}
  }
  child { node {$q_2$} }
  child { node {$q_1$}
    child { node {$q_2$} }
    child { node {$q_2$} }
  };

\node[right=4 of T1] (L1) {$\xrightarrow{t_1}$};

\node[right=4 of L1] (T2) {$q_1$}
  child { node {$q_1$} 
    child {node {$q_3$}}
  }
  child { node {$q_2$} }
  ;

\node[right=4 of T2] (L2) {$\xrightarrow{t_2}$};

\node[right=4 of L2] (T3) {$q_0$}
  child { node {$q_1$}
    child { node {$q_3$} }
  }
  child { node {$q_2$} }
  child { node {$q_1$} 
    child {node {$q_2$}}
  }
  ;

\node[right=4 of T3] (L3) {$\xrightarrow{t_3}$};

\node[right=4 of L3] (T4) {$q_3$}
  child { node {$q_2$} };

\end{tikzpicture}
\end{center}
\caption{A run of the machine from Example~\ref{example:NCS}}
\label{fig:ncs_examp}
\end{figure}

\begin{example}\label{example:NCS}
    Consider a 2-NRCS with states $\{q_0,q_1,q_2,q_3\}$ and transitions given by 
    $t_1 := (q_0,q_1) \xdashrightarrow{u} (q_1), t_2: = (q_1) \xdashrightarrow{u} (q_0,q_1,q_2)$ and $t_3 := (q_0) \xdashrightarrow{r, q_1} (q_3)$. A run
    of this machine is given in Figure~\ref{fig:ncs_examp}.
\end{example}

To state the problem of interest to us, we need to set up some notation.
We say that a configuration $C$ is smaller than another configuration $C'$,
denoted by $C \le_{is} C'$ if $C$ is an \emph{induced subgraph} of $C'$, i.e.,
there exists an injection $h$ from the nodes of $C$ to the nodes of $C'$
such that 1) $h$ maps the root of $C$ to the root of $C'$, 2) for every node $u \in C$, the labels of $u$ and $h(u)$ are the same and 3) for any two nodes $u,v \in C$,  $(u,v)$ is an edge in $C$ iff $(h(u),h(v))$ is an edge in $C'$.
It can be easily seen that $C \le_{is} C'$ iff $C$ can be obtained from $C'$
by deleting some subtrees in $C'$.

We say that a configuration $C$ can cover another configuration $D$ if $C$ can reach a configuration $C'$ that is bigger than $D$ i.e., there exists
$C'$ such that $C \act{*} C'$ and $C' \ge_{is} D$. We are now ready to state the main problem that we shall consider in this paper, namely the \emph{coverability problem for $k$-NRCS}

\begin{mdframed}
{\centering \textbf{Coverability Problem for $k$-NRCS}\\[1.0ex]}
	\noindent Given: A $k$-NRCS $\net$ and two configurations $C, C'$\\[1.0ex]
	Decide: Whether $C$ can cover $C'$
\end{mdframed}
 
The main result of this paper is to provide tight complexity results for this problem, for every $k$. To that end, we first introduce the complexity classes that we shall require, namely the \emph{fast-growing complexity classes}.

\subsection{Fast-growing Complexity Classes}

The fast-growing hierarchy of complexity classes are defined using a hierarchy of functions indexed by ordinals. To that end, we recall some basic facts about ordinals.

\subparagraph*{Ordinal Terms.} Throughout this paper, an \emph{ordinal term} (or simply ordinal) is any term that can be obtained by the following grammar 
$$  \alpha := 0 \  \ | \ \alpha + \alpha \ | \ \omega^\alpha $$
where addition, i.e., the $+$ operator, is associative with 0 as the neutral element. We often use 1 for the term $\omega^0$ and $\omega$ for $\omega^1$.
The set of ordinals can be ordered by a linear order $<$ defined as follows: For any two ordinals $\alpha, \alpha'$ we have $\alpha < \alpha'$ iff either 1) $\alpha = 0$ and $\alpha' \neq 0$ or 2) $\alpha = \omega^\beta + \gamma, \alpha' = \omega^{\beta'} + \gamma'$ and $\beta < \beta'$ or 3)  $\alpha = \omega^\beta + \gamma, \alpha' = \omega^{\beta'} + \gamma', \beta = \beta'$ and $\gamma < \gamma'$. 

An ordinal $\alpha$ is in \emph{Cantor Normal Form} (CNF) if $\alpha = \omega^{\alpha_1} + \omega^{\alpha_2} + \cdots + \omega^{\alpha_n}$ where $\alpha_1 \ge \alpha_2 \ge \cdots \ge \alpha_n$ are ordinals that are themselves in CNF. 
The terms $\omega^{\alpha_1}$ and $\omega^{\alpha_n}$ are respectively called the
biggest and smallest terms in the CNF of $\alpha$.
If $\alpha_n = 0$, then $\alpha$ is a \emph{successor ordinal} and if $\alpha_n > 0$, then it is a \emph{limit ordinal}. We will usually use $\lambda$ to denote limit ordinals throughout this paper. 
Note that, in the CNF notation, $0$ is the empty sum, and each number $c \in \mathbb{N}$ is simply $\sum_{i=1}^c \omega^0 = \sum_{i=1}^c 1$.

For $c \in \mathbb{N}$, let $\omega^{\beta} \cdot c$ denote $\sum_{i=1}^c \omega^\beta$. We sometimes write ordinals in a \emph{strict form} as $\alpha = \omega^{\beta_1} \cdot c_1 + \omega^{\beta_2} \cdot c_2 + \cdots + \omega^{\beta_m} \cdot c_m$ where $\beta_1 > \beta_2 > \cdots > \beta_m$ and the \emph{coefficients} $c_i$ are all strictly positive. We can use CNF (or its strict form) to represent any ordinal less than $\epsilon_0$ (where $\epsilon_0$ is the supremum of $\omega, \omega^\omega, \omega^{\omega^{\omega}}, \dots$). 

We end this introduction to ordinals, by defining the \emph{natural sum operator} $\oplus$ as follows:
$(\sum_{i=1}^n \omega^{\alpha_i}) \oplus (\sum_{j=1}^m \omega^{\beta_j}) = \sum_{\ell = 1}^{n+m} \omega^{\gamma_\ell}$
where $\gamma_1 \ge \gamma_2 \ge \cdots \ge \gamma_{m+n}$ is a rearrangement of $\alpha_1,\cdots \alpha_n$ and $\beta_1,\cdots,\beta_m$.

\subparagraph*{Hardy, Cicho\'n and the Fast-Growing Hierarchies.} Using ordinals, we can now describe the required hierarchy of functions and its complexity classes.
First, we introduce a definition.

A \emph{fundamental sequence for a limit ordinal} $\lambda$ is a sequence of ordinals
$(\lambda_x)_{x \in \mathbb{N}}$ whose supremum is $\lambda$. We fix this sequence as follows:
$$(\gamma + \omega^{\beta+1})_x = \gamma + \omega^{\beta} \cdot x, \qquad  (\gamma + \omega^{\lambda})_x = \gamma + \omega^{\lambda_x}$$

Using fundamental sequences, we can now define a hierarchy of functions as follows. Let $h : \mathbb{N} \to \mathbb{N}$ be a strictly increasing function. The \emph{Hardy hierarchy} for $h$ is given by a hierarchy of functions $(h^\alpha)_{\alpha < \epsilon_0}$ as 
$$h^0(x) = x, \qquad h^{\alpha+1}(x) = h^{\alpha}(h(x)), \qquad h^\lambda(x) = h^{\lambda_x}(x)$$

Similarly the Cicho\'n hierarchy for $h$ is given by 
$$h_0(x) = 0, \qquad h_{\alpha+1}(x) = 1+ h_{\alpha}(h(x)), \qquad h_\lambda(x) = h_{\lambda_x}(x)$$

Finally, the \emph{fast-growing hierarchy} for $h$ is given by
$$f_{h,0}(x) = h(x), \qquad f_{h, \alpha+1}(x) = f^{x}_{h,\alpha}(x), \qquad f_{h, \lambda}(x) = f_{h,\lambda_x}(x)$$
where $f^{x}_{h,\alpha}$ denotes the $x$-fold composition of $f_{h,\alpha}$ with itself.

We will use $H^\alpha, H_\alpha$ and $F_\alpha$ to denote the Hardy, Cicho\'n and fast-growing hierarchies that we get when $h$ is the successor function. It is known that $F_2$ grows like an exponential function, $F_3$ grows much faster than any fixed tower of exponentials
and $F_\omega$ grows roughly like the Ackermann function.

\subparagraph*{Fast-growing complexity classes} We now use the fast-growing hierarchy defined above to define our fast-growing complexity classes. Recall the functions $F_\alpha$ obtained by instantiating the fast-growing hierarchy on the successor function.
Using this, we are now going to define two hierarchies of complexity classes.

The first one is a complexity class of functions, called the extended Grzegorczyk hierarchy. More precisely, for any ordinal $\alpha > 2$, we set $\mathscr{F}_{< \alpha}$ as the collection of all functions from $\mathbb{N}$ to $\mathbb{N}$ that can be computed by a deterministic Turing machine running in time $F_\beta(n)$ for some $\beta < \alpha$. Intuitively, $\mathscr{F}_{< \alpha}$ contains
functions that grow strictly slower than $F_{\alpha}(n)$. Of special interest to us
are the functions from $\cup_{n < \omega} \ \mathscr{F}_n$, which are called the class
of \emph{primitive recursive functions}.

The second one is a complexity class of decision problems, called the \emph{fast-growing complexity classes}. More precisely,
for any ordinal $\alpha > 2$, we set $\mathbf{F}_\alpha$ as the collection
of decision problems solvable by a deterministic Turing machine running in time 
$F_\alpha(p(n))$ for some $p \in \mathscr{F}_{< \alpha}$. Intuitively, $\mathbf{F}_\alpha$
contains all decision problems that can be solved in time bounded by $F_\alpha$ composed with some `slower' function $p(n) \in \mathscr{F}_{< \alpha}$. This composition of $F_\alpha$ with a `slower' function is required
in order to get $\mathbf{F}_\alpha$-completeness through $\mathscr{F}_{< \alpha}$-reductions.

Having defined the fast-growing complexity classes, we are now in a position to state the main result of this paper. For any $k$, let $\Omega_k$ denote the ordinal obtained by a tower of exponents of $\omega$ of height $k$, i.e., $\Omega_1 = \omega$ and $\Omega_{k+1} = \omega^{\Omega_k}$. Our main result is that

\begin{theorem}
    For any $k \ge 1$, the coverability problem for $k$-NRCS is $\mathbf{F}_{\Omega_k}$-complete.
\end{theorem}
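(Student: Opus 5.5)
The proof has two independent halves: an upper bound via well-structured transition systems and length function theorems, and a lower bound via weak computation of Hardy functions.

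\textbf{Upper bound.} First I will show that $k$-NRCS, ordered by $\le_{is}$, form a well-structured transition system. Trees of height at most $k$ with labels in $Q$, under induced-subgraph embedding that preserves the root, are exactly the nested multisets $M_k$ given by
\[
M_0 = Q, \qquad M_{i+1} = Q \times \mathcal{M}(M_i),
\]
where $\mathcal{M}$ is the multiset wqo. Every transition is monotone with respect to this order: update, decrement and reset transitions only read a root path and act locally, and extra siblings are carried along. Coverability is then decided by the backward algorithm that computes a finite basis of the upward-closed set of configurations that can cover $C'$. Its running time is bounded by the length of controlled bad sequences in $M_k$, with an elementary (even linear) control function given by the norm of pre-images.

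The core of this half is a length function theorem for nested multiset nwqos. I would follow the Schmitz--Schnoebelen framework. Assign to each nwqo an ordinal type (its maximal order type) together with a derivative, or residual, operator. Then prove a descent inequality of the form
\[
L_{g,A}(n) \le 1 + L_{g,\partial_n A}(g(n)),
\]
and bound the derivatives of $\mathcal{M}(A)$ in terms of derivatives of $A$, generalizing Rosa-Velardo's two-level analysis. Since the order type of $\mathcal{M}(A)$ is $\omega^{o(A)}$ (with suitable adjustments), $M_k$ has type below $\Omega_{k+1}$. The reflection must be arranged so that the resulting Cichoń function $g_{\alpha}$ uses some $\alpha < \Omega_{k}\cdot c$, or more precisely lands in the $F_{\Omega_k}$ range. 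The point to get right is that one level of nesting sits inside the finite label set, so only $k$ genuine multiset layers contribute a tower of height $k$. The standard lemma that $h_\alpha$ for $\alpha < \omega^{\Omega_k}$ with primitive-recursive $h$ is bounded by $F_{\Omega_k}\circ p$ then places the problem in $\mathbf{F}_{\Omega_k}$. I expect the main technical work here to be designing the derivative operators for nested multisets so that the ordinal bounds do not blow up by an extra exponent; the naive bound gives roughly $\Omega_{k+1}$.

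\textbf{Lower bound.} I would reduce from the $\mathbf{F}_{\Omega_k}$-complete problem of halting for Minsky machines whose counters are bounded by $F_{\Omega_k}(n)$. The plan has three steps. First, encode ordinals below $\Omega_{k+1}$ in CNF as trees of height $k$, where a node's children represent the exponent terms and leaves count multiplicities. Second, build gadgets that weakly compute $H^{\alpha}(x)$: the system may only produce values at most the true value, resets provide the lossiness, and the exact value is attained along some run. A matching gadget weakly computes $(H^{\alpha})^{-1}$. Third, run the Minsky machine on the resulting large budget and then uncompute; covering the initial ordinal configuration certifies that no loss occurred, in the standard Hardy-budget style.

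The key obstacle is implementing the step $\alpha \mapsto \alpha_x$, which requires lossily isolating the smallest CNF term. This is the flaw noted in the related work. I would handle it by induction on $k$ with comparator gadgets: to compare two exponent subtrees, recursively compare their largest terms level by level, using resets to discard the larger candidate. Any error must only ever decrease the represented ordinal, so that robustness with respect to the ordinal order, rather than the structural order, is maintained. Combining both halves gives $\mathbf{F}_{\Omega_k}$-completeness.
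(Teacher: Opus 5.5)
Your architecture matches the paper's on both halves: for membership, a WSTS under $\le_{is}$, trees read as nested multisets, and a Schmitz--Schnoebelen descent/reflection/derivative analysis; for hardness, a Hardy-budget reduction with weak forward and backward Hardy computations and inductively built comparator gadgets. Each half, however, has a real problem.

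\textbf{Upper bound.} Your ordinal bookkeeping is off. $\Omega_{k+1}$ is not an ``extra exponent'' to be engineered away; it is exactly the right index. Height-$k$ trees over $Q$ have order type roughly a tower of $k$ $\omega$'s topped by $|Q|$, hence below $\Omega_{k+1}$. The paper proves $L_{\mathcal{M}_0,g}(x)\le h_{\Omega_{k+1}}(|Q|+|Q|xk)$ with $h(x)=x\cdot g(x)$. It then uses $h_{\Omega_{k+1}}\le f_{h,\Omega_k}$, the usual shift from Cicho\'n index $\omega^{\alpha}$ to fast-growing index $\alpha$, which is precisely your own closing lemma for $\alpha<\omega^{\Omega_k}$. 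Your stated target $\alpha<\Omega_k\cdot c$ is unattainable, since it would put coverability strictly below $\mathbf{F}_{\Omega_k}$. For $k=1$, $h_{\omega\cdot c}$ is primitive recursive, contradicting Ackermann-hardness of reset Petri nets.

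The actual work lies elsewhere:
\begin{itemize}
\item a reflection $A/a\hookrightarrow R_n(A,a)$ for residuals, whose doubly nested multiset case generalizes Rosa-Velardo's identity;
\item a derivative $D_n$ built to match it, so that $R_n(A,a)\hookrightarrow C(\alpha')$ for some $\alpha'\in\delta_n(o(A))$;
\item a leanness estimate (coefficients stay at most $\ell+\ell nk$) guaranteeing $h_{\alpha'}\le h_{P_x(\alpha)}$, so that the descent folds into the Cicho\'n recursion.
\end{itemize}
That norm control is absent from your sketch, and it is what makes the bound go through.

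\textbf{Lower bound.} The comparator you describe (``recursively compare their largest terms level by level'') is unsound as stated, because the recursive comparison is itself lossy. After comparing the largest terms $w_A,w_B$ of $v_A,v_B$, these may have shrunk and no longer be the largest terms of their parents. The verdict then says nothing about $v_A$ versus $v_B$. Similarly, a largest-term search that keeps the winner of successive comparisons breaks, because the winner may shrink below a sibling it beat earlier. A smallest-term search is immune, since there the kept candidate only shrinks while rejected siblings are frozen.

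The missing ingredient is the copy gadget: a lossy DFS that duplicates branches and resets children at guessed leaves. It is used to keep a version that structurally dominates the one being tested.
\begin{itemize}
\item The biggest-child gadget copies the candidate and compares only the original against its siblings.
\item The $(k+1)$-comparator runs the $k$-comparator on copies of $w_A,w_B$ while mirroring each step on the originals. It then re-checks with a biggest-child run on the originals, deadlocking on failure, which certifies the copies.
\end{itemize}
The copy gadget is also what turns $\gamma+\omega^{\beta+1}$ into $\gamma+\omega^{\beta}\cdot n$, a step your outline does not address.

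Your error invariant is also stated in the wrong order. Losses must be subtree deletions, i.e.\ decreases in the \emph{structural} order, because soundness relies on $H$ being monotone in that order. $H$ is not monotone in the ordinal order: $n+1<\omega$ but $H^{n+1}(n)=2n+1>2n=H^{\omega}(n)$. The usual ordinal order matters only for \emph{which} term the gadgets select. Relatedly, rejected candidates must be set aside (the paper relabels them $\omega'$ and restores them), not reset away. Otherwise the exact run demanded by the first item of the Hardy theorems no longer exists.
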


\section{Lower Bound for Coverability of $k$-NRCS}\label{sec:lower-bound}

In this section, we prove an $\fastf{\Omega_k}$ lower bound for the coverability problem
for $k$-NRCS. We do this by reducing from the coverability problem for $H^{\Omega_{k+1}}$-bounded Minsky machines. A Minsky machine is a machine with some number of counters, which it can either increment, decrement or test for zero.
Intuitively, they are just like a $1$-NRCS except instead of having resets, they have zero-tests. (See Appendix~\ref{sec:appendix-Minsky-machines} for a formal semantics). They are well-known to be an equivalent model to Turing machines. The reachability problem for $H^{\Omega_{k+1}}$-bounded Minsky machines asks to decide, if a given Minsky machine with a finite set of states $Q$ can start from a given initial configuration and cover a given final configuration using runs where its counters are bounded by $H^{\Omega_{k+1}}(|Q|)$. (Recall that $H^{\Omega_{k+1}}$ is the Hardy function obtained from the successor function at ordinal $\Omega_{k+1}$). This problem is known to be $\fastf{\Omega_k}$-hard~\cite{Schmitz16hierarchies,FischerMR68}. In order to give the reduction from this problem to coverability for $k$-NRCS, we need some notation and auxiliary results, which we shall now state.



\subsection{Encoding Ordinals by Labelled Trees}\label{subsec:ordinals-to-trees}

As a first step, recall that the Hardy function $H^{\Omega_{k+1}}$ is defined inductively
in terms of ordinals below $\Omega_{k+1}$. Intuitively, this means that, if we want to simulate counters bounded by $H^{\Omega_{k+1}}(n)$ (for some $n$) in a $k$-NRCS, we need to have the ability to 
manipulate ordinals up till $\Omega_{k+1}$ using $k$-NRCS. To this end, we provide a way of encoding ordinals less than $\Omega_{k+1}$ as labelled trees of height at most $k$ in the following way.

Fix a number $\ell$ for the rest of this section and consider all ordinals  below $(\Omega_{k+1})_\ell$. (Recall that this is the $\ell^{th}$ element in the fundamental sequence for $\Omega_{k+1}$ and is just
${\left.
\begin{array}{c}
\omega^{\cdots^{\omega^{\ell}}} \\
\end{array}
\right\}
\;k \text{ stacked } \omega's})$.
To each $\alpha \le (\Omega_{k+1})_\ell$, we first associate a forest $\text{FO}_\alpha$ of height at most $k$ as follows: If $\alpha = 0$, then $\text{FO}_\alpha$ is the empty forest; Otherwise, let $\alpha = \sum_{i=1}^n \omega^{\beta_i}$ in CNF. 
Intuitively, $\text{FO}_\alpha$ will have one tree for each $\omega^{\beta_i}$, obtained by unifying all of the trees in $\text{FO}_{\beta_i}$ by a new root.
More precisely, each tree of the forest $\text{FO}_\alpha$ is obtained by taking each $\text{FO}_{\beta_i}$ and connecting all of the trees in $\text{FO}_{\beta_i}$ by a new root labelled by $\omega$. 

Note that $\text{FO}_\alpha$ only gives us a forest of height at most $k$ and not a tree of height at most $k$. To remedy this, for every ordinal $\alpha \le (\Omega_{k+1})_\ell$, we can define a tree $T'_\alpha$ which is obtained by taking all the trees in the forest $\text{FO}_\alpha$ and connecting them with a new root labelled by $\omega$. (In this notation, $T'_0$ is then the empty tree). 

However, there is still a problem, namely that the tree $T'_\alpha$ can be of height $k+1$. Hence, in case $T'_\alpha$ has height $k+1$, we need to cut down one of its levels. To this end, note that any node at level $k$ in $T'_\alpha$ can have at most $\ell$ children; indeed, if it has more than $\ell$ children, then $\alpha > (\Omega_{k+1})_\ell$, which would be a contradiction. Therefore, we can compress a level in $T'_\alpha$, by taking any node at level $k$ in $T'_\alpha$ with $0 \le j \le \ell$ children and replacing this node and its children
with a single node labelled by $\omega^j$. In this way we get a new tree $T_\alpha$ of height at most $k$. (Note that if there are no nodes at level $k$ in $T'_\alpha$, then
$T_\alpha$ is simply $T'_\alpha$). We stress that the exact labels of the nodes in all but the $k^{th}$ level of $T_\alpha$ are irrelevant - they are all $\omega$. Hence, even if we rename all the labels
except the ones from the $k^{th}$ level, we will still have a representation of the ordinal $\alpha$, i.e., given any tree constructed by renaming $T_\alpha$ in this way, we should still be able to uniquely identify the underlying ordinal $\alpha$. This holds even if we rename the labels at the $k^{th}$ level, as long as the renaming also preserves the original label, i.e., the renaming is done as a pair with both a new label as well as the old label in $\omega^0,\dots,\omega^\ell$.

\begin{figure}
    \centering
    \begin{tikzpicture}[
  level distance=1.5em,
  sibling distance=2em,
  every node/.style={inner sep=2pt}
]

\node(T'1) {$\omega$}
  child { node {$\omega$} };

\node[right = of T'1](T1) {$\omega$}
  child { node {$\omega$} };

\node[right = of T1] (T'2) {$\omega$}
  child { node {$\omega$}
    child { node {$\omega$}}
    }
  child { node {$\omega$} };

\node[right = of T'2] (T2) {$\omega$}
  child { node {$\omega^1$}
    }
  child { node {$\omega^0$} };

\end{tikzpicture}
    \caption{Trees $T'_1, T_1, T'_{\omega+1}$ and $T_{\omega+1}$, given in that order for $k = 1$ and $\ell = 2$}    
    \label{fig:ordinal-one}
\end{figure}

\begin{example}
    Fix $k = 1$ and $\ell = 2$. Let us consider the ordinals $1 = \omega^0$
    and $\omega + 1 = \omega^{\omega^0} + \omega^0$. Its trees $T'_1, T_1, T'_{\omega+1}$
    and $T_{\omega+1}$ are  given in Figure~\ref{fig:ordinal-one}.
\end{example}

Having defined this encoding of ordinals into trees, 
we now show that we can (weakly) compute the value $H^{\Omega_{k+1}}(n)$
using $k$-NRCS for any number $n$. To this end, 
we first view 
Hardy computations as a set of rewrite rules over pairs $(\alpha,n)$ with $\alpha \le (\Omega_{k+1})_\ell$ and $n \in \mathbb{N}$:
$$(\alpha+1 ,n) \rightarrow_H (\alpha,n+1) \qquad (\lambda,n) \rightarrow_H (\lambda_n,n)$$
If we let $\xrightarrow{*}_H$ denote the reflexive and transitive closure of this relation, then, by the definition of the Hardy functions, we have that $(\alpha,n) \xrightarrow{*}_H (\alpha',n')$ implies $H^{\alpha}(n) = H^{\alpha'}(n')$. 

We will now construct $k$-NRCS that can mimic these rewrite rules.
To make this more precise, first we define a tree $C_{\alpha,n}$
for any ordinal $\alpha \le (\Omega_{k+1})_\ell$ and any number $n$ as follows: 
We take the tree $T_\alpha$ and to its root, we add $n$ children, each labelled by a symbol $\#$. The core results behind our lower bound are the following two theorems.
Intuitively, they say that it is possible for $k$-NRCS to compute the Hardy function
$H^{\alpha}(n)$ for any $\alpha \le (\Omega_{k+1})_\ell$  and any $n$.
We first describe the first main theorem.

\begin{restatable}{theorem}{Hardyforward}\label{thm:Hardy-forward}
    There exists a $k$-NRCS $\mathcal{N}^{fwd}$ such that
    for any $\alpha \le (\Omega_{k+1})_\ell$ 
     \begin{itemize}
        \item If $(\alpha,n) \act{*}_H (\alpha',n')$, then 
        there exists a run from $C_{\alpha,n}$ to $C_{\alpha',n'}$ in $\mathcal{N}^{fwd}$.
        \item Suppose there exists a run in $\mathcal{N}^{fwd}$ from $C_{\alpha,n}$ that ends in a configuration $C'$ with the root labeled by $\omega$. Then $C' = C_{\alpha',n'}$ for some $\alpha' \le (\Omega_{k+1})_\ell,n' \in \mathbb{N}$ and there exists $\alpha'',n''$ such that $T_{\alpha'} \le_{is} T_{\alpha''}$ and $H^{\alpha'}(n') \le H^{\alpha''}(n'') = H^{\alpha}(n)$.
    \end{itemize}
\end{restatable}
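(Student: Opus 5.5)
We will build $\mathcal{N}^{fwd}$ as a collection of gadgets, each simulating one Hardy rewrite step. The root label $\omega$ marks a ``clean'' configuration, i.e.\ one of the form $C_{\alpha,n}$. While a step is being simulated the root carries an auxiliary control label, and the root only returns to $\omega$ after the gadget has finished.

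\textbf{Case $\alpha=\alpha'+1$.} The smallest CNF term is $\omega^0$, which is a leaf directly below the root of $T_\alpha$ when $k\ge 1$. The gadget uses a single decrementing transition to remove this leaf, followed by an incrementing transition that adds a $\#$-child to the root.

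\textbf{Case $\alpha=\gamma+\omega^{\beta}$ a limit ordinal.} The gadget has three phases.
\begin{enumerate}
\item \emph{Select the subtree.} Nondeterministically mark (by renaming) the level-1 subtree encoding $\omega^{\beta}$. A comparator gadget then checks that this term is minimal among all level-1 subtrees.
\item \emph{Rewrite it recursively.}
\begin{itemize}
\item If $\beta=\delta+1$, we must replace $\omega^{\delta+1}$ by $n$ copies of $\omega^{\delta}$. To do this, we repeatedly ``spend'' a $\#$: rename it to a marked $\#'$ and copy the subtree of $\omega^\delta$. Afterwards we rename each $\#'$ back to $\#$.
\item If $\beta$ is a limit, we descend one level and apply the same procedure to $\beta$ inside the subtree. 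At level $k$ the label $\omega^{j}$ is rewritten directly. For example, $\omega^{j}$ with $j\ge 1$ becomes the corresponding $n$ copies one level up, which needs $n\le \ell$ or a cut at $\ell$ (this is consistent with the bound $\le(\Omega_{k+1})_\ell$).
\end{itemize}
\item \emph{Clean up.} Copying a subtree cannot be done exactly: subtrees can only be deleted or created, not duplicated. We therefore copy by a lossy node-by-node traversal, which creates a fresh copy while walking the original. The walker may skip nodes, but that only produces a tree that is $\le_{is}$ the intended one. Reset transitions serve as cleanup: after a phase, a reset on a marker label discards leftover junk subtrees, so the root can return to $\omega$ only from a well-formed shape.
\end{enumerate}

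\textbf{First claim.} Each $\rightarrow_H$ step is realised by the perfect run of the corresponding gadget, where the comparator chooses correctly and the copying is exact. The claim then follows by induction on the length of the $\rightarrow_H$ derivation.

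\textbf{Second claim.} We prove an invariant over all runs: whenever the root is labelled $\omega$, the configuration is $C_{\alpha',n'}$, and there exist $\alpha'',n''$ with $T_{\alpha'}\le_{is}T_{\alpha''}$, $n'\le n''$ and $(\alpha,n)\act{*}_H(\alpha'',n'')$.
\begin{itemize}
\item Well-formedness comes from the control states together with the resets.
\item Each gadget execution either simulates a genuine Hardy step on the ``ideal'' configuration $(\alpha'',n'')$, or simulates a lossy version of it. A lossy version yields a subtree of the ideal result and fewer $\#$'s.
\item We then need monotonicity: $H^{\alpha'}(n')\le H^{\alpha''}(n'')$ whenever $T_{\alpha'}\le_{is}T_{\alpha''}$ and $n'\le n''$. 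The embedding relation implies $\alpha'\le\alpha''$ in a strong, pointwise-dominated sense, and the standard monotonicity of Hardy functions for such ordinals then gives the inequality.
\item Any wrong guess by the comparator, for example choosing a term that is not the smallest, must be detected and must lead to a stuck or lossy state.
\end{itemize}

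\textbf{Main obstacle.} The hardest part is the comparator. A $k$-NRCS must lossily test, for two subtrees $T_{\beta}$ and $T_{\beta'}$, that the selected one satisfies $\beta\le\beta'$ in the genuine ordinal order, not merely in the structural order. We would build it by induction on depth.
\begin{itemize}
\item Comparing $\omega^{\beta}$ with $\omega^{\beta'}$ reduces to comparing the CNFs of $\beta$ and $\beta'$ lexicographically.
\item To do this, repeatedly extract the largest remaining term of each side using the depth-$(d-1)$ comparator. Move matched equal terms to a ``done'' region, and stop at the first difference.
\item At level $k$, the comparison is a direct comparison of the labels $\omega^{j}$.
\end{itemize}
The comparator only needs to be sound in the lossy sense: if it answers ``$\beta$ is smaller'', then the surviving tree is embedded in some tree for which the answer is correct.

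The delicate point is making this sound with only decrements and resets. Failed or partially erased comparisons must leave junk that is detectably removable, and must never leave a configuration that over-approximates the ideal one. We would try marking every touched node and using a reset on the marker label as the only way to return to $\omega$.
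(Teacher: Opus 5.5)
Your architecture is the paper's. The successor step deletes a level-1 child and adds a $\#$. The limit step selects the smallest CNF term and descends; for a successor exponent it strips one leaf and makes lossy DFS copies paid for by turning $\#$ into $\#'$. The comparator is built by induction on depth via a lexicographic comparison of largest terms. Your variant of guessing the minimum and then verifying it against every sibling is also fine once a sound comparator exists, since the candidate only shrinks after each verification.

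\textbf{The gap.} It is the point you flag as delicate and leave open: you give no comparator that is sound under lossy runs, and the one you sketch is not sound. If the inner comparator runs directly on the two extracted largest terms, a lossy run may delete nodes from one of them. That term is then no longer the largest on its side, and the lexicographic verdict becomes false for the surviving tree. Concretely, compare $\beta=\omega^2\cdot 2$ with $\beta'=\omega$. The inner comparator on the exponents $2$ and $1$ may drop both leaves encoding $2$ and then correctly report $0<1$, so you declare $\beta<\beta'$. Yet the survivor is $\omega^2+1>\omega$, and no tree between the survivor and the original makes the verdict true.

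Your soundness notion (\emph{embedded in some tree for which the answer is correct}) does not help. Left unconstrained it is vacuous, and the smallest-child step needs the verdict to hold for the tree that actually survives. The same defect hits your extraction of a largest term by a tournament: the current champion can be shrunk in a later match below an earlier loser.

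A wrong selection is not a lossy image of any Hardy step, so your invariant argument breaks at that point. It can also genuinely raise the Hardy value. At $n=1$, choosing $\omega^\omega$ instead of $\omega^3$ in $\omega^\omega+\omega^3$ yields $\omega^3+\omega$, and $H^{\omega^3+\omega}(1)=2048>8=H^{\omega^\omega+\omega^3}(1)$.

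\textbf{What is missing.} Marking touched nodes and resetting markers cannot repair this: resets only delete, and the problem is precisely an undetectable deletion inside a compared subtree. The paper's idea is to keep an untouched witness that dominates the lossy working version.
\begin{itemize}
\item The biggest-child gadget first duplicates the guessed champion and runs all comparisons on one version. It then declares the other, untouched version the winner; that version is $\ge_{is}$ the compared one at every moment.
\item The $(k+1)$-comparator copies the two extracted biggest grandchildren. It runs the $k$-comparator on the copies while mirroring every transition onto the originals, so the originals stay $\le_{is}$ the copies. It then re-runs the biggest-child check on the originals and deadlocks if that fails. If the check passes, the copies are still the biggest terms of their sides, so the verdict holds for the surviving tree.
\end{itemize}
This is why the paper's induction interleaves comparator, biggest-child, smallest-child and copy gadgets rather than using comparators alone.

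\textbf{Two smaller points.} First, no bound $n\le\ell$ and no cut is needed at level $k$. The $n$ copies of a level-$k$ node labelled $\omega^{j}$ are $n$ sibling nodes labelled $\omega^{j-1}$; only the label index is bounded, and it decreases. A cut at $\ell$ would break the first bullet. Second, your reachability invariant would need a simulation lemma you do not prove. It suffices to track $H^{\alpha'}(n')\le H^{\alpha}(n)$, which each lossy gadget run preserves by monotonicity of $H$ under $\le_{is}$ on trees and under $\le$ on $n$.
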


Note that this theorem guarantees that we can compute the Hardy function \emph{in a weak sense}. 
More precisely, the first condition guarantees there will be a run
which will start from $C_{\alpha,n}$ and will reach $C_{0,n'}$ where $H^{\alpha}(n) = H^0(n') = n'$. However, the second condition is only a weak converse to the first one, in the sense that it only guarantees that other runs of the machine will never exceed $H^{\alpha}(n)$ (but can go lower than it). 

Similar to the above theorem, we also show the following theorem, which shows that we can compute the Hardy function in a \emph{backward manner}, i.e., going from $C_{0,n'}$ to $C_{\alpha,n}$.

\begin{restatable}{theorem}{Hardybackward}\label{thm:Hardy-backward}
    There exists a $k$-NRCS $\mathcal{N}^{bwd}$ such that for any $\alpha \le (\Omega_{k+1})_\ell$ 
     \begin{itemize}
        \item If $(\alpha',n') \act{*}_H (\alpha,n)$ with $\alpha' \le (\Omega_{k+1})_\ell$, then there exists a run from $C_{\alpha,n}$ to $C_{\alpha',n'}$ in $\mathcal{N}^{bwd}$.
        \item Suppose there exists a run in $\mathcal{N}^{bwd}$ from $C_{\alpha,n}$ that ends in a configuration $C'$ with the root labeled by $\omega$. Then, $C' = C_{\alpha',n'}$ for some 
        $\alpha' \le (\Omega_{k+1})_\ell, n' \in \mathbb{N}$ and there exists $\alpha'',n''$ such that $T_{\alpha'} \le_{is} T_{\alpha''}$ and $H^{\alpha'}(n') \le H^{\alpha''}(n'') = H^{\alpha}(n)$.
    \end{itemize}
\end{restatable}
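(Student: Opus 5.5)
We build $\mathcal{N}^{bwd}$ as a ``reversal'' of the forward gadgets behind Theorem~\ref{thm:Hardy-forward}. It implements the inverse rewrite rules
\[
(\alpha,n+1)\rightarrow (\alpha+1,n), \qquad (\lambda_n,n)\rightarrow(\lambda,n),
\]
acting on trees $C_{\alpha,n}$.

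\textbf{First rule.} This is a single local step. The machine removes one $\#$-child of the root, a decrementing transition of depth~1. It then adds one leaf for the term $\omega^0$. If $k=1$, that leaf is a level-1 node labelled $\omega^0$, provided the resulting ordinal stays below $(\Omega_{k+1})_\ell$, which we take as a precondition. For larger $k$, the machine adds a path of $\omega$-labelled nodes ending in a level-$k$ node whose label encodes $\omega^0$. Both are incrementing transitions.

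\textbf{Second rule, general plan.} Here the machine must recognize a tree encoding $\gamma+\omega^{\beta}\cdot n$ or $\gamma+\omega^{\lambda'_n}$ (the two shapes of $\lambda_n$) and fold it back into $\lambda$. We do this inductively on the level of the tree.
\begin{itemize}
\item At the deepest level, where labels are $\omega^0,\dots,\omega^\ell$, ``$\omega^{j}\cdot n \mapsto \omega^{j+1}$'' is implemented as follows. The machine destroys the chosen sibling nodes labelled $\omega^j$ one at a time, consuming one $\#$-child per destroyed node as a bounded witness count. It then relabels one node to $\omega^{j+1}$ and restores the $\#$'s.
\item Since exact matching of counts cannot be enforced, the gadget is allowed to delete fewer or more nodes. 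Resets are used to clean up leftover marked nodes.
\item At higher levels, the step ``$\omega^{\lambda'_n}\mapsto\omega^{\lambda'}$'' is performed by marking a subtree and recursively running the level-$(i+1)$ backward gadget inside it. This mirrors exactly the forward recursion.
\end{itemize}

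\textbf{Selecting the right term.} A crucial ingredient is identifying which term is the smallest one in CNF, since the rewrite must be applied only there. We reuse the comparator gadgets constructed for the forward direction. These lossily select a term that is no larger, in the usual ordinal order, than the others.

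\textbf{Completeness.} The first bullet follows by induction on the length of $(\alpha',n')\act{*}_H(\alpha,n)$. Each forward step is simulated by one invocation of the inverse gadget, with the honest choices made at every nondeterministic point.

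\textbf{Soundness.} For the second bullet we prove an invariant. Whenever the root returns to label $\omega$, the configuration is some $C_{\alpha',n'}$: the control states on intermediate nodes force all auxiliary labels to be cleaned up. Moreover, there is a ``honest'' pair $(\alpha'',n'')$, reachable by genuine inverse Hardy steps with $H^{\alpha''}(n'')=H^\alpha(n)$, such that $T_{\alpha'}\le_{is}T_{\alpha''}$ and $H^{\alpha'}(n')\le H^{\alpha''}(n'')$. This requires the following.
\begin{itemize}
\item Every deviation (deleting too many $\#$, dropping subtrees via resets, or the comparator choosing a non-minimal term) only produces a tree that is an induced subtree of the honest result.
\item The honest Hardy value is preserved. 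This holds because an honest inverse step does not change the Hardy value.
\item Monotonicity: if $T_{\alpha'}\le_{is}T_{\alpha''}$ and $n'\le n''$ then $H^{\alpha'}(n')\le H^{\alpha''}(n'')$. We derive this from the fact that $\le_{is}$ on encodings implies $\alpha'\le\alpha''$ together with structural monotonicity of Hardy functions.
\end{itemize}

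\textbf{Main obstacle.} The delicate point is the recursive folding step, because it must be lossy only in the safe direction. Collapsing $n$ copies is only allowed when exactly $n$ copies are present, yet the machine can verify at most ``at most $n$''. If fewer copies exist, we get $\omega^{\beta+1}$ from fewer than $n$ copies, which increases the ordinal. To avoid this, the gadget consumes $\#$'s in proportion to the copies it destroys, so the pairing goes the safe way: surplus copies may be lost, but a shortfall of copies costs $\#$'s. This keeps the result below some honest pair in the $(T_\alpha,n)$ sense. Making this accounting precise for the limit case $\omega^{\lambda'_n}$ at every level is where most of the work lies.
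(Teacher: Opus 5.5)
\textbf{Construction.} You give the fold $\gamma+\omega^{\beta}\cdot n\mapsto\gamma+\omega^{\beta+1}$ only at level $k$, where the $n$ copies are recognisable by their labels $\omega^j$. Your recursion only handles $\omega^{\lambda'_n}\mapsto\omega^{\lambda'}$. But the fold is needed at every level. For $k=2$ and $\lambda=\omega^{\omega+1}$, the tree of $\lambda_n=\omega^{\omega}\cdot n$ consists of $n$ level-$1$ nodes, each with a single child labelled $\omega^1$. Nothing in your machine can certify that these siblings represent the same ordinal as the marked smallest one. The missing ingredient is a lossy \emph{equality} test between sibling subtrees. The paper supplies it by running the $(k-i)$-comparator gadget between the marked smallest child and each candidate sibling. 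Only on the verdict ``equal'' does it remove the sibling and charge one $\#$, and at the end it adds a leaf to the marked child. In your plan, comparators are used only to locate the smallest term. (A smaller slip: for $k\ge 2$ the term $\omega^0$ is a leaf at level $1$, so the inverse successor step is just the renaming $(\omega,\#)\to(\omega,\omega)$. A path down to level $k$ encodes a tower-sized term, not $1$.)

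\textbf{Soundness.} Your invariant asks for an honest, inverse-reachable $(\alpha'',n'')$ with $T_{\alpha'}\le_{is}T_{\alpha''}$. No machine satisfying the first bullet can maintain it, and none can satisfy the second bullet in its literal form either. Since $(\omega,2)\to_H(2,2)$, there is a run $C_{2,2}\to^*C_{\omega,2}$. Replaying it from $C_{2,5}$ via Proposition~\ref{prop:compatibility} reaches a configuration with root $\omega$ that embeds $C_{\omega,2}$, i.e.\ some $C_{\alpha',n'}$ with $T_\omega\le_{is}T_{\alpha'}$. Hence $\alpha'$, and every $\alpha''$ with $T_{\alpha'}\le_{is}T_{\alpha''}$, has an infinite term. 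But no such $\alpha''$ satisfies $H^{\alpha''}(n'')=7=H^{2}(5)$. For $n''\ge 1$ the last infinite ordinal in the Hardy computation is $\omega$, so the value is even; for $n''=0$ one checks directly that $7$ is not reached.

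What is provable, and all that the Minsky reduction uses, is $H^{\alpha'}(n')\le H^{\alpha}(n)$, through a one-step invariant. Each macro-step from $C_{\alpha,n}$ to $C_{\alpha_1,n_1}$ must be such that the honest Hardy step from $(\alpha_1,n_1)$ lands on some $(\alpha_2,n_2)$ with $T_{\alpha_2}\le_{is}T_{\alpha}$ and $n_2\le n$; Proposition~\ref{prop:induced-monotone-Hardy} then gives the inequality. Because inverse steps \emph{enlarge} a term, this requires two things your deviation analysis does not supply.

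First, a shortfall of copies must cost $\#$'s, so all unconsumed ones must be reset. Your stated direction is the right one. The paper's backward construction instead reuses the forward guard against overshooting, which does not exclude $C_{2,5}\to C_{\omega,5}$, where $H$ rises from $7$ to $10$.

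Second, the enlarged term must remain a smallest term among its siblings at every level of the modified path; this is the CNF side condition of the inverse rule. Without it, already for $k=1$, your machine can start from $C_{4,1}$, fold one unit term into $\omega$ (with $n=1$ nothing needs destroying) and stop at $C_{\omega+3,1}$, although $H^{\omega+3}(1)=8>5=H^{4}(1)$. So after every fold or recursive enlargement, each sibling not certified by a comparator to be at least the new term must be reset. The paper's construction does not make this step explicit either.
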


If we can prove these two theorems, then we can generalize the ideas given in~\cite{Schnoebelen10,PriorityChannelSystems} to simulate $H^{\Omega_{k+1}}$-bounded Minsky machines using $k$-NRCS. Indeed, given any such machine $\mathcal{M}$ with states $Q$,
we set $\alpha = (\Omega_{k+1})_{|Q|}$ 
and weakly compute the Hardy function $H^{\Omega_{k+1}}(|Q|) = H^{\alpha}(|Q|)$ using $\net^{fwd}$ to get some value $n$.
Then, we start simulating $\mathcal{M}$ by 1) replacing its zero-tests with resets and 2) balancing any increments/decrements in the counters of $\mathcal{M}$ with
decrements/increments in the counter containing the value $n$ (hence this counter acts as a budget for the other counters). Note that replacing zero-tests with resets can cause ``lossy'' errors, i.e., might make the simulation take a transition even when the counter value is not zero, which in turn, might make the total value of all the counters drop. Then when the simulation of $\mathcal{M}$ reaches a final state, 
we start moving all the values from the counters of $\mathcal{M}$ to the budget counter
and start running $\net^{bwd}$. By the Theorem~\ref{thm:Hardy-backward}, if we can cover $C_{\alpha,|Q|}$, then
we are guaranteed that the value in the budget counter was never lost and so all the 
resets behaved like zero-tests in the simulation of $\mathcal{M}$. 
This will then prove the required lower bound for $k$-NRCS.
The formal construction can be found in Appendix~\ref{sec:appendix-Minsky-machines}.

Hence, to prove the lower bound, it suffices to prove Theorems~\ref{thm:Hardy-forward}
and~\ref{thm:Hardy-backward}. We will present here the main ideas behind Theorem~\ref{thm:Hardy-forward}; the ideas for Theorem~\ref{thm:Hardy-backward} are similar and dual to it. 

\paragraph*{Proof structure of Theorem~\ref{thm:Hardy-forward}} Before we present the main ideas behind Theorem~\ref{thm:Hardy-forward}, we first give a global picture of how the construction shall proceed. First in subsection~\ref{subsec:compute-Hardy}, we show that constructing $\net^{fwd}$ can be reduced to constructing two types of gadgets using $k$-NRCS, that
we call the $k$-smallest-child gadget and $k$-copy gadget respectively. Intuitively, the $k$-smallest-child gadget, when applied on a tree that looks like $T_\alpha$, identifies the subtree that corresponds
to the smallest term in the CNF of $\alpha$. Similarly, a $k$-copy gadget can be used to copy a subtree of some given tree. Using these two types of gadgets, in subsection~\ref{subsec:compute-Hardy} we show that $\net^{fwd}$ can be constructed.

Then, in subsection~\ref{subsec:smallest-child}, we show how to construct these gadgets. To begin with,
we show how $k$-copy gadgets can be constructed for any $k \ge 1$, by a DFS style exploration of the tree.
To construct the $k$-smallest-child gadget, we introduce two other types of gadgets, that we call
$k$-comparator gadget and $k$-biggest-child gadget respectively.  Just like the $k$-smallest-child gadget,
both these gadgets work on trees that look like $T_\alpha$ for some ordinal $\alpha$.
The $k$-biggest-child gadget is just like the $k$-smallest-child gadget, except it marks the subtree that corresponds to the biggest term in the CNF. The $k$-comparator gadget is a weaker type of gadget compared to the $k$-smallest-child and $k$-biggest-child gadgets and can only be used to compare two different subtrees and mark which one the two terms corresponding to these two subtrees is smaller.

We then show how to construct all these three types of gadgets, by induction on $k$. For the base case,
we first show that a 1-comparator gadget can be easily constructed. Then, assuming a $k$-comparator gadget exists (for some $k$), we show how to construct a $k$-smallest-child gadget and a $k$-biggest-child gadget.
Finally, assuming that $k$-comparator, $k$-smallest-child and $k$-biggest-child gadgets exist (for some $k$), we construct a $(k+1)$-comparator gadget. Chaining all of these results, proves that
all three types of gadgets exist for any $k$. See Figure~\ref{fig:gadgets} for a pictorial representation of this inductive construction.

With this global picture in mind, we now move to subsection~\ref{subsec:compute-Hardy}, where we show how to prove construct $\net^{fwd}$, assuming $k$-smallest-child and $k$-copy gadgets exist.

\begin{figure}
    \centering
    \begin{tikzpicture}[
    node distance = 0.5cm,
  comp/.style={
    draw, rounded corners=4pt, fill=violet!15, draw=violet!60,
    minimum width=2cm, minimum height=0.9cm, align=center,
    font=\small\bfseries
  },
  sml/.style={
    draw, rounded corners=4pt, fill=orange!15, draw=orange!60,
    minimum width=2.4cm, minimum height=0.9cm, align=center,
    font=\small\bfseries
  },
  big/.style={
    draw, rounded corners=4pt, fill=yellow!20, draw=yellow!60!black,
    minimum width=2.4cm, minimum height=0.9cm, align=center,
    font=\small\bfseries
  },
  arr/.style={-{Stealth[length=5pt]}, thick, gray!70},
]
 
 
\node[comp] (c1) {1-comp.\\gadget};
 
\node[sml, above right=of c1] (s1) {1-smallest\\-child gadget};
\node[big, below right=of c1] (b1) {1-biggest\\-child gadget};
 
\draw[arr] (c1.north east) -- (s1.west);
\draw[arr] (c1.south east) -- (b1.west);
 
\node[comp, right=3.1cm of c1] (c2) {2-comp.\\gadget};
 
\draw[arr] (s1.east) -- (c2.north west);
\draw[arr] (b1.east) -- (c2.south west);
 
\draw[arr] (c1.south) to[out=-90, in=-90, looseness=0.95] (c2.south);
 
\node[sml, above right=of c2] (s2) {2-smallest\\-child gadget};
\node[big, below right=of c2] (b2) {2-biggest\\ -child gadget};
 
\draw[arr] (c2.north east) -- (s2.west);
\draw[arr] (c2.south east) -- (b2.west);
 
\node[comp, right=3.1cm of c2] (c3) {3-comp.\\gadget};
 
\draw[arr] (s2.east) -- (c3.north west);
\draw[arr] (b2.east) -- (c3.south west);
 
\draw[arr] (c2.south) to[out=-90, in=-90, looseness=0.99] (c3.south);
 
\node[right= 0.25cm of c3, yshift=0pt] {$\cdots$};
 
\end{tikzpicture}
    \caption{Inductive construction of the $k$-comparator, $k$-smallest-child and $k$-biggest-child gadgets. Arrows from a set of gadgets $S$ to a gadget $G$ means that $G$ can be constructed assuming all the gadgets in $S$ exist.}
    \label{fig:gadgets}
\end{figure}

\subsection{Computing Hardy Functions}\label{subsec:compute-Hardy}

Before we define our smallest-child and copy gadgets, we set up some intuition as to why they are needed for constructing $\net^{fwd}$.
Recall that $(\alpha+1,n) \rightarrow_H (\alpha,n+1)$ and $(\lambda,n) \rightarrow_H (\lambda_n,n)$. Hence, in order to perform the Hardy computations using $k$-NRCS, we need a way
to move from $C_{\alpha+1,n}$ to $C_{\alpha,n+1}$ and from $C_{\lambda,n}$ to $C_{\lambda_n,n}$.
The former is quite easy - we need to remove one $\omega$ and in its place, add a $\#$. 
However, the latter is quite tricky - We have to compute $\lambda_n$ from $\lambda$, i.e., the $n^{th}$ element in the fundamental sequence of $\lambda$. To do this, we first need to isolate the smallest term in the CNF of $\lambda$. This poses a challenge when working over $C_{\lambda,n}$, since the tree is unordered.
To perform this successfully, we need a gadget which allows us to isolate the subtree corresponding to the smallest term in the CNF of $\lambda$.
We now formalize and construct such a gadget.

Fix some $k$ and consider some $\alpha \le (\Omega_{k+1})_\ell$. Consider the tree $T_\alpha$. 
Note that, by construction, every child subtree of the root of $T_\alpha$
corresponds to a unique term in its CNF, whose child subtrees correspond to unique terms in their CNFs and so on. Also, recall that the exact labels of the nodes in all but the $k^{th}$ level are irrelevant, i.e., even if we change the labels of $T_\alpha$ whilst retaining the labels of the nodes in the $k^{th}$ level, the resulting tree still uniquely represents $\alpha$, the resulting children of the root still uniquely represent terms in the CNF of $\alpha$ and so on. 
We will call a tree $T$ an \emph{encoder} for $\alpha$ if $T$ is obtained from $T_\alpha$ by any such renaming.

We shall now define specific types of encoders for an ordinal $\alpha$, each of which we will use for a specific task. The first is called a \emph{small-initialized encoder} and is defined as follows.
\begin{definition}
    A small-initialized encoder for $\alpha$ is an encoder $T$
    where the root is labelled by $\bsmall$.
\end{definition}

The root being labelled by $\bsmall$ signals that we now want to find the smallest term in the CNF represented by the tree $T$. Ultimately, we want a gadget that will start from an initialized encoder
and among all ordinals represented by the children of the root marked by $\omega$, it will mark the child corresponding to the smallest ordinal. To this end,
we define a \emph{lossy smallest-child encoder} as follows.

\begin{definition}
    Given a small-initialized encoder $T$ for $\alpha$, a lossy smallest-child encoder for $T$
    is a tree $T'$ that looks exactly like $T$, except for the following changes:
    \begin{itemize}
        \item Let $v_1,\dots,v_n$ be the children of the root labeled by $\omega$ in $T$. $T'$ contains all the nodes of $T$, except possibly for some nodes in the subtree of any of $v_1,\dots,v_n$.
        \item Every node in $T'$ has the same label as its corresponding node in $T$,
        except that the root is labeled by $\esmall$, and each $v_i$ (if it exists in $T'$) is labeled as follows:
        \begin{itemize}
        \item If the ordinal represented by the subtree rooted at $v_i$ in $T'$ is the smallest among all the ordinals of any of the subtrees of all the other $v_j$ in $T'$, then $v_i$ is labeled by $\smallest$.
        \item Otherwise, $v_i$ has the same label as it had in $T$.
        \end{itemize}
    \end{itemize}
    A lossy smallest-child encoder is called a \emph{perfect smallest-child encoder} if 
    $T'$ contains all the nodes of $T$.    
\end{definition}

Ideally, we want a gadget that, starting from a small-initialized encoder $T$ for some $\alpha$,  
arrives at a perfect smallest-child encoder for it. This would, among all the children of the root marked by $\omega$, correctly pick a node that represents the smallest ordinal in the CNF of $\alpha$. However, this is too strong a requirement for $k$-NRCS. Hence, we will allow the gadget to make ``lossy'' errors, i.e., we will allow the gadget to drop nodes from the  subtrees of the children of the root marked by $\omega$. In return, we are only guaranteed that the 
node marked by $smallest$ represents the smallest ordinal \emph{in $T'$} (not in $T$!).
We now formalize what it means for such a gadget to exist.

\begin{definition}[Smallest-child gadgets]
    A $k$-smallest-child gadget is a $k$-NRCS such that for any $\alpha \le (\Omega_{k+1})_{\ell}$ and any small-initialized encoder $T$ for $\alpha$:
    \begin{itemize}
        \item There exists a run from $T$ to the perfect smallest-child encoder of $T$.
        \item Suppose there exists a run from $T$ to some $T'$ such that the root of $T'$ is labeled by $\esmall$. Then $T'$ is a lossy smallest-child encoder of $T$.
    \end{itemize}
\end{definition}

Hence, a smallest-child gadget is always guaranteed to find a lossy smallest-child encoder
of a small-initialized encoder. One of the main results that we shall show in the next subsection is that
\begin{theorem}\label{thm:smallest-child-gadgets}
    A $k$-smallest-child gadget exists.
\end{theorem}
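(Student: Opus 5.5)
We prove Theorem~\ref{thm:smallest-child-gadgets} by induction on $k$, following the scheme of Figure~\ref{fig:gadgets}. We simultaneously construct, for each $k$, a $k$-comparator gadget, a $k$-smallest-child gadget and a $k$-biggest-child gadget. The $k$-biggest-child gadget is defined dually: it starts from a big-initialized encoder and marks, among the surviving $\omega$-children of the root, one representing the largest ordinal. The $k$-comparator gadget starts from an encoder in which exactly two children of the root are marked $A$ and $B$. It lossily deletes nodes in their subtrees and marks the one representing the smaller (or equal) ordinal in the resulting tree, and some run does so without losses.

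\emph{Base case $k=1$.} Here each child of the root is a leaf labelled $\omega^j$ with $0\le j\le \ell$. Comparing two such leaves is a finite check on their labels, so the 1-comparator gadget is a finite set of renaming transitions. With no descendants, nothing is ever lost.

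\emph{Comparator gives smallest-child.} Given a $k$-comparator gadget, the $k$-smallest-child gadget runs a tournament. It keeps one child marked $\mathit{cand}$ and guesses an unprocessed $\omega$-child, which it marks $\mathit{chal}$. It invokes the comparator on the pair and relabels the winner $\mathit{cand}$. The loser is marked $\mathit{done}$, and the challenger's old label is restored via a pair-label so that it stays an encoder. When the root guesses that no $\omega$-children remain, it relabels $\mathit{cand}$ as $\smallest$ and the root as $\esmall$. To make that guess safe, the root label is switched so that any remaining unprocessed child blocks completion, or else the leftover children are simply never renamed. Either way, only the relevant statement in $T'$ is affected.

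Correctness of the tournament is an invariant: after each round, $\mathit{cand}$ represents the minimum of the already-processed subtrees as they stand in the current tree. This invariant is preserved even under losses, because losses only shrink ordinals. A loss in the current candidate keeps it minimal. A loss in a $\mathit{done}$ child could, however, make it smaller than the candidate.

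\emph{This is the main obstacle.} The fix I would try first is to freeze every subtree once it is processed. After a comparison, relabel all nodes of both subtrees with ``frozen'' copies of their labels, which later gadget invocations never touch, so no further losses occur there. Losses therefore happen only inside an active comparison, and the comparator's guarantee refers to the tree after those losses. The lossless run gives the perfect encoder. The biggest-child gadget is the same construction with the roles of winner and loser swapped.

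\emph{Smallest- and biggest-child give the $(k+1)$-comparator.} Let $A$ and $B$ represent $\omega^{\beta}$ and $\omega^{\gamma}$, so we need to compare $\beta$ and $\gamma$ as ordinals, using the level below $A$ and $B$. Since the CNF is ordered lexicographically by decreasing terms, we repeatedly apply the $k$-biggest-child gadget inside $A$ and inside $B$, with each node of level one playing the root of a depth-$k$ encoder. This extracts their largest terms, and the $k$-comparator compares those two terms. If one is strictly larger, its side wins. If they are equal, both terms are deleted; this is a decrement and hence a permitted loss, so we must remember the deletion only as a pair-label and restore it. Alternatively, we move both terms aside into marked ``consumed'' children and recurse. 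If one side runs out of terms first, it is the smaller.

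Equality is only detectable lossily: the comparator marks the smaller-or-equal side, so we run it in both directions. The smallest-child gadget is what lets the induction step treat the residual tail correctly. The delicate point, and the second obstacle, is showing that every lossy run still yields a correct verdict on the final tree. Consumed terms stay in place, losses inside them are confined to comparisons, and the freezing trick again makes the verdict refer to the post-loss subtrees. Chaining the three implications gives the $k$-smallest-child gadget for every $k$.
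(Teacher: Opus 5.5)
Your decomposition is the paper's, and your smallest-child tournament is essentially the paper's construction. However, you place the effect of lossiness in the wrong spot, and the places where it actually matters are left open.

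In the smallest-child tournament nothing needs freezing. Once a beaten child is relabelled at the root-child level, no later transition can act inside its subtree, because every NRCS transition must match labels along the whole path from the root. The comparator specification also confines losses to the two marked subtrees. Freezing every node of a subtree would instead need a lossy DFS and a final un-freezing, and that reintroduces exactly the post-verdict losses you wanted to exclude. Also, at the end you cannot make leftover $\omega$-children ``block'' completion, since that is a zero-test. Leaving them unrenamed leaves uncompared survivors in $T'$, which violates the specification. You must reset them, as the paper does.

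The real problem is the biggest-child gadget, which is \emph{not} the same tournament with roles swapped. The candidate takes part in every later comparison and may lose nodes in each one. A candidate that beat $v_B$ can shrink below $v_B$ while beating $v_C$, and it is then wrongly declared biggest. Freezing cannot help, because the candidate has to stay active. The missing idea is the copy gadget (Theorem~\ref{thm:copy-gadgets}), which your proposal never uses. Copy the candidate first, then compare the original, which only shrinks, against every other child. Declare the copy the biggest: it dominates the original in $\le_{is}$ and hence as an ordinal.

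The same phenomenon breaks your $(k+1)$-comparator, and your claim that ``the freezing trick makes the verdict refer to the post-loss subtrees'' does not address it. After a lossy comparison, the two extracted terms need no longer be the \emph{leading} terms of $v_A$ and $v_B$, so the lexicographic conclusion can be wrong for the final tree. For instance, let $v_A$ encode $\omega^{\omega^3+\omega^2}$ and $v_B$ encode $\omega^{\omega}$. Once the inner comparator can lose nodes, it may shrink the term $\omega^3$ to $1$ and correctly report $1<\omega$. You then declare $v_A$ smaller, although $v_A$ now encodes $\omega^{\omega^2+1}>\omega^{\omega}$.

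The paper fixes this as follows. It copies the two biggest grandchildren and runs the $k$-comparator on the copies, mirroring every transition on the originals so that the originals stay $\le_{is}$ the copies. It then re-runs the biggest-child check on the originals and deadlocks if that check fails. Your equality test, which runs a two-way comparator in both directions, is unsound for the same reason: the second run may shrink one side after the first verdict. The comparator has to be three-way from the base case onward, as in the paper.
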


Hence, these smallest-child gadgets allow us to identify the smallest term in the CNF
of an ordinal encoded as a tree, thereby facilitating the computation of the $n^{th}$ element in the fundamental sequence of a limit ordinal $\lambda$. However, this alone is not enough to perform the Hardy computation $(\lambda,n) \act{}_H (\lambda_n,n)$, for the following reason: Suppose $\lambda = \gamma + \omega^{\beta+1}$ for some $\gamma, \beta$. 
Then $\lambda_n = \gamma + \omega^\beta \cdot n$. Hence, after marking the smallest term $\omega^{\beta+1}$ in the encoder, we need to reduce its exponent by one (which is easy) and then \emph{copy it $n-1$ times} (which is tricky). Hence, we also need a type of \emph{copy gadget} which makes multiple copies of a subtree. For this purpose, we define
\emph{marked encoders} and \emph{lossy copy encoders} as follows.

\begin{definition}
    A \emph{marked encoder} for $\alpha$ is an encoder for $\alpha$ where the root is labeled by $\bcopy$ and exactly one of its children is labeled by $\marked$.
    
    Given a marked encoder $T$ for $\alpha$, a \emph{lossy copy encoder} for $\alpha$
    is any tree $T'$ obtained as follows: Let $v$ be the child of the root labeled by $\marked$ in $T$. $T'$ is obtained from $T$, by first deleting some (possibly zero) nodes from the subtree of $v$, then duplicating this subtree as another subtree of the root
    and then deleting some (possibly zero) nodes from the subtree of $v$
    and then finally changing the label of the root to $\ecopy$, the label of $v$ to $\omega$ and the label of the copy of $v$ as $\copied$.     
    A lossy copy encoder is called a \emph{perfect copy encoder} if no nodes
    from the subtree of $v$ were deleted.
\end{definition}

Now similar to $k$-smallest-child gadgets, we can define $k$-copy gadgets. In the next subsection we show that
\begin{theorem}\label{thm:copy-gadgets}
    A $k$-copy gadget exists.
\end{theorem}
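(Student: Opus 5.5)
We construct a $k$-copy gadget by a depth-first traversal of the subtree rooted at the marked child $v$, building the copy node by node. We first apply a renaming transition $(\bcopy,\marked)\xdashrightarrow{u}(\bcopy,\marked)$ followed by an incrementing transition $(\bcopy)\xdashrightarrow{u}(\bcopy,\copied)$. This creates a fresh child $v'$ of the root that will host the copy. We keep $v'$ under a working label until the very end.

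The invariant of the traversal is that the root state records the current depth $d \le k$. There is exactly one ``cursor'' path $v=u_0,u_1,\dots,u_d$ in the original subtree and one matching cursor path $v'=u'_0,\dots,u'_d$ in the copy. Nodes on these paths carry a state component marking them as current. Because transitions of a $k$-NRCS act on root-paths, the finite state of each node can store its original label from $Q$ (in particular the level-$k$ labels $\omega^0,\dots,\omega^\ell$) together with a status flag in $\{\mathsf{fresh},\mathsf{active},\mathsf{done}\}$. Since $k$ and $\ell$ are fixed, the state set is finite.

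The traversal has three kinds of moves.
\begin{itemize}
\item \emph{Descend.} Select a $\mathsf{fresh}$ child $w$ of $u_d$ whose label is $a$. A single incrementing transition on the path $(\text{root},\dots,u'_d)$ attaches a new child $w'$ labelled $(a,\mathsf{active})$ under $u'_d$. A renaming transition on the path through $u_d$ relabels $w$ to $(a,\mathsf{active})$, and the root's depth counter is incremented.
\item \emph{Ascend.} When $u_d$ has no $\mathsf{fresh}$ children, both $u_d$ and $u'_d$ are relabelled $\mathsf{done}$ and the depth counter is decremented.
\item \emph{Finish.} When $u_0$ and $u'_0$ are $\mathsf{done}$, a final sweep resets the flags. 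The root becomes $\ecopy$, $v$ becomes $\omega$ and $v'$ becomes $\copied$.
\end{itemize}
This final sweep is itself a second DFS, or equivalently the traversal alternates flags between two phases so that the sweep can be performed lazily. Each transition touches only one root path of length at most $k+1$, so all of them are legal $k$-NRCS transitions. The point that needs care is how to address two distinct paths. The root state stores which side (original or copy) is being operated on. The $\mathsf{active}$ flags then pin down a unique path on each side, because at every level exactly one node is $\mathsf{active}$.

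Completeness is immediate: executing the DFS faithfully, one step per node, ends in the perfect copy encoder. The real difficulty is soundness. We must show that every run reaching a root labelled $\ecopy$ ends in a lossy copy encoder, and the danger is the absence of zero tests. The test ``$u_d$ has no $\mathsf{fresh}$ children'' cannot be checked directly. To handle it, the Ascend step is preceded by a reset transition $(\text{root},\dots,u_d)\xdashrightarrow{r,\mathsf{fresh}}(\dots)$. This deletes any $\mathsf{fresh}$ children that remain, so a premature ascend simply drops subtrees of the original, which is exactly the loss allowed after copying. Symmetrically, a run may descend into a child on one side without the matching creation, or nondeterministically choose the wrong node. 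Fixing the order of the two transitions (create $w'$ first, then mark $w$) together with the $\mathsf{active}$ guard makes any mismatch unable to proceed, except through resets that again remove only nodes.

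We would then prove by induction on run length an invariant. At every point, the done-and-active part of the original is isomorphic to the done-and-active part of the copy. Moreover, the original with its $\mathsf{fresh}$ nodes deleted is an induced subtree of $T$ restricted to $v$. Some deletions can occur before the copying (those of not-yet-visited fresh nodes that were reset) and others after (none, in fact), which yields the form ``delete, duplicate, delete'' required of a lossy copy encoder.

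Pinning down this invariant, and in particular ruling out a run that creates stray copy nodes with no counterpart in the original, is the main obstacle. The plan is to enforce that the copy-side creation and the original-side marking happen in one atomic two-step protocol. During this protocol the root is in a dedicated intermediate state from which the only enabled transition is the matching one, so that stray copy nodes cannot arise.
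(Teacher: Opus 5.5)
Your construction is essentially the paper's: a lossy DFS along two cursor paths that builds the copy node by node, with a reset of the unvisited children before each ascent standing in for the missing zero test, followed by a second DFS that restores the labels. Two points need repair, and your soundness accounting is wrong on the second.

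\textbf{Opening transitions.} As written they do not work. $(\bcopy,\marked)\xdashrightarrow{u}(\bcopy,\marked)$ is a no-op. $(\bcopy)\xdashrightarrow{u}(\bcopy,\copied)$ can fire any number of times, so a run can create several copies. You need fresh intermediate root states; the paper goes through $\bcopy'$ to the DFS state.

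\textbf{The final sweep cannot be loss-free.} Without zero tests, the only way to ensure no stale flag survives when the root becomes $\ecopy$ is to reset unrestored children before each ascent, exactly as in the first pass. So your claim that no deletions occur after duplication is false. This is harmless only because a lossy copy encoder permits post-duplication deletions in the subtree of $v$, and nowhere else.

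In your design the copy-side nodes also carry $\mathsf{active}/\mathsf{done}$ flags. If the sweep clears them, it is lossy on the copy side as well. The final copy then need not contain the final original, and the result is not a lossy copy encoder. This matters beyond the definition: the property that the copy dominates the original is exactly what the biggest-child and $(k+1)$-comparator constructions rely on.

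The fix is the paper's. Give each copy node its final label the moment it is finished: $v_i'$ is relabelled $\omega$ on ascent, and $v_1'$ becomes $\copied$. Then only the original side ever needs a (lossy) restoring sweep. For the same reason your ``alternating flags'' variant does not meet the specification, which requires the subtree of $v$ to end with its original labels.
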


Assuming Theorems~\ref{thm:smallest-child-gadgets} and~\ref{thm:copy-gadgets}, 
we now show how to prove Theorem~\ref{thm:Hardy-forward}, i.e., how to construct
$\net^{fwd}$. To this end, we have to devise a $k$-NRCS that allows us to 
go from $C_{\alpha+1,n}$ to $C_{\alpha,n}$ and $C_{\lambda,n}$ to $C_{\lambda_n,n}$.
We have already sketched how to do the former and so we concentrate on the latter.
To go from $C_{\lambda,n}$ to $C_{\lambda_n,n}$, we first invoke the $k$-smallest child gadget to mark the child $v$ of the root corresponding to the smallest term of the CNF of $\lambda$. We non-deterministically decide whether this term is of the form $\omega^{\beta+1}$
or $\omega^{\lambda'}$ for some limit ordinal $\lambda'$. In the former case, there must be a leaf child of $v$, which we remove (to get the ordinal to $\omega^{\beta}$) and then invoke the copy gadget repeatedly for at most $n-1$ times (where $n$ is stored by means of $n$ children of the root labelled by $\#$). We can do this by transforming a $\#$ into $\#'$ every time we invoke the copy gadget, then at some point non-deterministically resetting all the $\#$ and then transforming all the labels $\#'$ back into $\#$.
In the other case of $\omega^{\lambda'}$, we now have to compute $\lambda'_n$, for which we need the smallest ordinal in the CNF of the tree of $v$. So, we now invoke 
the $(k-1)$-smallest gadget on the node $v$ and then recursively continue. 

If we never make any lossy errors, then our computation will be perfect and we will end
up exactly moving to $C_{\lambda_n,n}$. On the other hand, if we make lossy errors, then the Hardy computation value will only go smaller, i.e., we will only move to some $C_{\lambda',n'}$ such that $H^{\lambda'}(n') \le H^{\lambda}(n)$.
The intuitive reason is that, through lossy errors, we are only dropping down to lower ordinals (not just over the usual ordering, but in a precise sense over the so-called structural ordering, which is the same as the induced subgraph ordering on the corresponding trees of the ordinals) and this will only make the Hardy computation go smaller.
The construction of the $k$-NRCS $\net^{bwd}$ is symmetric and requires other dual types of gadgets. The formal details can be found in Appendices~\ref{sec:appendix-Hardy-forward} and~\ref{sec:appendix-Hardy-backward}.

Therefore, in order to prove Theorem~\ref{thm:Hardy-forward}, it suffices to construct smallest-child and copy gadgets, which we shall do next.

\subsection{Constructing Smallest-Child and Copy Gadgets}\label{subsec:smallest-child}

We first show how to construct copy gadgets, because we will also need them for our smallest-child gadgets. The main idea is the following: In order to construct a $k$-copy gadget, we need to start from a marked encoder $T$, where some child $v$ of the root is labelled by $\marked$ and we need to (lossily) copy the subtree rooted at $v$. To begin with, we make another copy of just the node $v$ (call it $v'$).
Then, the idea is simply to perform a (lossy) depth-first search (DFS) of the subtree rooted at $v$ and copy branches of this subtree at $v$ to branches of the subtree at $v'$.
More precisely, we start at $v$ and systematically explore each of its branches in a DFS manner, marking each branch
as we go along. At any point, we non-deterministically decide if either the current branch can be extended or we have hit a leaf. In the former case, we recursively continue by selecting some child at the end of this branch. 
In the latter case, we simply copy this branch as another branch of $v'$ and 
reset all the children of the last node of this branch. (If we had guessed correctly, there would be no such children to reset; if we had guessed wrongly, then we make a lossy error).
Then we go back up one level along our current branch and continue the DFS. 
In this way, we would have (lossily) copied the subtree of $v$ as the subtree of $v'$.
When we end the DFS, we remove all the marked information from all of the branches in $v$ and label the root as $\ecopy$ and $v'$ as $\copied$.
This proves Theorem~\ref{thm:copy-gadgets}. (See Appendix~\ref{subsec:appendix-copy-gadgets} for more details).

To show how to construct a $k$-smallest-child gadget, we first introduce two other types of gadgets. The first is a $k$-biggest-child gadget, which given an encoder of some ordinal $\alpha$, (lossily) marks the child of the root representing the biggest term in the CNF of $\alpha$. The second is a $k$-comparator gadget, which given an encoder of some ordinal $\alpha$ and two marked children $v_A$ and $v_B$ of the root,
(lossily) compares the ordinals represented by the subtrees of $v_A$ and $v_B$ and marks the smaller one among the two of them. We now show how to construct
all these three types of gadgets, by induction on $k$, following the structure outlined in Figure~\ref{fig:gadgets}.

For the base case of $k = 1$, we first show the following.
\begin{theorem}\label{thm:1-comparator}
    An $1$-comparator gadget exists.
\end{theorem}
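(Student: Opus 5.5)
Since the formal definition of a $k$-comparator gadget is only described informally above, I first fix it in the same style as the smallest-child gadget. A \emph{comparator-initialized encoder} for $\alpha$ is an encoder whose root is labelled by a fresh start state (say $\mathsf{b}_{cmp}$) and which has exactly two children $v_A,v_B$ carrying dedicated labels $A$ and $B$. A \emph{lossy comparator encoder} for it is a tree that agrees with the input except for three changes: nodes may be deleted from the subtrees of $v_A$ and $v_B$; the root carries an end label; and, among $v_A,v_B$, the one whose remaining subtree represents the smaller ordinal is relabelled by a ``smaller'' marker (ties broken arbitrarily), while the other reverts to $\omega$. The gadget must satisfy two conditions. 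First, a run must exist to the perfect version, in which nothing is deleted. Second, every run reaching the end label at the root must produce a lossy comparator encoder.

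For $k=1$ the situation is simple. An encoder for $\alpha\le(\Omega_2)_\ell=\omega^\ell$ has height at most $1$. Each child $v$ of the root is a leaf carrying, alongside its renaming, an original label $\omega^{j}$ with $0\le j\le\ell$, and it represents the term $\omega^{j}$. So $v_A$ and $v_B$ are single nodes with labels of the form $(A,\omega^{j_A})$ and $(B,\omega^{j_B})$. There is no subtree under them, so no loss is possible, and comparing them is a purely finite-state task.

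I would therefore build the gadget directly from renaming transitions of depth $1$. For each pair $j_A,j_B\in\{0,\dots,\ell\}$ I add one transition
\[
(\mathsf{b}_{cmp},(A,\omega^{j_A}))\xdashrightarrow{u}(\mathsf{b}_{cmp}^{\,j_A},(\mathit{smaller},\omega^{j_A}))\quad\text{if } j_A\le j_B,
\]
and the symmetric one when $j_B<j_A$. The catch is that a transition only reads a single path, so it cannot see both children at once.

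To handle this I split the work into two steps, using the root state as memory. In the first step, a transition on the path $(\mathsf{b}_{cmp},(A,\omega^{j}))$ relabels the root to $\mathsf{b}_{cmp}^{\,j}$ and relabels $v_A$ to $(A',\omega^{j})$, so the exponent $j$ is recorded in the finite root state. In the second step, a transition on the path $(\mathsf{b}_{cmp}^{\,j},(B,\omega^{j'}))$ does the comparison. If $j'<j$ it marks $v_B$ as smaller and moves the root to an intermediate state $\mathsf{c}_A$. Otherwise it resets $v_B$'s label to $(\omega,\omega^{j'})$ and moves the root to $\mathsf{c}_B$. A final transition from $\mathsf{c}_A$ (respectively $\mathsf{c}_B$) relabels the $A'$ child to $(\omega,\omega^j)$ (respectively to the smaller marker) and sets the root to the end label.

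Correctness then follows from three observations.
\begin{itemize}
\item Every step reads a uniquely labelled child, since the labels $A$, $A'$ and $B$ each occur exactly once.
\item The root-state sequence forces the transitions to occur in exactly this order.
\item No transition deletes or creates nodes.
\end{itemize}
Hence the only run reaching the end label produces the perfect, and so in particular lossy, comparator encoder. The number of states and transitions is $O(\ell^2)$, which is polynomial in $\ell$. I expect the main obstacle to be not this construction but pinning down a definition of the comparator gadget that is strong enough for the inductive step in Figure~\ref{fig:gadgets}.
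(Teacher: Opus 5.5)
Your construction is essentially the paper's own argument. For $k=1$ the two marked children are leaves carrying exponents $j_A,j_B\le\ell$, so nothing can be lost and the comparison is a finite-state task. You implement it correctly by first storing one exponent in the root state.

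\textbf{The gap.} The object you prove exists is weaker than the comparator gadget the paper needs. The paper's definition (in the appendix) requires a three-way answer: at the end the root is labelled $\ecmp$, and either both marked children are labelled $\equalcmp$, or the bigger one is labelled $\bigcmp$ and the other $\smallcmp$. Your version breaks ties arbitrarily (your construction always declares $v_A$ smaller on a tie) and sends the non-smaller child back to $\omega$. This is too weak for every later use of the gadget:
\begin{itemize}
\item The backward Hardy computation uses the $\equalcmp$ outcome to locate the $n-1$ further copies of $\beta$.
\item The $(k+1)$-comparator must recognise that the two biggest grandchildren are equal in order to discard them and continue, and it must itself output $\equalcmp$.
\item The paper's biggest-child construction accepts its candidate only on $\bigcmp$ or $\equalcmp$. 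With your tie-breaking, its perfect run therefore deadlocks whenever the biggest term has coefficient at least $2$.
\item The smallest-child gadget relabels the losing child to $\omega'$. This is impossible once that child looks the same as the uncompared $\omega$-children.
\end{itemize}
You cannot recover equality afterwards by comparing twice with roles swapped. At higher levels each lossy run may shrink the subtrees, so the two one-sided answers refer to different pairs of ordinals and need not combine into an equality.

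\textbf{The fix.} At level $1$ the repair is immediate. In your second step, branch three ways on the exponent $j'$ of $v_B$ against the stored $j$:
\begin{itemize}
\item move the root to one of three intermediate states according to whether $j'=j$, $j'<j$ or $j'>j$;
\item at the same time relabel $v_B$ to $\equalcmp$, $\smallcmp$ or $\bigcmp$ respectively, keeping $\omega^{j'}$ in the pair;
\item from that state, a final transition relabels the $A'$-child to $\equalcmp$, $\bigcmp$ or $\smallcmp$ respectively and sets the root to $\ecmp$.
\end{itemize}
Your three observations still apply unchanged: each marker occurs once, the root states force the order, and no node is created or deleted. Hence every run reaching $\ecmp$ ends in the perfect compared encoder, which gives both conditions of the paper's definition.
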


Indeed a 1-comparator gadget can be easily constructed: Any marked children in the encoder for a 1-comparator gadget are leaves whose labels must have entries of the form $\omega^j$ and $\omega^{j'}$ for $j, j' \le \ell$ and so it is quite easy to compare which of them is bigger ($j > j'$ or $j < j'$) or whether both of them are equal ($j = j'$) by means of transitions of an 1-NRCS.

As the next step, we show the following.
\begin{restatable}{theorem}{ksmallest}\label{thm:k-smallest}
    Suppose, for some $k$, a $k$-comparator gadget exists. Then a $k$-smallest-child gadget exists.
\end{restatable}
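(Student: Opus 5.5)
We build the $k$-smallest-child gadget as a lossy tournament over the $\omega$-labelled children of the root, with the $k$-comparator gadget as a subroutine. The comparator's specification is the one sketched in the text. Started on an encoder whose root carries a distinguished ``compare'' label and two children marked $A$ and $B$, it has a run to the tree where the smaller of the two is marked (say by $\mathit{win}$) and the other is unmarked, with nothing lost. Any run reaching its end root label can only delete nodes inside the subtrees of $v_A$ and $v_B$, and it marks a child whose ordinal is $\le$ the other's in the resulting tree. Ties may be marked either way.

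The gadget works as follows. From $\bsmall$, nondeterministically pick one $\omega$-child and relabel it $\mathit{cand}$. Then repeat: pick another child still labelled $\omega$, relabel it $B$, relabel $\mathit{cand}$ as $A$, and invoke the comparator. Afterwards the winner becomes $\mathit{cand}$ and the loser is relabelled $\mathit{done}$. When the gadget guesses that no $\omega$-children remain, it relabels $\mathit{cand}$ as $\smallest$, relabels every $\mathit{done}$ back to $\omega$, and sets the root to $\esmall$. The final guess can be wrong, and an $\omega$-child could be skipped. To prevent this, before finishing we perform a reset at the root of the children still labelled $\omega$. If any remained, this deletes whole children $v_j$ of the root. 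That is still a lossy smallest-child encoder, since the definition lets $T'$ lack nodes from subtrees of the $v_j$, and a $v_j$ may be absent entirely. All relabellings are renaming transitions on paths of length $1$. The root label keeps the gadget's phase, so the comparator is called only in its initial configuration.

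Completeness is straightforward. With correct guesses and a perfect run of the comparator at each step, the tournament visits every child exactly once. No resets delete anything, and the final $\mathit{cand}$ is a minimum among all children, so the output is the perfect smallest-child encoder.

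Soundness is the main obstacle. We prove an invariant by induction on the number of comparator calls: in the current tree, the ordinal of $\mathit{cand}$ is $\le$ that of every $\mathit{done}$ child. The subtlety is that a later lossy comparison may shrink the subtree of $\mathit{cand}$ (or of $B$), and the invariant must survive this. Shrinking $\mathit{cand}$ only lowers its ordinal, because deleting nodes from a tree $T_\beta$ yields an encoder of an ordinal $\le \beta$: the induced-subgraph (structural) order implies the ordinal order. The $\mathit{done}$ children are never touched again. The comparator's guarantee gives that the new $\mathit{cand}$ is $\le$ the new loser, and in either outcome the new $\mathit{cand}$ is $\le$ the old $\mathit{cand}$'s current ordinal, hence $\le$ all earlier $\mathit{done}$ children. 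The monotonicity fact needs a short separate check. A deleted leaf at level $k$ labelled $\omega^j$ corresponds to removing a whole term. Label changes at level $k$ are never lossy, since they are renamings that keep the original label as a pair. With this fact, the final tree satisfies exactly the lossy smallest-child conditions, and only nodes inside the $v_i$ subtrees were ever deleted.
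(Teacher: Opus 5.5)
This is the paper's construction: a tournament driven by the $k$-comparator, a final reset of the unvisited $\omega$-children, and a relabelling of the eliminated children back to $\omega$. Your invariant supplies the soundness argument the paper leaves implicit: the candidate's ordinal is at most that of every eliminated child, and this survives because deletions only lower ordinals. Letting the tournament stop right after the first candidate is chosen also repairs an edge case. The paper's listed transitions reach $\esmall$ only via $\ecmp''$, i.e.\ after a completed comparison, so they cannot finish on a root with a single $\omega$-child.

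One step fails as stated, however. ``Relabel every $\mathit{done}$ back to $\omega$'' cannot be performed atomically by an NRCS. It must be a loop of renamings $(\mathit{conv},\mathit{done}) \xdashrightarrow{u} (\mathit{conv},\omega)$ whose termination is guessed. If the guess is premature, a run reaches $\esmall$ with some children of the root still labelled $\mathit{done}$. Such a tree is not a lossy smallest-child encoder, since those children carry neither $\smallest$ nor their original label. So your closing soundness claim is false for the gadget as described.

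The fix is the one the paper uses for $\omega'$. After the loop, fire a reset $(\mathit{conv}) \xdashrightarrow{r,\mathit{done}} (\mathit{conv}')$. This deletes whole children of the root, which the definition permits.

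You should also state the order of the finishing phase explicitly. The reset of leftover $\omega$-children must come before the relabelling loop; otherwise it would delete the restored children and the perfect run would be lost. The reset of leftover $\mathit{done}$-children must come after the loop.
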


The main idea behind this theorem is the following: In order to compute the child of the root corresponding to the smallest term, we first pick some child $v_A$ of the root labelled by $\omega$. Then,
we pick another child $v_B$ of the root labelled by $\omega$ and use the $k$-comparator gadget to compare the subtrees of $v_A$ and $v_B$. Once we have marked the smaller one (say $v_A$), we label the other one ($v_B$) as $\omega'$ and pick some other child $v_C$ labelled by $\omega$  to now compare with $v_A$.
We continue doing this until we non-deterministically guess that we have gone over all of the children. Once we make that guess, we mark the current smallest candidate as $\smallest$, reset all children of the root labelled by $\omega$, convert all the children of the root labelled by $\omega'$ back into $\omega$ and then mark the root as $\esmall$. This completes the description of the main idea behind Theorem~\ref{thm:k-smallest}.


As the next step, we can now show the following.
\begin{restatable}{theorem}{kbiggest}\label{thm:k-biggest}
    Suppose, for some $k$, a $k$-comparator gadget exists. Then a $k$-biggest-child gadget exists.
\end{restatable}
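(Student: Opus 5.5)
We build the $k$-biggest-child gadget by a tournament that exactly mirrors the construction in the proof idea for Theorem~\ref{thm:k-smallest}, with the comparator used to keep the \emph{larger} candidate. Starting from a big-initialized encoder $T$ (root labelled by a start-label for the biggest search), we nondeterministically pick a child $v_A$ of the root labelled $\omega$ and relabel it as the current candidate. We then repeatedly pick another child $v_B$ labelled $\omega$, mark the pair $v_A,v_B$ as the two inputs expected by the $k$-comparator gadget, and run that gadget. It (lossily) marks the smaller of the two subtrees; we keep the \emph{other} one as the new candidate, and relabel the loser as $\omega'$. When we nondeterministically guess that no unexamined children remain, we proceed as follows: we relabel the candidate as $\biggest$, apply a reset transition at the root removing all children still labelled $\omega$, rename every $\omega'$ child back to $\omega$ with renaming transitions of length $1$, and finally relabel the root by the end-label. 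Handshaking between phases is done purely through root and child labels, so all sub-gadget states can be made disjoint by tagging. If the comparator reports equality, either may be kept.

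\textbf{Completeness.} In the first requirement, the run that guesses correctly at every step never loses nodes. We pick children in any order and at each step invoke the run of the comparator that yields the perfect outcome. At the end there are no $\omega$-labelled children left, so the reset deletes nothing. Since $<$ is linear on ordinals, the final candidate is a maximum among the children, which gives the perfect biggest-child encoder.

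\textbf{Soundness.} This is the part requiring care. Take any run ending with the root labelled by the end-label. Lossy errors arise in two places.

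\begin{itemize}
\item \emph{Comparator losses.} These only delete nodes inside the subtrees of the compared children, and those children remain present.
\item \emph{The final reset.} It deletes whole unexamined children, which is permitted since a lossy encoder may drop any node in the subtrees of the $v_i$.
\end{itemize}

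The key invariant, maintained over the tournament, is that the current candidate's subtree represents an ordinal $\ge$ that of every $\omega'$-labelled child in the \emph{current} tree. The invariant is preserved because of two facts. First, comparator losses only shrink the subtrees of the two compared children, and the comparator's guarantee is stated with respect to the trees after the loss. Second, subtrees of $\omega'$ children are never touched again.

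The main obstacle is exactly this interaction. A later comparison may shrink the candidate's subtree, so it could drop below an earlier loser that has already been frozen as $\omega'$. I would handle this in one of two ways.

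\begin{itemize}
\item Show that the comparator gadget from the induction only loses nodes in the subtree it marks as smaller. Then the winner, which is the candidate we keep, is never shrunk, and the invariant holds.
\item Failing that, wrap each comparison by first invoking the $k$-copy gadget of Theorem~\ref{thm:copy-gadgets} on the candidate. The comparison is run on the copy, and the winner is then chosen accordingly. This needs a further argument that a lossy copy only underestimates: the winner keeps its original subtree, which is bigger than the copy used in the comparison, so the invariant survives.
\end{itemize}

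With the invariant in hand, the final tree satisfies the definition of a lossy biggest-child encoder, i.e.\ the analogue of the lossy smallest-child encoder with the root relabelled and the marked child maximal among the remaining children.
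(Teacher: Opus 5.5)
\textbf{There is a genuine gap in your soundness argument, and neither repair closes it.} The problem is not only that the kept candidate can shrink; it is the round in which the \emph{challenger} wins.

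The comparator's specification allows the following run: delete nodes below the candidate $v_A$ until it falls below $v_B$, and only then label $v_B$ bigger. You then learn only that $v_B$ dominates the \emph{shrunk} $v_A$. Every earlier $\omega'$-loser $w$, however, was dominated only by an earlier, larger version of $v_A$. Concretely, if originally $v_A > w > v_B$, this run makes $v_B$ the final $\biggest$. But $w$ is later restored to $\omega$ and is strictly larger, so the output is not a lossy biggest-child encoder.

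Your first fix does not help, for two reasons:
\begin{itemize}
\item The property you want is not part of the comparator's specification. A lossy compared encoder may lose nodes under both inputs, and the paper's $(k+1)$-comparator does.
\item The run just described already satisfies that property, since only the subtree marked smaller lost nodes.
\end{itemize}

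Your second fix has the inequality backwards. A lossy copy encoder permits deletions in the original \emph{after} duplication. So the guarantee is original-after $\le_{is}$ copy $\le_{is}$ original-before: the copy dominates the original, not conversely. Even granting your direction, the same counterexample applies with $v_A$ replaced by its copy, since $v_B$ then only beats a shrunk copy of the candidate.

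\textbf{The missing idea is to fix the winner once and then verify it,} rather than run a tournament whose candidate can change. The paper proceeds as follows:
\begin{itemize}
\item It nondeterministically guesses the biggest child $v$ at the start.
\item It copies $v$ a single time with the $k$-copy gadget. This leaves $v$ labelled $\omega'$ and a new node $v'$ labelled $\copied$, and $v'$ is never touched again.
\item It runs the comparator on the \emph{original} $v$ against each remaining $\omega$-child in turn, with the comparator modified to carry $v$'s tag. It deadlocks whenever $v$ is reported strictly smaller.
\item Finally, it resets the unexamined $\omega$-children, restores $\omega'$ to $\omega$, deletes the original $v$, and relabels $v'$ as $\biggest$.
\end{itemize}

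Soundness follows from the copy guarantee. Right after copying, $v \le_{is} v'$, and $v$ only shrinks afterwards. Hence every loser $w$ satisfies $w \le v_t \le v'$ in the ordinal order, where $v_t$ is the version of $v$ when it was compared with $w$. This uses that the induced-subgraph order on encoders implies the ordinal order.

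Completeness is as in your argument: guess the true maximum and take perfect runs of the copy gadget and the comparator.
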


While the idea behind this theorem is the same as the one for Theorem~\ref{thm:k-smallest}, there is one non-trivial step. To explain this step, we first outline an incorrect idea and then show how to correct it: To compute the child of the root corresponding to the biggest term, we pick two children $v_A, v_B$ of the root labelled by $\omega$, compare them using the $k$-comparator gadget and then mark the bigger one (say $v_A$). Then, we pick another child $v_C$ of the root and compare it with $v_A$ and so on. In this way, when we non-deterministically guess that we have gone over all the children, we stop and declare our current candidate child as the biggest child. 

The problem with this approach is the following: After comparing $v_A$ with $v_B$,
suppose $v_A$ is declared the bigger child. After that, suppose $v_A$ and $v_C$ are compared and $v_A$ is once again declared the bigger child. Because this computation is lossy and some subtrees of $v_A$ might be lost, it need not be the case anymore that $v_A$ is still bigger than $v_B$. To solve this problem, we do the following: At the start of the computation, we pick some child $v_A$ of the root as a \emph{candidate biggest-child} and \emph{copy} its subtree. Then, we keep the copied version and compare the original version with each of the other children. If all the comparisons declare that the original version of $v_A$ is the bigger child,
then since the copied version is at least as big as the original version, we are guaranteed that the copied version of $v_A$ is the biggest-child. In this way, we can (lossily) compute the biggest-child. This completes the description of the main idea behind Theorem~\ref{thm:k-biggest}.

As a final step for the induction, we can then show the following theorem.

\begin{restatable}{theorem}{kplusonecomparator}\label{thm:k+1-comparator}
    Suppose $k$-comparator, $k$-smallest-child and $k$-biggest-child gadgets exist.
    Then a $(k+1)$-comparator gadget exists.
\end{restatable}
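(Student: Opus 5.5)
To build the $(k+1)$-comparator gadget, I would implement the lexicographic comparison of Cantor normal forms directly. Let $v_A$ and $v_B$ be the two marked children of the root. Their subtrees encode terms $\omega^{\beta_A}$ and $\omega^{\beta_B}$, so comparing them amounts to comparing $\beta_A$ and $\beta_B$. The children of $v_A$ (resp.\ $v_B$) encode the CNF terms of $\beta_A$ (resp.\ $\beta_B$). Relative to $v_A$, these subtrees have height at most $k$, so they are exactly the objects the $k$-level gadgets handle once we treat $v_A$ as a local root. Concretely, a $k$-level gadget can be run below $v_A$ by prefixing every transition with the path (root, $v_A$) and relabelling $v_A$ to the gadget's root label ($\bsmall$, etc.). Whenever I use a $k$-gadget on a subtree I will do so in this embedded way.

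I would first make the comparison destructive, by copying. Using the copy gadget of Theorem~\ref{thm:copy-gadgets}, I copy the subtrees of $v_A$ and $v_B$ into working copies $w_A$ and $w_B$; the originals are kept to be marked at the end. Then I loop. I apply the $k$-biggest-child gadget (Theorem~\ref{thm:k-biggest}) inside $w_A$ and inside $w_B$ to isolate their largest CNF terms $c_A$ and $c_B$. Next I compare $c_A$ with $c_B$ using the $k$-comparator gadget. Since $c_A$ and $c_B$ live under different parents, I first copy one of them, say $c_B$, as a sibling under $w_A$, so that both sit below a common node and the comparator applies.
\begin{itemize}
\item If the comparator declares $c_A<c_B$ (resp.\ $c_B<c_A$), I guess strictness, stop, and declare $v_A$ (resp.\ $v_B$) the smaller child.
\item If I guess equality, I delete $c_A$ and $c_B$ (together with the auxiliary copy) and repeat on the remaining terms.
\item If $w_A$ becomes empty while $w_B$ is not (or both become empty), I declare $v_A$ smaller (or either one); each of these is checked by a guess whose wrong case is made lossy with a reset.
\end{itemize}
Equality of two terms would be tested by running the comparator in both directions on two copies and requiring both outcomes to say ``not smaller''. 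At the end I reset $w_A$ and $w_B$ and relabel $v_A$, $v_B$ and the root appropriately.

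Correctness has two parts. For the perfect run, the gadgets' perfect runs realise exact lexicographic CNF comparison, which is precisely the ordinal order. For the lossy direction, I need that whenever the gadget marks $v_A$, the final (possibly shrunken) subtree of $v_A$ really encodes an ordinal at most that of $v_B$. This is why the originals are compared only through copies, as in the biggest-child argument. Since every copy is $\le_{is}$ its original, and deleting nodes only lowers the represented ordinal in the usual order (the structural order refines it), I must arrange that the losses which matter fall on the ``winning'' side. A loss in $w_A$ alone must not make $v_A$ falsely appear smaller. To avoid this, I would derive the verdict only from inequalities whose lossiness points the safe way: I would make each equality check witness $c_B\le c_A$ via copies of $c_A$ taken before any loss, and then transfer the verdict to the originals by also trimming $v_B$'s original subtree consistently.

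The hardest step is exactly this lossy-soundness bookkeeping: relating the ordinal of the original $v_A$ and $v_B$ to the sequence of lossy equality and strictness certificates. My first attempt is to make all work happen on copies, so that each certificate has the form ``copy of $A$-side term $\le$ true $B$-side term''. I would then prove by induction on the loop that the prefix of $\beta_A$ processed so far is at most the corresponding prefix of $\beta_B$, finishing with a strict or empty case. If that fails, I would simply allow the final configuration to discard parts of $v_B$'s original subtree, since the specification permits losses there, until the verdict holds.
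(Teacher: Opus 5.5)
Your proposal has a genuine gap, and it is the step you yourself flag as hardest. Nothing in your construction ties the verdict to the subtrees that remain under $v_A$ and $v_B$, yet the second condition of a comparator gadget must hold for \emph{every} run that reaches the final root label. You leave the originals untouched and read the verdict off working copies that the biggest-child and comparator runs may keep shrinking. For example, suppose $v_A$ and $v_B$ both have a single child encoding $\omega^{\omega}$. The comparator run on $c_A$ and the copy of $c_B$ may delete the child of $c_A$ and then correctly report $1<\omega^{\omega}$. You then declare $v_A$ smaller, while the untouched $v_A$ and $v_B$ encode equal ordinals, so the final tree is not a lossy compared encoder.

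The loop is unsound even at the level of the working copies, because the comparator run can destroy the maximality of $c_A$. Suppose the terms below $w_A$ are $\omega^{\omega}$ and $\omega^{2}$, and the only term below $w_B$ is $\omega$. The same loss yields the verdict $1<\omega$, although $w_A$ still carries $\omega^{2}>\omega$.

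Your suggested repairs do not close this:
\begin{itemize}
\item A lossy copy bounds the post-copy original from above only until some gadget deletes nodes in the copy. So ``copy of the $A$-side term $\le$ true $B$-side term'' stops being a certificate once the comparator has run on that copy.
\item Two comparator runs ``in both directions'' on different copies certify nothing jointly, since each verdict refers only to its own shrunken inputs.
\item Discarding parts of an original ``until the verdict holds'' would itself need verification by another lossy comparison. Moreover, shrinking $v_B$ can never make a false ``$v_A$ smaller'' true.
\end{itemize}

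What is missing is the paper's combination of in-place comparison, lock-step mirroring, and re-certification. In each round, mark the largest children $w_A$ of $v_A$ and $w_B$ of $v_B$ with the $k$-biggest-child gadget. Copy them to $w_A', w_B'$, which stay in the tree as children of $v_A, v_B$. Run the $k$-comparator on $w_A', w_B'$, forcing $v_A$ and $v_B$ to carry identical labels so that they act as one common parent. Replay every comparator transition on $w_A$ and $w_B$, so that $w_A \le_{is} w_A'$ and $w_B \le_{is} w_B'$ hold throughout. Then run a biggest-child check with the shrunken $w_A$ (resp.\ $w_B$) as candidate against all remaining $\omega$-children, and discard $w_A, w_B$.

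If the check passes, $w_A'$ and $w_B'$ are still the largest children of $v_A$ and $v_B$ in the current tree. A strict verdict on them therefore transfers to $v_A, v_B$, since an ordinal whose largest CNF term is strictly smaller is strictly smaller. In the equal case, $w_A', w_B'$ are relabelled $\omega'$ and kept rather than deleted, so later losses only hit terms they already dominate. Every verdict is thus about nodes that remain under $v_A$ and $v_B$, which is exactly what a lossy compared encoder requires.
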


Similar to Theorem~\ref{thm:k-biggest}, we will first sketch an incorrect idea for this theorem and then show how to correct it. Suppose we are given two nodes $v_A$ and $v_B$ of the root and suppose their corresponding ordinals are $\alpha = \sum_{i=1}^n \omega^{\alpha_i}$ and $\beta = \sum_{j=1}^m \omega^{\beta_j}$. Note that $\alpha < \beta$ iff either $\alpha_1 < \beta_1$ or $\alpha_1 = \beta_1$ and $\sum_{i=2}^n \alpha_i < \sum_{j=2}^m \beta_j$. Hence to compare them, we first use the $k$-biggest-child gadget to mark the biggest children of $v_A$ and $v_B$ respectively and then use a $k$-comparator gadget on them. If one of them turns out to be  smaller than the other,
then we can immediately deduce which one among $v_A$ or $v_B$ is smaller.
On the other hand, suppose both the biggest children of $v_A$ and $v_B$ are equal.
In that case, we mark them as $\omega'$ and we move on to finding the next biggest children of $v_A$ and $v_B$ and repeat the process again.

However, there is one problem with this approach: Because of lossy computation, upon applying the $k$-comparator gadgets on the biggest children of $v_A$ and $v_B$,
they might cease to remain as the biggest children of $v_A$ and $v_B$. To fix this problem, we do the following: Before applying the $k$-comparator gadget on these two biggest children, we first copy them and then apply the $k$-comparator gadget on the copies, whilst
also performing the exact same operations of the $k$-comparator gadget on the original versions. This ensures that the original versions always stay smaller or equal to the copied versions. Then at the end of executing the $k$-comparator gadget, we check that the original versions are still the biggest children of $v_A$ and $v_B$ and if so, remove them (if not, we deadlock). Since the copied versions are at least as big as the original versions, if this check passes, the copied versions will still remain as the biggest children of $v_A$ and $v_B$. This concludes the description of the main ideas behind Theorem~\ref{thm:k+1-comparator}, thereby proving Theorem~\ref{thm:smallest-child-gadgets} and hence also
Theorem~\ref{thm:Hardy-forward}.
Full details of all the constructions of the gadgets can be found in Appendix~\ref{sec:appendix-gadgets}.

\section{Upper Bound for Coverability of $k$-NRCS}\label{sec:upper-bounds}

In this section, we prove our upper bound for the coverability problem for $k$-NRCS. Namely we shall show that for any $k \ge 1$, coverability for $k$-NRCS is in $\fastf{\Omega_k}$. We first present a global proof structure of how we prove this result. First, in subsections~\ref{subsec:wqo} and~\ref{subsec:WSTS} we show that $k$-NRCS can be viewed as a well-structured transition system (WSTS)~\cite{AbdullaCJT96,FinkelS01} by means of the induced subgraph ordering on trees of height at most $k$, which is known to be a (normed) well-quasi-ordering (nwqo). This allows us to reduce the problem of proving upper bounds for coverability of $k$-NRCS to the so-called length function theorems for controlled bad sequences over the induced subgraph nwqo. Then, in subsection~\ref{subsec:trees-nested-multisets}, we show that this nwqo can be interpreted in terms of another nwqo that we call \emph{nested multisets}. In the next three subsections, we then show how to prove length function theorems for controlled bad sequences over nested multisets. We do this, by leveraging the framework of Schmitz and Schnoebelen~\cite{arxiv-SchmitzS}: First, we compute the length function theorems by unraveling the descent equation; next, we approximate each step of the unraveling by reflections; finally, we use the machinery of order type and derivatives to bound the estimate of each reflection.

Having stated our proof strategy for the upper bound, we now proceed to subsection~\ref{subsec:wqo}, where we recall the notion of \emph{well-quasi-orders}.

\subsection{WQOs and Controlled Bad Sequences}\label{subsec:wqo}

\subparagraph*{Well-Quasi-Orders.} A \emph{quasi-order} over a set $X$ is a relation $\le_X$ which is reflexive and transitive. We write $x <_X y$ to mean that $x \le_X y$ and $y \nleq_X x$. A \emph{well-quasi-order} (wqo) over a set $X$ is a quasi-order $\le_X$ such that in any infinite sequence $x_0, x_1, x_2, \dots, $ of elements from $X$, there exists $i < j$ such that $x_i \le_X x_j$. Finally, a \emph{norm function} over $X$ is a function $\norm_X: X \to \mathbb{N}$ such that for any $n \in \mathbb{N}$, the set $X_{\le n} := \{x \in X : |x|_X \le n\}$ is finite. 

A normed well-quasi-order (nwqo)
is a tuple $(X,\le_X,\norm_X)$ such that $(X,\le_X)$ is a wqo and $\norm_X$ is a norm.
The set $X$ will be called the domain of the nwqo.
If $(X,\le_X,\norm_X)$ is a nwqo and $S \subseteq X$, then the nwqo induced by $S$ is the nwqo $(S,\le_S,\norm_S)$ where $\le_S$ and $\norm_S$ are the restrictions of $\le_X$ and $\norm_X$ to $S$ respectively.

Whenever the underlying order $\le_X$ and norm $\norm_X$ are clear from the context, we will drop those and simply refer to the nwqo by the set $X$. 
Given any two nwqos $X, Y$ we say that $X \equiv Y$ when $X$ and $Y$ are \emph{isomorphic structures}, i.e., there must be an isomorphism $f$ from $X$ to $Y$ such that $x \le_X x'$ iff $ f(x) \le_X f(x')$ and $|x|_X = |f(x)|_Y$ for any $x,x' \in X$.


\begin{example}
    Let $\Gamma_i$ denote the nwqo over any set $\{x_1,\dots,x_i\}$ such that any two distinct elements are unordered and $|x_j| = 0$ for any $j$. Clearly, $\Gamma_i$ is a nwqo. (Note that $\Gamma_0$ has empty domain and is hence called the empty nwqo).
    Similarly, the set $\mathbb{N}$ with the usual ordering and the identity function as the norm is also a nwqo. 
\end{example}

\subparagraph*{Good, bad and controlled sequences.} 

Let $X$ be a nwqo. A sequence $x_0, x_1, \dots$ over $X$ is called \emph{good} if there are $i < j$
such that $x_i \le_X x_j$. A sequence that is not good is called \emph{bad}. By definition, every bad sequence in an nwqo is finite.

For our purposes, a control function is any function $g: \mathbb{N} \to \mathbb{N}$ that is strictly increasing, primitive recursive, inflationary ($g(n) \ge n$) and superadditive ($g(x+y) \ge g(x) + g(y)$). For any $n \in \mathbb{N}$, a sequence $x_0, x_1, \dots$ is called $(g,n)$-\emph{controlled}
if for all $i$, $|x_i|_X \le g^i(n) = \underbrace{g(g(\cdots(g(n)\cdots))}_{i \text{ times }}$.
It is known that for every nwqo $X$, every control function $g$ and every $n$, there exists a finite maximum length $L$ for $(g,n)$-controlled bad sequences of $X$ ~\cite[Proposition 2.5]{SchmitzS11}. Hence,
we can define a \emph{length function} $L_{X,g}(n)$ which for every $X,g,n$ maps it to the maximum length of $(g,n)$-controlled bad sequences over $X$. 

\subparagraph*{Constructing complex nwqos.}

The final item that we shall need regarding nwqos are two constructions that allow us to construct ``complex'' nwqos from simpler ones. The first construction that we shall use is the disjoint sums construction.

\begin{definition}[Disjoint sums]
    Let $X_1$ and $X_2$ be two nwqos. Their disjoint sum is the nwqo $X_1+X_2$ whose domain is the set $\{(i,x) : x \in X_i\}$, whose ordering is given by 
    $(i,x) \le_{X_1+X_2} (j,x')$ iff $i = j$ and $x \le_{X_i} x'$
    and whose norm is given by $|x|_{X_1+X_2} = |x|_{X_i}$ if $x \in X_i$.
    It is easy to see that $X_1+X_2$ is a nwqo when both $X_1$ and $X_2$ are.
    Indeed any infinite bad sequence over $X_1+X_2$ must contain an infinite bad sequence from either $X_1$ or $X_2$ and the latter does not exist by assumption.
\end{definition}

Given nwqos $X_1,X_2,\dots,X_n$, we will use $\sum_{i=1}^n X_i$ to denote the disjoint sum obtained by $(((X_1 + X_2) + X_3) + \dots + X_n)$.  Similarly, given an nwqo $X$ and $n \in \mathbb{N}$, we will use $X \cdot n$ to denote $\sum_{i=1}^n X$.
Without loss of generality, $X \cdot n$ is the same as taking $n$ copies of $X$ (i.e., domain is $\{(i,x) : 1 \le i \le n, x \in X\}$) and comparison can only happen between elements belonging to the same copy (i.e., $(i,x) \le (j,x')$ iff $i = j$ and $x \le x'$).

The second construction that we shall use is the finite multisets construction.
To this end, we first define a finite multiset over $X$ as a function 
$m : X \to \mathbb{N}$ with finite support $sup(m) = \{x : m(x) \neq 0\}$.  We will typically denote finite multisets using a set-like notation; for instance the finite multiset $m(a) = 2, m(b) = 3, m(c) = 1$ and $m(x) = 0$ for all $x \notin \{a,b,c\}$ is denoted by $\multiset{a,a,b,b,b,c}$ or $\multiset{2 \cdot a, 3 \cdot b, 1 \cdot c}$.
Given two finite multisets $m, m'$, we use $m+m'$ to denote the finite multiset
given by $(m+m')(x) = m(x) + m'(x)$.

\begin{definition}[Finite multisets]
    Let $X$ be a nwqo. The finite multiset order over $X$ is the nwqo $\multi{X}$ defined as follows: Its domain is the collection of all finite multisets over $X$.
    Its ordering is given by $\multiset{x_1,\dots,x_n} \le_{\multi{X}} \multiset{x'_1,\dots,x'_{n'}}$ iff there is an injective function $h: \{1,\dots,n\} \to \{1,\dots,n'\}$ such that $x_i \le_X x'_{h(i)}$ for all $i$. Finally, the norm of an element $m = \multiset{x_1,\dots,x_n}$ is given as $|m|_{\multi{X}} = \sum_{1 \le i \le n} \max(|x_i|_X, 1)$ (An easier alternative would have been to just define $|m|_{\multi{X}}$ as $\sum_{1 \le i \le n} |x_i|_X$, but the problem with this definition is that if $|x|_X = 0$ for some $x$, then there will be arbitrarily many multisets $m$ such that $|m|_{\multi{X}} = 0$, which violates the norm property).
    The fact that the finite multiset order is an nwqo follows from Higman's lemma~\cite{higman1952ordering}. 
\end{definition}

Let us see some examples of these constructions.

\begin{example}
    By definition, it is easy to see that $\Gamma_1 \cdot n = \sum_{i=1}^n \Gamma_1 \equiv \Gamma_n$, since both $\Gamma_1 \cdot n$ and $\Gamma_n$ have $n$ unordered elements all of whose norms are 0. Furthermore, by definition, it is easy to see that $\Gamma_1 \equiv \multi{\Gamma_0}$, since the only multiset of the empty nwqo is the empty multiset.
    Finally, note that any multiset over $\Gamma_1$ is simply a number over $\mathbb{N}$.
    Hence, $\multi{\Gamma_1} \equiv \mathbb{N}$.
\end{example}

Equipped with these facts regarding nwqos, we now prove our upper bounds for $k$-NRCS, by leveraging the framework of \emph{well-structured transitions systems}~~\cite{FinkelS01,AbdullaCJT96}. We now explain this framework in detail.

\subsection{$k$-NRCS as Well-Structured Transition Systems}\label{subsec:WSTS}

Let $\mathcal{N} = (Q,\delta_u,\delta_r)$ be a $k$-NRCS. Recall that any configuration $C$ of $\mathcal{N}$ is a tree of height at most $k$ whose nodes are labelled by $Q$. 
Further, recall the following ordering among configurations of $\mathcal{N}$: $C \le_{is} D$ if $C$ is an induced subgraph of $D$.
It is known that $\le_{is}$ is a wqo among all labelled trees of height at most $k$~\cite[Section 8]{PriorityChannelSystems}. Additionally, if we define the norm of a configuration $|C|$ to be the number of nodes in $C$, then we get an nwqo among the set of configurations of $\mathcal{N}$. 

With this nwqo, it is easily seen that $\mathcal{N}$ satisfies the following \emph{compatibility property}. Roughly speaking, it means that any transition that can be fired from a smaller configuration $C$ can also be fired from a bigger configuration $D$.
(See Appendix~\ref{subsec:appendix-WSTS} for more details).

\begin{restatable}[Compatibility Property]{proposition}{compatibility}\label{prop:compatibility}
    Suppose $C \act{t} C'$ for some transition $t$ and suppose $C \le_{is} D$.
    Then, there exists $D'$ such that $D \act{t} D'$ and $C' \le_{is} D'$.
\end{restatable}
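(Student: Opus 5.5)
The plan is to fix the embedding witnessing $C \le_{is} D$ and push the step of $t$ through it. Let $h$ be an injection from the nodes of $C$ to the nodes of $D$ that maps root to root, preserves labels, and preserves edges in both directions. For rooted trees this means $h$ maps each child of a node $u$ to a child of $h(u)$, so $h$ sends paths from the root to paths from the root of the same length. Equivalently, we can use the stated characterisation: $C$ is obtained from $D$ by deleting some subtrees. I will work with that view, where $D$ is $C$ together with some extra subtrees hanging off nodes of (the image of) $C$.

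The step $C \act{t} C'$ uses a path $v_0,\dots,v_i$ from the root of $C$ with labels $p_0,\dots,p_i$. Its image $h(v_0),\dots,h(v_i)$ is a path from the root of $D$ with the same labels, so $t$ is enabled in $D$ along this path. Let $D'$ be the result of firing $t$ there. I then check, transition type by transition type, that $h$ (possibly extended) still embeds $C'$ into $D'$.

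\begin{itemize}
\item \emph{Renaming or incrementing transitions.} Relabelling the path changes the labels identically in $C'$ and $D'$. For incrementing transitions, the newly created nodes $v_{i+1},\dots,v_j$ in $C'$ are mapped to the corresponding new nodes in $D'$. This extension stays injective, preserves labels, and preserves edges.
\item \emph{Decrementing transitions.} Firing $t$ removes the subtree rooted at $v_{j+1}$ in $C$ and the subtree rooted at $h(v_{j+1})$ in $D$. Since $h$ maps descendants of $v_{j+1}$ to descendants of $h(v_{j+1})$ and non-descendants to non-descendants, the restriction of $h$ to the remaining nodes embeds $C'$ into $D'$.
\item \emph{Reset transitions.} In $D$, every subtree rooted at a $p$-labelled child of $h(v_i)$ is removed. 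This includes the images of all $p$-labelled children of $v_i$, together with possibly some extra subtrees of $D$ that are not in the image of $h$. Any child of $v_i$ surviving in $C'$ is not labelled $p$, so its image is not labelled $p$ either and also survives. Therefore every node of $C'$ keeps its image in $D'$, and edges remain preserved because we only delete whole subtrees.
\end{itemize}

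The only delicate point, and the one I expect to need care, is that the embedding preserves the ancestor structure. Deletions of whole subtrees in $D$ then never remove an image of a node surviving in $C'$, and never create a spurious edge. Both follow because $h$ is an edge-preserving, root-preserving injection between rooted trees, hence a tree homomorphism that commutes with the parent map. With that observation each case is routine.
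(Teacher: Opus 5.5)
Your proposal is correct and follows essentially the same route as the paper: you fire $t$ along the image $h(v_0),\dots,h(v_i)$ of the enabling path in $D$ and observe that the embedding of $C$ into $D$ survives the step. The paper writes out only the decrementing case and calls the rest similar, so your treatment of the incrementing, renaming and reset cases, and your argument that a root-preserving, edge-preserving injection commutes with the parent map, simply fill in details the paper leaves implicit.
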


Because of this property, by definition, $(\mathcal{N},\le_{is})$ is \emph{well-structured}~\cite{AbdullaCJT96,FinkelS01}. We now use the theory behind well-structured systems to derive an algorithm for coverability of $\mathcal{N}$. For this, we first need some auxiliary results.

Given a configuration $C$, let $\uparrow C$ denote the set of all configurations above $C$, i.e., set of all $D$ such that $C \le_{is} D$. Let $pre(\uparrow C)$ be the set of all predecessor configurations of $\uparrow C$, i.e., 
the set of all configurations $C'$ such that $C' \act{t} D$ for some $t$ and some $D \in \uparrow C$. The following proposition states that given $C$, we can effectively compute a representation of $pre(\uparrow C)$. It follows by a simple analysis of the  transitions of $\mathcal{N}$. (See Appendix~\ref{subsec:appendix-WSTS} for more details).

\begin{restatable}[Effective Predecessor Basis]{proposition}{effectivepred}
    Given a configuration $C$, in primitive recursive time, we can compute a set of configurations 
    $C_1,\dots,C_m$ such that $\cup_{i=1}^m \uparrow C_i = pre(\uparrow C)$ and $|C_i| \le |C|+k+1$
    for every $i$. (The collection $\{C_1,\dots,C_m\}$ is called a basis of $pre(\uparrow C)$).
\end{restatable}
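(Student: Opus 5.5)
We compute the basis one transition at a time, treating each transition type separately. Since $\delta_u \cup \delta_r$ is finite, it suffices to show that for every transition $t$ we can compute a finite set $B_t$ of configurations, each of size at most $|C|+k+1$, such that $\uparrow B_t = pre_t(\uparrow C) := \{C' : C' \act{t} D \text{ for some } D \ge_{is} C\}$. We then take the union over all $t$.

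The general recipe is a ``minimal overlap'' argument. Let $C' \act{t} D$ with $D \ge_{is} C$, and fix an embedding $h$ of $C$ into $D$. The transition $t$ affects $D$ only along a root path $v_0,\dots,v_j$ (for updates/resets that path is relabelled, and for incrementing transitions the new nodes $v_{i+1},\dots,v_j$ are created). We construct a small configuration $E \le_{is} C'$ as follows. Take the nodes of $C'$ that correspond to $h(C)$, minus any freshly created nodes, and add the path $v_0,\dots,v_i$ required by the guard of $t$. Then close this set under ancestors, so that it forms a subtree containing the root. Its size is at most $|C| + (i+1) \le |C|+k+1$. By construction, $t$ is enabled on $E$, and by Proposition~\ref{prop:compatibility} applied inside $C'$, firing $t$ from $E$ yields a configuration that contains $C$, since the image of $C$ is reproduced.

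Conversely, each such $E$ arises from a finite combinatorial choice. We choose which path in $C$ (if any) the transition path overlaps with, and which nodes of $C$ are created by $t$; these must then lie on a chain hanging below the guard path with the prescribed labels $q_{i+1},\dots,q_j$, and we delete them. We relabel the overlapped path from $q$'s back to $p$'s, and we extend with fresh guard nodes when there is no overlap. Enumerating all these choices gives $B_t$. Compatibility then shows $\uparrow B_t \subseteq pre_t(\uparrow C)$.

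We handle the three transition types as follows.
\begin{itemize}
\item For \emph{decrementing} transitions, $C'$ additionally contains the removed subtree at $v_{j+1}$. This subtree is absent from $D$, so it cannot be in the image of $C$. Hence we add only the path $v_{j+1},\dots,v_i$ with labels $p_{j+1},\dots,p_i$.
\item For \emph{reset} transitions with label $p$ at node $v_i$, $C$ must have no child labelled $p$ below the overlapped image of $v_i$ (under the $q_i$ label); otherwise we discard that choice. This requirement is exactly what makes minimal predecessors valid, because extra $p$-children in larger configurations are wiped out anyway.
\end{itemize}

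All enumerations are over subsets of the nodes of $C$ and paths of length at most $k$. They are therefore finite and clearly primitive recursive (indeed exponential in $|C|$).

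I expect the main obstacle to be the completeness direction, namely showing that every $C' \in pre_t(\uparrow C)$ lies above some element of $B_t$. This requires care with the induced-subgraph order: the image $h(C)$ in $D$ may share nodes with the transition path in several ways. I will settle it by a case analysis on the deepest node of the path $v_0,\dots,v_j$ (or $v_0,\dots,v_i$) that lies in $h(C)$. Because $h$ maps the root to the root and preserves parent edges, the overlap with the path is always a prefix of it, which makes the case analysis finite.
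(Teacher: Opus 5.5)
Your argument is correct, and its first half is the paper's Claim A: restricting $C'$ to the image $h(C)$ (minus nodes created by $t$) together with the guard path $v_0,\dots,v_i$ gives a predecessor $E\le_{is}C'$ with $|E|\le |C|+i+1$. You differ in how the basis is produced.

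The paper never describes the basis explicitly. It enumerates all configurations of size at most $|C|+k+1$ and keeps those lying in $pre_t(\uparrow C)$. This test is decidable, since a finite configuration has finitely many $t$-successors and $\le_{is}$ is decidable on finite trees. Claim A then gives $pre_t(\uparrow C)\subseteq\bigcup_i\uparrow C_i$ immediately, and compatibility gives the converse. So the completeness direction you flag as the main obstacle disappears. Your case analysis on the deepest node of the path in $h(C)$ is needed only because you want an explicit description of $B_t$.

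That choice does buy something. Your candidates are indexed by the endpoint of the overlapped root path, so there are at most $|C|$ of them per transition, computable in polynomial time; the paper's brute-force search is exponential. The price is that soundness of $B_t$ now depends on stating every side condition exactly:
\begin{itemize}
\item the labels on the overlapped prefix must be $q_0,\dots,q_m$;
\item the root is always overlapped, so ``no overlap'' should read ``overlap is a proper prefix'';
\item for resets, your no-$p$-child condition at the image of $v_i$;
\item for incrementing transitions, each node of $C$ declared as created may have no child other than the next one on the chain, and the last one must be a leaf. Otherwise deleting them does not leave a configuration whose $t$-successor contains $C$.
\end{itemize}

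One justification needs fixing. Proposition~\ref{prop:compatibility} transfers a step from a smaller configuration to a larger one. It therefore cannot show that firing $t$ from $E\le_{is}C'$ produces something above $C$. Argue directly instead: fire $t$ on $E$ along the same path. The result is the restriction of $D$ to a node set containing $h(C)$. For resets, nothing in $h(C)$ is removed: the children of $v_i$ lying in $h(C)$ survive in $D$ and keep their labels from $C'$, so none of them is labelled $p$. Compatibility is the right tool only for the final step $\uparrow B_t\subseteq pre_t(\uparrow C)$, and there you use it correctly.
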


Because of this property and the fact that $\mathcal{N}$ is well-structured, 
the coverability problem becomes decidable~\cite{FinkelS01}. Indeed, suppose we want to check if a configuration $C_f$ of $\mathcal{N}$ can be covered from a configuration $C_{init}$. 
This is the same as checking if from $C_{init}$
we can reach a configuration in $\uparrow C_f$. 
To that end, using the effective predecessor basis property, we compute a sequence of sets
$U_0 := \uparrow C_f$ and $U_{i+1} := U_i \cup \cup_{C \in U_i} \ pre(\uparrow C)$. 
By~\cite[Lemma 2.4]{FinkelS01}, this computation must reach a fix-point, i.e.,
there must be $m$ such that $U_m = U_{m+1}$. At this point, we simply need to check that $C_{init} \in U_m$ to decide if $C_{init}$ can cover $C_f$. 

By standard arguments for WSTS, the running time of the above algorithm is dominated primarily by the number of steps $m$ needed to reach a fixpoint. This in turn, can be bounded by the length of $(g,|C_{f}|)$-controlled bad sequences over $\le_{is}$ for some primitive recursive function $g$. (See~\cite{SchmitzS13} or~\cite[Section 3.3]{Ramachandrakumar24} for a detailed argument). 
Hence, if we bound the length function for labelled trees of height $k$ with labels from the set $Q$, then we can give an upper bound for
coverability of $k$-NRCS.  

\subsection{Interpreting Trees as Nested Multisets}\label{subsec:trees-nested-multisets}

Hence, our goal is to now bound the length function for labelled trees of height $k$
under the $\le_{is}$ nwqo. For this purpose, it turns out that it is easier to interpret
trees of height $k$ as another nwqo called \emph{nested multisets} and work with
them instead. Accordingly, we begin by introducing nested multisets nwqos.

A \emph{nested multiset nwqo} is any nwqo that can be obtained by the following grammar
$$  A := \Gamma_0 \  \ | \ A + A \ | \ \multi{A} $$

Each nwqo of this grammar is equipped with the order and norm as defined in subsection~\ref{subsec:wqo} (for disjoint sums and finite multisets).
We shall now show that it is possible to view the $\le_{is}$ nwqo over labelled trees
of height $k$ as a nested multiset nwqo. 

\paragraph*{Trees of height $k$ as nested multisets}

Consider all possible trees of height at most $k$ with labels from some set $Q$ of size $n$.
Without loss of generality, we can assume that $Q = \{1,2,\cdots,n\}$. 
Let $\mathcal{M}_k$ be the nested multiset nwqo $Q \equiv \Gamma_{n} \equiv \Gamma_1 \cdot n \equiv \multi{\Gamma_0} \cdot n$. Furthermore, for any $1 \le i \le k$, let $\mathcal{M}_{i-1}$ be the nwqo $\multi{\mathcal{M}_i} \cdot n$. 

For any tree $C$ of height $i \le k$ and any number $j \le k-i$, we now assign a unique element $h_j(C) \in \mathcal{M}_j$ as follows.
 \begin{itemize}
     \item If $C$ is just one node labelled by some $q \in Q$, then $h_j(C) = (q,\emptyset)$, where $\emptyset$ denotes the empty multiset.
     \item Otherwise, let $v$ be the root of $C$ labelled by some $q \in Q$ 
     and $C_1,\dots,C_m$ be the child subtrees of $v$.
     We then set $h_j(C) = (q,\multiset{h_{j+1}(C_1), h_{j+1}(C_2), \dots, h_{j+1}(C_m)})$.
\end{itemize}

 By induction on $j$, it is easy to see that $h_j$ is a bijection from all trees of height at most $k-j$ to $\mathcal{M}_j$. Furthermore, we also have that for any two trees $C,D$ of height at most $k-j$, $C \le_{is} D$ iff $h_j(C) \le_{\mathcal{M}_j} h_j(D)$ and $|h_j(C)| \le |C|$. It then immediately follows that any $(g,n_0)$-controlled bad sequence $S$
 over trees of height $k-j$ can be converted into a $(g,n_0)$-controlled bad sequence $h_j(S)$ over $\mathcal{M}_j$, obtained by applying $h$ to each element of $S$.

 Hence, any upper bound on controlled bad sequences for the nested multiset nwqo $\mathcal{M}_0$
 is also an upper bound on controlled bad sequences for trees of height $k$ 
 with labels from $Q$. In the rest of this section, we shall show that
 \begin{theorem}\label{thm:main-upper-bound}
     For any (primitive recursive) control function $g$, we have $L_{\mathcal{M}_0,g}(x) \le h_{\Omega_{k+1}}(|Q| + |Q|xk)$ where $h(x) = x \cdot g(x)$
 \end{theorem}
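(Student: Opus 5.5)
Following the Schmitz--Schnoebelen framework, the plan has three stages: unravel the descent equation, bound each residual by a reflection into a normal-form nwqo, and measure the result with order types and derivatives. For an nwqo $A$ and $x\in A$ with $|x|\le n$, the residual $A/x=\{y\in A : x\not\le y\}$ satisfies the descent equation
\[ L_{A,g}(n) = \max_{x\in A_{\le n}} \bigl(1 + L_{A/x,g}(g(n))\bigr). \]
Iterating it requires a family of nwqos that is closed, up to norm-controlled reflections, under taking residuals. I would enlarge the nested multiset grammar to a class of \emph{polynomial nested multiset nwqos}. These are disjoint sums of terms of the form $\multi{B_1}\times\cdots$, or more simply $\multi{B}$ composed with products by finite $\Gamma_i$. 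Each is tagged with an ordinal $o(A)<\Omega_{k+1}$ defined recursively: $o(\Gamma_i)=i$, $o(A+B)=o(A)\oplus o(B)$, and $o(\multi{A})=\omega^{o'(A)}$, where $o'$ is the Rosa-Velardo style ``exponent'' shifting (e.g.\ $\multi{\Gamma_1}\equiv\mathbb{N}$ has type $\omega$). Then $\mathcal M_k\equiv\Gamma_n$ gives $o=n$, $\mathcal M_{k-1}$ gives roughly $\omega^{n}\cdot n$, and inductively $o(\mathcal M_0)<\Omega_{k+1}$ with a fundamental-sequence index of order $|Q|k$.

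The core step is a \textbf{residual/reflection lemma}. For every $A$ in the class and every $x$ with $|x|\le n$, there is a normed reflection $A/x\hookrightarrow \partial_n A$ into an nwqo of the class whose type equals the derivative $\partial_n o(A)$, meaning its type lies below $o(A)$ in the sense $o(\partial_n A)\le o(A)_{\text{(a predecessor along the }n\text{-th fundamental sequence)}}$. The proof would go by structural induction. Sums are handled componentwise, using $\oplus$. For $\multi{B}$ with $x=\multiset{b_1,\dots,b_m}$, $m\le n$, the key is the Rosa-Velardo observation: a multiset $y\not\ge x$ must have, for some way of matching, fewer than $m$ elements above some $b_i$. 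It therefore splits into a bounded part (at most $m-1$ elements lying in $\uparrow b_i$, stored as a product with a finite-ish nwqo of norm $\le n$) and a part living in $\multi{B/b_i}$. Recursively reflecting $B/b_i$ into $\partial_n B$ and absorbing the bounded part gives a type of the form $\omega^{\beta'}\cdot (\text{poly in }n)$ with $\beta'<\beta$. This is exactly what the fundamental sequence $(\omega^{\beta+1})_n=\omega^\beta\cdot n$ or $(\omega^\lambda)_n=\omega^{\lambda_n}$ gives, after monotonicity of the derivative operator in the ordinal. The multiplicities $n$ appearing here explain the $h(x)=x\cdot g(x)$ rather than $g$.

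Given the lemma, the rest is the standard bookkeeping. One first shows by well-founded induction on $\alpha=o(A)$ that $L_{A,g}(n)\le M_{\alpha,h}(n)$, where $M_{\alpha}(n)=\max_{\beta\in\partial_n\alpha}(1+M_{\beta}(g(n)))$. Next, monotonicity of Cichoń functions along pointwise-ordered ordinals below $\Omega_{k+1}$ (the norm lemma of Schmitz--Schnoebelen: $\beta<\alpha$ with $N\beta\le n$ implies $h_\beta(n)\le h_\alpha(n)$) yields $M_{\alpha}(n)\le h_\alpha(n\cdot\text{const})$. Finally, $o(\mathcal M_0)\le(\Omega_{k+1})_{|Q|+|Q|xk}$ gives $h_{\Omega_{k+1}}(|Q|+|Q|xk)$, with the $|Q|xk$ term accounting for the initial norm plus the $k$ nesting levels times labels.

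The main obstacle is the residual lemma for nested multisets of depth beyond two: the split $\multi{B}/x$ must stay inside the class with controlled norm, and its type must strictly drop along $\partial_n$. Here I would first try proving it only for ``normal form'' terms $\multi{B}\cdot c$ with $B$ itself in normal form, and verify that $\partial_n$ commutes with $\omega^{(\cdot)}$ via $\partial_n\omega^{\beta}=\omega^{\partial_n\beta}\cdot n'$ for a suitable polynomial $n'$.
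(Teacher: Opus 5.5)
Your architecture (descent equation, reflections into a normal-form class, order types and derivatives, an $M$-function bounded by Cicho\'n functions) is the paper's. However, the two steps that carry the weight do not work as you state them.

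First, the residual lemma for $\multi{B}$. The at most $m-1$ elements of $y$ lying above $b_i$ are bounded in number but not in norm. Each ranges over all of ${\uparrow} b_i\subseteq B$, which is infinite as soon as $B$ is itself a multiset nwqo. So this part cannot live in a ``finite-ish'' nwqo or be absorbed into a finite coefficient. It must be reflected into a product of $m-1$ copies of $B$, and its order type enters the \emph{exponent}.

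The paper does exactly this in the case $A=\multi{\multi{B}}$. The greedily matched elements $x_1,\dots,x_{i-1}\in\multi{B}$ (at most $n-1$ of them) are flattened into $\multi{B\cdot(n-1)}$, and $\multi{\sum_i B_i}$ is handled separately per summand. This keeps everything in the plain grammar $\Gamma_0,+,\multi{\cdot}$ with $o(\multi{A})=\omega^{o(A)}$, so no products or shifted exponents are needed. It yields $D_n(\omega^{\omega^\beta})=\omega^{\beta\cdot(n-1)\oplus D_n(\omega^\beta)}\cdot n$ for $\beta=o(B)$. Your identity $\partial_n\omega^{\gamma}=\omega^{\partial_n\gamma}\cdot n'$ (with $\gamma=\omega^\beta$) drops the $\beta\cdot(n-1)$ contribution and does not follow from your split. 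Done correctly, your split reproduces the paper's formula.

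Second, the ordinal bookkeeping. If ``below a predecessor along the $n$-th fundamental sequence'' means $o(\partial_nA)\le P_n(o(A))$, it is false. In $\multi{\multi{\Gamma_1}}\equiv\multi{\mathbb N}$, of type $\omega^\omega$, the residual by $\multiset{n}$ (norm $n$) consists of the multisets over $\{0,\dots,n-1\}$. That residual has type $\omega^n>P_n(\omega^\omega)$. Also, the plain monotonicity $h_\beta(n)\le h_\alpha(n)$ you quote cannot close the induction, because it does not absorb the $+1$ of the descent equation.

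The missing idea is a norm (leanness) bound on derivatives. The paper shows that if $\alpha<\Omega_{k+1}$ is $\ell$-lean, then every $\alpha'\in\delta_n(\alpha)$ is $(\ell+\ell nk)$-lean (Proposition~\ref{prop:lean}). It then uses $h_{\alpha'}(x)\le h_{P_x(\alpha)}(x)$ for $x\ge\ell+\ell nk$, together with $1+h_{P_x(\alpha)}(h(x))=h_\alpha(x)$. So the predecessor is taken at the leanness index, not at $n$. The induction proves $M_{\alpha,g}(n)\le h_\alpha(\ell+\ell nk)$ with $\ell$ growing at each step. Closing it requires $(\ell+\ell nk)(1+g(n)k)\le h(\ell+\ell nk)$, which is where superadditivity of $g$ and the choice $h(x)=x\cdot g(x)$ actually come from. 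The argument $|Q|+|Q|xk$ is then $\ell+\ell xk$ with $\ell=|Q|$, using that $o(\mathcal M_0)$ is $|Q|$-lean. Without this, your ``$M_\alpha(n)\le h_\alpha(n\cdot\mathrm{const})$'' and the final bound are unsupported.
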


In order to use this to derive upper bounds for  coverability, we first use the 
identity $h_{\Omega_{k+1}}(|Q|xk) \le f_{h,\Omega_{k}}(|Q|xk)$\cite[Lemmas C.6 and C.8]{arxiv-SchmitzS}. By~\cite[Theorem 4.2]{Schmitz16hierarchies}, it follows that
$f_{h,\Omega_k}(|Q|xk)$ is eventually upper bounded by $F_{\Omega_k}(p(|Q|xk))$ for some primitive recursive function $p$. By the argument given in the previous subsection (also see~\cite[Section 3.3]{Ramachandrakumar24}), the running time of the coverability algorithm is then bounded by $F_{\Omega_k}(p(|Q| \cdot |C_f| k))$. This proves that coverability for $k$-NRCS is in $\fastf{\Omega_k}$, thereby showing the required upper bound.

Hence, the only thing remaining is to prove Theorem~\ref{thm:main-upper-bound}, which we will do so in the rest of this section.

\subsection{Upper Bounds for Length Functions of Nested Multiset Nwqos}
\label{subsec:upper-bounds-nested-multiset-nwqos}

We now show how to compute upper bounds for length functions of nested multiset nwqos.
To this end, we will use the framework of Schmitz and Schnoebelen~\cite{SchmitzS11,Schmitz14} for establishing upper bounds for families of nwqos. 
To this end, we begin by introducing an important tool, namely \emph{residuals}.

\begin{definition}[Residuals]
    Let $A$ be a nwqo and $a \in A$. The \emph{residual} $A/a$ is the nwqo induced by the set $\{b \in A : a \nleq b\}$. 
\end{definition}

The following fact from~\cite{SchmitzS11} 
illustrates their usefulness.

\begin{proposition}[Descent Equation]
    For all control functions $g$ and all $n \in \mathbb{N}$, we have
    $$L_{A,g}(n) = \max_{a \in A_{\le n}}\{1 + L_{A/a,g}(g(n))\}$$
\end{proposition}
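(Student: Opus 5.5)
The descent equation is proved by double inequality, decomposing a bad sequence into its first element and the remainder. Throughout, the maximum over the empty set is taken to be $0$. This is consistent with the left-hand side: if $A_{\le n}=\emptyset$, the only $(g,n)$-controlled sequence is the empty one, since its first element would need norm at most $g^0(n)=n$. Because the norm has finite sublevel sets, $A_{\le n}$ is finite, so the maximum on the right is over finitely many values. Each of these values is finite, because every length function is finite by \cite[Proposition 2.5]{SchmitzS11} and $A/a$ is itself an nwqo: it is an induced sub-nwqo of $A$.

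For ``$\le$'', I take a $(g,n)$-controlled bad sequence $x_0,x_1,\dots,x_{L-1}$ over $A$ of maximal length $L=L_{A,g}(n)$. If $L=0$ there is nothing to show, so assume $L\ge 1$.
\begin{itemize}
\item Since $|x_0|_A\le g^0(n)=n$, we have $x_0\in A_{\le n}$.
\item Badness gives $x_0\not\le_A x_i$ for every $i\ge 1$, so each such $x_i$ lies in the domain of $A/x_0$.
\item The suffix $x_1,\dots,x_{L-1}$ is bad in $A/x_0$, because the order there is the restriction of $\le_A$.
\item The suffix is $(g,g(n))$-controlled, because the norm on $A/x_0$ is the restriction of $|\cdot|_A$ and $|x_{i+1}|\le g^{i+1}(n)=g^{i}(g(n))$.
\end{itemize}
Hence $L-1\le L_{A/x_0,g}(g(n))$, which bounds $L$ by the right-hand side.

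For ``$\ge$'', I fix $a\in A_{\le n}$ and a $(g,g(n))$-controlled bad sequence $y_0,\dots,y_{m-1}$ over $A/a$ of length $m=L_{A/a,g}(g(n))$. Consider the sequence $a,y_0,\dots,y_{m-1}$ in $A$.
\begin{itemize}
\item It is $(g,n)$-controlled: $|a|\le n=g^0(n)$, and $|y_i|\le g^i(g(n))=g^{i+1}(n)$.
\item It is bad: no pair $y_i\le y_j$ with $i<j$ exists, by badness in $A/a$ and the fact that the order is restricted. No pair $a\le y_i$ exists, by the definition of the residual.
\end{itemize}
Thus $L_{A,g}(n)\ge 1+m$, and taking the maximum over $a$ gives the claim.

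There is no real obstacle here. The only points needing care are the bookkeeping of the index shift $g^{i+1}(n)=g^i(g(n))$, the fact that the residual carries the restricted order and norm, and the empty-set convention.
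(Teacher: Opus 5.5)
Your proof is correct: the first-element/suffix decomposition, the index shift $g^{i+1}(n)=g^i(g(n))$, the residual carrying the restricted order and norm, and the $\max\emptyset=0$ convention are all handled properly. The paper does not prove this proposition itself; it imports the equation from Schmitz and Schnoebelen, where it is established by this same standard argument, so your route is essentially the same as the paper's.
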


This descent equation allows us to bound $L_{A,g}$ in terms of $L_{A/a_0,g}$ for some $a_0 \in A_{\le n}$. In turn, this can be bounded by $L_{A/a_0/a_1,g}$ for some $a_1 \in {(A/a_0)}_{\le g(n)}$ and so on. If $A \supsetneq A/a_0 \supsetneq A/a_0/a_1 \cdots $, then $a_0, a_1, \cdots$ is a bad sequence and so in finitely many steps, this unraveling will stop and we can bound $L_{A,g}$. However, in this approach, the residuals $A, A/a_0, A/a_0/a_1, \cdots$ become too complex to work with as the sequence progresses. Hence, we shall use another tool that allows us to replace these residuals with potentially simpler nwqos. This tool is called a \emph{normed reflection}.

\begin{definition}[(Normed) Reflections]
    Let $A, B$ be two nwqos. A \emph{normed reflection} (or just a reflection) is a function $r: A \to B$ s.t.
    \begin{itemize}
        \item For all $x, y \in A, r(x) \le_B r(y)$ implies that $x \le_A y$
        \item For all $x \in A$, $|r(x)|_B \le |x|_A$
    \end{itemize}
\end{definition}

We use the notation $r: A \hookrightarrow B$ to mean that $r$ is a reflection from $A$ to $B$. If the precise mapping $r$ is not important, we will omit the symbol $r$.
It is well known that reflections are transitive, i.e., if $r: A \hookrightarrow B$ and $s: B \hookrightarrow C$, then
$s \circ r: A \hookrightarrow C$ is also a reflection.  The following proposition from~\cite{SchmitzS11} illustrates the fundamental property of reflections.

\begin{proposition}[Fundamental Property of Reflections]
    Suppose $A \hookrightarrow B$ is a reflection. Then $L_{A,g}(n) \le L_{B,g}(n)$ for all control functions $g$ and all $n \in \mathbb{N}$.
\end{proposition}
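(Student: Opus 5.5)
The statement to prove is the Fundamental Property of Reflections: if $r: A \hookrightarrow B$ is a normed reflection, then every $(g,n)$-controlled bad sequence of $A$ is mapped by $r$ to a $(g,n)$-controlled bad sequence of $B$ of the same length. The inequality $L_{A,g}(n) \le L_{B,g}(n)$ then follows directly from the definition of the length function as a maximum. The proof is a direct transfer argument.

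First, fix a $(g,n)$-controlled bad sequence $x_0, x_1, \dots, x_{L-1}$ over $A$ of maximal length $L = L_{A,g}(n)$. Such a maximum exists, since controlled bad sequences have bounded length by~\cite[Proposition 2.5]{SchmitzS11}. Consider the image sequence $r(x_0), \dots, r(x_{L-1})$ in $B$.

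Second, check that the image is controlled. By the norm condition of reflections, $|r(x_i)|_B \le |x_i|_A \le g^i(n)$ for every $i$, so the image sequence is $(g,n)$-controlled.

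Third, check that the image is bad. Suppose, for contradiction, that $r(x_i) \le_B r(x_j)$ for some $i < j$. The order-reflecting condition then gives $x_i \le_A x_j$, which contradicts the badness of the original sequence. Hence the image sequence is bad.

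It follows that $B$ has a $(g,n)$-controlled bad sequence of length $L$, so $L_{B,g}(n) \ge L = L_{A,g}(n)$. Note that $r$ need not be injective for this argument: equal images would already give $r(x_i) \le_B r(x_j)$, which is excluded by the badness we just established. There is no real obstacle in this proof. The only point needing care is that the bound is applied to the maximum over a finite set of sequences, which is guaranteed to exist.
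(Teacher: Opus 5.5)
Your proof is correct: pushing a $(g,n)$-controlled bad sequence through $r$ keeps it controlled by the norm condition and keeps it bad by order reflection, and this is the standard argument for this property. The paper does not reprove the proposition; it cites it from Schmitz and Schnoebelen, where it is established in essentially this way.
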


\subparagraph*{Combining residuals and reflections. } As mentioned before, working with residuals is quite complex. Hence, if we want to work with the descent equation, we need to be able to over-approximate the length function of residuals by means of reflections. To this end, for each nested multiset nwqo $A \neq \Gamma_0$ and each $a \in A_{\le n}$, we define a nested multiset nwqo $R_n(A,a)$ by structural induction as follows:
\begin{itemize}
    \item Suppose $A = \multi{\Gamma_0}$. Then $R_n(A,a) = \Gamma_{0}$.
    \item Suppose $A = \multi{\sum_{i=1}^d B_i}$. Let $b_i$ be the sub-multiset of $a$ restricted to elements from $B_i$. Note that each $b_i \in \multi{B_i}$ and 
    $a$ is the multiset sum of all $b_i$.
    Let $R_n(\multi{B_i},b_i) = \sum_{j=1}^{\ell_i} \multi{B_i^j}$.
    Set $R_n(A,a) = \sum_{i=1}^d \sum_{j=1}^{\ell_i} \multi{\sum_{k \neq i} B_i + B_i^j}$.
    \item Suppose $A = \multi{\multi{B}}$ and $a = \multiset{a_1,\dots,a_m}$. Then, $R_n(A,a)$ is given by $R_n(A,a) = \sum_{i=1}^m \multi{B \cdot (n-1) + R_n(\multi{B},a_i)}$.
    \item Suppose $A = \sum_{i=1}^d \multi{B_i}$. If $a \in \multi{B_i}$, then $R_n(A,a) = (\sum_{k \neq i} \multi{B_k}) + R_n(\multi{B_i},a)$.
\end{itemize}

We now show that the defined nwqos $R_n(A,a)$ allow us to over-approximate residuals using reflections.

\begin{restatable}{lemma}{reflections}\label{lem:reflections}
    For any nested multiset nwqo $A \neq \Gamma_0$ and any $a \in A_{\le n}$, we have that
    $A/a \hookrightarrow R_n(A,a)$.
\end{restatable}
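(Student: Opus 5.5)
We argue by structural induction on $A$, following the four cases in the definition of $R_n(A,a)$. In each case we write down an explicit map $r : A/a \to R_n(A,a)$ and check the two reflection conditions: $r(x)\le r(y)$ implies $x\le y$, and $|r(x)|\le|x|$. Since reflections compose, we may also prove $A/a \hookrightarrow B \hookrightarrow R_n(A,a)$ through convenient intermediate nwqos $B$. The basic tools are the standard reflections from~\cite{SchmitzS11,Rosa-Velardo17}: $(A+B)/(1,a) \hookrightarrow A/a + B$, and the distributivity $\multi{A+B}\equiv\multi{A}\times\multi{B}$ as a quasi-order, with the norm being additive. I will be careful that products are not part of our grammar, so every product must eventually be removed by an argument of the form ``some component is bounded, hence absorbed''.

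\textbf{Base case and sum case.} If $A=\multi{\Gamma_0}\equiv\Gamma_1$, then $A$ has only the empty multiset, so $A/a=\emptyset$ and the empty map into $\Gamma_0$ is a reflection. If $A=\sum_i \multi{B_i}$ and $a\in\multi{B_i}$, then $A/a = \sum_{k\ne i}\multi{B_k} + \multi{B_i}/a$. We use the identity on the untouched summands and the inductive reflection on $\multi{B_i}/a$.

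\textbf{Case $A=\multi{\sum_{i=1}^d B_i}$.} Write $x\in A$ as $\sum_i x_i$ with $x_i\in\multi{B_i}$. Since $a=\sum_i b_i$, we have $a\le x$ iff $b_i\le x_i$ for all $i$. So if $x\in A/a$, there is a least index $i$ with $x_i\in \multi{B_i}/b_i$. By induction, $x_i$ reflects into some summand $\multi{B_i^j}$ of $R_n(\multi{B_i},b_i)$. We send $x$ to the summand $(i,j)$, keeping the elements from $B_k$ for $k\ne i$ untouched and replacing the $B_i$-part by its image in $\multi{B_i^j}$. For this to be well defined we have to know that the inductive image of a multiset over $B_i$ is itself a multiset that can be merged with the other elements. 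I will strengthen the induction hypothesis so that the reflection into $\multi{B_i^j}$ is a multiset homomorphism-like map: it sends a multiset to a multiset, and comparison of merged multisets reduces to comparison of the parts. This holds because the images live in disjoint components of $\sum_{k\ne i}B_k + B_i^j$.

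\textbf{Case $A=\multi{\multi{B}}$, $a=\multiset{a_1,\dots,a_m}$.} This is the main obstacle. Suppose $a\not\le x=\multiset{x_1,\dots,x_p}$. I want to produce some index $i$ such that all but at most $n-1$ of the $x_j$ lie in $\multi{B}/a_i$. The intended argument is a Hall-type failure of the injection from $a$ into $x$, followed by choosing $i$ according to that failure. Each $x_j$ is itself a multiset over $B$. If $x_j$ is large, it is mapped into $R_n(\multi B, a_i)$ via induction. The at most $n-1$ exceptional $x_j$ with $a_i\le x_j$ must be stored otherwise. The summand $B\cdot(n-1)$ suggests that we instead flatten: each exceptional $x_j$ is a multiset of $B$-elements, and we tag its elements with one of $n-1$ colours, one colour per exceptional $x_j$.

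So the map will send $x$ to the multiset containing the elements of the exceptional $x_j$, tagged by colour in $B\cdot(n-1)$, together with the images of the non-exceptional $x_j$ in $R_n(\multi B,a_i)$. Its norm is bounded by $|x|$ because we only drop outer multiset overhead. For the reflection property, colours recover the grouping, so an embedding of images lifts to an embedding of the original multisets.

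The delicate step is the counting. I expect to use $|a|\le n$, hence $m\le n$, together with a greedy argument: process $a_1,\dots,a_m$ in order, matching each $a_i$ to some $x_j$ above it, and take the first $a_i$ that cannot be matched. The $x_j$ already used are at most $i-1\le n-1$, and these become the coloured exceptions. Every remaining $x_j$ then satisfies $a_i\not\le x_j$. The greedy matching may fail even when a matching exists, but that is harmless: it only matters when $a\not\le x$, and in that case greedy certainly stops at some $i$. The map depends on this choice, but a reflection need not be canonical. I also expect to have to check that a match between colour tags is consistent across the two sides. This can be ensured by fixing the colour of an exceptional $x_j$ to be the index of the $a$-element it was matched to.
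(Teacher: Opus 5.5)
Your proposal is correct and follows the paper's proof essentially step for step. It uses the same four-case structural induction, the same choice of an index $i$ with $x_i \in \multi{B_i}/b_i$ in the $\multi{\sum_i B_i}$ case, and the same greedy matching of $a_1,a_2,\dots$ in the $\multi{\multi{B}}$ case, with the matched elements flattened into $B\cdot(n-1)$ and coloured by the index of the $a_j$ they were matched to. One part is unnecessary: the announced strengthening of the induction hypothesis. The plain hypothesis already gives $r_i(x_i)$ as a multiset over $B_i^j$, and, as you note yourself, comparisons split into their parts because the components are disjoint.
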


The proof considers all of the four cases of $R_n(A,a)$ as defined above. The proofs for the first and the last cases can be obtained from their definitions. The other two cases
follow from an identity of~\cite[Proposition 1]{Rosa-Velardo17}. Because the identity in that paper is stated in different terms compared to ours,
the complete proof can be found in Appendix~\ref{subsec:appendix-reflections}.

This lemma, when combined with the fundamental property of reflections and the descent equation gives us
\begin{equation}\label{eq:descent-inequality}
  L_{A,g}(n) \le \max_{a \in A_{\le n}} \{1 + L_{R_n(A,a),g}(g(n))\}  
\end{equation}

Hence, instead of working with the descent equation, we can work with this inequality. However, there are two problems here: First, unlike the descent equation, it is not immediately obvious that unraveling this inequality will allow us to terminate after finitely many steps. 
Second, even if we prove that it terminates, our ultimate aim is to relate $L_{A,g}$ to ordinal-indexed functions in the Cicho\'n hierarchy. It is not clear how unraveling this inequality will allow us to accomplish that.

It turns out we can solve both these problems using a single solution: To each nested multiset nwqo, we will associate an 
ordinal measure called its \emph{order type}. We can use this order type to further over-approximate the terms in the above inequality to get a new inequality.
The order types will help us to prove that the unraveling of this new inequality will terminate and will also help us to relate $L_{A,g}$ to functions in the Cicho\'n hierarchy. 

\subsection{Order Type and Derivatives}\label{subsec:order-types}

To each nested multiset nwqo $A$, we can associate an ordinal $o(A)$, called its \emph{order type} as follows:
$$o(\Gamma_0) = 0 \quad o(A+B) = o(A) \oplus o(B) \quad o(\multi{A}) = \omega^{o(A)}$$

For instance, $o(\Gamma_1) = o(\multi{\Gamma_0}) = \omega^{0} = 1$. Similarly, $o(\Gamma_k) = k$ and $o(\multi{\Gamma_k}) = \omega^k$. It is easy to see by structural induction that for two distinct (i.e. not isomorphic) nwqos $A, B$, we have $o(A) \neq o(B)$. 
Hence, this mapping is injective. It turns out that it is also surjective, i.e., for any ordinal $\alpha < \epsilon_0$, we can assign a nested multiset nwqo as follows: Let $\alpha$ in CNF be $\omega^{\beta_1} + \cdots + \omega^{\beta_m}$.
Then we set $C(\alpha) = \Gamma_0$ if $\alpha = 0$ and $C(\alpha) = \sum_{i=1}^m \multi{C(\beta_i)}$ otherwise. By construction, $o$ and $C$ are bijective inverses of each other (modulo isomorphism of nwqos).

Recall that one of our two goals is to replace Equation~\ref{eq:descent-inequality} with a new inequality using order types. To that end, for each $n \in \mathbb{N}$, we  define a \emph{derivative operator} $\delta_n$ on ordinals.
\begin{definition}[Derivative Operator]
    Given $n \ge 1$, we define:
    \begin{itemize}
        \item $D_n(\omega^0) = 0$ 
        \item $D_n(\omega^{\sum_{i=1}^d \beta_i}) = \bigoplus_{i=1}^d \bigoplus_{j=1}^{\ell_i} \omega^{\oplus_{k \neq i} \beta_k \oplus \beta_i^j}$ where for each $1 \le i \le d$, $\bigoplus_{j=1}^{\ell_i} \omega^{\beta_i^j} = D_n(\omega^{\beta_i})$.
        \item $D_n(\omega^{\omega^\beta}) = \omega^{\beta \cdot (n-1) \oplus D_n(\omega^\beta)} \cdot n$
        \item $\delta_n(\sum_{i=1}^d \omega^{\beta_i}) = \{D_n(\omega^{\beta_i}) \oplus \sum_{j \neq i} \omega^{\beta_j} : i = 1, \cdots, d\}$ 
    \end{itemize}
\end{definition}

By induction, it is easy to see that if $\alpha' \in \delta_n(\alpha)$ then $\alpha' < \alpha$. 
We also note the symmetry between the definition of $R_n(A,a)$ and $\delta_n$, in the way that there is a clear correspondence between each of the four cases in both these definitions. This is not by accident; the $\delta_n$ operators are constructed exactly in such a way so that the following lemma is true. (See Appendix~\ref{subsec:appendix-order-types} for more details).

\begin{restatable}{lemma}{ordertype}\label{lem:order-type}
    Let $n \ge 1$ and $a \in A_{\le n}$ for some $A \neq \Gamma_0$. 
    Then, there exists some $\alpha' \in \delta_n(o(A))$ such that 
    $R_n(A,a) \hookrightarrow C(\alpha')$.
\end{restatable}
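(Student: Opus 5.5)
We prove the lemma by structural induction on $A$, following the four cases in the definition of $R_n(A,a)$. In each case we exhibit an $\alpha'\in\delta_n(o(A))$ and then show $R_n(A,a)\hookrightarrow C(\alpha')$.

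Two facts do most of the work. First, $C$ is inverse to $o$, so it suffices to show that $o(R_n(A,a))$ equals $\alpha'$, or that $R_n(A,a)$ reflects into $C(\alpha')$ componentwise. Second, reflections are compatible with the constructors. If $X\hookrightarrow X'$ and $Y\hookrightarrow Y'$, then $X+Y\hookrightarrow X'+Y'$ via the componentwise map. Likewise $\multi{X}\hookrightarrow\multi{X'}$ by applying the reflection elementwise: an injection witnessing the image order pulls back, and norms do not increase because $\max(|r(x)|,1)\le\max(|x|,1)$. Together with $o(X+Y)=o(X)\oplus o(Y)$ and $o(\multi{X})=\omega^{o(X)}$, this lets us compute order types of the $R_n$ expressions directly.

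\textbf{Base case.} $A=\multi{\Gamma_0}$. Then $o(A)=\omega^0$, and $\delta_n(o(A))=\{0\}$ since $D_n(\omega^0)=0$. Also $R_n(A,a)=\Gamma_0=C(0)$.

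\textbf{Multiset of a sum.} $A=\multi{\sum_{i=1}^d B_i}$, so $o(A)=\omega^{\bigoplus_i o(B_i)}$. By induction, for each $i$, $R_n(\multi{B_i},b_i)=\sum_j\multi{B_i^j}$ reflects into $C(\gamma_i)$ for some $\gamma_i\in\delta_n(\omega^{o(B_i)})=\{D_n(\omega^{o(B_i)})\}$. So $\gamma_i=D_n(\omega^{o(B_i)})$, written $\bigoplus_j\omega^{\beta_i^j}$. Here we need the induction hypothesis in a slightly strengthened form: the reflection $\sum_j\multi{B_i^j}\hookrightarrow C(\gamma_i)$ pairs the summands with the terms $\omega^{\beta_i^j}$ so that $B_i^j\hookrightarrow C(\beta_i^j)$ after regrouping. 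I would state the lemma for multiset-type $A$ in this pointwise form to make the induction go through.

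Given that, $\multi{\sum_{k\ne i}B_k+B_i^j}\hookrightarrow\multi{\sum_{k\ne i}C(o(B_k))+C(\beta_i^j)}$. The latter has order type $\omega^{\bigoplus_{k\ne i}o(B_k)\oplus\beta_i^j}$. Summing over $i,j$ gives exactly $D_n(o(A))$, which is the single element of $\delta_n(o(A))$.

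\textbf{Nested multiset.} $A=\multi{\multi{B}}$ with $a=\multiset{a_1,\dots,a_m}$ and $m\le n$. Then $o(A)=\omega^{\omega^{\beta}}$ with $\beta=o(B)$. By induction, $R_n(\multi{B},a_i)\hookrightarrow C(D_n(\omega^\beta))$, so each summand reflects into $\multi{C(\beta\cdot(n-1)\oplus D_n(\omega^\beta))}$. Hence $R_n(A,a)$ reflects into $m$ copies of this nwqo. Since $m\le n$, these embed as summands into $C(\omega^{\beta\cdot(n-1)\oplus D_n(\omega^\beta)}\cdot n)=C(D_n(o(A)))$; any $X$ reflects into $X+Y$ via the inclusion. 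The bound $m\le n$ comes from $|a|\le n$ and the $\max(\cdot,1)$ norm convention; the case $a=\emptyset$ is fine too.

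\textbf{Sum.} $A=\sum_{i=1}^d\multi{B_i}$ with $a\in\multi{B_i}$. By induction, $R_n(\multi{B_i},a)\hookrightarrow C(D_n(\omega^{o(B_i)}))$. Adding the untouched summands gives a reflection into $C\bigl(D_n(\omega^{o(B_i)})\oplus\bigoplus_{k\ne i}\omega^{o(B_k)}\bigr)$, and this ordinal lies in $\delta_n(o(A))$.

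\textbf{Main obstacle.} The main difficulty is bookkeeping in the multiset-of-a-sum case, where the induction hypothesis gives a reflection only into the whole $C(\gamma_i)$, not term by term. I would first try to avoid strengthening the statement. Instead, I would observe that $R_n(\multi{B},b)$ is by construction a sum of multiset nwqos whose order-type exponents are exactly the exponents produced by $D_n$. This can be proved as a separate equality lemma, $o(R_n(\multi{B},b))=D_n(\omega^{o(B)})$ up to the padding by $\le n$ copies in the $\multi{\multi{\cdot}}$ case. With that in hand, all reflections reduce to summand inclusions plus the compatibility of $+$ and $\multi{\cdot}$ with reflections. A secondary care point is handling the fact that $\sum_{k\ne i}B_k$ and $B_i^j$ may not be in sorted CNF order; this is harmless because $\oplus$ is commutative and disjoint sum is commutative up to isomorphism.
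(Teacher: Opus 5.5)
Your proposal is correct and is essentially the paper's proof. It uses the same structural induction over the four cases of $R_n$, the same witnesses $\alpha'$, compatibility of $+$ and $\multi{\cdot}$ with reflections, and summand inclusion via $m\le n$ in the $\multi{\multi{B}}$ case. The one divergence is your pointwise strengthening in the multiset-of-a-sum case, which the paper does without. The plain hypothesis suffices there: a multiset over $Z+X$ is a pair $(m_Z,m_X)$ ordered componentwise with additive norm, so any reflection $r:\sum_j\multi{X_j}\hookrightarrow\sum_{j'}\multi{Y_{j'}}$ lifts to $\sum_j\multi{Z+X_j}\hookrightarrow\sum_{j'}\multi{Z+Y_{j'}}$ via $(m_Z,m_X)\mapsto(m_Z,r(m_X))$. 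If you keep your strengthening, note that padding makes the pairing of summands with CNF terms only injective, so you land in a sub-sum of $D_n(o(A))$, not ``exactly'' $D_n(o(A))$, and you finish by summand inclusion.
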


By combining Equation~\ref{eq:descent-inequality} and Lemma~\ref{lem:order-type}, we get the following inequality, which solves our first goal.
\begin{equation}\label{eq:AC}
 L_{A,g}(n) \le \max_{\alpha' \in \delta_n(o(A))} \{1+ L_{C(\alpha'),g}(g(n))\}   
\end{equation}

We will now move on to our second goal, namely using order types to relate $L_{A,g}$ to functions in the fast-growing hierarchy.

\subsection{Upper Bounds in the Fast-Growing Hierarchy}\label{subsec:Cichon}

For any $\alpha < \epsilon_0$, define 
$$M_{\alpha,g}(n) = \max_{\alpha' \in \delta_n(\alpha)} \{1 + M_{\alpha',g}(g(n))\}$$

Equation~\ref{eq:AC} tells us that $L_{A,g}(n) \le M_{o(A),g}(n)$. In other words, if $\alpha = o(A)$, we have $L_{C(\alpha),g}(n) \le M_{\alpha,g}(n)$. Hence, bounding $M_{\alpha,g}(n)$ will allow us to bound $L_{A,g}(n)$. We now bound $M_{\alpha,g}$ by functions from the Cicho\'n hierarchy $(h_\alpha)_{\alpha < \epsilon_0}$ where $h(x) = x \cdot g(x)$. 
First, we set up some notation and prove a couple of intermediate results.

We say that an ordinal $\alpha$ is $\ell$-lean if either $\alpha = 0$ or $\alpha = \omega^{\beta_1} \cdot c_1 + \cdots + \omega^{\beta_m} \cdot c_m$ in strict CNF
with each $c_i \le \ell$ and $\beta_i$ is $\ell$-lean for every $i$. We now have the following proposition, which can be proved by analysing the shape of the ordinals appearing in the $\delta_n$ operator. (See Appendix~\ref{subsec:appendix-Cichon} for more details).

\begin{restatable}{proposition}{lean}\label{prop:lean}
    Suppose $\Omega_{k} \le \alpha < \Omega_{k+1}$  and $\alpha$ is $\ell$-lean for some $l$.
    If $\alpha' \in \delta_n(\alpha)$ then
    $\alpha'$ is $\ell+\ell n k$-lean. Furthermore, if $\alpha = \omega^\beta$ for some $\beta$, then
    $\alpha' = D_n(\alpha)$ is $\ell n k$-lean.
\end{restatable}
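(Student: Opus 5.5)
We argue by structural induction on the exponent, proving the second claim (about $D_n(\omega^\beta)$) first and then deducing the first claim from the definition of $\delta_n$. Concretely, we prove the following auxiliary statement by induction on $j$: if $\omega^\beta < \Omega_{j+1}$ and $\omega^\beta$ is $\ell$-lean, then $D_n(\omega^\beta)$ is $\ell n j$-lean (with the convention that $D_n(\omega^0)=0$ is $0$-lean). Since $\alpha < \Omega_{k+1}$, every term $\omega^{\beta_i}$ of $\alpha$ satisfies $\omega^{\beta_i} < \Omega_{k+1}$, so the auxiliary statement applies with $j=k$.

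For the auxiliary statement, we split along the three clauses of $D_n$.

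\textbf{Case $\beta = 0$.} This case is trivial.

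\textbf{Case $\beta = \omega^{\gamma}$ with $\gamma$ a single term.} Here $D_n(\omega^{\omega^\gamma}) = \omega^{\gamma\cdot(n-1)\oplus D_n(\omega^\gamma)}\cdot n$. The outer coefficient is $n \le \ell n j$. For the exponent, $\gamma$ is $\ell$-lean with $\omega^\gamma < \Omega_j$. By induction, $D_n(\omega^\gamma)$ is $\ell n (j-1)$-lean and its terms have exponents that are natural sums of exponents drawn from pieces of $\gamma$. The exponent $\gamma\cdot(n-1)$ is a single term with coefficient $n-1$ (or $0$ when $\gamma = 0$). When we add $D_n(\omega^\gamma)$, whose exponents are all strictly smaller than $\gamma$ since $D_n(\omega^\gamma) < \omega^\gamma$, the coefficients do not merge with $\gamma$. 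So the exponent is $\max(n-1,\ell n(j-1))$-lean, which is at most $\ell n j$-lean.

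\textbf{Case $\beta = \sum_{i=1}^d \beta_i$ with $d \ge 2$.} This is the main obstacle. The result is a natural sum of terms $\omega^{\bigoplus_{k\ne i}\beta_k \oplus \beta_i^j}$, and both the outer coefficients and the coefficients inside the exponents can accumulate. Inside the exponent, $\bigoplus_{k\ne i}\beta_k$ has coefficients at most $\ell$. The exponents of $\beta_i^j$ come from $D_n(\omega^{\beta_i})$ and are strictly below $\beta_i$, but they may coincide with the exponents of other $\beta_k$, giving coefficients up to $\ell + \ell n(j-1)$. I would handle this by strengthening the inductive invariant. The count should be that every coefficient is bounded by $\ell$ times the number of "sources" contributing a term. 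Each level of nesting contributes at most $n$ copies, and there are at most $j$ levels, so the bound is $\ell n j$. To keep this bookkeeping clean, I would first reduce to the strict form, so that repeated equal $\beta_i$ are grouped with a coefficient $c \le \ell$, and then check that identical summands produced for different $i$ in the same group merge into a coefficient at most $c \cdot (\text{coefficient from } D_n) \le \ell n$ per level. If this exact constant fails, the fallback is to prove leanness with a coarser constant, such as $(\ell n)^{O(k)}$, which still suffices downstream.

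Finally, for the first claim, take $\alpha' = D_n(\omega^{\beta_i}) \oplus \sum_{j\ne i}\omega^{\beta_j}$. The first summand is $\ell n k$-lean, and the rest of $\alpha$ is $\ell$-lean. A natural sum adds coefficients of equal terms, so $\alpha'$ is $(\ell + \ell n k)$-lean. The exponents stay lean because they are either original exponents of $\alpha$ or exponents from $D_n(\omega^{\beta_i})$, which are already covered.
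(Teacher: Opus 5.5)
Your overall structure matches the paper: prove the $D_n$ claim by induction on the height of the tower, then get the $\delta_n$ claim by adding back the untouched $\ell$-lean terms. Your final paragraph is exactly the paper's last case. However, the case you flag as the main obstacle is not actually proved. A heuristic count of ``sources'' plus a hedged fallback constant does not establish the stated $\ell nk$ bound. A coarser constant would also not plug directly into Lemma~\ref{lem:final-bound}, whose induction is calibrated to the exact shape $\ell+\ell nk$.

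The missing idea concerns the \emph{outer} coefficients. Write the exponent in strict form $\beta=\sum_{i=1}^d\omega^{\gamma_i}\cdot c_i$ and expand each $D_n(\omega^{\omega^{\gamma_i}})=\omega^{\gamma_i\cdot(n-1)\oplus D_n(\omega^{\gamma_i})}\cdot n$. This gives $D_n(\omega^\beta)=\bigoplus_{i=1}^d\omega^{\theta_i}\cdot c_i n$ with
\[
\theta_i=\textstyle\bigoplus_{p\neq i}\omega^{\gamma_p}\cdot c_p\oplus\omega^{\gamma_i}\cdot(c_i-1)\oplus\gamma_i\cdot(n-1)\oplus D_n(\omega^{\gamma_i}).
\]
Merging within a group is the easy part. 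To conclude that the outer coefficients are $c_in\le\ell n$, you must also show that different groups never produce the same exponent, i.e.\ $\theta_i\neq\theta_{i'}$ for $i\neq i'$. Otherwise the coefficients could accumulate over all $d$ groups, and $d$ is not controlled by $\ell nk$. The paper proves this by cancelling the common part. If $\gamma_{i'}>\gamma_i$, what remains of $\theta_i$ contains the term $\omega^{\gamma_{i'}}$. Every term of what remains of $\theta_{i'}$, namely $\omega^{\gamma_i}\oplus\gamma_{i'}\cdot(n-1)\oplus D_n(\omega^{\gamma_{i'}})$, lies strictly below $\omega^{\gamma_{i'}}$, using $D_n(\omega^{\gamma_{i'}})<\omega^{\gamma_{i'}}$. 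Your proposal never addresses this.

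Your inner bookkeeping also rests on a misreading. The exponent $\gamma\cdot(n-1)$ is not a single term with coefficient $n-1$. It is $\gamma$ with every coefficient multiplied by $n-1$, hence $\ell(n-1)$-lean, and its terms \emph{can} merge with those of $D_n(\omega^\gamma)$. Your observation that the exponents of $D_n(\omega^\gamma)$ lie below $\gamma$ only rules out merging with $\omega^\gamma$, which is not one of the summands. For example, with $\gamma=1$ the exponent is $(n-1)\oplus n=2n-1$. So in your single-term case the right count is $\ell(n-1)+\ell n(j-1)$, not $\max(n-1,\ell n(j-1))$; it still fits under $\ell nj$.

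Similarly, in the sum case the coefficients inside $\theta_i$ are bounded by $\ell+\ell(n-1)+\ell n(j-1)=\ell nj$. The three contributions are at most $\ell$ from a single $c_p$, $\ell(n-1)$ from $\gamma_i\cdot(n-1)$, and $\ell n(j-1)$ from the induction hypothesis applied to $D_n(\omega^{\gamma_i})$. Your estimate $\ell+\ell n(j-1)$ omits the middle contribution. Note also that the hypothesis at level $j-1$ applies to $\omega^{\gamma_i}<\Omega_j$. It does not apply to $D_n(\omega^{\beta_i})$ for a term $\beta_i$ of $\beta$, which sits at the same level $j$ and is handled by your single-term case. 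With these repairs your plan becomes the paper's proof.
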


We are now in a position to bound $M_{\alpha,g}$ by means of $h_\alpha$.
(See Appendix~\ref{subsec:appendix-Cichon} for more details).

\begin{restatable}{lemma}{finalbound}\label{lem:final-bound}
    Let $\alpha < \Omega_{k+1}$ be $\ell$-lean. Then $M_{\alpha,g}(n) \le h_\alpha(\ell + \ell n k)$ where $h(x) = x \cdot g(x)$.
\end{restatable}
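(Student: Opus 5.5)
We argue by well-founded induction on $\alpha$ (over all $\ell$ and $n$ simultaneously), following the standard argument of Schmitz and Schnoebelen relating descent-equation unravelings to the Cicho\'n hierarchy. The base case is $\alpha = 0$: then $\delta_n(0)$ is empty, so $M_{0,g}(n) = 0 = h_0(\cdot)$. For $\alpha > 0$, unfolding the definition gives some $\alpha' \in \delta_n(\alpha)$ with $M_{\alpha,g}(n) = 1 + M_{\alpha',g}(g(n))$. Since $\alpha' < \alpha$, the induction hypothesis applies to $\alpha'$ once we know its leanness. By Proposition~\ref{prop:lean}, $\alpha'$ is $\ell' := \ell + \ell n k$-lean. (If $\alpha < \Omega_k$, we use the same proposition for the smaller height, which only gives a better bound.) Hence
\[
M_{\alpha,g}(n) \le 1 + h_{\alpha'}\bigl(\ell' + \ell' g(n) k\bigr).
\]

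It remains to show $1 + h_{\alpha'}(\ell' + \ell' g(n) k) \le h_\alpha(\ell + \ell n k)$. Write $x := \ell + \ell n k \ge n$. Then $\ell' + \ell' g(n)k \le x\, g(x) = h(x)$, up to a harmless constant that we absorb by slightly strengthening the invariant (using superadditivity and inflationarity of $g$, and $\ell' = x$). So we aim to prove
\[
1 + h_{\alpha'}(h(x)) \le h_\alpha(x).
\]
For this we use the key ordinal fact: if $\alpha' \in \delta_n(\alpha)$ and $\alpha$ is $\ell$-lean with $n \le x$, then $\alpha' \le P_x(\alpha)$, the predecessor $\alpha'' $ with $\alpha = $ limit descended via fundamental sequences at $x$ to $\alpha''+1$. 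We then invoke the monotonicity lemmas for Cicho\'n functions on lean ordinals (\cite[Lemmas C.x]{arxiv-SchmitzS}): $h_{\beta}(y) \le h_{\gamma}(y)$ whenever $\beta \le \gamma$, $\beta$ is $y$-lean and $\gamma \le$ via the pointwise ordering. Together with $h_\alpha(x) = 1 + h_{P_x(\alpha)}(h(x))$, this gives the claim.

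The main obstacle is establishing $\alpha' \le_{x} P_x(\alpha)$ in the pointwise ordering (not merely $\alpha'<\alpha$). We would prove it by induction on the cases of $D_n$. The case $\omega^0$ is trivial. For $\omega^{\omega^\beta}$, the derivative $\omega^{\beta\cdot(n-1)\oplus D_n(\omega^\beta)}\cdot n$ must be compared with the fundamental sequence $\omega^{(\omega^\beta)_x}\cdot x$. Here leanness ensures that coefficients stay bounded by $x$, and the natural sum with $\beta\cdot(n-1)$ is dominated because $\beta \cdot (n-1) \oplus D_n(\omega^\beta)$ is below $(\omega^\beta)_x$ recursively. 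The sum-of-exponents case uses the fact that natural sums of lean ordinals interact with $P_x$ as in \cite{arxiv-SchmitzS}. Finally, the set case $\delta_n$ replaces one term and is handled by monotonicity of $\oplus$.
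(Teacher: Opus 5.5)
Your outline follows the paper's proof. You induct on $\alpha$, pick the maximizing $\alpha'\in\delta_n(\alpha)$, and get from Proposition~\ref{prop:lean} that $\alpha'$ is $x$-lean for $x:=\ell+\ell nk$. You then apply the induction hypothesis at $(\alpha',x,g(n))$, bound the resulting argument by $h(x)$, and close with $1+h_{P_x(\alpha)}(h(x))=h_\alpha(x)$. The middle bound needs no ``harmless constant'' and no strengthened invariant. Superadditivity gives $1+k\,g(n)\le 1+g(nk)$, and strict monotonicity gives $1+g(nk)\le g(x)$ because $nk<x$. So $x(1+k\,g(n))\le h(x)$ holds exactly, as in the paper.

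The gap is in the step you single out as the main obstacle.
\begin{itemize}
\item \emph{The induction over $D_n$ is not carried out.} What is written is also off: for $\beta>0$ one has $(\omega^{\omega^\beta})_x=\omega^{(\omega^\beta)_x}$, not $\omega^{(\omega^\beta)_x}\cdot x$. The remaining cases are asserted rather than argued.
\item \emph{Your ``key fact'' is false with $n\le x$ as the hypothesis on $x$.} $\omega^2$ is $2$-lean, and for $n=x=2$ one gets $\delta_2(\omega^2)=\{\omega\cdot 4\}$, while $P_2(\omega^2)=\omega+1$.
\item \emph{What actually drives $\alpha'\le P_x(\alpha)$ for $x=\ell+\ell nk$} is a bound on coefficients in terms of $x$. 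Proposition~\ref{prop:lean} already gives you this, and together with $\alpha'<\alpha$ it yields the comparison from general ordinal facts, with no case analysis of $D_n$.
\end{itemize}
This is how the paper concludes. Proposition~\ref{prop:delta-pred-relation} obtains $h_{\alpha'}(y)\le h_{P_x(\alpha)}(y)$ for $y\ge x$ from $\alpha'<\alpha$ and the leanness of $\alpha'$, via \cite[Lemmas B.1, C.9]{arxiv-SchmitzS}, and applies it at $y=h(x)$.

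One caveat if you import such lemmas. With the convention $(\gamma+\omega^{\beta+1})_x=\gamma+\omega^\beta\cdot x$ used here, they need coefficients strictly below the argument: the ordinal $x<\omega$ is $x$-lean, yet $P_x(\omega)=x-1$. The margin is available here. $\alpha'$ arises from $\alpha$ by replacing one term $\omega^{\beta_i}$ with $D_n(\omega^{\beta_i})$, which is $\ell nk$-lean with $\ell nk<x$.
\begin{itemize}
\item If smaller terms follow $\omega^{\beta_i}$, then already $\alpha'<P_x(\alpha)$.
\item Otherwise the comparison reduces to $D_n(\omega^{\beta_i})\le P_x(\omega^{\beta_i})$.
\end{itemize}
Finally, $h(x)>x$ covers the Cicho\'n monotonicity step.
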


The following theorem is now immediate from our observation that $L_{A,g} \le M_{o(A),g}$.

\begin{theorem}
    Let $A$ be a nested multiset nwqo of order type $\alpha < \Omega_{k+1}$ such that $\alpha$ is $\ell$-lean.
    Then, $L_{A,g}(n) \le h_{\alpha}(\ell + \ell n k)$.
\end{theorem}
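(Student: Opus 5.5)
The statement follows by chaining three facts already established in this subsection. We never work with the residuals of $A$ directly. We use only the inequality $L_{A,g}(n) \le M_{o(A),g}(n)$ and then apply Lemma~\ref{lem:final-bound} to the ordinal $o(A)$.

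\emph{Step 1: $L_{A,g}(n) \le M_{o(A),g}(n)$ for every nested multiset nwqo $A$ and every $n \ge 1$.} We prove this by well-founded induction on $o(A)$.
\begin{itemize}
\item If $A = \Gamma_0$, then $L_{A,g}(n) = 0$, since the domain is empty, so the bound is trivial.
\item Otherwise, Equation~\ref{eq:AC} gives $L_{A,g}(n) \le \max_{\alpha' \in \delta_n(o(A))}\{1 + L_{C(\alpha'),g}(g(n))\}$.
\end{itemize}
In the second case, every $\alpha' \in \delta_n(o(A))$ satisfies $\alpha' < o(A)$, and $o(C(\alpha')) = \alpha'$ because $o$ and $C$ are mutually inverse. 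The induction hypothesis therefore gives $L_{C(\alpha'),g}(g(n)) \le M_{\alpha',g}(g(n))$. Since $g$ is inflationary, $g(n) \ge 1$, so the hypothesis applies at the argument $g(n)$. Substituting into the right-hand side yields exactly the defining recursion of $M_{o(A),g}(n)$.

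The same induction also shows that $M$ is well defined. The recursion is well founded because $\delta_n$ strictly decreases ordinals. It bottoms out at $\alpha = 0$, where $\delta_n(0)$ is the empty set and we take the empty maximum to be $0$.

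\emph{Step 2: apply Lemma~\ref{lem:final-bound}.} Its hypotheses are exactly those of the present theorem: $\alpha = o(A) < \Omega_{k+1}$ and $\alpha$ is $\ell$-lean. The lemma gives $M_{\alpha,g}(n) \le h_\alpha(\ell + \ell n k)$ with $h(x) = x\cdot g(x)$, and combining this with Step 1 gives the claim.

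The case $n = 0$ needs separate treatment, since $\delta_n$ is only defined for $n \ge 1$. Here $A_{\le 0}$ is finite, and a $(g,0)$-controlled bad sequence is controlled by $(g,1)$ as well. By the descent equation, $L_{A,g}$ is monotone in $n$. Hence $L_{A,g}(0) \le L_{A,g}(1)$, which is handled by the case $n = 1$ as long as the right-hand side is monotone in $n$. Monotonicity of Cicho\'n functions on lean ordinals is standard, but if it is awkward to invoke, I would simply restrict the statement to $n \ge 1$.

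The main obstacle is Step 1. Equation~\ref{eq:AC} was derived from Lemmas~\ref{lem:reflections} and~\ref{lem:order-type}, and it has to hold uniformly for all $A \ne \Gamma_0$ so that the induction can be carried out. In particular, the argument relies on each reflection $R_n(A,a) \hookrightarrow C(\alpha')$ transferring length bounds at the new control value $g(n)$. This holds because the fundamental property of reflections is valid for every $n$. All the genuinely hard work, namely leanness propagation and the Cicho\'n comparison, lives inside Lemma~\ref{lem:final-bound}, which we take as given.
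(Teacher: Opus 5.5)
Your proposal is correct and follows the paper's route: the paper likewise unrolls Equation~\ref{eq:AC} into $L_{A,g}(n) \le M_{o(A),g}(n)$, a step your well-founded induction on $o(A)$ makes explicit, and then invokes Lemma~\ref{lem:final-bound}. One correction to your side remark on $n=0$: monotonicity of the right-hand side gives $h_\alpha(\ell) \le h_\alpha(\ell+\ell k)$, which is the wrong direction for deducing $L_{A,g}(0) \le h_\alpha(\ell)$ from the $n=1$ case. So take your fallback and restrict to $n \ge 1$; the paper's argument also covers only $n \ge 1$, since $\delta_n$ is defined only there.
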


Now, suppose we have a finite set $Q \equiv \Gamma_\ell$ and some $k \in \mathbb{N}$.
Consider the sequence of nested multiset nwqos $\mathcal{M}_k,\mathcal{M}_{k-1},\cdots,\mathcal{M}_0$ defined in Subsection~\ref{subsec:trees-nested-multisets}. 
By construction it follows that $o(\mathcal{M}_i) < \Omega_{k-i+1}$ and $o(\mathcal{M}_i)$ is $\ell$-lean. Hence, $L_{\mathcal{M}_0,g}(n) \le h_{o(\mathcal{M}_0)}(\ell + \ell n k)$. Since $o(\mathcal{M}_0) < \Omega_{k+1}$ and $o(\mathcal{M}_0)$ is $\ell$-lean, it follows that
$h_{o(\mathcal{M}_0)}(\ell + \ell n k) \le h_{\Omega_{k+1}}(\ell + \ell n k)$~\cite[Lemma B.1, Lemma C.9]{arxiv-SchmitzS}. This proves Theorem~\ref{thm:main-upper-bound} and concludes the proof of upper bound for coverability of $k$-NRCS. 

\section{Applications}\label{sec:applications}

We now use the techniques that we have developed in this paper to provide improved upper bounds
for a hierarchy of various problems from XML processing, graph transformation systems, $\pi$-calculus, parameterized verification and logic. Furthermore, we use our newly introduced model of NRCS to prove that the above hierarchy of problems from parameterized verification and logic are also complete problems for the hierarchy $\{\fastf{\Omega_k}\}_{k \in \mathbb{N}}$.
This demonstrates the efficacy of the model of NRCS as a master problem
from which reductions can be easily made.

\subsection{Upper Bounds for Systems over Bounded-Depth Graphs}\label{subsec:bounded-depth-graphs}

Our upper bound techniques developed in Section~\ref{sec:upper-bounds} are more broadly applicable than just for the model of $k$-NRCS, as we shall now demonstrate.
For this, we first set up some notation.

A $k$-depth graph $G$ is any graph that can be obtained by taking a tree $T$ of height at most $k$ and adding edges between a node and its descendants. (Note that no edges are added between two nodes that lie in different subtrees of $T$). We can extend the induced subgraph ordering from trees of height $k$ to $k$-depth graphs in a natural way.
It turns out that the induced subgraph relation $\le_{is}$ is also a nwqo on $k$-depth graphs for any $k$~\cite{ding1992subgraphs,Balasubramanian21}.

With this in mind, we can now consider classes
of transitions systems operating over labelled $k$-depth graphs. Formally, let 
$\mathcal{C}$ be any class of transition systems such that each $\mathcal{T} \in \mathcal{C}$ is a transition system $(C,\rightarrow)$ where $C$ consists of all $k$-depth labelled graphs with labels from some finite set $Q$ and $\rightarrow \subseteq C \times C$ is the step relation. Similar to $k$-NRCS, we can define the notions of $\act{*}$, reachability and coverability for such systems. 

Now, just like $k$-NRCS, if $\mathcal{C}$ satisfies the compatibility and the effective predecessor basis properties, then by the same arguments given in Subsection~\ref{subsec:WSTS}, it would follow that $\mathcal{C}$ is well-structured, coverability is decidable and the running time of the coverability algorithm  is  bounded by the length of controlled bad sequences for labelled $k$-depth graphs under the $\le_{is}$ nwqo. Using existing results, we can show that there is a reflection
from labelled $k$-depth graphs with labels from some set $Q$ to 
labelled trees of height at most $k$ with labels from the set $Q \times 2^{\{0,\dots,k-1\}}$~\cite[Section A]{Balasubramanian21}. (See also Appendix~\ref{sec:appendix-bounded-depth-graphs}).

Because of this reflection, we can conclude that upper bounds for length functions
over $k$-depth graphs can be obtained from upper bounds for length functions over trees of height at most $k$. As shown in subsection~\ref{subsec:trees-nested-multisets}, the latter is bounded by $F_{\Omega_k}(p(n))$ for some primitive recursive function $p$. This would then put the coverability problem for $\mathcal{C}$ in the class $\fastf{\Omega_k}$.

We can use this observation on various classes of systems from the literature. 
For instance, motivated by XML processing of semi-structured data, the authors of~\cite{GenestMSZ08} introduced \emph{tree rewriting pattern systems} (TRPS), which are a class of systems over trees. They showed that a subclass of TRPS called positive, $k$-depth-bounded TRPS have as configurations, labelled trees of height at most $k$ and possess the compatibility property as well as the effective predecessor basis property. Hence, by our results it follows that
\begin{theorem}
    Coverability\footnote{In~\cite{GenestMSZ08} the authors call this the pattern reachability problem} for positive $k$-depth-bounded TRPS is in $\fastf{\Omega_k}$.
\end{theorem}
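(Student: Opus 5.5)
The plan is to instantiate the general recipe of Subsection~\ref{subsec:bounded-depth-graphs} for positive $k$-depth-bounded TRPS. By~\cite{GenestMSZ08}, the configurations of such a system are finite labelled trees of height at most $k$ over a finite label set $Q$. The transition relation is compatible with the embedding ordering on these trees, and predecessor bases of upward-closed sets are computable. The first thing to check is that the ordering used in~\cite{GenestMSZ08} for pattern reachability agrees with $\le_{is}$, or at least reflects into it. Patterns there are matched by tree embeddings that preserve labels and the parent relation, and for trees of bounded height this is exactly the induced-subtree ordering of Subsection~\ref{subsec:WSTS}. Pattern reachability is then precisely coverability in the sense used for $k$-NRCS. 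If the two orderings differed slightly, I would instead build a normed reflection from the TRPS ordering into $\le_{is}$ on trees of height at most $k$, possibly over a product alphabet such as $Q \times 2^{\{0,\dots,k-1\}}$ as in the $k$-depth-graph case. The fundamental property of reflections would then carry all length bounds over.

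Next, I would run the backward coverability algorithm of Subsection~\ref{subsec:WSTS}: set $U_0 = \uparrow C_f$ and $U_{i+1} = U_i \cup pre(U_i)$. I need the analogue of the effective predecessor basis proposition, with a size bound of the form $|C_i| \le |C| + c$ where $c$ depends only on the TRPS (say, on the maximal size of its rewrite rules). This bound makes the sequence of newly added minimal elements a $(g,|C_f|)$-controlled bad sequence for a primitive recursive control $g$. Obtaining such a bound requires analysing how a predecessor basis element of a rule application is formed: one glues the left-hand side of the rule onto the target pattern and keeps only the overlap. I expect this step to be the main obstacle, because~\cite{GenestMSZ08} only asserts effectiveness, so a primitive recursive bound on the size and computation time of the basis has to be extracted from their construction.

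With these ingredients, the number of iterations before the fixpoint is bounded by $L_{\mathcal{M}_0,g}(|C_f|)$, and Theorem~\ref{thm:main-upper-bound} gives $L_{\mathcal{M}_0,g}(x) \le h_{\Omega_{k+1}}(|Q| + |Q|xk)$. The same chain of estimates as after Theorem~\ref{thm:main-upper-bound} then finishes the argument. We have $h_{\Omega_{k+1}} \le f_{h,\Omega_k}$, and $f_{h,\Omega_k}$ is eventually bounded by $F_{\Omega_k}\circ p$ for some primitive recursive $p$. Each iteration of the algorithm, including membership tests for $C_{init}$, costs at most primitive recursive time in the sizes involved, and these sizes are themselves bounded by $F_{\Omega_k}(p(\cdot))$. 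Hence the total running time is $F_{\Omega_k}(p'(n))$ for a primitive recursive $p'$, which places coverability (pattern reachability) for positive $k$-depth-bounded TRPS in $\fastf{\Omega_k}$.
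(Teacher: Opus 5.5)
Your proposal is correct and takes essentially the same route as the paper. The paper's proof is a direct instantiation of the recipe of Subsection~\ref{subsec:bounded-depth-graphs}: it cites~\cite{GenestMSZ08} for configurations being labelled trees of height at most $k$ with the compatibility and effective predecessor basis properties, then applies the WSTS backward algorithm, Theorem~\ref{thm:main-upper-bound}, and the same chain of estimates through $f_{h,\Omega_k}$ and $F_{\Omega_k}$. The paper does not prove the point you flag as the main obstacle (a primitive-recursive, size-bounded predecessor basis) but takes it from that citation, so your gluing sketch, which mirrors Claim~A in the proof of the effective predecessor basis for $k$-NRCS, only makes explicit what the paper assumes.
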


Similarly, the authors of~\cite{KoenigS} consider a class of graph transformation systems working over $k$-depth graphs, called $\mathcal{G}_k$-restricted systems, for which the above argument applies and so we get,
\begin{theorem}
    Coverability for $\mathcal{G}_{k}$-restricted graph transformation systems is in $\fastf{\Omega_k}$.
\end{theorem}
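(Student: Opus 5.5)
The statement to prove is that coverability for $\mathcal{G}_k$-restricted graph transformation systems lies in $\fastf{\Omega_k}$. The plan is to instantiate the general recipe from Subsection~\ref{subsec:bounded-depth-graphs} for the class $\mathcal{C}$ of $\mathcal{G}_k$-restricted systems of~\cite{KoenigS}. That recipe needs three things: (i) configurations are $k$-depth labelled graphs over a finite label set $Q$, ordered by $\le_{is}$; (ii) the compatibility property; (iii) the effective predecessor basis property, with basis elements whose size grows by at most a primitive recursive amount in the size of the input configuration and the system.

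For (i), I will recall that in~\cite{KoenigS} a $\mathcal{G}_k$-restricted system works over the class $\mathcal{G}_k$ of graphs with bounded path length $k$. These are exactly the $k$-depth graphs: each connected component embeds as the closure of a tree of height at most $k$. Labels come from the finite alphabet of the system, and the order used there is the induced subgraph order. For (ii) and (iii), I will cite the well-structuredness result of~\cite{KoenigS}. That result proves compatibility with respect to $\le_{is}$, since rules are applied through injective matches and the class is closed under applicable rewriting. It also proves that minimal predecessors are computable: each basis element is obtained by overlapping the target with a rule's right-hand side and applying the rule backwards. Such a basis element has size at most $|C|$ plus the size of the rule. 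This gives the primitive recursive control function $g(x)=x+c$, where $c$ depends only on the system.

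With these in hand, the backward coverability algorithm runs in time dominated by the length of $(g,|C_f|)$-controlled bad sequences over labelled $k$-depth graphs, by the same argument as in Subsection~\ref{subsec:WSTS}. Next I will use the reflection from labelled $k$-depth graphs over $Q$ into labelled trees of height at most $k$ over $Q\times 2^{\{0,\dots,k-1\}}$. Together with the fundamental property of reflections, this bounds that length by $L_{\mathcal{M}_0,g}$ for the tree nwqo with $|Q|2^k$ labels. Theorem~\ref{thm:main-upper-bound} then bounds this by $h_{\Omega_{k+1}}(\cdot)$ with $h(x)=x\cdot g(x)$. From there, the same chain used for $k$-NRCS applies: first $h_{\Omega_{k+1}}\le f_{h,\Omega_k}$, and then $f_{h,\Omega_k}\le F_{\Omega_k}\circ p$ for primitive recursive $p$. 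Since $k$ is fixed, the factor $2^k$ is absorbed into $p$, and the problem lands in $\fastf{\Omega_k}$.

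I expect the main obstacle to be matching the definitions of~\cite{KoenigS} to ours: their ordering (subgraph versus induced subgraph, and whether it is restricted to the class), and whether their predecessor computation really has a primitive recursive cost and size bound. If their order is the plain subgraph order, I will check that it is coarser than $\le_{is}$ on $\mathcal{G}_k$. In that case a bad sequence for it is also bad for $\le_{is}$, so the same length bound still applies. Making this point carefully is the one nonroutine part; everything else is the already-established pipeline.
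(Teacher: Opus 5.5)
Your proposal follows the paper's argument. The paper simply asserts that the $\mathcal{G}_k$-restricted systems of~\cite{KoenigS} live over $k$-depth graphs and have compatibility and effective predecessor bases; the generic WSTS bound, the reflection into height-$k$ trees over $Q\times 2^{\{0,\dots,k-1\}}$, and Theorem~\ref{thm:main-upper-bound} then apply. Your remark that a coarser order, such as the plain subgraph order, only shortens bad sequences is a correct refinement the paper leaves implicit.

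When you match definitions, fix two things. First, $g(x)=x+c$ is not superadditive, so use a dominating superadditive control such as $g(x)=(c+1)x$. Second, bounded-path-length graphs are contained in $k$-depth graphs only componentwise, and configurations may be disconnected, so you must place the whole configuration under one root of height at most $k$. A virtual root works if paths in~\cite{KoenigS} have at most $k$ nodes. If $k$ instead bounds the number of edges, the virtual root costs an extra level and gives only $\fastf{\Omega_{k+1}}$ unless you refine the argument.
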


Finally, in~\cite{Meyer08}, it was shown that $k$-depth-bounded $\pi$-calculus processes can be reduced to reasoning over trees of height at most $2^k - 1$, and so by the above argument we can conclude
\begin{theorem}
    Coverability for $k$-depth-bounded $\pi$-calculus processes is in $\fastf{\Omega_{2^k-1}}$.
\end{theorem}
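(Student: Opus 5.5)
The claim is that coverability for $k$-depth-bounded $\pi$-calculus processes lies in $\fastf{\Omega_{2^k-1}}$. I would obtain it by instantiating the general bounded-depth argument of Subsection~\ref{subsec:bounded-depth-graphs} with the height parameter $K = 2^k-1$.

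The first step is to recall Meyer's encoding~\cite{Meyer08}. A $k$-depth-bounded process is represented, up to structural congruence, by a forest or tree whose nodes carry restricted names and sequential components, and every process reachable from a depth-bounded one has a representation of nesting height at most $2^k-1$. Labels come from a finite set $Q$ determined by the syntax of the initial process: the finitely many sequential-process shapes up to renaming, together with name-occurrence information. Under this representation, the embedding order on processes used for coverability corresponds to the induced subgraph order $\le_{is}$ on labelled trees (or $K$-depth graphs, once edges recording name usage between a node and its ancestors are included) of height at most $K$.

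The second step is to check the two WSTS ingredients for these representations. Compatibility follows from Meyer's result that the reduction relation is well-structured with respect to the embedding order. The effective predecessor basis holds because a basis element is obtained by undoing one reaction on a minimal pattern, and this increases the size by at most a constant depending on the process syntax. That constant is primitive recursive in the input.

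The third step derives the bound. The fixpoint computation of Subsection~\ref{subsec:WSTS} runs for a number of steps bounded by the length of $(g,n)$-controlled bad sequences over $K$-depth labelled graphs, with $g$ primitive recursive. The reflection from $K$-depth graphs into labelled trees of height at most $K$ with labels in $Q \times 2^{\{0,\dots,K-1\}}$ transfers this to trees. Theorem~\ref{thm:main-upper-bound} then bounds the length by $h_{\Omega_{K+1}}(\cdot)$. The same Cicho\'n-to-fast-growing conversion used for $k$-NRCS turns this into $F_{\Omega_K}(p(n))$ for a primitive recursive $p$. The blowup of the label set to $|Q| \cdot 2^K$ is harmless, since $K$ is fixed and it only affects the argument of $p$.

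The main obstacle is the first step. One has to confirm that Meyer's height bound of $2^k-1$ applies to every reachable configuration, which matters because the argument requires all of them to lie in the wqo. One also has to confirm that the process embedding order really coincides with, or reflects into, $\le_{is}$ on the encoded structures. If it only reflects, that suffices, because the fundamental property of reflections gives $L_{A,g}\le L_{B,g}$.
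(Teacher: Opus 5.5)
Your proposal is correct and matches the paper's proof: cite Meyer's reduction of $k$-depth-bounded processes to labelled trees (equivalently bounded-depth graphs) of height at most $K=2^k-1$, then run the generic well-structured argument of Subsection~\ref{subsec:bounded-depth-graphs} (reflection into trees, the length-function bound, and the conversion from the Cicho\'n to the fast-growing hierarchy) with $K$ in place of $k$, which lands in $\mathbf{F}_{\Omega_{2^k-1}}$. Like you, the paper takes the $\pi$-calculus-specific ingredients from Meyer without further proof; these are the $2^k-1$ height bound on reachable processes, compatibility, and an effective predecessor basis with bounded size increase, which are exactly the points you flag as remaining to be checked.
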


We note that the best upper bounds that could be deduced from the literature prior to our work were $\fastf{\Omega_{2k+1}}$ for the first two problems and $\fastf{\Omega_{2^{k+1}-1}}$ for the last one~\cite{PriorityChannelSystems}. Hence, our results significantly improve upon these bounds. We now show more such applications in the next subsections, along with matching lower bounds.

\subsection{Bounded-Depth Broadcast Networks}\label{subsec:broadcast-networks}

The next application that we give is for the model of bounded-depth broadcast networks
\footnote{They are sometimes called bounded-path broadcast networks in the literature, but since they also operate on bounded-depth graphs, for the sake of uniformity, we call them bounded-depth broadcast networks here.}. 

A $k$-depth broadcast network~\cite{DelzannoSZ10} is a model of concurrent computation, in which we have finite-state agents situated on the nodes of a $k$-depth graph, called the communication topology.
All the agents execute the same finite-state protocol and initially begin at some initial state of the protocol. In each step, an agent can communicate with its neighbors by means of broadcasts, i.e., in each step, some agent sends a message which is immediately received by all of its neighbors. Therefore, configurations
of a $k$-depth broadcast network $\mathcal{P}$ are labelled $k$-depth graphs with labels from some finite set $Q$ of states for the agents. Such networks are known to satisfy the 
compatibility as well as the effective predecessor basis property.

The coverability problem for such networks is slightly different from the usual notion of coverability, because we no longer have one initial configuration, but an infinite set of initial configurations, which are all possible $k$-depth graphs with labels
from a given set $Q_{init} \subseteq Q$ of initial states. Nevertheless the usual coverability algorithm can be used for such networks (with the final check being over the set of all initial configurations)~\cite[Section 5]{DelzannoSZ10}. Just as for the other models, the running time depends primarily on the length of controlled bad sequences for $k$-depth graphs. Hence, we get that
\begin{theorem}
    Coverability for $k$-depth broadcast networks is in $\fastf{\Omega_k}$.
\end{theorem}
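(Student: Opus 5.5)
The argument instantiates the generic WSTS framework of Subsection~\ref{subsec:WSTS} with the length-function bounds for $k$-depth graphs from Subsection~\ref{subsec:bounded-depth-graphs}. The one point that needs care is that coverability here starts from an infinite, upward-closed-free set of initial configurations.

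\textbf{Step 1: the backward algorithm.} Let $\mathcal{P}$ be a $k$-depth broadcast network with states $Q$ and target configuration $C_f$. By the compatibility property and the effective predecessor basis property for such networks (as recalled above, following \cite{DelzannoSZ10}), we run the backward saturation of Subsection~\ref{subsec:WSTS}. Concretely, we set $U_0 = \uparrow C_f$ and $U_{i+1} = U_i \cup pre(U_i)$. Each $U_i$ is represented by a finite basis of $k$-depth labelled graphs, and each predecessor step increases the size of basis elements by at most a constant $c$ that depends only on $\mathcal{P}$ (a broadcast involves one sender plus neighbours already present, plus at most a path of length $k$ to keep the graph $k$-depth). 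Whenever a new basis element is added, it is not above any earlier one. Hence the sequence of newly added minimal elements forms a $(g,|C_f|)$-controlled bad sequence for the induced subgraph ordering on labelled $k$-depth graphs, with $g(x)=x+c$ primitive recursive.

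\textbf{Step 2: the final check.} At the fixpoint $U_m$ we must decide whether some graph labelled only by $Q_{init}$ lies in $U_m$. Since $U_m$ is upward closed, this holds iff some basis element of $U_m$ has all labels in $Q_{init}$. The reason is that a basis element $B$ with all labels in $Q_{init}$ is itself an initial configuration. Conversely, if an initial $G$ satisfies $B \le_{is} G$, then $B$ is an induced subgraph of $G$ and so all its labels are in $Q_{init}$. This check is linear in the basis size, so it does not affect the complexity, and correctness follows as in \cite[Section 5]{DelzannoSZ10}.

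\textbf{Step 3: complexity.} The number of iterations $m$, and hence the running time up to a primitive recursive overhead, is bounded by the length function of labelled $k$-depth graphs. Using the reflection from labelled $k$-depth graphs into labelled trees of height at most $k$ over $Q\times 2^{\{0,\dots,k-1\}}$, together with the fundamental property of reflections, this length function is bounded by the length function of those trees. Theorem~\ref{thm:main-upper-bound} then bounds that by $h_{\Omega_{k+1}}(|Q'|+|Q'|xk)$ with $|Q'| = |Q|2^k$. The chain of estimates following that theorem, namely $h_{\Omega_{k+1}} \le f_{h,\Omega_k}$ followed by \cite[Theorem 4.2]{Schmitz16hierarchies}, gives the overall bound $F_{\Omega_k}(p(n))$ for a primitive recursive $p$. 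This places the problem in $\fastf{\Omega_k}$.

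\textbf{Main obstacle.} The main obstacle is making sure the predecessor computation stays inside $k$-depth graphs with a uniformly bounded size increase, so that the control function is primitive recursive and independent of the input beyond $|\mathcal{P}|$. I would take this from the existing analysis in \cite{DelzannoSZ10}, and verify only that the size increase is additive, namely by a constant depending on $\mathcal{P}$ and $k$.
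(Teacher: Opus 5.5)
Your proposal is correct and follows essentially the paper's route. You use the backward WSTS saturation with the compatibility and effective-predecessor-basis properties from \cite{DelzannoSZ10}, a final check against the (downward-closed) set of initial configurations, the reflection of labelled $k$-depth graphs into trees of height at most $k$ over $Q\times 2^{\{0,\dots,k-1\}}$, and then Theorem~\ref{thm:main-upper-bound} with the $f_{h,\Omega_k}$ and $F_{\Omega_k}$ estimates. One small repair: $g(x)=x+c$ is not superadditive, which the paper's notion of control function requires (it is used in Lemma~\ref{lem:final-bound}); take $g(x)=(c+1)x$ instead, which is superadditive and still controls your sequence since $(c+1)^j n\ge n+cj$ for $n\ge 1$.
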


Similarly, if we restrict the network to only operate over communication topologies of $k$-depth trees, i.e., trees of height at most $k$, we get that
\begin{theorem}
    Coverability for $k$-depth broadcast networks over trees is in $\fastf{\Omega_k}$.
\end{theorem}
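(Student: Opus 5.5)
The plan is to reduce the upper bound for trees to the general one for $k$-depth graphs stated just above. Coverability for $k$-depth broadcast networks over trees is in $\fastf{\Omega_k}$ because its configurations form a subclass of those already treated. First I will check that the restricted system is still well-structured. Configurations are now labelled trees of height at most $k$ with labels from $Q$, ordered by $\le_{is}$. This is the nwqo of Subsection~\ref{subsec:WSTS}, and the tree nwqo is closed downward under $\le_{is}$, since an induced subgraph of a tree of height at most $k$ containing the root is again such a tree. The compatibility property carries over verbatim from the graph case: a broadcast fired by a node of $C \le_{is} D$ can be fired by its image in $D$. The neighbours in $D$ react with the same protocol, and the induced-subgraph embedding is preserved.

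Next I will check effective predecessor bases. A basis of $pre(\uparrow C)$ for general $k$-depth graphs can be intersected with trees: keep exactly those basis elements that are trees. This works because a tree predecessor lies above some basis element, and that element is an induced subgraph of a tree, hence itself a tree. The sizes stay bounded by $|C|$ plus a constant, since a broadcast only changes states and does not add nodes. The backward algorithm of~\cite[Section 5]{DelzannoSZ10} then runs over trees. Its final check is whether some tree labelled only by $Q_{init}$ lies in the computed upward-closed set. This is decidable by testing whether some basis element is labelled only by $Q_{init}$.

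Finally, the number of saturation steps is bounded by the length of $(g,|C_f|)$-controlled bad sequences over labelled trees of height at most $k$, for some primitive recursive $g$. Applying the bijection $h_0$ into $\mathcal{M}_0$ and Theorem~\ref{thm:main-upper-bound}, this length is at most $h_{\Omega_{k+1}}(|Q|+|Q|xk)$. Then $h_{\Omega_{k+1}} \le f_{h,\Omega_k}$ and~\cite[Theorem 4.2]{Schmitz16hierarchies} bound the running time by $F_{\Omega_k}(p(\cdot))$ for a primitive recursive $p$.

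The only delicate point is the predecessor-basis step, namely that restricting to trees keeps both finiteness and the norm increment primitive recursive. Since trees are downward closed this is routine, and no reflection through $Q \times 2^{\{0,\dots,k-1\}}$ is needed here.
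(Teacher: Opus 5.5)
Your proposal is correct and follows essentially the paper's route. You treat the tree-restricted network as a WSTS over labelled trees of height at most $k$ under $\le_{is}$, run the backward coverability algorithm with the final check over initial configurations, and bound its length directly via Theorem~\ref{thm:main-upper-bound}, with no $Q \times 2^{\{0,\dots,k-1\}}$ reflection, just as the paper's ``Similarly'' intends. The one difference is that you derive the tree predecessor basis by intersecting the general $k$-depth-graph basis with trees, whereas the paper simply takes compatibility and effective predecessor bases as known for these networks.
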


We also show that both these problems are $\fastf{\Omega_k}$-hard, by giving a reduction from coverability of $k$-NRCS to both these problems. The hardness result is obtained by taking an existing reduction from $k$-NCS, i.e., $k$-NRCS without resets to $k$-depth broadcast networks. We utilise this reduction along with the fact that broadcasts can be used to simulate
resets of counters~\cite{EsparzaFM99} to give an improved reduction from $k$-NRCS to $k$-depth broadcast networks. (See Appendix~\ref{sec:appendix-broadcast-networks} for more details).

\begin{restatable}{theorem}{bcastnets}\label{thm:broadcast-networks}
    Coverability for $k$-depth broadcast networks and $k$-depth broadcast networks over trees are $\fastf{\Omega_k}$-complete.
\end{restatable}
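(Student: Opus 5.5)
The statement splits into an upper bound and a lower bound. The upper bound is essentially the two theorems just stated: coverability for $k$-depth broadcast networks, and for $k$-depth broadcast networks over trees, is in $\fastf{\Omega_k}$. The tree version is a restriction of the graph version, with the set of initial configurations restricted to trees; the backward coverability algorithm and its bad-sequence bound via the reflection into trees of height $k$ still apply unchanged. So all the work is in $\fastf{\Omega_k}$-hardness. Since every tree of height at most $k$ is a $k$-depth graph, I will give one reduction from coverability of $k$-NRCS that works over tree topologies. I will then argue that it stays sound when arbitrary $k$-depth graphs are allowed.

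\textbf{Step 1: encoding configurations.} A $k$-NRCS configuration is a tree of height at most $k$, and I let it be the communication topology itself. Each node of the topology is an agent whose state records its NRCS label, together with auxiliary control information. Nodes that are not yet ``alive'' sit in a reserve state $\bot$, and these reserve agents play the role of potential future nodes.

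\textbf{Step 2: simulating transitions.} A path $v_0,\dots,v_i$ from the root is activated by a broadcast handshake. The root (level $0$) broadcasts a request, and a child in the right state answers and moves to a locked state; this is repeated down the levels, so each agent only talks to its parent and children.
\begin{itemize}
\item An \emph{incrementing} transition activates a reserve child in state $\bot$ and gives it label $q_\ell$.
\item A \emph{decrementing} transition makes the node $v_{j+1}$ enter a dead sink state. A dead node still exists in the topology but is ignored by every later step; soundness only requires that a dead node can never enable more behaviour.
\item A \emph{reset} at $v_i$ with label $p$ is the key use of broadcast, following \cite{EsparzaFM99}. Agent $v_i$ broadcasts ``reset $p$'', and every child in state $p$ (with its whole subtree, through a cascading kill broadcast) moves to the dead state.
\end{itemize}
I take the existing reduction from $k$-NCS to $k$-depth broadcast networks as the skeleton and add only this reset broadcast.

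\textbf{Step 3: initial and final configurations.} The initial configurations are all topologies with every agent in some initial state. A first phase lets agents lossily build the initial NRCS tree: the root checks that the required children exist in the right states, and any agent that is not used stays in $\bot$ or dies. The target is the encoding of $C'$, and covering it in the network corresponds to covering $C'$ in the NRCS.

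\textbf{Step 4: correctness, the main obstacle.}
\begin{itemize}
\item \emph{Interleaving.} Broadcasts are atomic, but several multi-step handshakes could interleave. I would serialize them with a global lock held by the root: every operation starts and ends at the root, which is always level $0$ of the activated path.
\item \emph{Forward direction.} This is routine once enough reserve nodes are available, because the topology can be chosen as large as needed.
\item \emph{Backward direction.} This is harder. Any network run that reaches a covering configuration should project to an NRCS run that also covers. Lossiness in the simulation only deletes nodes, which corresponds to $\le_{is}$-smaller NRCS configurations, so by the compatibility Proposition~\ref{prop:compatibility} this is harmless for coverability.
\item \emph{Extra edges.} In the graph (non-tree) case, a node may also be adjacent to its non-child descendants. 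A broadcast could then be received by a grandchild that was not meant to receive it. I would prevent this by tagging every message with the sender's level, which the state can track since depth is at most $k$, and having receivers ignore any message not sent from exactly one level above them.
\end{itemize}
Combining these steps with the $\fastf{\Omega_k}$-hardness of $k$-NRCS coverability gives hardness for both network classes, and with the upper bounds this yields completeness.
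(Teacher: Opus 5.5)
Your plan is essentially the paper's proof: the upper bound comes from the bounded-depth-graph argument, and the lower bound extends the $k$-NCS reduction of \cite{Balasubramanian21} by letting $v_i$'s broadcast (in the paper, its existing $end^{t}_i$ message) send every child in state $(p,i+1)$ to the silent $finish$ state, with soundness via the lossy semantics and Proposition~\ref{prop:compatibility}. Two adjustments are needed. First, broadcast coverability asks to cover a single state rather than ``the encoding of $C'$'', so you should first pass to simple coverability, as the paper does: use a single-node source and target, and let gadgets inside the NRCS build $C$ and check coverage of $C'$. Second, the cascading kill is unnecessary, because the decoding $\mathbb{E}$ already discards everything cut off from the root; moreover, keeping dead agents silent is what lets the paper prove soundness by replaying the run in the reset-free protocol.
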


Previously the best known result was a $\fastf{\epsilon_0}$-completeness result for these problems, when $k$ is part of the input~\cite[Theorem 16]{Balasubramanian21}. Our new result is a refined parameterized analogue of this, giving a completeness result for each fixed $k$.

\subsection{Freeze LTL with Ordered Attributes}\label{subsec:freeze-LTL}

Finally, we also use our results to present an application related to an extension of LTL
called \emph{freeze LTL with ordered attributes}~\cite{DeckerT16}. 
This is an extension of the usual LTL to work over \emph{data words} instead of words, which we now explain. (Our presentation is slightly different, but equivalent to the one given in~\cite{DeckerT16}).

Fix some $k \in \mathbb{N}$, a finite set of atomic propositions $\Sigma$ 
and an infinite set $\mathbb{D}$.
A $k$-attributed data letter (or simply data letter) is a tuple $(A,d_1,\dots,d_k)$ where $A \subseteq \Sigma$ and each $d_i \in \mathbb{D}$. A data word is a sequence of data letters.
Intuitively, a data word is an extension of the usual notion of words to deal with
infinitely many letters. Freeze LTL with $k$-ordered attributes is an extension of LTL to talk about such data words. In addition to the usual boolean and LTL operators $\land, \lor, \lnot, X, U, F$ and $G$, it also has freeze mechanisms $\uparrow^1,\uparrow^2,\dots,\uparrow^k,\downarrow^1,\downarrow^2,\dots,\downarrow^k$.
Intuitively, the application of $\downarrow^i$ at a letter $(A,d_1,\dots,d_k)$ at some position along a word allows us to ``remember'' the data values $d_1,\dots,d_i$. 
If we have applied $\downarrow^i$ to remember data values $d_1,\dots,d_i$ at some point, we can apply $\uparrow^i$ at some later position $(B,d'_1,\dots,d'_k)$ along the word. This would allow us to ``compare'' $d'_1,\dots,d'_i$ with the already remembered $(d_1,\dots,d_i)$ and check if they are equal. The formal syntax and semantics is given in Appendix~\ref{subsec:appendix-syntax-freeze-LTL}.

Given a freeze LTL formula with $k$-ordered attributes, we say that it is satisfiable if there is some data word that satisfies it. The satisfiability problem is to decide whether a given freeze LTL formula is satisfiable. By giving reductions to and from the coverability problem for $k$-NRCS we show that
\begin{restatable}{theorem}{freezeLTL}\label{thm:freeze-LTL}
     The satisfiability problem for freeze LTL with $k$-ordered attributes is $\fastf{\Omega_k}$-complete.
\end{restatable}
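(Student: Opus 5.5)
The proof splits into the upper bound and the lower bound, each obtained by a polynomial-time (hence $\mathscr{F}_{<\Omega_k}$) reduction to or from coverability for $k$-NRCS.

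\textbf{Upper bound.} I will reduce satisfiability to a WSTS coverability question on trees of height $k$ and apply the length-function bound of Theorem~\ref{thm:main-upper-bound}. Following the standard automata-theoretic approach for freeze LTL, I first translate the formula $\varphi$ into an alternating register-like automaton. The register contents for $\downarrow^i$ are tuples $(d_1,\dots,d_i)$ that are prefixes of attribute tuples. A configuration of this automaton is a finite set of threads, each holding a state and a stored prefix. Abstracting data values up to equality makes this a tree: level $i$ groups the threads whose stored prefixes agree on $d_1,\dots,d_i$, and the leaves hold automaton states. Reading a letter $(A,d_1,\dots,d_k)$ selects a root-to-level-$k$ path, possibly creating fresh nodes for new data values. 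Threads on that path whose stored prefix equals the current one are updated accordingly, while threads elsewhere take the ``not equal'' branch. Processing all threads in a subtree uniformly is handled with relabelling passes, and threads that become obligation-free are dropped with resets. The result is a $k$-NRCS, or at least a system on height-$k$ trees satisfying compatibility and an effective predecessor basis, as in Subsection~\ref{subsec:bounded-depth-graphs}. Satisfiability then becomes reachability of an accepting, obligation-free configuration. As in~\cite{LazicS15,DeckerT16}, I will handle the infinite-word case by a finite-prefix or downward-closure argument; this only requires coverability of suitable upward-closed sets. The running time is then bounded by controlled bad sequences on height-$k$ trees, which places the problem in $\fastf{\Omega_k}$.

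\textbf{Lower bound.} I will reduce coverability of $k$-NRCS to satisfiability. A run of the NRCS is encoded as a data word: each node at level $i$ of the configuration tree is identified by the prefix $(d_1,\dots,d_i)$ of its root path, so sibling identities come from fresh data at level $i$. Each letter encodes one transition, with the propositions giving the transition name and the affected path, and the attributes giving that path. A node's current label is determined by its most recent occurrence. The formula enforces the following, using $\downarrow^i$/$\uparrow^i$ together with $G$, $U$ and $X$:
\begin{itemize}
\item every created node is fresh (a creation position carries a prefix that never occurred before);
\item every later use of a node matches the last label it was given;
\item deleted subtrees are never used again (after a deletion of prefix $(d_1,\dots,d_i)$, no later letter extends it);
\item a reset at node $(d_1,\dots,d_i)$ with label $p$ kills exactly the children labelled $p$ at that moment.
\end{itemize}

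\textbf{Main obstacle.} The last point is the hard one. Resets must also be simulated lossily, so that the weak simulation suffices for coverability: I would write ``after the reset, any child last labelled $p$ is never used again''. Excess deletions are harmless because coverability is monotone, exactly as in the reset-via-broadcast trick of Subsection~\ref{subsec:broadcast-networks}. Once that is in place, covering the target $C'$ is expressed by requiring $F$ of a final segment witnessing the nodes of $C'$ alive with their labels, and the lower bound follows from Section~\ref{sec:lower-bound}.
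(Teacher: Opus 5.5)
\textbf{Both halves have a genuine gap, and it is the same one:} you encode the run forward in time, whereas the paper (following Decker and Thoma) works with the run reversed.

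\emph{Upper bound.} A forward step of an alternating register automaton is not upward-compatible for $\le_{is}$: an extra thread whose obligation contradicts the current letter blocks the whole step. Moreover, in a $k$-NRCS your relabelling passes cannot be forced to visit every node, so unvisited threads must be reset away. Forward in time, dropping a thread drops an obligation. The simulated system can therefore reach an obligation-free configuration from an unsatisfiable formula, for instance by simply dropping the thread carrying $\varphi$. Also, ``reach an obligation-free configuration'' is reachability of a downward-closed set, not a coverability question, so Theorem~\ref{thm:main-upper-bound} does not apply. Your remark about infinite words does not help, since the semantics here is over finite words.

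The missing idea is to simulate backwards over the word. Then configurations grow, losing threads only weakens obligations (which is sound), and satisfiability becomes covering the configuration that holds only $\varphi$. This is what the reduction of \cite[Theorem 12]{DeckerT16} to $(k+1)$-NCS does.

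There is a second problem. If threads are separate leaves below their prefix nodes, a thread storing a full $k$-prefix sits at level $k+1$, and you only get $\mathbf{F}_{\Omega_{k+1}}$. The paper recovers the lost level by observing that the Decker--Thoma $(k+1)$-NCS has at most two children of the root. It folds them into the root label and simulates the subtree-deleting decrements by (generalized) resets. A direct construction like yours would at least need sets of automaton states as labels on prefix nodes, which you do not mention.

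\emph{Lower bound.} With only future modalities, your forward per-transition encoding can constrain, from each occurrence of an $i$-prefix, the \emph{next} occurrence of that prefix. That covers label consistency, no re-creation, no use after deletion, and even your reset condition, by checking whether a $p$-reset at the parent precedes the next touch. So resets are not the real obstacle.

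What you cannot enforce is that the first occurrence of a prefix is its creation. That requirement has the shape ``for every position $m$ there is an earlier $m'$ with the same $i$-prefix''. No earlier position carries a never-seen prefix that could be frozen and tested with $\uparrow^i$. Satisfying words may thus use nodes that were never created, so the encoded runs are gainy. Gains destroy the lower bound of Section~\ref{sec:lower-bound}, which relies on errors only ever lowering Hardy values; for example, a satisfying word could conjure $\#$ nodes to refill the budget after a dishonest reset.

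This is why the paper, following \cite[Theorem 13]{DeckerT16}, writes the run in reverse as blocks of full configurations. Each branch of a configuration must be justified by its predecessor configuration, which appears \emph{later} in the word. Only losses are then possible, and lossy coverability coincides with coverability. In that encoding the reset is the easy part: a single $G$-constraint forbids the proposition $(p,i+1)$ on the branch letters through the reset node in the post-reset configuration.
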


To prove the upper bound, we use a result of~\cite[Theorem 12]{DeckerT16} which reduces this problem
to the coverability problem for $(k+1)$-NCS, i.e., $(k+1)$-NRCS without resets. 
Furthermore, the particular $(k+1)$-NCS that is constructed during the reduction has the property that, in any reachable configuration, there are at most 2 children of the root. We show that any such $(k+1)$-NCS can be simulated by a $k$-NRCS.
The intuition is that in the $k$-NRCS that we construct,
the root will simply absorb its 2 children without explicitly keeping them as nodes. 
Operations involving these 2 children nodes in the original $(k+1)$-NCS, can then be simulated by reset operations. This allows us to save a level, thereby giving us a $k$-NRCS.

To prove the lower bound, we use a result of~\cite[Theorem 13]{DeckerT16} which gives a reduction
from coverability for $k$-NCS to satisfiability for freeze LTL with $k$-ordered attributes.
The intuition is that the sequence of configurations in a run of a $k$-NCS can be encoded \emph{in the reverse} as a data word,
which can then be checked by a freeze LTL formula. We show that that reduction can be generalized to $k$-NRCS by a modification taking into account reset transitions.
For this, we need to ensure that in certain configurations representing the aftermath of firing a reset transition, there is no occurrence of a particular atomic proposition. This in turn can be encoded using the global operator $G$ of LTL.

Prior to our work, the best known upper and lower bounds for freeze LTL with $k$-ordered attributes were $\fastf{\Omega_{2(k+1)}}$ and $\fastf{\Omega_{k-1}}$. Our results improve upon both these fronts and give tight parameterized bounds with respect to the parameter $k$.

\section{Conclusion}\label{sec:conclusion}

We have provided the first natural collection of complete problem for the hierarchy $(\fastf{\Omega_k})_{k \in \mathbb{N}}$, namely coverability for $k$-NRCS. Furthermore, to prove our upper bounds, we have also developed length function theorems for nested applications of the multiset operator on nwqos.
Finally, we also presented various applications of our upper bound techniques and the newly introduced model of NRCS to improve upper bounds for various other problems and identify
new complete problems for $\fastf{\Omega_k}$, for all $k$.



\bibliography{refs}

@article{Schmitz16hierarchies,
  author       = {Sylvain Schmitz},
  title        = {Complexity Hierarchies beyond Elementary},
  journal      = {{ACM} Trans. Comput. Theory},
  volume       = {8},
  number       = {1},
  pages        = {3:1--3:36},
  year         = {2016},
  url          = {https://doi.org/10.1145/2858784},
  doi          = {10.1145/2858784},
  bibsource    = {dblp computer science bibliography, https://dblp.org}
}

@inproceedings{DeckerT16,
  author       = {Normann Decker and
                  Daniel Thoma},
  editor       = {Bart Jacobs and
                  Christof L{\"{o}}ding},
  title        = {On Freeze {LTL} with Ordered Attributes},
  booktitle    = {Foundations of Software Science and Computation Structures - 19th
                  International Conference, {FOSSACS} 2016, Held as Part of the European
                  Joint Conferences on Theory and Practice of Software, {ETAPS} 2016,
                  Eindhoven, The Netherlands, April 2-8, 2016, Proceedings},
  series       = {Lecture Notes in Computer Science},
  volume       = {9634},
  pages        = {269--284},
  publisher    = {Springer},
  year         = {2016},
  url          = {https://doi.org/10.1007/978-3-662-49630-5\_16},
  doi          = {10.1007/978-3-662-49630-5\_16},
  bibsource    = {dblp computer science bibliography, https://dblp.org}
}

@article{DeckerT15-fullversion,
  author       = {Normann Decker and
                  Daniel Thoma},
  title        = {On Freeze {LTL} with Ordered Attributes},
  journal      = {CoRR},
  volume       = {abs/1504.06355},
  year         = {2015},
  url          = {http://arxiv.org/abs/1504.06355},
  eprinttype    = {arXiv},
  eprint       = {1504.06355},
  bibsource    = {dblp computer science bibliography, https://dblp.org}
}

@inproceedings{LomazovaS99,
  author       = {Irina A. Lomazova and
                  Philippe Schnoebelen},
  editor       = {Dines Bj{\o}rner and
                  Manfred Broy and
                  Alexandre V. Zamulin},
  title        = {Some Decidability Results for Nested Petri Nets},
  booktitle    = {Perspectives of System Informatics, Third International Andrei Ershov
                  Memorial Conference, PSI'99, Akademgorodok, Novosibirsk, Russia, July
                  6-9, 1999, Proceedings},
  series       = {Lecture Notes in Computer Science},
  volume       = {1755},
  pages        = {208--220},
  publisher    = {Springer},
  year         = {1999},
  url          = {https://doi.org/10.1007/3-540-46562-6\_18},
  doi          = {10.1007/3-540-46562-6\_18},
  bibsource    = {dblp computer science bibliography, https://dblp.org}
}

@inproceedings{Schmitz14,
  author       = {Sylvain Schmitz},
  editor       = {Jo{\"{e}}l Ouaknine and
                  Igor Potapov and
                  James Worrell},
  title        = {Complexity Bounds for Ordinal-Based Termination - (Invited Talk)},
  booktitle    = {Reachability Problems - 8th International Workshop, {RP} 2014, Oxford,
                  UK, September 22-24, 2014. Proceedings},
  series       = {Lecture Notes in Computer Science},
  volume       = {8762},
  pages        = {1--19},
  publisher    = {Springer},
  year         = {2014},
  url          = {https://doi.org/10.1007/978-3-319-11439-2\_1},
  doi          = {10.1007/978-3-319-11439-2\_1},
  bibsource    = {dblp computer science bibliography, https://dblp.org}
}

@inproceedings{Schnoebelen10,
  author       = {Philippe Schnoebelen},
  editor       = {Petr Hlinen{\'{y}} and
                  Anton{\'{\i}}n Kucera},
  title        = {Revisiting Ackermann-Hardness for Lossy Counter Machines and Reset
                  Petri Nets},
  booktitle    = {Mathematical Foundations of Computer Science 2010, 35th International
                  Symposium, {MFCS} 2010, Brno, Czech Republic, August 23-27, 2010.
                  Proceedings},
  series       = {Lecture Notes in Computer Science},
  volume       = {6281},
  pages        = {616--628},
  publisher    = {Springer},
  year         = {2010},
  url          = {https://doi.org/10.1007/978-3-642-15155-2\_54},
  doi          = {10.1007/978-3-642-15155-2\_54},
  bibsource    = {dblp computer science bibliography, https://dblp.org}
}

@inproceedings{SchmitzS11,
  author       = {Sylvain Schmitz and
                  Philippe Schnoebelen},
  editor       = {Luca Aceto and
                  Monika Henzinger and
                  Jir{\'{\i}} Sgall},
  title        = {Multiply-Recursive Upper Bounds with Higman's Lemma},
  booktitle    = {Automata, Languages and Programming - 38th International Colloquium,
                  {ICALP} 2011, Zurich, Switzerland, July 4-8, 2011, Proceedings, Part
                  {II}},
  series       = {Lecture Notes in Computer Science},
  volume       = {6756},
  pages        = {441--452},
  publisher    = {Springer},
  year         = {2011},
  url          = {https://doi.org/10.1007/978-3-642-22012-8\_35},
  doi          = {10.1007/978-3-642-22012-8\_35},
  bibsource    = {dblp computer science bibliography, https://dblp.org}
}

@article{higman1952ordering,
  title={Ordering by divisibility in abstract algebras},
  author={Higman, Graham},
  journal={Proceedings of the London Mathematical Society},
  volume={3},
  number={1},
  pages={326--336},
  year={1952},
  publisher={Oxford University Press}
}

@article{FinkelS01,
  author       = {Alain Finkel and
                  Philippe Schnoebelen},
  title        = {Well-structured transition systems everywhere!},
  journal      = {Theor. Comput. Sci.},
  volume       = {256},
  number       = {1-2},
  pages        = {63--92},
  year         = {2001},
  url          = {https://doi.org/10.1016/S0304-3975(00)00102-X},
  doi          = {10.1016/S0304-3975(00)00102-X},
  bibsource    = {dblp computer science bibliography, https://dblp.org}
}

@inproceedings{AbdullaCJT96,
  author       = {Parosh Aziz Abdulla and
                  Karlis Cerans and
                  Bengt Jonsson and
                  Yih{-}Kuen Tsay},
  title        = {General Decidability Theorems for Infinite-State Systems},
  booktitle    = {Proceedings, 11th Annual {IEEE} Symposium on Logic in Computer Science,
                  New Brunswick, New Jersey, USA, July 27-30, 1996},
  pages        = {313--321},
  publisher    = {{IEEE} Computer Society},
  year         = {1996},
  url          = {https://doi.org/10.1109/LICS.1996.561359},
  doi          = {10.1109/LICS.1996.561359},
  bibsource    = {dblp computer science bibliography, https://dblp.org}
}

@article{PriorityChannelSystems,
  author       = {Christoph Haase and
                  Sylvain Schmitz and
                  Philippe Schnoebelen},
  title        = {The Power of Priority Channel Systems},
  journal      = {Log. Methods Comput. Sci.},
  volume       = {10},
  number       = {4},
  year         = {2014},
  url          = {https://doi.org/10.2168/LMCS-10(4:4)2014},
  doi          = {10.2168/LMCS-10(4:4)2014},
  bibsource    = {dblp computer science bibliography, https://dblp.org}
}

@phdthesis{Ramachandrakumar24,
  author       = {Balasubramanian Ayikudi Ramachandrakumar},
  title        = {Computational Complexity of Verifying Parameterized Systems},
  school       = {Technical University of Munich, Germany},
  year         = {2024},
  url          = {https://nbn-resolving.org/urn:nbn:de:bvb:91-diss-20240417-1714488-1-4},
  urn          = {urn:nbn:de:bvb:91-diss-20240417-1714488-1-4},
  bibsource    = {dblp computer science bibliography, https://dblp.org}
}

@inproceedings{SchmitzS13,
  author       = {Sylvain Schmitz and
                  Philippe Schnoebelen},
  editor       = {Pedro R. D'Argenio and
                  Hern{\'{a}}n C. Melgratti},
  title        = {The Power of Well-Structured Systems},
  booktitle    = {{CONCUR} 2013 - Concurrency Theory - 24th International Conference,
                  {CONCUR} 2013, Buenos Aires, Argentina, August 27-30, 2013. Proceedings},
  series       = {Lecture Notes in Computer Science},
  volume       = {8052},
  pages        = {5--24},
  publisher    = {Springer},
  year         = {2013},
  url          = {https://doi.org/10.1007/978-3-642-40184-8\_2},
  doi          = {10.1007/978-3-642-40184-8\_2},
  bibsource    = {dblp computer science bibliography, https://dblp.org}
}

@article{Rosa-Velardo17,
  author       = {Fernando Rosa{-}Velardo},
  title        = {Ordinal recursive complexity of Unordered Data Nets},
  journal      = {Inf. Comput.},
  volume       = {254},
  pages        = {41--58},
  year         = {2017},
  url          = {https://doi.org/10.1016/j.ic.2017.02.002},
  doi          = {10.1016/J.IC.2017.02.002},
  bibsource    = {dblp computer science bibliography, https://dblp.org}
}

@article{arxiv-SchmitzS,
  author       = {Sylvain Schmitz and
                  Philippe Schnoebelen},
  title        = {Multiply-Recursive Upper Bounds with Higman's Lemma},
  journal      = {CoRR},
  volume       = {abs/1103.4399},
  year         = {2011},
  url          = {http://arxiv.org/abs/1103.4399},
  eprinttype    = {arXiv},
  eprint       = {1103.4399},
  bibsource    = {dblp computer science bibliography, https://dblp.org}
}

@inproceedings{GenestMSZ08,
  author       = {Blaise Genest and
                  Anca Muscholl and
                  Olivier Serre and
                  Marc Zeitoun},
  editor       = {Sung Deok Cha and
                  Jin{-}Young Choi and
                  Moonzoo Kim and
                  Insup Lee and
                  Mahesh Viswanathan},
  title        = {Tree Pattern Rewriting Systems},
  booktitle    = {Automated Technology for Verification and Analysis, 6th International
                  Symposium, {ATVA} 2008, Seoul, Korea, October 20-23, 2008. Proceedings},
  series       = {Lecture Notes in Computer Science},
  volume       = {5311},
  pages        = {332--346},
  publisher    = {Springer},
  year         = {2008},
  url          = {https://doi.org/10.1007/978-3-540-88387-6\_29},
  doi          = {10.1007/978-3-540-88387-6\_29},
  bibsource    = {dblp computer science bibliography, https://dblp.org}
}

@article{ding1992subgraphs,
  title={Subgraphs and well-quasi-ordering},
  author={Ding, Guoli},
  journal={Journal of Graph Theory},
  volume={16},
  number={5},
  pages={489--502},
  year={1992},
  publisher={Wiley Online Library}
}

@inproceedings{Balasubramanian21,
  author       = {A. R. Balasubramanian},
  editor       = {Mikolaj Bojanczyk and
                  Chandra Chekuri},
  title        = {Complexity of Coverability in Bounded Path Broadcast Networks},
  booktitle    = {41st {IARCS} Annual Conference on Foundations of Software Technology
                  and Theoretical Computer Science, {FSTTCS} 2021, Virtual Conference,
                  December 15-17, 2021},
  series       = {LIPIcs},
  volume       = {213},
  pages        = {35:1--35:16},
  publisher    = {Schloss Dagstuhl - Leibniz-Zentrum f{\"{u}}r Informatik},
  year         = {2021},
  url          = {https://doi.org/10.4230/LIPIcs.FSTTCS.2021.35},
  doi          = {10.4230/LIPICS.FSTTCS.2021.35},
  bibsource    = {dblp computer science bibliography, https://dblp.org}
}

@article{KoenigS,
  author       = {Barbara K{\"{o}}nig and
                  Jan St{\"{u}}ckrath},
  title        = {Well-structured graph transformation systems},
  journal      = {Inf. Comput.},
  volume       = {252},
  pages        = {71--94},
  year         = {2017},
  url          = {https://doi.org/10.1016/j.ic.2016.03.005},
  doi          = {10.1016/J.IC.2016.03.005},
  bibsource    = {dblp computer science bibliography, https://dblp.org}
}

@inproceedings{Meyer08,
  author       = {Roland Meyer},
  editor       = {Giorgio Ausiello and
                  Juhani Karhum{\"{a}}ki and
                  Giancarlo Mauri and
                  C.{-}H. Luke Ong},
  title        = {On Boundedness in Depth in the pi-Calculus},
  booktitle    = {Fifth {IFIP} International Conference On Theoretical Computer Science
                  - {TCS} 2008, {IFIP} 20th World Computer Congress, {TC} 1, Foundations
                  of Computer Science, September 7-10, 2008, Milano, Italy},
  series       = {{IFIP}},
  volume       = {273},
  pages        = {477--489},
  publisher    = {Springer},
  year         = {2008},
  url          = {https://doi.org/10.1007/978-0-387-09680-3\_32},
  doi          = {10.1007/978-0-387-09680-3\_32},
  bibsource    = {dblp computer science bibliography, https://dblp.org}
}

@inproceedings{DelzannoSZ10,
  author       = {Giorgio Delzanno and
                  Arnaud Sangnier and
                  Gianluigi Zavattaro},
  editor       = {Paul Gastin and
                  Fran{\c{c}}ois Laroussinie},
  title        = {Parameterized Verification of Ad Hoc Networks},
  booktitle    = {{CONCUR} 2010 - Concurrency Theory, 21th International Conference,
                  {CONCUR} 2010, Paris, France, August 31-September 3, 2010. Proceedings},
  series       = {Lecture Notes in Computer Science},
  volume       = {6269},
  pages        = {313--327},
  publisher    = {Springer},
  year         = {2010},
  url          = {https://doi.org/10.1007/978-3-642-15375-4\_22},
  doi          = {10.1007/978-3-642-15375-4\_22},
  bibsource    = {dblp computer science bibliography, https://dblp.org}
}

@inproceedings{EsparzaFM99,
  author       = {Javier Esparza and
                  Alain Finkel and
                  Richard Mayr},
  title        = {On the Verification of Broadcast Protocols},
  booktitle    = {14th Annual {IEEE} Symposium on Logic in Computer Science, Trento,
                  Italy, July 2-5, 1999},
  pages        = {352--359},
  publisher    = {{IEEE} Computer Society},
  year         = {1999},
  url          = {https://doi.org/10.1109/LICS.1999.782630},
  doi          = {10.1109/LICS.1999.782630},
  bibsource    = {dblp computer science bibliography, https://dblp.org}
}

@article{FischerMR68,
  author       = {Patrick C. Fischer and
                  Albert R. Meyer and
                  Arnold L. Rosenberg},
  title        = {Counter Machines and Counter Languages},
  journal      = {Math. Syst. Theory},
  volume       = {2},
  number       = {3},
  pages        = {265--283},
  year         = {1968},
  url          = {https://doi.org/10.1007/BF01694011},
  doi          = {10.1007/BF01694011},
  bibsource    = {dblp computer science bibliography, https://dblp.org}
}

@inproceedings{LerouxS19,
  author       = {J{\'{e}}r{\^{o}}me Leroux and
                  Sylvain Schmitz},
  title        = {Reachability in Vector Addition Systems is Primitive-Recursive in
                  Fixed Dimension},
  booktitle    = {34th Annual {ACM/IEEE} Symposium on Logic in Computer Science, {LICS}
                  2019, Vancouver, BC, Canada, June 24-27, 2019},
  pages        = {1--13},
  publisher    = {{IEEE}},
  year         = {2019},
  url          = {https://doi.org/10.1109/LICS.2019.8785796},
  doi          = {10.1109/LICS.2019.8785796},
  bibsource    = {dblp computer science bibliography, https://dblp.org}
}

@inproceedings{Leroux21,
  author       = {J{\'{e}}r{\^{o}}me Leroux},
  title        = {The Reachability Problem for Petri Nets is Not Primitive Recursive},
  booktitle    = {62nd {IEEE} Annual Symposium on Foundations of Computer Science, {FOCS}
                  2021, Denver, CO, USA, February 7-10, 2022},
  pages        = {1241--1252},
  publisher    = {{IEEE}},
  year         = {2021},
  url          = {https://doi.org/10.1109/FOCS52979.2021.00121},
  doi          = {10.1109/FOCS52979.2021.00121},
  bibsource    = {dblp computer science bibliography, https://dblp.org}
}

@inproceedings{CzerwinskiO21,
  author       = {Wojciech Czerwinski and
                  Lukasz Orlikowski},
  title        = {Reachability in Vector Addition Systems is Ackermann-complete},
  booktitle    = {62nd {IEEE} Annual Symposium on Foundations of Computer Science, {FOCS}
                  2021, Denver, CO, USA, February 7-10, 2022},
  pages        = {1229--1240},
  publisher    = {{IEEE}},
  year         = {2021},
  url          = {https://doi.org/10.1109/FOCS52979.2021.00120},
  doi          = {10.1109/FOCS52979.2021.00120},
  bibsource    = {dblp computer science bibliography, https://dblp.org}
}

@inproceedings{Schmitz19,
  author       = {Sylvain Schmitz},
  editor       = {Christel Baier and
                  Ioannis Chatzigiannakis and
                  Paola Flocchini and
                  Stefano Leonardi},
  title        = {The Parametric Complexity of Lossy Counter Machines},
  booktitle    = {46th International Colloquium on Automata, Languages, and Programming,
                  {ICALP} 2019, Patras, Greece, July 9-12, 2019},
  series       = {LIPIcs},
  volume       = {132},
  pages        = {129:1--129:15},
  publisher    = {Schloss Dagstuhl - Leibniz-Zentrum f{\"{u}}r Informatik},
  year         = {2019},
  url          = {https://doi.org/10.4230/LIPIcs.ICALP.2019.129},
  doi          = {10.4230/LIPICS.ICALP.2019.129},
  bibsource    = {dblp computer science bibliography, https://dblp.org}
}

@inproceedings{FigueiraFSS11,
  author       = {Diego Figueira and
                  Santiago Figueira and
                  Sylvain Schmitz and
                  Philippe Schnoebelen},
  title        = {Ackermannian and Primitive-Recursive Bounds with Dickson's Lemma},
  booktitle    = {Proceedings of the 26th Annual {IEEE} Symposium on Logic in Computer
                  Science, {LICS} 2011, June 21-24, 2011, Toronto, Ontario, Canada},
  pages        = {269--278},
  publisher    = {{IEEE} Computer Society},
  year         = {2011},
  url          = {https://doi.org/10.1109/LICS.2011.39},
  doi          = {10.1109/LICS.2011.39},
  bibsource    = {dblp computer science bibliography, https://dblp.org}
}

@article{LasotaW08,
  author       = {Slawomir Lasota and
                  Igor Walukiewicz},
  title        = {Alternating timed automata},
  journal      = {{ACM} Trans. Comput. Log.},
  volume       = {9},
  number       = {2},
  pages        = {10:1--10:27},
  year         = {2008},
  url          = {https://doi.org/10.1145/1342991.1342994},
  doi          = {10.1145/1342991.1342994},
  bibsource    = {dblp computer science bibliography, https://dblp.org}
}

@article{DemriL09,
  author       = {St{\'{e}}phane Demri and
                  Ranko Lazic},
  title        = {{LTL} with the freeze quantifier and register automata},
  journal      = {{ACM} Trans. Comput. Log.},
  volume       = {10},
  number       = {3},
  pages        = {16:1--16:30},
  year         = {2009},
  url          = {https://doi.org/10.1145/1507244.1507246},
  doi          = {10.1145/1507244.1507246},
  bibsource    = {dblp computer science bibliography, https://dblp.org}
}

@inproceedings{LazicS16,
  author       = {Ranko Lazic and
                  Sylvain Schmitz},
  editor       = {Martin Grohe and
                  Eric Koskinen and
                  Natarajan Shankar},
  title        = {The Complexity of Coverability in {\(\nu\)}-Petri Nets},
  booktitle    = {Proceedings of the 31st Annual {ACM/IEEE} Symposium on Logic in Computer
                  Science, {LICS} '16, New York, NY, USA, July 5-8, 2016},
  pages        = {467--476},
  publisher    = {{ACM}},
  year         = {2016},
  url          = {https://doi.org/10.1145/2933575.2933593},
  doi          = {10.1145/2933575.2933593},
  bibsource    = {dblp computer science bibliography, https://dblp.org}
}

@inproceedings{HaddadSS12,
  author       = {Serge Haddad and
                  Sylvain Schmitz and
                  Philippe Schnoebelen},
  title        = {The Ordinal-Recursive Complexity of Timed-arc Petri Nets, Data Nets,
                  and Other Enriched Nets},
  booktitle    = {Proceedings of the 27th Annual {IEEE} Symposium on Logic in Computer
                  Science, {LICS} 2012, Dubrovnik, Croatia, June 25-28, 2012},
  pages        = {355--364},
  publisher    = {{IEEE} Computer Society},
  year         = {2012},
  url          = {https://doi.org/10.1109/LICS.2012.46},
  doi          = {10.1109/LICS.2012.46},
  bibsource    = {dblp computer science bibliography, https://dblp.org}
}

@inproceedings{Balasubramanian22,
  author       = {A. R. Balasubramanian},
  editor       = {Bartek Klin and
                  Slawomir Lasota and
                  Anca Muscholl},
  title        = {Complexity of Coverability in Depth-Bounded Processes},
  booktitle    = {33rd International Conference on Concurrency Theory, {CONCUR} 2022,
                  Warsaw, Poland, September 12-16, 2022},
  series       = {LIPIcs},
  volume       = {243},
  pages        = {17:1--17:19},
  publisher    = {Schloss Dagstuhl - Leibniz-Zentrum f{\"{u}}r Informatik},
  year         = {2022},
  url          = {https://doi.org/10.4230/LIPIcs.CONCUR.2022.17},
  doi          = {10.4230/LIPICS.CONCUR.2022.17},
  bibsource    = {dblp computer science bibliography, https://dblp.org}
}

@article{AmadioM02,
  author       = {Roberto M. Amadio and
                  Charles Meyssonnier},
  title        = {On Decidability of the Control Reachability Problem in the Asynchronous
                  pi-Calculus},
  journal      = {Nord. J. Comput.},
  volume       = {9},
  number       = {1},
  pages        = {70--101},
  year         = {2002},
  bibsource    = {dblp computer science bibliography, https://dblp.org}
}

@inproceedings{JancarS19,
  author       = {Petr Jancar and
                  Sylvain Schmitz},
  title        = {Bisimulation Equivalence of First-Order Grammars is ACKERMANN-Complete},
  booktitle    = {34th Annual {ACM/IEEE} Symposium on Logic in Computer Science, {LICS}
                  2019, Vancouver, BC, Canada, June 24-27, 2019},
  pages        = {1--12},
  publisher    = {{IEEE}},
  year         = {2019},
  url          = {https://doi.org/10.1109/LICS.2019.8785848},
  doi          = {10.1109/LICS.2019.8785848},
  bibsource    = {dblp computer science bibliography, https://dblp.org}
}

@inproceedings{Hague14,
  author       = {Matthew Hague},
  editor       = {Thomas A. Henzinger and
                  Dale Miller},
  title        = {Senescent ground tree rewrite systems},
  booktitle    = {Joint Meeting of the Twenty-Third {EACSL} Annual Conference on Computer
                  Science Logic {(CSL)} and the Twenty-Ninth Annual {ACM/IEEE} Symposium
                  on Logic in Computer Science (LICS), {CSL-LICS} 2014, Vienna, Austria,
                  July 14 - 18, 2014},
  pages        = {48:1--48:10},
  publisher    = {{ACM}},
  year         = {2014},
  url          = {https://doi.org/10.1145/2603088.2603112},
  doi          = {10.1145/2603088.2603112},
  bibsource    = {dblp computer science bibliography, https://dblp.org}
}

@inproceedings{lics/Balasubramanian21,
  author       = {A. R. Balasubramanian and
                  Timo Lang and
                  Revantha Ramanayake},
  title        = {Decidability and Complexity in Weakening and Contraction Hypersequent
                  Substructural Logics},
  booktitle    = {36th Annual {ACM/IEEE} Symposium on Logic in Computer Science, {LICS}
                  2021, Rome, Italy, June 29 - July 2, 2021},
  pages        = {1--13},
  publisher    = {{IEEE}},
  year         = {2021},
  url          = {https://doi.org/10.1109/LICS52264.2021.9470733},
  doi          = {10.1109/LICS52264.2021.9470733},
  bibsource    = {dblp computer science bibliography, https://dblp.org}
}

@article{GreatiR25,
  author       = {Vitor Greati and
                  Revantha Ramanayake},
  title        = {Tight length theorems for multiset extensions of Higman's lemma},
  journal      = {Theor. Comput. Sci.},
  volume       = {1057},
  pages        = {115546},
  year         = {2025},
  url          = {https://doi.org/10.1016/j.tcs.2025.115546},
  doi          = {10.1016/J.TCS.2025.115546},
  bibsource    = {dblp computer science bibliography, https://dblp.org}
}

@inproceedings{GreatiR24,
  author       = {Vitor Greati and
                  Revantha Ramanayake},
  editor       = {Agata Ciabattoni and
                  David Gabelaia and
                  Igor Sedl{\'{a}}r},
  title        = {Deducibility in the Full Lambek Calculus with Weakening Is HAck-Complete},
  booktitle    = {Advances in Modal Logic, AiML 2024, Prague, Czech Republic, August
                  19-23, 2024},
  pages        = {401--422},
  publisher    = {College Publications},
  year         = {2024},
  url          = {http://www.aiml.net/volumes/volume15/19-GreatiRamanayake.pdf},
  bibsource    = {dblp computer science bibliography, https://dblp.org}
}

@article{LazicS15,
  author       = {Ranko Lazic and
                  Sylvain Schmitz},
  title        = {Nonelementary Complexities for Branching VASS, MELL, and Extensions},
  journal      = {{ACM} Trans. Comput. Log.},
  volume       = {16},
  number       = {3},
  pages        = {20:1--20:30},
  year         = {2015},
  url          = {https://doi.org/10.1145/2733375},
  doi          = {10.1145/2733375},
  bibsource    = {dblp computer science bibliography, https://dblp.org}
}

@article{LazicOW16,
  author       = {Ranko Lazic and
                  Jo{\"{e}}l Ouaknine and
                  James Worrell},
  title        = {Zeno, Hercules, and the Hydra: Safety Metric Temporal Logic is Ackermann-Complete},
  journal      = {{ACM} Trans. Comput. Log.},
  volume       = {17},
  number       = {3},
  pages        = {16},
  year         = {2016},
  url          = {https://doi.org/10.1145/2874774},
  doi          = {10.1145/2874774},
  bibsource    = {dblp computer science bibliography, https://dblp.org}
}

@inproceedings{ElgaardKM98,
  author       = {Jacob Elgaard and
                  Nils Klarlund and
                  Anders M{\o}ller},
  editor       = {Alan J. Hu and
                  Moshe Y. Vardi},
  title        = {{MONA} 1.x: New Techniques for {WS1S} and {WS2S}},
  booktitle    = {Computer Aided Verification, 10th International Conference, {CAV}
                  '98, Vancouver, BC, Canada, June 28 - July 2, 1998, Proceedings},
  series       = {Lecture Notes in Computer Science},
  volume       = {1427},
  pages        = {516--520},
  publisher    = {Springer},
  year         = {1998},
  url          = {https://doi.org/10.1007/BFb0028773},
  doi          = {10.1007/BFB0028773},
  bibsource    = {dblp computer science bibliography, https://dblp.org}
}

@inproceedings{BlondinH17,
  author       = {Michael Blondin and
                  Christoph Haase},
  title        = {Logics for continuous reachability in Petri nets and vector addition
                  systems with states},
  booktitle    = {32nd Annual {ACM/IEEE} Symposium on Logic in Computer Science, {LICS}
                  2017, Reykjavik, Iceland, June 20-23, 2017},
  pages        = {1--12},
  publisher    = {{IEEE} Computer Society},
  year         = {2017},
  url          = {https://doi.org/10.1109/LICS.2017.8005068},
  doi          = {10.1109/LICS.2017.8005068},
  bibsource    = {dblp computer science bibliography, https://dblp.org}
}

@inproceedings{BlondinFHH16,
  author       = {Michael Blondin and
                  Alain Finkel and
                  Christoph Haase and
                  Serge Haddad},
  editor       = {Marsha Chechik and
                  Jean{-}Fran{\c{c}}ois Raskin},
  title        = {Approaching the Coverability Problem Continuously},
  booktitle    = {Tools and Algorithms for the Construction and Analysis of Systems
                  - 22nd International Conference, {TACAS} 2016, Held as Part of the
                  European Joint Conferences on Theory and Practice of Software, {ETAPS}
                  2016, Eindhoven, The Netherlands, April 2-8, 2016, Proceedings},
  series       = {Lecture Notes in Computer Science},
  volume       = {9636},
  pages        = {480--496},
  publisher    = {Springer},
  year         = {2016},
  url          = {https://doi.org/10.1007/978-3-662-49674-9\_28},
  doi          = {10.1007/978-3-662-49674-9\_28},
  bibsource    = {dblp computer science bibliography, https://dblp.org}
}

@article{Blondin20,
  author       = {Michael Blondin},
  title        = {The ABCs of petri net reachability relaxations},
  journal      = {{ACM} {SIGLOG} News},
  volume       = {7},
  number       = {3},
  pages        = {29--43},
  year         = {2020},
  url          = {https://doi.org/10.1145/3436980.3436984},
  doi          = {10.1145/3436980.3436984},
  bibsource    = {dblp computer science bibliography, https://dblp.org}
}

@inproceedings{BlondinHO21,
  author       = {Michael Blondin and
                  Christoph Haase and
                  Philip Offtermatt},
  editor       = {Jan Friso Groote and
                  Kim Guldstrand Larsen},
  title        = {Directed Reachability for Infinite-State Systems},
  booktitle    = {Tools and Algorithms for the Construction and Analysis of Systems
                  - 27th International Conference, {TACAS} 2021, Held as Part of the
                  European Joint Conferences on Theory and Practice of Software, {ETAPS}
                  2021, Luxembourg City, Luxembourg, March 27 - April 1, 2021, Proceedings,
                  Part {II}},
  series       = {Lecture Notes in Computer Science},
  volume       = {12652},
  pages        = {3--23},
  publisher    = {Springer},
  year         = {2021},
  url          = {https://doi.org/10.1007/978-3-030-72013-1\_1},
  doi          = {10.1007/978-3-030-72013-1\_1},
  bibsource    = {dblp computer science bibliography, https://dblp.org}
}

@article{FracaH15,
  author       = {Est{\'{\i}}baliz Fraca and
                  Serge Haddad},
  title        = {Complexity Analysis of Continuous Petri Nets},
  journal      = {Fundam. Informaticae},
  volume       = {137},
  number       = {1},
  pages        = {1--28},
  year         = {2015},
  url          = {https://doi.org/10.3233/FI-2015-1168},
  doi          = {10.3233/FI-2015-1168},
  bibsource    = {dblp computer science bibliography, https://dblp.org}
}

@inproceedings{HaaseH14,
  author       = {Christoph Haase and
                  Simon Halfon},
  editor       = {Jo{\"{e}}l Ouaknine and
                  Igor Potapov and
                  James Worrell},
  title        = {Integer Vector Addition Systems with States},
  booktitle    = {Reachability Problems - 8th International Workshop, {RP} 2014, Oxford,
                  UK, September 22-24, 2014. Proceedings},
  series       = {Lecture Notes in Computer Science},
  volume       = {8762},
  pages        = {112--124},
  publisher    = {Springer},
  year         = {2014},
  url          = {https://doi.org/10.1007/978-3-319-11439-2\_9},
  doi          = {10.1007/978-3-319-11439-2\_9},
  bibsource    = {dblp computer science bibliography, https://dblp.org}
}

@inproceedings{EsparzaLMMN14,
  author       = {Javier Esparza and
                  Rusl{\'{a}}n Ledesma{-}Garza and
                  Rupak Majumdar and
                  Philipp J. Meyer and
                  Filip Niksic},
  editor       = {Armin Biere and
                  Roderick Bloem},
  title        = {An SMT-Based Approach to Coverability Analysis},
  booktitle    = {Computer Aided Verification - 26th International Conference, {CAV}
                  2014, Held as Part of the Vienna Summer of Logic, {VSL} 2014, Vienna,
                  Austria, July 18-22, 2014. Proceedings},
  series       = {Lecture Notes in Computer Science},
  volume       = {8559},
  pages        = {603--619},
  publisher    = {Springer},
  year         = {2014},
  url          = {https://doi.org/10.1007/978-3-319-08867-9\_40},
  doi          = {10.1007/978-3-319-08867-9\_40},
  bibsource    = {dblp computer science bibliography, https://dblp.org}
}

\appendix
\section{Proof for Subsection~\ref{subsec:ordinals-to-trees}}
\label{sec:appendix-Minsky-machines}

Assuming Theorems~\ref{thm:Hardy-forward} and~\ref{thm:Hardy-backward}, we show
how to get an $\fastf{\Omega_{k}}$ lower bound for $k$-NRCS. To this end, we first formally define a $H^{\Omega_{k+1}}$-bounded Minsky machine. 

The simplest way to define a Minsky machine in our setting is to define it 
as a tuple $M = (Q,\delta_u,\delta_z)$ where $(Q,\delta_u)$ is a 1-NCS
and $\delta_z \subseteq Q^3$ is a finite set of \emph{zero-test} transitions,
each of which, we shall denote by $p_0 \xdashrightarrow{z, p} q_0$
Configurations of $M$ are defined exactly like configurations of $(Q,\delta_u)$.
In addition to the semantics of steps from $\delta_u$, we also have steps induced
by the zero-test transitions from $\delta_z$ as follows.

Let $t = p_0 \xdashrightarrow{z, p} q_0$ be a zero-test. We say that $t$ can be fired from a configuration $C$ iff in $C$ the root is labelled by $p_0$ and it has no children labelled by $p$. After firing $t$, $C$ can move to $C'$, where the only difference is that the root is now labelled by $q_0$. 

The intuitive idea is that Minsky machines represent machines with the usual counters (where the number of children of the root labelled by some label $p$ represents the value of the counter $p$). The machine can then increment or decrement these counters by means of transitions in $\delta_u$ and test them for zero by means of transitions in $\delta_z$.

The coverability problem for $H^{\Omega_{k+1}}$-bounded Minsky machines is defined as follows:
Given a Minsky machine $M = (Q,\delta_u,\delta_z)$ and two of its states $p, q$,
decide whether the configuration with a singleton node $q_{init}$ can cover the configuration with a singleton node $q_f$, along an $H^{\Omega_{k+1}}(|Q|)$ bounded run. Here a $H^{\Omega_{k+1}}(|Q|)$ bounded run is a run in which all the configurations satisfy the property that the total number of children of the root is at most $H^{\Omega_{k+1}}(|Q|)$.
Intuitively this corresponds to bounding the total value of all the counters along a run
by $H^{\Omega_{k+1}}(|Q|)$ and then checking for coverability along such bounded runs.
This problem is well-known to be $\fastf{\Omega_k}$-hard~\cite{Schmitz16hierarchies}.

Assuming Theorems~\ref{thm:Hardy-forward} and~\ref{thm:Hardy-backward}, we now show
how to reduce the coverability problem for $H^{\Omega_{k+1}}$-bounded Minsky machines
to the coverability problem for $k$-NRCS. To this end, let $M$ be a Minsky machine
with states $Q$ and let $q_{init},q_f \in Q$.  Without loss of generality,
by adding extra decrement and zero-test transitions if necessary, 
we can assume that the machine can cover $q_f$ from $q_{init}$ only when all the counter values are 0, i.e., only when it has no children.

\subsubsection{Convert $M$ into a $1$-NRCS $\net$}

First, we modify the Minsky machine $M$ so that 1) all the zero-test transitions are now reset transitions and 2) any increment/decrement to a label $p$ is accompanied by a corresponding decrement/increment to the label $\#$. The first one is trivial to do. For the second one, we simply replace each transition $(s,p) \xdashrightarrow{u} (s')$
with $(s,p) \xdashrightarrow{u} (s',\#)$ and $s \xdashrightarrow{u} (s',p)$
with $(s,\#) \xdashrightarrow{u} (s',p)$. Let $\net$ be this new $k$-NRCS. 

We say that a step $C \act{t} C'$ in $\net$ involving a reset transition $t = p_0 \xdashrightarrow{r, p} q_0$ is honest if the number of children labelled by $p$
in $C$ is 0. We say a run is honest iff all the reset steps in it are honest.

For any configuration $C$ of $M$ and for any $n \in \mathbb{N}$, we can assign
a unique configuration $(C,n)$ of $\net$ which is the same as $C$, except it also has
$n$ children labelled by $\#$. For any configuration $(C,n)$, let $|C|$ denote
the number of nodes of $C$. The following Fundamental Property of $\net$ is easy
to see from its construction.
\begin{quote}
    If $(C,n) \act{*} (C',n')$ then $|C'|+n' \le |C|+n$. Furthermore,
    $|C|'+n' = |C|+n$ iff the run is honest.
\end{quote}

It is clear that any run in $M$ between two configurations $C$ and $C'$ where
all the counter values are bounded by $B$ corresponds to a honest run between $(C,B)$ and $(C',B)$ in $\net$ and vice versa. We will now use this property along with $\net^{fwd}$ and $\net^{bwd}$ as follows.

\subsubsection{Chaining $\net^{fwd}, \net$ and $\net^{bwd}$}

Set $\ell = |Q|, \alpha = (\Omega_{k+1})_{\ell}$. Note that $H^{\Omega_{k+1}}(|Q|) = H^{\alpha}(|Q|)$. Now use Theorems~\ref{thm:Hardy-forward}
and~\ref{thm:Hardy-backward} to construct $\net^{fwd}$ and $\net^{bwd}$ respectively.
We now construct a $k$-NRCS that starts at $C_{\alpha,|Q|}$ and performs the following steps.

It initially runs the machine $\net^{fwd}$,
and then at some point,
when the root becomes labelled by $\omega$, it non-deterministically decides to 
start simulating $\net$ by means of the following transition.
$$(\omega) \xdashrightarrow{u} (q_{init})$$
    
By Theorem~\ref{thm:Hardy-forward}, right before firing the above transition,
the machine must have reached some configuration of the form $C_{\alpha_1,n_1}$
with $n_1 \le H^{\alpha}(|Q|)$, i.e., the number of children
with label $\#$ that the machine has at this point is some $n_1$ 
such that $H^{\alpha_1}(n_1) \le H^{\alpha}(|Q|)$.

After firing the above transition, it starts running the machine $\net$. Note
that the transitions of $\net$ will not disturb the subtrees of the root corresponding
to the terms in the CNF of $\alpha_1$. When the simulation of $\net$ makes
the root reach the state $q_f$, the machine decides to start simulating $\net^{bwd}$ by means
of the following transition.
$$(q_f) \xdashrightarrow{u} (\omega)$$

By the Fundamental Property of $\net$, it follows that the number of children 
with label $\#$ that the machine can have at this point is at most $n_2 \le n_1$. In addition, $n_2 = n_1$ iff this simulation of $\net$ was a honest run of $\net$.
At this point, the configuration of this machine must look like $C_{\alpha_1,n_2}$.
(This is because, we had assumed that any transition leading to $q_f$ in $M$ 
must have zero-tested all of its counters, i.e., must have zero-tested all of the labels
of its children. Hence $\net$ must have reset all of the labels present in $M$, which would leave us only with the subtrees corresponding to terms of the CNF of $\alpha_1$ and $\#$'s).

Now the machine simulates $\net^{bwd}$ from $C_{\alpha_1,n_2}$ and then at some point,
it decides to stop. Suppose $C_{\alpha_2,n_3}$ is the configuration reached at the end of the simulation of $\net^{bwd}$. By Theorem~\ref{thm:Hardy-backward} we have that $H^{\alpha_2}(n_3) \le H^{\alpha_1}(n_2) \le H^{\alpha_1}(n_1) \le H^{\alpha}(|Q|)$.
We now claim that 
\begin{quote}
    $C_{\alpha_2,n_3}$ covers $C_{\alpha,|Q|}$ implies that $H^{\alpha_2}(n_3) = H^{\alpha}(|Q|)$.
\end{quote}

If this claim were true, then $n_2$ must be equal to $n_1$ and so the simulation of $\net$ would have lead to an honest run, which would lead to a run in the original machine $M$ covering the final state $q_f$.
This would then prove that if $C_{\alpha,|Q|}$ can be covered in this machine,
then $q_f$ can be covered in $M$. Since the converse is also easily seen to be true by construction (for instance, by first simulating $\net^{fwd}$ correctly, then by simulating $\net$ honestly and then by simulating $\net^{bwd}$ correctly), the required reduction would then follow. 

Hence all that is left is to prove the above claim. To prove this, we just need
to note by construction that $C_{\alpha_2,n_3}$ covers $C_{\alpha,|Q|}$ iff
$T_{\alpha_2} \ge_{is} T_\alpha$ and $n_3 \ge |Q|$.
We now use Fact 5.2 and Proposition 5.4 of~\cite{PriorityChannelSystems}, to get\footnote{Proposition 5.4 from~\cite{PriorityChannelSystems} can only be applied when $\alpha$ is less than $\alpha'$ by the so-called structural ordering in~\cite{PriorityChannelSystems}, but this ordering is the same as the induced subgraph ordering between $T_{\alpha}$ and $T_{\alpha'}$} : \begin{proposition}\label{prop:induced-monotone-Hardy}
    For any $\alpha, \alpha', n, n'$, $T_{\alpha'} \ge_{is} T_\alpha$ and $n' \ge n$
    implies that $H^{\alpha'}(n') \ge H^{\alpha}(n)$
\end{proposition}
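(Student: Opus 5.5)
The plan is to reduce the statement to two monotonicity facts about Hardy functions: monotonicity in the argument, and monotonicity in the ordinal index with respect to the \emph{structural} ordering $\sqsubseteq$. Here $\alpha=\sum_{i=1}^m\omega^{\beta_i}\sqsubseteq\alpha'=\sum_{j=1}^{m'}\omega^{\beta'_j}$ (both in CNF) holds iff there is an injection $h:\{1,\dots,m\}\to\{1,\dots,m'\}$ with $\beta_i\sqsubseteq\beta'_{h(i)}$ for all $i$. Then, by transitivity,
\[
H^{\alpha}(n)\le H^{\alpha}(n')\le H^{\alpha'}(n').
\]

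\textbf{Step 1: from trees to $\sqsubseteq$.} I will show that $T_\alpha\le_{is}T_{\alpha'}$ implies $\alpha\sqsubseteq\alpha'$. This is by induction on the height, unfolding the construction of $\text{FO}_\alpha$ and $T'_\alpha$. The key observation is that an induced-subgraph embedding fixing the root is exactly an injective, label-preserving map of child subtrees into child subtrees, applied recursively. This matches the injection $h$ in the definition of $\sqsubseteq$, one CNF level per tree level. The only subtlety is the compressed level $k$, whose leaves carry the labels $\omega^j$. Since labels must be preserved exactly, a leaf $\omega^j$ can only map to a leaf $\omega^j$. This means the compressed term $\omega^{j}$, i.e.\ a node with $j$ children each encoding $1$, is embedded into an identical term, which is stronger than needed. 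Hence the compression only makes $\le_{is}$ finer than $\sqsubseteq$, and the implication survives.

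\textbf{Step 2: monotonicity in $n$.} I will take $H^\alpha(n)\le H^\alpha(n')$ for $n\le n'$ as the standard Fact 5.2 of~\cite{PriorityChannelSystems}. If I had to reprove it, I would argue by induction on $\alpha$ together with the pointwise-at-large-$x$ comparison of the fundamental sequences, $\lambda_x\sqsubseteq\lambda_{x+1}$.

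\textbf{Step 3: monotonicity under $\sqsubseteq$, the main obstacle.} I will prove $\alpha\sqsubseteq\alpha'\Rightarrow H^\alpha(n)\le H^{\alpha'}(n)$ for all $n$, by well-founded induction on $\alpha'$ (Proposition 5.4 of~\cite{PriorityChannelSystems}). Write $\alpha'=\gamma'+\omega^{\beta'}$ with $\omega^{\beta'}$ its smallest term.
\begin{itemize}
\item If $\beta'=0$, then either $\alpha\sqsubseteq\gamma'$, or $\alpha=\alpha''+1$ with $\alpha''\sqsubseteq\gamma'$. In the first case I use $H^\alpha(n)\le H^\alpha(n+1)$ and induction; in the second I use the recursion for the successor case.
\item If $\alpha'$ is a limit, I need $\alpha\sqsubseteq\alpha'_n$, so that the induction hypothesis applies to $H^{\alpha'}(n)=H^{\alpha'_n}(n)$. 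This holds when $\alpha$ is itself a limit matched term by term with $\alpha_n\sqsubseteq\alpha'_n$. It can fail when the smallest term of $\alpha$ is matched into $\omega^{\beta'}$ in a way that does not survive taking fundamental sequences, for instance when $\alpha$ is a successor.
\end{itemize}
To handle this, I will first rewrite $(\alpha,n)$ with $\rightarrow_H$ until its smallest term sits strictly below $\omega^{\beta'}$ in the structural sense. Along the way I use $H^{\alpha}(n)=H^{\alpha_n}(n)$, respectively $H^{\alpha''}(n+1)$, together with Step 2, to absorb the extra $+1$'s, since $\omega^{\beta'}_n$ contains at least $n$ copies of the relevant subterm. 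This bookkeeping in the limit case, specifically showing that the $n$ copies created by $\omega^{\beta'+1}\mapsto\omega^{\beta'}\cdot n$ dominate whatever $\alpha$ places there, is the step I expect to require the most care.

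Combining Steps 1 through 3 yields the proposition.
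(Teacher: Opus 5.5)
\paragraph{Same route as the paper, but Step~3 fails.} Your route is the paper's: it too combines Fact~5.2 and Proposition~5.4 of \cite{PriorityChannelSystems}, with a footnote identifying their structural ordering with $\le_{is}$ on the trees $T_\alpha$. Your Step~1 is correct, and more careful than that footnote, since the compressed level $k$ only yields $\le_{is}\Rightarrow\sqsubseteq$. Step~2 is fine.

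The gap is Step~3. For the ordering that Step~1 actually produces, CNF terms may be matched by an \emph{arbitrary} injection. For that ordering, Hardy functions are not monotone in the index, so the limit-case bookkeeping you postpone cannot be carried out. Take $\alpha=\omega^3+1$ and $\alpha'=\omega^\omega+\omega^3$ with the crossed matching $\omega^3\mapsto\omega^3$, $1\mapsto\omega^\omega$. For $k\ge 3$ (any $\ell\ge 1$) one checks directly that $T_\alpha\le_{is}T_{\alpha'}$: the $\omega^3$-subtree maps onto the $\omega^3$-subtree, and the leaf for the term $1$ maps to the root-child for $\omega^\omega$. Yet
\[
H^{\alpha}(1)=H^{\omega^3}(2)=H^{\omega^2}(H^{\omega^2}(2))=H^{\omega^2}(8)=2048,
\qquad
H^{\alpha'}(1)=H^{\omega^\omega}(H^{\omega^3}(1))=H^{\omega^\omega}(2)=H^{\omega^2}(2)=8 .
\]

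\paragraph{Why it fails, and what can be salvaged.} In composition form, the matching only gives $H^{\alpha}=H^{\omega^3}\circ H^{1}\le H^{\omega^3}\circ H^{\omega^\omega}$. To conclude you would need $H^{\omega^3}\circ H^{\omega^\omega}\le H^{\omega^\omega}\circ H^{\omega^3}$, and this fails at $1$ ($2048$ versus $8$). This is exactly the crossed case you flagged as needing care; it is a real counterexample, not a bookkeeping issue. The structural ordering of \cite{PriorityChannelSystems} matches CNF terms in their CNF order, so Proposition~5.4 there does not cover the crossed matchings that embeddings of unordered trees allow.

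Consequently the statement, as written for all $n,n'$, is false (here $n=n'=1$), and the paper's footnote has the same hole. What does hold is a version in which $n$ exceeds the hereditary coefficients of $\alpha$. The embedding $T_\alpha\le_{is}T_{\alpha'}$ still gives $\alpha\le\alpha'$ in the ordinal order, and for such lean $\alpha$ one gets $H^{\alpha}(n)\le H^{\alpha'}(n)\le H^{\alpha'}(n')$ by the usual lean-ordinal argument (cf.\ \cite{arxiv-SchmitzS}). The proposition needs a hypothesis of that kind.
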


Applying this proposition to $\alpha_2, \alpha, |Q|$ and $n_3$ and using the fact that $H^{\alpha_2}(n_3) \le H^{\alpha}(|Q|)$ we get that $H^{\alpha_2}(n_3) = H^{\alpha}(|Q|)$,
which proves the claim.

\section{General Remark on Notation}\label{sec:notation-remark}

Before we proceed to the Hardy computations and the constructions of the gadgets, we make a general remark as follows. Throughout the description of the construction of these gadgets,
we will be working with various encoders. Recall that an encoder for an ordinal $\alpha$ 
is a tree that looks exactly like $T_\alpha$, except that some of the $\omega$ labels
could be renamed. Recall that the labels in the last level in $T_\alpha$ could be of the form $a \in \omega^0,\dots,\omega^\ell$, which represent that in $T'_\alpha$ they had $0,\dots,\ell$ nodes as children. We are only allowed to rename them as some label of the form $(n,a)$
where $n$ is some new label, i.e., new names for such nodes must retain the label in $T_\alpha$ somewhere in their notation. This will pose some notational challenge in the gadgets and the constructions that we describe below, because every time we introduce a transition, we would then have to go into a case analysis on whether this transition hits the last level or not. This would make it extremely cumbersome to state the constructions.

For this reason, whenever we express transitions working over encoders, if the transitions reference the last level, we will just treat is a single label $n$ and not as a pair $(n,a)$.
This is done purely for the sake of convenience, as the behavior of the transitions that we construct is (mostly) oblivious to the exact $i \in \{0,\dots,\ell\}$ that is present in the label $\omega^i$ in the original tree $T_\alpha$ this last level. 
Indeed, for the last level $k$, suppose any one of our transitions is presented as 
$(p_0,\dots,p_k) \xdashrightarrow{u} (q_0,\dots,q_k)$.
This must be interpreted as follows
\begin{itemize}
    \item If $p_k = q_k = \omega$
    then this actually represents 
    $\ell+1$ many transitions, $(p_0,\dots,\omega^i) \xdashrightarrow{u} (q_0,\dots,\omega^i)$
    for each $0 \le i \le \ell$.
    \item If $p_k = \omega$ and $q_k \neq \omega$, then this actually represents
    $\ell+1$ many transitions
    $(p_0,\dots,\omega^i) \xdashrightarrow{u} (q_0,\dots,(q_k,{\omega^i})))$ for each $0 \le i \le \ell$,
    \item If $p_k \neq \omega$ and $q_k = \omega$, then this actually represents
    $\ell+1$ many transitions
    $(p_0,\dots,(p_k,{\omega^i}))) \xdashrightarrow{u} (q_0,\dots,\omega^i)$ for each $0 \le i \le \ell$.
    \item If $p_k \neq \omega$ and $q_k \neq \omega$, then this actually represents $\ell+1$ many transitions
    $(p_0,\dots,(p_k,{\omega^i}))) \xdashrightarrow{u} (q_0,\dots,(q_k,{\omega^i}))$ for each $0 \le i \le \ell$.
\end{itemize}

Similarly, if we have a transition $(p_0,\dots,p_k) \xdashrightarrow{u} (q_0,\dots,q_j)$
with $j < k$, then for each $0 \le i \le \ell$, this must be interpreted as  $(p_0,\dots,\omega^i) \xdashrightarrow{u} (q_0,\dots,q_j)$  if $p_k = \omega^i$
and as $(p_0,\dots,(p_k,{\omega^i})) \xdashrightarrow{u} (q_0,\dots,q_j)$ if $p_k \neq \omega^i$. The case of $(p_0,\dots,p_j) \xdashrightarrow{u} (q_0,\dots,q_k)$
with $j < k$ must be interpreted as $(p_0,\dots,p_j) \xdashrightarrow{u} (q_0,\dots,\omega^0)$. 

Finally, if we say that we want to add a child to a node at the $k^{th}$ level with label $\omega$, then for each $0 \le i \le \ell$, this must be interpreted as changing the labels
of $\omega^i$ and $(p,{\omega^i})$ (for any $p$) to $\omega^{i+1}$ and $(p,{\omega^{i+1}})$
respectively if $i+1 < \ell$. Similarly in the case of decrements it must be moved to $i-1$ and in the case of resets, it must be moved to 0. 

With this notation stated, we now move on to describing our constructions.

\section{Proofs for Subsection~\ref{subsec:compute-Hardy}: Proof of Theorem~\ref{thm:Hardy-forward}}\label{sec:appendix-Hardy-forward}

In this section, we shall show that using the smallest-child gadgets and the copy gadgets
we can construct $\net^{fwd}$. We have already defined
small-initialized and lossy smallest-child encoders and smallest-child gadgets in Subsection~\ref{subsec:compute-Hardy}. We have also defined marked and lossy copy encoders in 
Subsection~\ref{subsec:compute-Hardy}. The only thing left to define 
is the notion of an $k$-copy gadget, which we define as follows.

\begin{definition}[Copy Gadgets]
    A $k$-copy gadget is a $k$-NRCS such that for any $\alpha \le (\Omega_{k+1})_{\ell}$ and any marked encoder $T$ for $\alpha$:
    \begin{itemize}
        \item There exists a run from $T$ to the perfect copy encoder of $T$.
        \item Suppose there exists a run from $T$ to some $T'$ such that the root of $T'$ is labeled by $\ecopy$. Then $T'$ is a lossy copy encoder of $T$.
    \end{itemize}
\end{definition}

Now, assuming that we can construct $i$-smallest-child and $i$-copy gadgets for any $i \le k$,
we shall now show how to prove the following.

\Hardyforward*

The construction of $\net^{fwd}$ is as follows: It initially guesses whether the 
current tree $C_{\alpha,n}$ is such that $\alpha$ represents either a successor ordinal $\beta+1$
or a limit ordinal $\lambda$. For the former case, we need to move to $C_{\beta,n+1}$.
To do this, we simply need to remove a child of the root marked by $\omega$ and convert it into $\#$. This can be done by
$$(\omega,\omega) \xdashrightarrow{u} ({succ})$$
$$({succ}) \xdashrightarrow{u} (\omega,\#)$$

If the child marked by $\omega$ that we removed did not have any children, then
we have indeed moved from $C_{\beta+1,n}$ to $C_{\beta,n+1}$ by these transitions; otherwise
we have committed a lossy error and moved to some $C_{\beta',n+1}$ such that
$T_{\beta'+1} < T_{\alpha}$ and so by Proposition~\ref{prop:induced-monotone-Hardy} we get
that $H^{\beta'}(n+1) = H^{\beta'+1}(n) \le H^{\alpha}(n)$. 

Suppose we guess that our current configuration is $C_{\lambda,n}$ for some limit ordinal $n$. We now need to move to $C_{\lambda_n,n}$. Recall that if $\lambda = \gamma + \omega^{\eta}$ where $\eta = \beta+1$ or $\eta = \lambda'$, then $\lambda_n$ is given by the following transitions

$$(\gamma + \omega^{\beta+1})_n = \gamma + \omega^{\beta} \cdot n, \qquad  (\gamma + \omega^{\lambda'})_n = \gamma + \omega^{\lambda'_n}$$

Hence, we need to find the smallest term in the CNF of $\lambda$ and if it represents an exponent with a successor ordinal, we need to reduce it by one and copy it $n-1$ times;
otherwise, we just need to go deeper on the smallest term. We now implement these steps
as follows: First, we change the label of the root
to mark that we are in the limit ordinal case
$$(\omega) \xdashrightarrow{u} (\bsmall)$$

We will now proceed recursively in at most $k$ phases. For each $0 \le i < k$,
we assume at the beginning of the $i^{th}$ phase that we have marked a unique path $v_0,\dots,v_{i}$ in the configuration by the labels $(\Phi_i,\bsmall)$ where $\Phi_i := \underbrace{\fwd{i},\cdots,\fwd{i}}_{i}$. Now, we proceed
to invoke the $(k-i)$-smallest-child gadget on $v_{i}$. Technically this means
that we take every update transition $(p_0,\dots,p_j) \xdashrightarrow{u} (q_0,\dots,q_{\ell})$ of this gadget and replace it with
$$(\Phi_i,p_0,\dots,p_j) \xdashrightarrow{u} (\Phi_i,q_0,\dots,q_\ell)$$
Similarly, we can prepend every reset transition of this gadget. 
This ensures that we are applying the $k-i$-smallest-child gadget on the node $v_{i}$.
We then wait for the gadget computation to conclude, which will happen when $v_i$
gets the label $\esmall$. At this point, by the property of the smallest-child gadget, 
we would have marked the unique child of $v_i$ (say $v_{i+1}$) by $smallest$, which represents the smallest ordinal term in the CNF of the ordinal of $v_i$. 
The ordinal represented
by $v_{i+1}$ could either be a limit ordinal or a successor ordinal. For the former,
we just need to go to the next phase on $v_{i+1}$.

$$(\Phi_i,\esmall,smallest) \xdashrightarrow{u} (\Phi_{i+1},\bsmall)$$

Otherwise, $v_{i+1}$ represent an ordinal of the form $\beta+1$. We need to subtract
a child labelled by $\omega$ from it and then copy it $n+1$ times. To this end, let $\xi_{i} =  \underbrace{\cp{i},\cdots,\cp{i}}_{i}$. First, we subtract a child labelled by $\omega$ from $v_{i+1}$, whilst simultaneously moving the other labels to $\xi_i$ to show that we are entering the copy phase.
$$(\Phi_i,\esmall,smallest,\omega) \xdashrightarrow{u} (\xi_{i},\bcopy, \marked)$$

[If the node labelled by $\omega$ which we removed did not have any children, then we have successfully gone from $\beta+1$ to $\beta$; on the other hand, if it did have children, then similar to the argument given before, the tree that we get
is a subtree of the original tree that we began with and so this can only make the Hardy computation go down.]

From here, we invoke the $(k-i)$-copy gadget on the node $v_{i}$ for $n-1$ times. 
Technically this means that we take every update transition $(p_0,\dots,p_j) \xdashrightarrow{u} (q_0,\dots,q_\ell)$ of this gadget and replace it with
$$(\xi_{i},p_0,\dots, p_j) \xdashrightarrow{u} (\xi_{i},q_0,\dots,q_\ell)$$
Similarly, we can prepend every reset transition of this gadget.
This ensures that we are applying the copy gadget on the node $v_{i}$, which allows us to copy the subtree of $v_{i+1}$ once as a subtree of $v_i$.

We then wait for the gadget computation to conclude, which will happen when $v_i$ gets
the label $\ecopy$. At this point, the label of the node $v_{i+1}$ would have become $\omega$
and the label of the copy of $v_{i+1}$ (say $v_{i+1}^1$) would have become $\copied$.
We turn $\copied$ back into $\marked$, convert an $\#$ to $\#'$ (to indicate that we have copied once) and start the copy process once again.
$$(\xi_i,\ecopy,\copied) \xdashrightarrow{u} (\convert,\xi_{i-1},\bcopy,\marked)$$
and do 
$$(\convert,\#) \xdashrightarrow{u} (\convert',\#')$$

From $\convert'$, we can guess that we need to proceed further to do another copy
or we have done copying for the required $n-1$ times. In the former case,
we just need to change $\convert'$ to $\cp{i}$, which would once again give us a path
$(\xi_i,\bcopy,\marked)$ from which we can apply the copy gadget again.
$$(\convert') \xdashrightarrow{u} (\cp{i})$$

In the latter case, we first convert a $\#$ to $\#'$ (to check that we have not overshot and done the copy $n$ times):
$$(\convert',\#) \xdashrightarrow{u} (\convert'',\#')$$

Then we start transferring all the $\#'$s to $\#$
$$(\convert'',\#') \xdashrightarrow{u} (\convert'', \#)$$

At some point, we guess that we have transferred all the $\#'$ back to $\#$ and so we 
reset all the $\#'$ and conclude the gadget.

$$(\convert'') \xdashrightarrow{r, \#'} (\last{i})$$
$$(\last{i},\xi_{i-1},\bcopy,\marked) \xdashrightarrow{u}
(\underbrace{\omega,\cdots,\omega}_{i+2})$$

\section{Proofs for Subsection~\ref{subsec:compute-Hardy}: Proof of Theorem~\ref{thm:Hardy-backward}}\label{sec:appendix-Hardy-backward}

In this section, we shall show how to construct $\net^{bwd}$. To construct this, we will need one more type of gadget in addition to the smallest-child gadget. This is the \emph{comparator gadget} which lets us compare two different ordinals. (This is the same comparator gadget that we had mentioned in Subsection~\ref{subsec:smallest-child}).
Similar to the other gadgets, we first define the encoders on which they will act and then define the gadget formally. 

The first encoder we need is called a \emph{comparator encoder}, which will intuitively allow us to mark two subtrees (corresponding to two different ordinal terms in the CNF, which we will later compare against each other).

 \begin{definition}
     A \emph{comparator encoder} for an ordinal $\alpha$ is an encoder for $\alpha$ whose root is labeled by $\bcmp$ and exactly two of its children are labeled by $\Acmp$ and $\Bcmp$, respectively.
 \end{definition}

The children marked by $\Acmp$ and $\Bcmp$ correspond to the terms in the CNF of $\alpha$ which we want to compare. Ultimately, we want gadgets that will start from a comparator encoder for $\alpha$ and decide which one of the two ordinals corresponding to the two marked subtrees is bigger. To this end, we define a \emph{lossy compared encoder} as follows.

\begin{definition}
    Given a comparator encoder $T$ for $\alpha$, a \emph{lossy compared encoder}
    for $T$ is a tree $T'$ that is exactly like $T$ except for the following changes:
    \begin{itemize}
        \item Let $v_A, v_B$ be the nodes in $T$ labeled by $\Acmp$ and $\Bcmp$, respectively. Then $T'$ contains all the nodes of $T$, except possibly for some nodes in the subtrees rooted at $v_A$ and $v_B$.
        \item Every node in $T'$ has the same label as its corresponding node in $T$,
        except that the root is labeled by $\ecmp$, and $v_A, v_B$ are labeled as follows.
        \begin{itemize}
        \item If the ordinals represented by the subtrees rooted at $v_A$ and $v_B$ in $T'$ are the same, then they are both labeled by $\equalcmp$.
        \item If the ordinal of one subtree is bigger than the ordinal of the other subtree, then the node with the bigger subtree is labeled by $\bigcmp$ and the other one by $\smallcmp$.
        \end{itemize}
    \end{itemize}
    A lossy compared encoder is called a \emph{perfect compared encoder} iff 
    $T'$ contains all the nodes of $T$.
\end{definition}

We can now define the notion of a comparator gadget as follows.

\begin{definition}[Comparator Gadget]
    A $k$-comparator gadget is a $k$-NRCS such that for any $\alpha < (\Omega_{k+1})_{\ell}$ and any comparator encoder $T$ for $\alpha$, the following holds:
    \begin{itemize}
        \item There exists a run from $T$ to the perfect compared encoder of $T$.
        \item Suppose there exists a run from $T$ to some $T'$ such that the root of $T'$ is labeled by $\ecmp$. Then $T'$ is a lossy compared encoder of $T$.
    \end{itemize}
\end{definition}

Note that, similar to smallest-child gadgets, here we allow the comparator gadget to make lossy errors. Because of these lossy errors, we are only guaranteed that the ordinals represented by the nodes $v_A$ and $v_B$ in $T'$ (not in $T$!) are correctly marked as big or small. 

Now, assuming that we can construct $i$-smallest-child and $i$-comparator gadgets for any $i \le k$, we shall now show how to prove the following.

\Hardybackward*

Similar to the forward case, the backward case initially guesses whether the current tree is of the form $C_{\alpha,n+1}$ or of the form $C_{\lambda_n,n}$. In the former case,
we need to move to $C_{\alpha+1,n}$ and this can be easily done as follows:
$$(\omega,\#) \xdashrightarrow{u} (\omega,\omega)$$

In the latter case, we need to move to $C_{\lambda,n}$. For this we need to convert
$\lambda_n$ into $\lambda$. Note that $\lambda$ could either be of the form  $\gamma + \omega^{\lambda'}$ or of the form $\gamma+ \omega^{\beta+1}$. Hence, $\lambda_n$
is either of the form $\gamma + \omega^{\lambda'_n}$ or of the form $\gamma + \omega^{\beta} \cdot n$. In the first case, we need to identify the subtree corresponding
to the smallest term $\lambda'_n$ and recursively convert $\lambda'_n$ into $\lambda$.
In the second case, we need to identify $(n-1)$  other copies of the subtree corresponding
to the smallest term $\beta$, remove all of them and convert the remaining copy into $\beta+1$. To identify $n-1$ other copies of the same ordinal, we use the comparator gadget $n-1$ times
and remove one copy for each invocation of the comparator gadget.
We now implement these steps as follows, in a way that is symmetric to the forward case.

First, we change the label of the root to mark that we are in the $\lambda_n$ case:
$$(\omega) \xdashrightarrow{u} (\bsmall)$$

We will now proceed recursively in at most $k$ phases. For each $0 \le i < k$, we assume at the beginning of the $i^{th}$ phase that we have marked a unique path $v_0,\cdots,v_i$
in the configuration by the labels $(\Phi_i,\bsmall)$ where $\Phi_i := \underbrace{\bwd{i},\cdots,\bwd{i}}_{i}$. Now, we proceed to invoke the $(k-i)$-smallest-child gadget on $v_i$.
This is similar to the forward case, where we take that gadget and prepend $\Phi_i$ to each transition. We then wait for the gadget computation to end, which happens when $v_i$
gets the label $\esmall$. At this point, the smallest-child gadget would have marked the unique child of $v_i$ (say $v_{i+1}$) by $\smallest$, which represents the smallest ordinal term in the CNF of the ordinal of $v_i$. This smallest ordinal could either be of the form
$\lambda'_n$ or of the form $\beta$ with $n-1$ other copies of it present.
If we guess the former case, we just need to go to the next phase on $v_{i+1}$.

$$(\Phi_i,\esmall,\smallest) \xdashrightarrow{u} (\Phi_{i+1},\bsmall)$$

In the latter case, $v_{i+1}$ must represent an ordinal $\beta$ with $n-1$ other copies of it present. Therefore, we start identifying $n-1$ other subtrees whose represented ordinal is the same as $v_{i+1}$ and for this we have to use the comparator gadget. 
To this end, let $\Phi_i' = \underbrace{\bwd{i}',\cdots,\bwd{i}'}_{i}$ and $\xi_i = \underbrace{\cm{i},\cdots,\cm{i}}_i$. First we prepare the nodes $v_i$ and $v_{i+1}$ for
the comparator gadget.
$$(\Phi_i,\esmall,\smallest) \xdashrightarrow{u} (\Phi_i',\bcmp,\Acmp)$$
Then, we non-deterministically select a sibiling $v_{i+1}'$ of $v_{i+1}$ to compare it with
$$(\Phi_i',\bcmp,\omega) \xdashrightarrow{u} (\xi_i,\bcmp,\Bcmp)$$

From here, we invoke the $(k-i)$-comparator gadget on the node $v_i$ with the two marked
nodes $v_{i+1}$ and $v_{i+1}'$. This we can once again do by prepending $\xi_i$ to every transition of that gadget. The computation of the gadget ends when $v_i$ reaches the
state $\ecmp$. At this point, we want the labels of both $v_{i+1}$ and $v_{i+1}'$ to be $\equalcmp$ (in fact, by the properties of the comparator gadget, just checking for one of them will do). We then prepare the comparator gadget once again, convert an $\#$ to $\#'$
(to indicate that we have found a copy of the ordinal we were searching for) and start
the process once again.

$$(\xi_i,\ecmp,\equalcmp) \xdashrightarrow{u} (\convert,\xi_{i-1},\bcmp,\Acmp)$$
$$(\convert,\#) \xdashrightarrow{u} (\convert',\#')$$

From $\convert'$, just like in the forward case, we guess either to proceed further
to do another comparison or we guess that we have found all the other required $n-1$ copies
of $v_{i+1}$. In the former case, we just need to mark another sibiling of $v_{i+1}$ for the comparator gadget, from which we can call it again.

$$(\convert',\xi_{i-1},\bcmp,\omega) \xdashrightarrow{u} (\xi_i,\bcmp,\Bcmp)$$

In the latter case, we start transferring all the $\#'$'s to $\#$, exactly like
in the forward case. (We also convert one $\#$ to $\#$'s to ensure that we have not overshot and invoked the comparator gadget $n$ times).
$$(\convert',\#) \xdashrightarrow{u} (\convert'',\#')$$
$$(\convert'',\#') \xdashrightarrow{u} (\convert'', \#)$$
$$(\convert'') \xdashrightarrow{r, \#'} (\last{i})$$

Finally, we have to increase the ordinal $\beta$ to $\beta+1$ and so we have to
add one child labelled by $\omega$ to the node marked with $\Acmp$ and conclude the computation. 
$$(\last{i},\xi_{i-1},\bcmp,\Acmp) \xdashrightarrow{u}
(\underbrace{\omega,\cdots,\omega}_{i+3})$$

\section{Proofs for Subsection~\ref{subsec:smallest-child}}
\label{sec:appendix-gadgets}

In this subsection, we will present the construction of the copy gadgets, smallest-child gadgets, comparator gadgets and the biggest-child gadgets. All of them have been defined either in the main part of the paper or in the preceding sections of the appendix, except for the biggest-child gadget, which we now define as follows. First, we
define a biggest-child encoder, similar to a smallest-child encoder.

A big-initialized encoder is like a small-initialized encoder except
that the root is labelled by $\bbig$. Now, we define a lossy biggest-child encoder.

\begin{definition}
    Given an initialized encoder $T$ for $\alpha$, a lossy biggest-child encoder for $T$
    is a tree $T'$ that looks exactly like $T$, except for the following changes:
    \begin{itemize}
        \item Let $v_1,\dots,v_n$ be the children of the root labeled by $\omega$ in $T$. $T'$ contains all the nodes of $T$, except possibly for some nodes in the subtree of any of the $v_i$.
        \item Every node in $T'$ has the same label as its corresponding node in $T$,
        except that the root is labeled by $\ebig$, and each $v_i$ (if it exists in $T'$) is labeled as follows:
        \begin{itemize}
        \item If the ordinal represented by the subtree rooted at $v_i$ in $T'$ is the biggest among all the ordinals of any of the subtrees of all the other $v_j$ in $T'$, then $v_i$ is labeled by $\biggest$.
        \item Otherwise, $v_i$ has the same label as it had in $T$.
        \end{itemize}
    \end{itemize}
    A lossy biggest-child encoder is called a \emph{perfect biggest-child encoder} if 
    $T'$ contains all the nodes of $T$.    
\end{definition}

Now, we can define a biggest-child gadget as follows.

\begin{definition}[Biggest-child Gadgets]
    A $k$-biggest-child gadget is a $k$-NRCS such that for any $\alpha < (\Omega_{k+1})_{\ell}$ and any big-initialized encoder $T$ for $\alpha$:
    \begin{itemize}
        \item There exists a run from $T$ to the perfect biggest-child encoder of $T$.
        \item Suppose there exists a run from $T$ to some $T'$ such that the root of $T'$ is labeled by $\esmall$. Then $T'$ is a lossy biggest-child encoder of $T$.
    \end{itemize}
\end{definition}

Having defined the biggest-child gadgets, we now  show how to construct all the types of gadgets defined. We begin with the copy gadgets.

\subsection{Proof of Theorem~\ref{thm:copy-gadgets}: Constructing Copy Gadgets}
\label{subsec:appendix-copy-gadgets}

Recall the intuition that we had presented for the copy gadgets - we simply do a DFS
over the subtree that we want to mark and whenever we hit a maximum branch we simply copy it onto another copy. We now describe this formally.

A copy gadget starts from a marked encoder and so this means that the root $u$ is labelled by $\bcopy$ and one of its children $v_1$ is labelled by $\marked$. As a first step, 
we change the label of $v_1$ to $\marked_1$ as well as create a new child $v_1'$ labelled
by $\marked'_1$.

$$(\bcopy,\marked) \xdashrightarrow{u} (\bcopy', \marked_1)$$
$$(\bcopy') \xdashrightarrow{u} (\cp{1},\marked'_1)$$

Now, assume that we have two paths from the root, one as $u,v_1,\dots,v_i$ labelled by $\cp{i},\Phi_i$ where $\Phi_i := \underbrace{\marked_i,\cdots,\marked_i}_i$ and the other as $u,v_1',\dots,v_i'$ labelled by 
$\cp{i},\Phi_i'$ where $\Phi_i'$ is defined as $:=\underbrace{\marked'_i,\cdots,\marked'_i}_i$. The first one is the current DFS branch along which we are exploring and the second one is where we have to copy the nodes from the first one. To this end, we either guess now to extend the branch $v_i$ to $v_{i+1}$
or guess that there are no other children of $v_i$ left to explore. In the former
case we just have to do 
$$(\cp{i},\Phi_i,\omega) \xdashrightarrow{u} (\cp{i+1}',\Phi_{i+1})$$
$$(\cp{i+1}',\Phi_i') \xdashrightarrow{u} (\cp{i+1},\Phi_{i+1}')$$

For the latter case, we simply need to reset all children of $v_i$ labelled by $\omega$,
mark $v_i$ as $\omega'$ (hence ensuring that we will not explore any branch with $v_i$ ever again), mark $v_i'$ as $\omega$ (hence ensuring that $v_i$ has been copied safely in the other subtree) and move up the branch for $i > 1$.
$$(\cp{i},\Phi_i) \xdashrightarrow{r, \omega} (\convert_i,\Phi_{i-1},\omega')$$
$$(\convert_i,\Phi_i') \xdashrightarrow{u} (\cp{i-1},\Phi_{i-1}',\omega)$$

On the other hand for $i = 1$, moving up the branch simply means moving up from $v_1$, which means the DFS is now complete. So in this case we mark $v_1$ as $\omega'$ and $v_1'$
as $\copied$.
$$(\cp{1},\marked_1) \xdashrightarrow{r, \omega} (\convert_1,\omega')$$
$$(\convert_i,\marked_1') \xdashrightarrow{u} (\cp{0},\copied)$$

The construction is almost over, since the subtrees of $v_1$ and $v_1'$
are the same except now all the nodes in the subtree of $v_1$ 
are labelled by $\omega'$ and we want to convert them to $\omega$.
This we do (lossily) once again by a DFS on the subtree of $v_1$ 
and convert every $\omega'$ to $\omega$, in exactly the same way as the construction from above.

\begin{remark}
    As a final remark, as we had already mentioned in Section~\ref{sec:notation-remark}, when dealing with nodes in the last level, we actually have to not check for nodes labelled by $\omega$, but rather $\omega^0,\dots,\omega^\ell$ and keep track of them when converting them to $\marked_i, \marked_i'$ etc.
\end{remark}

\subsection{Constructing $k$-Smallest-Child and $k$-Biggest-Child Gadgets}

As mentioned in Subsection~\ref{subsec:smallest-child}, it is quite easy to construct
a 1-comparator gadget. In this subsection, we will show how to construct $k$-smallest-child and $k$-biggest-child gadgets, assuming that a $k$-comparator gadget exists.

\subsubsection{Constructing $k$-Smallest-Child Gadget}

\ksmallest*

The idea is to iterate over pairs of all subtrees corresponding to terms in the CNF of the ordinal, compare them and continue with the one that is smaller. This is roughly the same idea that we used in the backward Hardy computation to identify $n-1$ copies of the same ordinal. We now make this idea concrete.

A $k$-smallest-child gadget starts from a small-initialized encoder where the root is labelled by $\bsmall$.
We ultimately want to identify the smallest ordinal represented by all the children
labelled by $\omega$. As a first step, we mark some child $v_A$ to be compared
$$(\bsmall,\omega) \xdashrightarrow{u} (\bsmall',\Acmp)$$
Then we mark another child $v_B$ to be compared and run the $i$-comparator gadget
$$(\bsmall,\omega) \xdashrightarrow{u} (\bcmp,\Bcmp)$$

At this point, the root is labelled by $\bcmp$ and two of its children $v_A, v_B$ are labelled by
$\Acmp$ and $\Bcmp$. We simply run the $k$-comparator gadget which ends
when the root becomes $\ecmp$. At that point, either both $v_A$ and $v_B$ are labelled by 
$\equalcmp$ or exactly one of them is labelled by $\smallcmp$ and the other by $\bigcmp$.
In the former case, we simply mark one of them (say $v_B$) as $\omega'$ and the other one (say $v_A$) as $\Acmp$, which is the current smallest candidate  so that it can be compared to some other node next.

$$(\ecmp,\equalcmp) \xdashrightarrow{u} (\ecmp',\omega')$$
$$(\ecmp',\equalcmp) \xdashrightarrow{u} (\ecmp'',\Acmp)$$

In the latter case, we mark the one labelled by $\bigcmp$ (say $v_B$) as $\omega'$ 
and the other one (say $v_A$) as $\Acmp$, so that it can be compared to some other node next.

$$(\ecmp,\bigcmp) \xdashrightarrow{u} (\ecmp',\omega')$$
$$(\ecmp',\smallcmp) \xdashrightarrow{u} (\ecmp'',\Acmp)$$

At this point, in both the cases, we now guess whether there are any other nodes to be compared to $\Acmp$
or not. In the former case, we now find another node to mark as $\Bcmp$ and run the comparator gadget again.

$$(\ecmp'',\omega) \xdashrightarrow{u} (\bcmp,\Bcmp)$$

Otherwise, we now reset all nodes labelled by $\omega$, and declare the node labelled by $\Acmp$ as the smallest

$$(\ecmp'') \xdashrightarrow{r, \omega} (\textit{declare})$$
$$(\textit{declare},\Acmp) \xdashrightarrow{u} (\convert,\smallest)$$

We now convert all the nodes labelled by $\omega'$ back to $\omega$ lossily and once that is done, we set the root as $\esmall$,  declaring the end of the $i$-smallest gadget computation.
$$(\convert,\omega') \xdashrightarrow{u} (\convert,\omega)$$
$$(\convert) \xdashrightarrow{r, \omega'} (\convert')$$
$$(\convert') \xdashrightarrow{u} (\esmall)$$

\subsubsection{Constructing $k$-Biggest-Child Gadget}

\kbiggest*

The idea is to pick one subtree, copy it once, and check if the ordinal of the original is bigger than the ordinals corresponding to the other subtrees. 
We now make this idea concrete.

A $k$-biggest-child gadget starts from a big-initialized encoder where the root is labelled by $\bbig$. We ultimately want to identify the biggest ordinal represented by all the children
labelled by $\omega$. As a first step, we mark a candidate node $v$ non-deterministically
as the candidate for the biggest child and copy the subtree of this node.
$$(\bbig,\omega) \xdashrightarrow{u} (\bcopy,\marked)$$

Now, we run a modified version of the $k$-copy gadget, which will work exactly as the original $k$-copy gadget, except instead of marking $v$ as $\omega$ in the end, it will mark it 
as $\omega'$. (It is straightforward to modify the construction given in Subsection~\ref{subsec:appendix-copy-gadgets} so that this holds).
We then wait till the root has received the label $\ecopy$.
At this point, $v$ will be marked by $\omega'$ and there will be another new node $v'$ labelled by $\copied$. 

We now compare the subtree rooted at $v$ (the only node marked by $\omega'$) 
with all the subtrees of all the nodes marked by $\omega$ and check that $v$ remains bigger each time. To this end, we first mark $v$ as $\Acmp$ and pick some other node labelled by $\omega$ to be marked as $\Bcmp$.

$$(\ecopy,\omega') \xdashrightarrow{u} (\ecopy',\Acmp)$$
$$(\ecopy',\omega) \xdashrightarrow{u} (\bcmp,\Bcmp)$$

We now run a modified version of the $k$-comparator gadget, which will work exactly as the 
original $k$-comparator gadget, except that it will keep track of the label $\Acmp$ in the node $v$ throughout. More precisely, we modify every transition $(p_0,p_1,\dots,p_i) \xdashrightarrow{u} (q_0,q_1,\dots,q_j)$ of the $i$-comparator gadget in the following way:
If $p_1 = \Acmp$, then we modify it so that it now becomes
$$(p_0,p_1,\dots,p_i) \xdashrightarrow{u} (q_0,(\Acmp,q_1),\dots,q_j)$$
If $p_1 \neq \Acmp$, then we keep the transition as well as add the following transition:
$$(p_0,(\Acmp,p_1),\dots,p_i) \xdashrightarrow{u} (q_0,(\Acmp,q_1),\dots,q_j)$$

We do a similar change for reset transitions.
This ensures that the label $\Acmp$ is preserved in $v$ throughout in the $k$-comparator gadget. We then wait for the gadget to end, which will happen when the root is labelled
by $\ecmp$. At this point, $v$ will be labelled $(\Acmp,x)$ and the other compared node
will be labelled by $y \in \{\bigcmp,\equalcmp,\smallcmp\}$. If $x \notin \{\bigcmp,\equalcmp\}$, then this means that the node $v$ is not the biggest-child gadget and so our original guess was incorrect.
We then simply deadlock and do not move at all. If $x$ does equal any one of these two choices, we relabel it back to $\Acmp$ and relabel $y$ to $\omega'$.

$$(\ecmp,(\Acmp,x)) \xdashrightarrow{u} (\ecmp',\Acmp)$$
$$(\ecmp',y) \xdashrightarrow{u} (\ecmp'',\omega')$$

We can now choose to compare $\Acmp$ with another node labelled by $\omega$ or decide
that we have compared it with everything.
In the former case, we do
$$(\ecmp'',\omega) \xdashrightarrow{u} (\bcmp,\Bcmp)$$

In the latter case, we reset all nodes labelled by $\omega$, convert all $\omega'$ back to $\omega$.
$$(\ecmp') \xdashrightarrow{r, \omega} (\convert)$$
$$(\convert,\omega') \xdashrightarrow{u} (\convert,\omega)$$
$$(\convert) \xdashrightarrow{r, \omega'} (\convert')$$

We then declare the node labelled by $\copied$ as the winner and remove the node labelled by $\Acmp$ (since it is a copy of the winner).

$$(\convert',\copied) \xdashrightarrow{u} (\convert'',\biggest)$$
$$(\convert'',\Acmp) \xdashrightarrow{u} (\ebig)$$

\subsection{Constructing $(k+1)$-Comparator Gadget}

\kplusonecomparator*

The main idea is a culmination of various different ideas from different gadgets. First we isolate the biggest-grandchildren $w_A, w_B$ in each of the two marked children $v_A, v_B$ of the root. Then, we copy the subtrees of $w_A, w_B$ as $w_A', w_B'$. Then, we apply the $k$-comparator gadget between $w_A'$ and $w_B'$, whilst also ensuring
that whatever operations are being done on the subtree of $w_A'$ are also being done on the  subtree of $w_A$ (similarly for $w_B'$ and $w_B$). We do this to ensure that
the subtree of $w_A$ is always an induced subtree of $w_A'$ at each point (similarly for $w_B'$ and $w_B$). Then, once we get the answer for $w_A', w_B'$, we are no longer guaranteed
that they still represent the biggest-children in $v_A, v_B$. So, we run another biggest-child check using $w_A$ and $w_B$, which if passes, will imply that $w_A'$ and $w_B'$
are still the biggest children of $v_A$ and $v_B$ respectively.

Once this check passes, if $w_A'$ is bigger than $w_B'$, then we know that $v_A$ is bigger than $v_B$ and vice versa. If $w_A'$ turns out to be equal to $w_B'$ then 
we mark them as $\omega'$ and move on to the next biggest-child of $v_A$ and $v_B$
and continue the gadget. This way we will either conclude that one of $v_A, v_B$ is smaller than the other or we will exhaust all the children of $v_A, v_B$, at which point,
we convert all of its children labelled by $\omega'$ into $\omega$ and declare them as equal.
We now make this idea more concrete.

A $(k+1)$-comparator gadget begins with the root $u$ labelled by $\bcmp$
and two of its children $v_A, v_B$ labelled by $\Acmp$ and $\Bcmp$. 
We first employ the $k$-biggest-child gadget on $v_A, v_B$ to mark the node which represents
the biggest term in the CNF of the ordinals represented by $v_A, v_B$. More precisely, 
we first do
$$(\bcmp, \Acmp) \xdashrightarrow{u} (\bcmp^A,\bbig)$$
and then run the $k$-biggest-child gadget on $v_A$. Technically, this would require us to prepend $\bcmp^A$ with every transition of this gadget. At the end of this gadget,
the root $u$ will be labelled by $\bcmp^A$, the node $v_A$ will be labelled by $\ebig$
and one of its children (say $w_A$) will be labelled by $\biggest$. We now change $\ebig$ to $\bcopy$,
$\biggest$ to $\marked$ and copy this subtree.

$$(\bcmp^A, \ebig, \biggest) \xdashrightarrow{u} (\bcmp^A,\bcopy,\marked)$$

We now run the $k$-copy gadget to copy the subtree rooted at $w_A$, i.e., the node with the label $\marked$. Similar to the previous subsection, we modify the copy gadget so that it 
leaves $w_A$ as $\omega'$ and creates a copy $w_A'$ labelled by $\copied$. So at this point,
the root will be $\bcmp^A$, $v_A$ will be $\ecopy$, $w_A$ will be $\omega'$ and $w_A'$
will be $\copied$. We now do the exact same steps for the node $v_B$, i.e., the node labelled by $\Bcmp$. This will lead us to a tree where $u$ will be $\bcmp^B$, $v_A, v_B$ will be $\ecopy$, $w_A, w_B$ will be $\omega'$ and $w_A',w_B'$ will be $\copied$.

We now run the $k$-comparator gadget on $w_A',w_B'$ to check if one is bigger than the other.
For this, we first set them up.
$$(\bcmp^B, \ecopy, \copied) \xdashrightarrow{u} (\bcmp',\bcmp,\Acmp)$$
$$(\bcmp', \ecopy, \copied) \xdashrightarrow{u} (\bcmp'',\bcmp,\Bcmp)$$

Then, we run the $k$-comparator gadgets on them. However, note that they are not actually in the same subtree. Nevertheless, we can still modify the $k$-comparator gadget in a straightforward manner so that it compares them across two different subtrees.
For instance, this would involve taking every update transition $(p_0,\dots,p_\ell) \xdashrightarrow{u} (q_0,\dots,q_j)$ of the $i$-comparator gadget and transforming it as
$$(\bcmp'',p_0,\dots,p_\ell) \xdashrightarrow{u} (\bcmp''',q_0',\dots,q_j)$$
$$(\bcmp''',p_0) \xdashrightarrow{u} (\bcmp'''',q_0)$$
$$(\bcmp'''',q_0') \xdashrightarrow{u} (\bcmp'',q_0)$$
This just ensures that the labels of $v_A, v_B$ are the same at each step, thereby acting as if they are one node and therefore simulating as if $w_A', w_B'$ belong to the same subtree.

Furthermore, as we run the $k$-comparator gadgets on both of them, we also ensure
that any change made to $w_A'$ (resp. $w_B'$) is also mirrored in $w_A$ (resp. $w_B$).
More precisely, if we execute a transition
$(p_0,p_1,p_2,\dots,p_\ell) \xdashrightarrow{u} (q_0,q_1,q_2,\dots,q_j)$
of the $k$-comparator gadget we replace it with
$$(p_0,p_1,p_2,\dots,p_\ell) \xdashrightarrow{u} (q_0',q_1',q_2,\dots,q_j)$$
$$(q_0',q_1',\omega',\dots,p_\ell) \xdashrightarrow{u} (q_0,q_1,\omega',\dots,q_j)$$

This ensures that whatever changes were made to the subtree rooted at $w_A'$ are also maintained on the subtree rooted at $w_A$, since $w_A$ is the only node labelled by $\omega'$
in the subtree of $v_A$. (Similarly for $w_B$ and $v_B$).

We then wait till the gadget finishes, which will happen when both $v_A$ and $v_B$ get the label $\ecmp$ (and the root by the label $\bcmp''$). At this point, the nodes $w_A', w_B'$ would have been compared. Before, we check the result of this comparison, we
first check if $w_A', w_B'$ are still the biggest-children in the subtrees of $v_A, v_B$,
by running the $k$-biggest-child gadgets on $w_A$ and $w_B$ respectively. If the check passes
on $w_A$ and $w_B$, then since we have guaranteed that $w_A$ and $w_B$ are respectively smaller than $w_A'$ and $w_B'$, we would be guaranteed that $w_A'$ and $w_B'$ still 
represent the biggest-children of $v_A$ and $v_B$ respectively.

To do this, we mark $w_A$ as a candidate for the biggest child and compare it with 
all the children of $v_A$ labelled by $\omega$, in exactly the same way as we constructed
the $k$-biggest child gadget. Similarly we do this for $w_B$.

If these checks pass, we can destroy $w_A,w_B$ and then look at the results of the comparator gadget on $w_A'$ and $w_B'$. If say $w_A'$ is smaller than $w_B'$, then we can immediately conclude that $v_A$ is smaller than $v_B$ and vice versa. 

$$(\bcmp'',\ecmp,\smallcmp) \xdashrightarrow{u} (\bcmp''',\smallcmp,\omega)$$
$$(\bcmp''',\ecmp,\bigcmp) \xdashrightarrow{u} (\textit{declare},\bigcmp,\omega)$$

After declaring this, we just need to convert all the $\omega'$'s in the subtrees of $v_A$
and $v_B$ back to $\omega$, which we have already seen how to do before, after which we can end the computation.

On the other hand, if $w_A, w_B$ turn out to be equal, then we mark them as $\omega'$
and repeat the computation all over again.

$$(\bcmp'',\ecmp,\equalcmp) \xdashrightarrow{u} (\textit{again},\Acmp,\omega')$$
$$(\textit{again},\ecmp,\equalcmp) \xdashrightarrow{u} (\bcmp,\Bcmp,\omega')$$

This completes the construction of the $(k+1)$-comparator gadget.

\section{Proofs of Section~\ref{sec:upper-bounds}}

\subsection{Proofs of Subsection~\ref{subsec:WSTS}}\label{subsec:appendix-WSTS}

\compatibility*

\begin{proof}
    We discuss the case when $t$ is a decrementing transition; the argument for the other transitions are similar.

    Let $t = (p_0,\dots,p_i) \xdashrightarrow{u} (q_0,\dots,q_j)$ be a decrementing transition with $j < i$. Since $C \act{t} C'$, it follows that
    $C$ must have a path $v_0,\dots,v_i$ from the root labelled by $p_0,\dots, p_i$.
    Furthermore, since $C \act{t} C'$, it follows that $C'$ is the same as $C$
    except that the nodes $v_0,\dots,v_j$ are now labelled by $q_0,\dots,q_j$ and the subtree
    rooted at $v_{i+1}$ has been removed. 

    Now, since $C \le_{is} D$, it follows that $D$ also has the path $v_0,\dots,v_i$ from the root labelled by $p_0,\dots, p_i$. Hence, we can fire the transition $t$ from $D$
    to reach some configuration $D'$. Since $C$ is an induced subgraph of $D$,
    it is then easy to see that $C'$ is an induced subgraph of $D'$ as well.
\end{proof}

\effectivepred*

\begin{proof}
    Let $C$ be a configuration and $t$ be a transition. Define $pre_t(\uparrow C)$ to be the set of all configurations $C'$ such that $C' \act{t} D$ for some some $D \in \uparrow C$.
    Clearly $pre(\uparrow C) = \cup_{t \in \delta_u \cup \delta_r}\  pre_t(\uparrow C)$.
    Hence, to prove the proposition, it suffices to show the following: For every $t \in \delta_u \cup \delta_t$, in exponential time we can compute a set of configurations 
    $C_1,\dots,C_m$ such that $\cup_{i=1}^m \uparrow C_i = pre_t(\uparrow C)$ and $|C_i| \le |C| + k+1$
    for every $i$. We will call such a collection $\{C_1,\dots,C_m\}$ a basis for $pre_t(\uparrow C)$.

    We discuss the case when $t$ is a decrementing transition; the argument for the other cases are similar. We claim the following, which we call Claim A:
    \begin{quote}
        Suppose $C' \in pre_t(\uparrow C)$. Then there exists $C''$
        such that $C'' \le_{is} C'$, $|C''| \le |C|+k+1$  and $C'' \in pre_t(\uparrow C)$.
    \end{quote}
    Assuming Claim A is true, it follows that we can compute a basis 
    for $pre_t(\uparrow C)$ as follows: Collect all configurations $C''$ satisfying $|C''| \le |C| + k +1$ and $C'' \in pre_t(\uparrow C)$. Let this collection be $\{C_1,\dots,C_m\}$. 
    Note that this computation can take at most exponential time, since there can only be 
    exponentially many configurations $C''$ satisfying $|C''| \le |C| + k+1$.
    By Claim A, $pre_t(\uparrow C) \subseteq \cup_{i=1}^m \uparrow C_i$.
    By the compatibility property, $\cup_{i=1}^m \uparrow C_i \subseteq pre_t(\uparrow C)$.
    Hence, we get that $\cup_{i=1}^m \uparrow C_i = pre_t(\uparrow C)$.
    
    Therefore, it suffices only to prove Claim A. Suppose $C' \in pre_t(\uparrow C)$ for some $t = (p_0,\dots,p_i) \xdashrightarrow{u} (q_0,\dots,q_j)$ with $j < i$. Hence $C' \act{t} D$ for some $D \in \uparrow C$.
    Since $C' \act{t} D$, by definition, this means that $C'$ contains a path 
    $v_0,\dots,v_i$ from the root labelled by $p_0,\dots,p_i$
    and $D$ is the same as $C'$ except that $v_0,\dots,v_j$ are labelled
    by $q_0,\dots,q_j$ and the subtree at $v_{j+1}$ is now missing.

    Now since $D \in \uparrow C$, it follows that there must be a subset $S$ of nodes of $D$ such that when $D$ is restricted to $S$, we get $C$. 
    We split this subset $S$ as $S^- \cup S^+$ where $S^- = S \cap \{v_0,\dots,v_j\}$
    and $S^+ = S \setminus S^-$. By definition of the transition $t$, note that $S^+$ must also be present in $C'$.

    Now, we remove all nodes from $C'$ except for $v_0,\dots,v_i$ and $S^+$ to get a new tree $C''$. Clearly, $C'' \le_{is} C'$ and $|C''| = |S^+| + (i+1) \le |C| + k+1$.
    Also, since we keep $v_0,\dots,v_i$ in $C''$, we must be able to fire $t$ from $C''$ 
    to reach some $D''$. It is clear that $D''$ contains the nodes
    $v_0,\dots,v_j$ and $S^+$ and so contains all of $S$. Hence, $D'' \in \uparrow C$
    and so $C'' \in pre_t(\uparrow C)$. This then proves Claim A.
\end{proof}

\subsection{Proofs of Subsection~\ref{subsec:upper-bounds-nested-multiset-nwqos}}
\label{subsec:appendix-reflections}

\reflections*

\begin{proof}
    The proof considers all of the four cases in the definition of $R_n(A,a)$. The proofs for the first and the last cases are easy to see. The other two cases
    follow from an identity of~\cite[Proposition 1]{Rosa-Velardo17}. Because the identity in that paper is stated in different terms compared to ours, we give the complete proof here, by structural induction on $A$.     

    Before we begin the induction, we make a remark on notation. For any two elements $m, m'$,
    whenever it is clear from the context which nested multiset nwqo $A$ they belong to, we will simply use $m \le m'$ to denote $m \le_A m'$. This is done for the sake of brevity, as we will be dealing
    with multiple complicated nested multiset nwqo expressions such as $\multi{\sum_{i=1}^d B_i}$. 
    We will similarly use $|m|$ to denote $|m|_A$. 
    
    \paragraph{Case 1: $A = \multi{\Gamma_0}$} In this case, $a$ must be the only element of $A$. It follows then that $A/a \hookrightarrow \Gamma_0 = R_n(A,a)$.

    \paragraph{Case 2: $A = \multi{\sum_{i=1}^d B_i}$} Let $b_i$ denote the sub-multiset
    of $a$ restricted to only elements from $B_i$. Note that each $b \in \multi{B_i}$ and
    $a$ is the multiset sum of $b_1,\cdots,b_d$.
    Furthermore, by construction the norm of each $b_i$ is at most $n$ (since the norm of $a$ is at most $n$).
    Let $R_n(\multi{B_i},b_i) = \sum_{j=1}^{\ell_i} \multi{B_i^j}$. 
    By induction hypothesis, $r_{i}: \multi{B_i}/b_i \hookrightarrow R_n(\multi{B_i},b_i)$ is a reflection for each $i$. 
    Furthermore, by definition, $R_n(A,a) = \sum_{i=1}^d \sum_{j=1}^{\ell_i} \multi{\sum_{k \neq i} B_i + B_i^j}$. Our task is now to come up with a reflection $r: A/a \hookrightarrow R_n(A,a)$. We divide this into three steps - in the first step, we define the mapping $r$;
    in the second step, we show that $r(m) \le r(m')$ implies $m \le m'$;
    and in the third step, we show that $|r(m)| \le |m|$.

    \textit{First step: } For each $m \in A/a$, we define an element $r(m) \in R_n(A,a)$ as follows:
    For any $1 \le k \le d$, let $m_k$ be the sub-multiset of $m$ restricted only to elements
    from $B_k$. Note that each $m_k \in \multi{B_k}$.
    Furthermore, since $m \in A/a$, there must be an index $i$ such that 
    $m_i \in \multi{B_i}/b_i$. Based on this we define $r(m)$ as the sum of the multisets 
    $\{m_k : k \neq i\}$ and $r_i(m_i)$, i.e.,
    $r(m) = \sum_{k \neq i} m_k + r_{i}(m_i)$. By assumption $r_i(m_i) \in \multi{B_i^j}$ for some $1 \le j \le \ell_i$
    and so, by construction, $r(m)$ belongs to 
    $\multi{\sum_{k \neq i} B_k + B_i^j}$
    and is hence an element of $R_n(A,a)$.
    We now have to show that this satisfies the properties of a reflection.

    \textit{Second step: }
    Suppose $r(m) \le r(m')$. By assumption, this means that there is some $1 \le i \le d$ and $1 \le j \le \ell_i$
    such that both $r(m),r(m') \in \multi{\sum_{k \neq i} B_k + B_i^j}$. (This is because only multisets belonging to the same summand of $R_n(A,a)$ can be compared). We want to now show that $m \le m'$.
    For each $1 \le k \le d$, let $m_k$ be the sub-multiset of $m$ restricted to only elements 
    of $B_k$. Similarly, we can obtain $m'_k$ from $m$.
    By definition of $r$ and by the fact that $r(m) \le r(m')$, it follows that
    $m_k \le m'_k$ for each $k \neq i$ and $r_{i}(m_i) \le r_i(m'_i)$. (Once again this is because elements from $m_k$ can only be compared with elements from $m'_k$).
    Since $r_i$ is a reflection, it follows that $m_i \le m'_i$ as well.
    Hence, it follows that $m_k \le m_k'$ for all $k$. From this it immediately follows that 
    $m \le m'$.

    \textit{Third step: } Let us now bound the norm of $r(m)$. 
    Let $r(m) = \sum_{k \neq i} m_k + r_i(m_i)$ where each $m_k$ is the sub-multiset
    of $m$ restricted only to elements from $B_k$. Hence, by definition
    of the norm function, we get that $|r(m)| = \sum_{k \neq i} |m_k| + |r_i(m_i)|$,
    By properties of a reflection, we have that $|r_i(m_i)| \le |m_i|$. It follows that $|r(m)| \le \sum_{1 \le k \le d} |m_k| = |m|$ and so we are done.

    \paragraph{Case 3: $A = \multi{\multi{B}}$}
    Let $a = \multiset{a_1,\dots,a_d}$. Since $|a| \le n$, it follows that $|a_i| \le n$ for each $i$ and $d \le n$.
    By induction hypothesis, $r_i : \multi{B}/a_i \hookrightarrow R_n(\multi{B},a_i)$ are reflections for each $i$.
    Moreover, by definition, $R_n(A,a) = \sum_{i=1}^n \multi{B \cdot (n-1) + R_n(\multi{B},a_i)}$.
    Our task is now to come up with a reflection $r: A/a \hookrightarrow R_n(A,a)$.
    Similar to the previous case, we divide this into three steps - in the first step, we define the mapping $r$;
    in the second step, we show that $r(m) \le r(m')$ implies $m \le m'$;
    and in the third step, we show that $|r(m)| \le |m|$.

    \textit{First step: } Let $m_1 \in A/a$. Hence, either there must be no element $x_1 \in m_1$ such that $a_1 \le x_1$ 
    or there is such an element $x_1$ and $\multiset{a_2,\dots,a_d} \nleq m_2 = m_1-\multiset{x_1}$, where $m_1 - \multiset{x_1}$ is the multiset obtained from $m_1$ by decrementing $m_1(x_1)$ by 1.
    In the latter case, either there is no element $x_2 \in m_2$ such that $a_2 \le x_2$ or there is such an element
    $x_2$ and $\multiset{a_3,\dots,a_d} \nleq m_3 = m_2 - \multiset{x_2}$. 
    We can continue this process until
    we reach an $i$ such that there is no element $x_i \in m_i$ such that $a_i \le x_i$. (Such an $i$ has to exist,
    as otherwise we would have found $x_1,x_2,\dots,x_d \in m_1$ such that $a_1 \le x_1, a_2 \le x_2, \dots, a_d \le x_d$,
    contradicting the fact that $m_1 \in A/a$). Hence, $m_1 = \multiset{x_1,\dots,x_{i-1}} + m_i$.

    We now make two observations: First, since there is no $x_i \in m_i$ such that $a_i \le x_i$, we have that
    $m_i \in A/\multiset{a_i} = \multi{\multi{B}/a_i}$. Hence, it is possible to apply the reflection $r_i$ to each element of $m_i$. Therefore, we can define a mapping $s(m_1) = \multiset{r_i(x) : x \in m_i}$ which gives us that $s(m_1) \in \multi{R_n(B,a_i)}$.
    Second, suppose for each $1 \le j \le i-1$, we have 
    $x_j = \multiset{x_j^1,\dots,x_j^{\ell_j}}$. Then, if we construct the multiset
    $s'(m_1) = \multiset{x : x = (j,x_j^k) \text{ for some } 1 \le j \le i-1, 1 \le k \le \ell_j}$,
    we have that $s'(m_1) \in \multi{B \cdot (n-1)}$. 
    
    Putting these two observations together, 
    we then define $r(m_1) = \multiset{x : x \in s(m_1) \text{ or } x \in s'(m_1)}$. Note that 
    $$r(m_1) \in \multi{B \cdot (n-1) + R_n(B,a_i)}$$ and is hence an element of $R_n(A,a)$. We now claim that $r$ satisfies the properties of a reflection.

    \textit{Second step: } Suppose $r(m) \le r(m')$. By assumption, there is some $i$ such that both $r(m), r(m') \in \multi{(n-1)B + R_n(B,a_i))}$ for some $i$. (This is because only multisets belonging to the same summand of $R_n(A,a)$ can be compared). Hence, there must be elements $x_1,\dots,x_{i-1} \in m$
    and $x_1',\dots,x_{i-1}' \in m'$ such that $m = \multiset{x_1,\dots,x_{i-1}} + m_i, m' = \multiset{y_1,\dots,y_{i-1}} + m_i'$, each $a_j \le x_j, a_j \le y_j$ and $m_i, m_i' \in A/\multiset{a_i}$.

    Let each $x_j = \multiset{x_j^1,\cdots,x_j^{d_j}}$ and let each $y_j = \multiset{y_j^1,\cdots,y_j^{\ell_j}}$. 
    By construction of $r$ we have that $$r(m) = \multiset{x : x = r_i(x') \text{ for some } x' \in m_i \text{ or } x = (j,x_j^k) \text{ for some } j,k}$$
    and $$r(m') = \multiset{y : y = r_i(y') \text{ for some } y' \in m_i' \text{ or } y = (j,y_j^q) \text{ for some } j,q}$$
    Since $r(m) \le r(m')$, it follows that there is an injection $h$ from $r(m)$ to $r(m')$ such that $x \le h(x)$ for all $x \in r(m)$. 
    By construction of $r, r'$ it follows that if $x = (j,x_j^k)$ for some $j,k$ then $h(x)$ must be $(j,y_j^q)$ for some $q$.
    Similarly, if $x = r_i(x')$ for some $x' \in m_i$ then $h(x)$ must be $r_i(y')$ for some $y' \in m_i'$.
    
    We now use $h$ to define an injection
    $w$ from $m$ to $m'$ such that $x \le w(x)$ for all $x \in m$.
    We split the definition of $w$ into two cases:
    \begin{itemize}
        \item Suppose $x = x_j$ for some $1 \le j \le i-1$. We then set $w(x) = y_j$.
        We now have to prove that $x_j \le y_j$, i.e., provide an injection $h'$
        from $\multiset{x_j^1,\dots,x_j^{d_j}}$ to $\multiset{y_j^1,\dots,y_j^{\ell_j}}$
        such that $x_j^k \le h'(x_j^k)$ for all $k$.
        We simply take this injection to be $h$ restricted to $\multiset{x_j^1,\dots,x_j^{d_j}}$, i.e., if $h(j,x_j^k) = (j,y_j^q)$,
        then $h'(x_j^k) = y_j^q$. Because $(j,x_j^k) \le (j,y_j^q)$,
        we have that $x_j^k \le y_j^q$. This provides the desired injection $h'$
        and so we have that $x_j \le y_j$.
        \item Suppose $x \in m_i$. Then, we have that $r_i(x) \in r(m)$
        and so we have that $h(r_i(x)) = r_i(y)$ for some $y \in m_i$.
        We then set $w(x) = y$. By assumption on $h$, we have that $r_i(x) \le r_i(y)$.
        By properties of a reflection, we have that $x \le y$.
    \end{itemize}
    Hence, we have shown that $m \le m'$, thereby completing this step.

    \textit{Third step: } Let us now bound the norm of $r(m)$. By the same reasoning as in the previous case, it follows that
    $|r(m)| = \sum_{1 \le j \le i-1} |x_j| + |r_i(m_i)|$.
    By properties of a reflection we have that $|r_i(m_i)| \le |m_i|$. It follows then that $|r(m)| \le \sum_{1 \le j \le i-1} |x_j| + |m_i| = |m|$ and so we are done.

    \paragraph{Case 4: $A = \sum_{i=1}^d \multi{B_i}$}
    Let $a \in \multi{B_i}$ for some $i$. 
    Note that $A/a$ is then simply $\sum_{k \neq i} \multi{B_k} + \multi{B_i}/a$.
    By induction hypothesis, we know that $r_i : \multi{B_i}/a \hookrightarrow R_n(\multi{B_i},a)$ is a reflection.
    Furthermore, $R_n(A,a) = \sum_{k \neq i} \multi{B_k} + R_n(\multi{B_i},a)$.
    It is then easy to see that the mapping $r: A/a \to R_n(A,a)$ defined by $r(a') = a'$ if $a' \notin \multi{B_i}$ 
    and $r(a') = r_i(a')$ if $a' \in \multi{B_i}/a$ is also a reflection.
\end{proof}

\subsection{Proofs of Subsection~\ref{subsec:order-types}}
\label{subsec:appendix-order-types}

\ordertype*

\begin{proof}
    Let us prove by structural induction on $A$ that $R_n(A,a) \hookrightarrow C(\alpha')$ for some $\alpha' \in \delta_n(o(A))$. 
    
    \paragraph*{Case 1: $A = \multi{\Gamma_0}$}
    In this case, $\delta_n(o(A)) = \{D_n(o(A))\} = \{D_n(\omega^0)\} = \{0\}$.
    Since $R_n(A,a) = \Gamma_0 = C(0)$, this proves the claim for this case.

    \paragraph*{Case 2: $A = \multi{\sum_{i=1}^d B_i}$} 
    In this case, $\delta_n(o(A)) = \{D_n(o(A))\} = \{D_n(\omega^{\oplus_{i=1}^d o(B_i)})\}$.
    By definition, $$D_n(\omega^{\oplus_{i=1}^d o(B_i)}) = \bigoplus_{i=1}^d \bigoplus_{j=1}^{\ell_i} \omega^{\oplus_{k \neq i} \ o(B_k) \oplus \beta_i^j}$$ where for each $1 \le i \le d$, $\bigoplus_{j=1}^{\ell_i} \omega^{\beta_i^j} = D_n(\omega^{o(B_i)})$.

    Let $b_i$ denote the sub-multiset of $a$ restricted to only elements from $B_i$.
    By induction hypothesis, $R_n(\multi{B_i},b_i) \hookrightarrow C(\bigoplus_{j=1}^{\ell_i} \omega^{\beta_i^j}) = \sum_{j=1}^{\ell_i} C(\omega^{\beta_i^j})$.
    Now, by definition of $R_n(A,a)$ and using the above property of $R_n(\multi{B_i},b_i)$, it follows that $$R_n(A,a) \hookrightarrow \sum_{i=1}^d \sum_{j=1}^{\ell_i} \multi{\sum_{k\neq i} B_k + C(\beta_i^j)} = C\left(\bigoplus_{i=1}^d \bigoplus_{j=1}^{\ell_i} \omega^{\oplus_{k \neq i} \ o(B_k) \oplus \beta_i^j}\right)
    $$ 

    This proves the claim for this case.

    \paragraph*{Case 3: $A = \multi{\multi{B}}$} 
    In this case, $\delta_n(o(A)) = \{D_n(o(A))\} = \{D_n(\omega^{\omega^{o(B)}})\}$.
    By definition, $$D_n(\omega^{\omega^{o(B)}}) = \omega^{o(B) \cdot (n-1) \oplus D_n(\omega^{o(B)})} \cdot n$$

    Let $a = \multiset{a_1,\dots,a_d}$. Note that $d \le n$.
    By induction hypothesis, $R_n(\multi{B},a_i) \hookrightarrow C(D_n(\omega^{o(B)}))$ for any $a_i$. Now, by definition of $R_n(A,a)$ and using the above property, it follows that $$R_n(A,a) \hookrightarrow \multi{B \cdot (n-1) + C(D_n(\omega^{o(B)}))} \cdot n = C \left(\omega^{o(B) \cdot (n-1) \oplus D_n(\omega^{o(B)})} \cdot n\right)$$

    This proves the claim for this case.
    
    \paragraph*{Case 4: $A = \sum_{i=1}^d \multi{B_i}$} 
    Let $a \in \multi{B_i}$ for some $i$. 
    In this case, by definition of $\delta_n(o(A)))$ we have that $$\alpha' = D_n(o(\multi{B_i})) \oplus \bigoplus_{k \neq j} o(\multi{B_k}) \in \delta_n(o(A))$$
    
    By induction hypothesis, $R_n(\multi{B_i},a) \hookrightarrow C(D_n(o(\multi{B_i})))$. Now, by definition of $R_n(A,a)$ and using the above property, it follows that $$R_n(A,a) \hookrightarrow \sum_{k \neq i} \multi{B_k} + C(D_n(o(\multi{B_i}))) = C(\alpha')$$
    
    This proves the claim for this case.
\end{proof}

\subsection{Proofs of Subsection~\ref{subsec:Cichon}}
\label{subsec:appendix-Cichon}

\lean*

\begin{proof}
    We prove this proposition by induction on $\alpha$. 
    
    \paragraph{Case 1: $\alpha = 1 = \omega^0$} In this case, the claim immediately follows. 

    \paragraph{Case 2: $\alpha = \omega^\beta$ where $\beta = \omega^{\gamma}$}
    In this case, $\alpha = \omega^{\omega^\gamma}$. By definition, $D_n(\alpha) = \omega^{\gamma \cdot (n-1) + D_n(\beta)} \cdot n$. Note that $\Omega_{k-1} \le \beta < \Omega_k$ and $\beta$ is $\ell$-lean. Hence, by induction hypothesis,
    $D_n(\beta)$ is $\ell n (k-1)$ lean. Furthermore, since $\alpha$ is $\ell$-lean, it follows that $\gamma$ is also $\ell$-lean
    and hence $\gamma \cdot (n-1)$ is $\ell (n-1)$-lean. 
    It then follows that $D_n(\alpha)$ is $\max(n,\ell (n-1) + \ell n (k-1))$-lean.
    Since $\max(n,\ell (n-1) + \ell n (k-1)) \le \ell n k$, it follows that $D_n(\alpha)$ is $\ell n k$-lean.

    \paragraph{Case 3: $\alpha = \omega^\beta$ where $\beta = \sum_{i=1}^{m} \omega^{\eta_i}$}
    By definition, $$D_n(\omega^{\omega^{\eta_i}}) = \omega^{\eta_i \cdot (n-1) \oplus D_n(\omega^{\eta_i})} \cdot n
    = \oplus_{j=1}^n \ \omega^{\eta_i \cdot (n-1) \oplus D_n(\omega^{\eta_i})}$$
    Now, by definition, 
    \begin{equation}\label{eq:Dn}
      D_n(\alpha) = D_n(\omega^\beta) = \oplus_{i=1}^{m} \oplus_{j=1}^n \omega^{\oplus_{p \neq i} \omega^{\eta_p} \oplus \eta_i \cdot (n-1) \oplus D_n(\omega^{\eta_i})}  
    \end{equation}

    Now let $\beta = \sum_{i=1}^d \omega^{\gamma_i} \cdot c_i$ in strict CNF.
    Let us now group equal factors of $\eta_k$ in the exponent in equation~\ref{eq:Dn}
    and rewrite the equation as
    \begin{equation}
      D_n(\alpha) =\oplus_{i=1}^d (\oplus_{j=1}^n \omega^{\oplus_{p \neq i} \omega^{\gamma_p} \cdot c_p \oplus \omega^{\gamma_i} \cdot (c_i-1) \oplus \gamma_i \cdot (n-1) \oplus D_n(\omega^{\gamma_i})}) \cdot c_i
    \end{equation}

    Furthermore, since the $j$ plays no role in the summand, it follows that 
    $$D_n(\alpha) = \oplus_{i=1}^d (\omega^{\oplus_{p \neq i} \omega^{\gamma_p} \cdot c_p \oplus \omega^{\gamma_i} \cdot (c_i-1) \oplus \gamma_i \cdot (n-1) \oplus D_n(\omega^{\gamma_i})}) \cdot c_i n$$

    Let 
    \begin{equation}
    \beta_i = \oplus_{p \neq i}  \omega^{\gamma_p} \cdot c_p \oplus \omega^{\gamma_i} \cdot (c_i-1) \oplus \gamma_i \cdot (n-1) \oplus D_n(\omega^{\gamma_i})    
    \end{equation}
    Note that $\beta_i \neq \beta_j$ for two distinct $i,j$.
    Indeed, suppose $\beta_i = \beta_j$ with $i \neq j$. Then removing common terms from both sides, we are left 
    with the following equality
    $$\omega^{\gamma_j} \oplus \gamma_i \cdot (n-1) \oplus D_n(\omega^{\gamma_i}) = \omega^{\gamma_i} \oplus  \gamma_j \cdot (n-1) \oplus D_n(\omega^{\gamma_j})$$
    Without loss of generality, let $\gamma_j > \gamma_i$. Then, each term of the RHS is strictly less than $\omega^{\gamma_j}$,
    but the LHS contains $\omega^{\gamma_j}$, which leads to a contradiction. Hence, $\beta_i \neq \beta_j$ for $i \neq j$.

    Therefore, since $c_i n \le \ell n$, in order to prove that $D_n(\omega^\beta)$ is $\ell nk $-lean, it suffices to show that each $\beta_i$ is $\ell nk$-lean. Now, each $\gamma_p$ 
    for $p \neq i$ as well as $\gamma_i$ and $D_n(\omega^{\gamma_i})$ are $\ell n (k-1)$-lean (The first two are simply $\ell$-lean by assumption and the third one follows from induction hypothesis).
    Let us now analyze the coefficients that can appear in $\beta_i$.
    The maximum coefficient that can occur in $\beta_i$ is when we  combine the coefficient $c_p$ of some $\omega^{\gamma_p}$ with $p \neq i$ along with the coefficients of $\gamma_i \cdot (n-1)$ and $D_n(\omega^{\gamma_i})$. This value can be at most $\ell + \ell (n-1) +\ell n (k-1) = \ell n k$, which proves the claim for this case.

    \paragraph{Case 4: $\alpha = \sum_{i=1}^d \omega^{\beta_i} \cdot c_i$}
    In this case, if $\alpha' \in \delta_n(\alpha)$, we have that $\alpha' = \oplus_{p \neq i} \omega^{\beta_p} \cdot c_p \oplus \omega^{\beta_i} \cdot (c_i-1) \oplus D_n(\omega^{\beta_i})$. By induction hypothesis, we have that
    $D_n(\omega^{\beta_i})$ is $\ell n k$-lean. It then follows that the maximum coefficient 
    that can occur in $\alpha'$ is when we combine the coefficient $c_p$ of some $\omega^{\gamma_p}$ with $p \neq i$ along with the coefficients of $D_n(\omega^{\beta_i})$.
    This value can be at most $\ell + \ell n k$-lean, which proves the claim for this case.
\end{proof}

\subsubsection*{Proof of Lemma~\ref{lem:final-bound}}

Before, we give the proof for this lemma, we state some notation and recall some facts regarding ordinals.

For any ordinal $\alpha > 0$ and any $n \in \mathbb{N}$, set $P_n(\alpha) = \beta$ if $\alpha = \beta + 1$ and otherwise set $P_n(\alpha) = P_n(\alpha_n)$ where $\alpha_n$ is the $n^{th}$ element in the fundamental sequence of $\alpha$. From Proposition~\ref{prop:lean}, we get the following result.

\begin{proposition}\label{prop:delta-pred-relation}
    Let $\alpha < \Omega_{k+1}$ be $\ell$-lean and $\alpha' \in \delta_n(\alpha)$.
    Then for all $x \ge \ell + \ell n k$, $h_{\alpha'}(x) \le h_{P_{\ell + \ell n k}(\alpha)}(x)$.
\end{proposition}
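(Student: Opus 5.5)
We want: for $\alpha<\Omega_{k+1}$ $\ell$-lean, $\alpha'\in\delta_n(\alpha)$, and $x\ge N:=\ell+\ell nk$, the bound $h_{\alpha'}(x)\le h_{P_N(\alpha)}(x)$. The plan is to reduce this to the standard monotonicity toolkit for the Cicho\'n hierarchy from~\cite{arxiv-SchmitzS}. That toolkit says, roughly, that if $\gamma<\beta$ and $\gamma$ is $N$-lean (coefficients and nested coefficients bounded by $N$), then $\gamma\le P_N(\beta)$ in the pointwise ordering $\prec_N$ generated by $\beta\mapsto P_N(\beta)$. It also says that $h_\gamma(x)\le h_{\beta}(x)$ whenever $\gamma\preceq_x\beta$ and $x\ge N$. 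These are the facts cited as Lemmas~B.1 and~C.9 there, and they are also used at the end of Subsection~\ref{subsec:Cichon}.

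\emph{Step 1.} Use the remark after the definition of $\delta_n$ that $\alpha'<\alpha$. I would reprove it by a short induction following the four clauses of $D_n$. The main clause is $D_n(\omega^{\omega^\beta})=\omega^{\beta\cdot(n-1)\oplus D_n(\omega^\beta)}\cdot n$. Its exponent is $<\omega^\beta$, since every summand lies below $\omega^\beta$, so the whole term is $<\omega^{\omega^\beta}$.

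\emph{Step 2.} Invoke Proposition~\ref{prop:lean} to obtain that $\alpha'$ is $N$-lean. If $\alpha<\Omega_k$, apply the proposition with the appropriate smaller $k$; leanness with a smaller constant implies leanness with $N$.

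\emph{Step 3.} Use the key ordinal fact: if $\gamma<\beta$, $\gamma$ is $N$-lean, and $\beta$ is $N$-lean, then $\gamma\le_{\mathrm{str}} P_N(\beta)$ in the $N$-pointwise ordering. The proof is by induction on $\beta$ via CNF. Write $\beta=\beta'+\omega^{\eta}$ with $\omega^{\eta}$ the smallest term. If $\eta=0$, then $P_N(\beta)=\beta'$ and $\gamma\le\beta'$ directly. Otherwise $\gamma<\beta$ together with leanness forces $\gamma<\beta_N$, and $\beta_N$ is again $N$-lean or close enough. One then descends, handling the two fundamental-sequence clauses separately. For $\omega^{\eta+1}$, leanness of $\gamma$ bounds the number of copies of $\omega^\eta$ by $N$. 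For $\omega^{\lambda}$, apply induction on the exponent.

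\emph{Step 4.} Conclude with the monotonicity of $h_\cdot(x)$ along the pointwise ordering for $x\ge N$. This gives $h_{\alpha'}(x)\le h_{P_N(\alpha)}(x)$.

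The main obstacle is Step 3, specifically making the leanness bookkeeping consistent. $\alpha$ is only $\ell$-lean, while $\alpha'$ is $N$-lean with $N>\ell$, so the fundamental-sequence descent must be taken at index $N$ rather than $\ell$. I would first check that citing Lemma~C.9 of~\cite{arxiv-SchmitzS} directly gives "$\gamma<\beta$, $\gamma$ $N$-lean $\Rightarrow h_\gamma(x)\le h_{P_N(\beta)}(x)$ for $x\ge N$". If it does, Steps 3–4 collapse into a single citation and the proof is just Steps 1 and 2.
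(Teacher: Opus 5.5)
Your plan is the paper's proof. That proof is exactly your Steps~1--2, followed by a citation of \cite[Lemma B.1, Lemma C.9]{arxiv-SchmitzS} in the form ``if $\beta<\alpha$ and $\beta$ is $m$-lean then $h_\beta(x)\le h_{P_m(\alpha)}(x)$ for $x\ge m$''.

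\textbf{The gap.} That fact, and your Step~3, is off by one for the fundamental sequences fixed in this paper, $(\gamma+\omega^{\beta+1})_x=\gamma+\omega^\beta\cdot x$. Take $\beta=\omega$ and $\gamma=N$. Both are $N$-lean and $\gamma<\beta$, but $\beta_N=N=\gamma$, so ``leanness forces $\gamma<\beta_N$'' fails. Also $P_N(\omega)=N-1<\gamma$, and even at the level of functions $h_N(x)=N>N-1=h_{P_N(\omega)}(x)$.

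The induction you sketch does prove a corrected variant: if $\gamma<\beta$ and $\gamma$ is $(N-1)$-lean, then $\gamma\preceq_N P_N(\beta)$. But Proposition~\ref{prop:lean} only yields $N$-leanness, and that bound is attained. For $k=1$, $\ell=2$, $n=1$ (so $N=4$) and $\alpha=\omega^2+\omega\cdot 2+1$, one has $\alpha'=D_1(\omega^2)\oplus(\omega\cdot 2+1)=\omega\cdot 4+1\in\delta_1(\alpha)$. Meanwhile $P_4(\alpha)=\omega^2+\omega\cdot 2$, and its $\preceq_4$-chain jumps from $\omega^2$ straight to $\omega\cdot 4$. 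So $\alpha'\not\preceq_4 P_4(\alpha)$, and Steps~3--4 cannot go through at index $N$. The single citation you hope to fall back on is exactly the statement just refuted, so it does not help either.

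\textbf{A repair.} The proposition itself remains true if you use where the slack lies. Write $\alpha=\beta+\gamma_2$ with $\beta=\gamma_1+\omega^\eta\cdot c$. Here $\omega^\eta$ is the derived term and $\gamma_2$ collects the smaller terms, so that $\alpha'=\gamma_1+\omega^\eta\cdot(c-1)+(D_n(\omega^\eta)\oplus\gamma_2)$.

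If $\gamma_2=0$, then $P_N(\alpha)=\gamma_1+\omega^\eta\cdot(c-1)+P_N(\omega^\eta)$. Here $D_n(\omega^\eta)$ is $\ell nk$-lean, hence $(N-1)$-lean, so the corrected lemma gives $\alpha'\preceq_N P_N(\alpha)$.

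If $\gamma_2\neq 0$, then $P_N(\alpha)=\beta+P_N(\gamma_2)\succeq_x\beta>\alpha'$. For $x\ge N+1$, the $N$-leanness of $\alpha'$ already gives $\alpha'\preceq_x P_x(\beta)\prec_x\beta$. Only $x=N$ needs a direct computation, in two subcases:
\begin{itemize}
\item If $\gamma_2\neq 1$, use $h_{\beta+\delta}(x)=h_\delta(x)+h_\beta(h^\delta(x))$ with $\delta=P_N(\gamma_2)\neq 0$ to move to an argument $y=h^\delta(x)>N$, where $N$-leanness suffices; then use monotonicity of $h_{\alpha'}$.
\item If $\gamma_2=1$, write $\alpha'=\alpha''+1$ with $\alpha''$ $(N-1)$-lean. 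Then compare $1+h_{\alpha''}(h(x))$ with $h_\beta(x)=1+h_{P_x(\beta)}(h(x))$.
\end{itemize}
Note also that Lemma~\ref{lem:final-bound} only invokes the proposition at $x=h(\ell+\ell nk)>N$.
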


\begin{proof}
    It is known that if $\beta < \alpha$ and $\beta$ is $m$-lean then for all $x \ge m$, we have $h_\beta(x) \le h_{P_m(\alpha)}(x)$~\cite[Lemma B.1, Lemma C.9]{arxiv-SchmitzS}. Plugging in $\beta = \alpha'$ and $m = \ell + \ell nk$ and using Proposition~\ref{prop:lean}, we get the required result.
\end{proof}

\finalbound*

\begin{proof}
    We proceed by induction on $\alpha$. If $\alpha = 0$, then $M_0(n) = 0 = h_0(\ell nk)$. 
    Similarly, if $0 < \alpha < \omega$, we can show that $M_\alpha(n) = \alpha = h_\alpha(\ell nk)$. Therefore, we can assume that $\alpha \ge \omega$ and so $k \ge 1$.
    
    By definition of $M_{\alpha,g}$, there must be some $\alpha' \in \delta_n(\alpha)$ such that
    $$M_{\alpha,g} = 1 + M_{\alpha',g}(g(n))$$
    Since $\alpha$ is $k$-lean, by Proposition~\ref{prop:lean}, $\alpha'$ is $\ell + \ell n k$-lean. We now apply the induction
    hypothesis to $\alpha'$ to get
    $$1+M_{\alpha',g}(g(n)) \le 1+h_{\alpha'}((\ell + \ell n k) + (\ell + \ell n k) \cdot g(n) \cdot k) = 1+h_{\alpha'}((\ell + \ell n k) \cdot (1 + g(n) \cdot k)) $$

    Since $g$ is a superadditive function, it follows that $1 + g(n) \cdot k = 1 + \sum_{i=1}^k g(n) \le 1 + g(nk)$. Furthermore, since $g$ is strictly increasing, it follows that $1 + g(nk) \le g(\ell + \ell nk)$.
    Combining this observation with the fact that $h_\alpha'$ is monotonic, gives us that
    $$1+h_{\alpha'} ((\ell + \ell n k) \cdot (1 + g(n) \cdot k)) \le
    1+h_{\alpha'} ((\ell + \ell nk) \cdot g(\ell + \ell n k)) = 1+h_{\alpha'}(h(\ell + \ell n k))$$
    Observing that $h(\ell + \ell n k) \ge \ell + \ell n k $, if we now apply Proposition~\ref{prop:delta-pred-relation}, we get
    $$1+h_{\alpha'}(h(\ell + \ell n k)) \le 1+h_{P_{\ell + \ell n k}(\alpha)}(h(\ell + \ell n k)) = h_\alpha(\ell + \ell n k)$$
    where the last equality follows from the identity $1+h_{P_x(\alpha)}(h(x)) = h_\alpha(x)$
    ~\cite[Lemma 5.1]{SchmitzS11}.
    This completes the proof of the lemma.
\end{proof}

\section{Proofs for Subsection~\ref{subsec:bounded-depth-graphs}}
\label{sec:appendix-bounded-depth-graphs}

We now give the desired reflection from labelled $k$-depth graphs with labels from some set $Q$
to labelled trees of height at most $k$ with labels from $Q \times 2^{\{0,\dots,k-1\}}$.

The reflection $r$ is as follows: Suppose $G$ is a $k$-depth graph. This means that $G$ is obtained by taking some tree $T$ of height $k$ and then adding some edges $(u_1,v_1),\dots,(u_\ell,v_\ell)$, where each $v_i$ is a descendant of $u_i$ in $T$.
Let us call these the extra edges of $G$.

Hence, for each node $v$, we can assign the set $A(v)$ which contains a number $i$
iff the ancestor of $v$ in the tree $T$ at level $i$ is connected to $v$ by an extra edge. 
Note that $A(v) \subseteq \{0,\dots,k-1\}$. Now, we set $r(G) = T'$ 
where $T'$ is the same as $T$, except the label of each node $v$ in $T'$ is now a pair containing the label of $T$ and $A(v)$. From the construction, it is clear that $|r(G)| = |G|$. Hence, we only need to show that $r(G) \le_{is} r(G')$ implies $G \le_{is} G'$. To this end, suppose $T:= r(G) \le_{is} T' := r(G')$. This means that there is an injection from $T$ to $T'$
such that $(u,v) \in T$ iff $(h(u),h(v)) \in T'$. In other words, this means that there exists a subset $S$ of nodes of $T'$ such that when $T'$ is restricted to $S$, we get $T$. 
Similarly, let us now restrict $G'$ to the subset $S$ to get a graph $H$. We claim that $H = G$,
which would prove that $G \le_{is} G'$ and conclude the proof.

Indeed, since $T$ and $G$ have the same set of nodes, it follows that the nodes of $G$ and the nodes of $H$ are the same. We now have to show that the label of a node in $H$ and $G$ are the same and also that two nodes are connected by an edge in $H$ iff they are connected by an edge in $G$.

First, let us pick some node $v \in S$. Suppose its label in $T'$ is $(q,S)$.
By construction, its label in $G'$ (and hence also in $H$) must be $q$.
Furthermore, by assumption, the label of $v$ in $T'$ is the same as the label of $v$ in $T$.
Hence, the label of $v$ in $T$ is $(q,S)$ which implies that the label of $v$ in $G$ is also $q$.

Now, suppose we have an edge $(u,v) \in H$. By construction, this edge is also present in $G'$. If it is also present in $T'$, it must also be present in $T$ and hence also in $G$.
Therefore, assume that this edge is not present in $T'$. This implies that $(u,v)$ 
is an extra edge of $G'$. Let $u$ be the ancestor of $v$ at some level $i$ in $T'$. Since we have the extra edge $(u,v)$ in $G'$, the label of $v$ in $T'$ must be $(q,S)$ with $i \in S$. Therefore, the label of $v$ in $T$ must also be $(q,S)$ with $i \in S$. 
Furthermore, by assumption $u$ is the ancestor of $v$ at level $i$ in $T$.
This indicates that $G$ also has the extra edge $(u,v)$. Similarly, we can argue that if we have
an edge $(u,v) \in G$ then $(u,v) \in H$ as well. This proves that $H = G$ and so we are done.

\section{Proofs for Subsection~\ref{subsec:broadcast-networks}}
\label{sec:appendix-broadcast-networks}

\bcastnets*

Note that the upper bound for these problems have already been discussed in Subsection~\ref{subsec:broadcast-networks} and so we will concentrate on the lower bounds here.
First, we formalize the intuition behind broadcast networks discussed in Subsection~\ref{subsec:broadcast-networks}.

\subsection{$k$-depth Broadcast Networks}

Formally, a (broadcast) protocol is a tuple $\mathcal{P} = (Q,\Sigma,\delta,Q_{init})$ where $Q$ is a finite set of states, $\Sigma$ is a finite communication alphabet, $\delta \subseteq Q \times \{!a, ?a: a \in \Sigma\} \times Q$ is a finite set of transitions and $Q_{init} \subseteq Q$ is the set of initial states. For ease of notation, we will write $q \act{!a} q'$ (resp. $q \act{?a} q'$) for $(q,!a,q') \in \delta$ (resp. $(q,?a,q') \in \delta$). Intuitively, a transition of the form $q \act{!a} q'$ (resp. $q \act{?a} q'$) corresponds to broadcasting the message $a$ (resp. receiving the message $a$).

As discussed in Subsection~\ref{subsec:broadcast-networks}, a $k$-depth broadcast network consists of agents situated on the nodes of a $k$-depth graph and executing the protocol $\mathcal{P}$.
We formalize this as follows: A \emph{$k$-depth configuration} (or simply a configuration) $C = (V,E,L)$
is a tuple such that $(V,E)$ is a $k$-depth graph and $L : V \to Q$
is a labelling function that specifies the current state of the agent in a node. 
An initial configuration is a configuration in which all the nodes
have labels from $Q_{init}$.

The semantics of a $k$-depth broadcast network is given by means of steps between its configurations. Given two configurations $C = (V,E,L)$ and $C' = (V',E',L')$,
we say that there is a step between $C$ and $C'$ if $V = V', E = E'$ 
and there exists a node $v \in V$ and a message $a \in \Sigma$ such that
\begin{itemize}
    \item $(L(v),!a,L'(v)) \in \delta$ and 
    \item $(L(v'),?a,L'(v')) \in \delta$ if $v'$ is a neighbor of $v$ and 
    \item $L(v') = L'(v')$ if $v'$ is not a neighbor of $v$. 
\end{itemize}
In this case, we write $C \act{v,a} C'$ or simply $C \act{} C'$. Intuitively, in such a step,
the agent at node $v$ broadcasts the message $a$ and this is simultaneously received by all of its neighbors; all the other agents do nothing.

A run from a configuration $C$ to a configuration $C'$ is a sequence of steps
$C \act{} C_1 \act{} C_2 \act{} \cdots \act{} C'$. We use $C \act{*} C'$ to denote
that there exists a run between $C$ and $C'$. An execution is a run starting from some initial configuration.

Given a state $f \in Q$ and a configuration $C$, we say that $C$ covers $f$ if 
$f$ is the label of some node in $C$. We say that an execution $C_0 \act{*} C$ covers
$f$, if $C$ covers $f$. The \emph{coverability} problem for $k$-depth broadcast networks,
is to decide, given a broadcast protocol $\mathcal{P}$ and a state $f$, whether $f$ is coverable by some execution. The coverability problem for $k$-depth broadcast networks over trees is the same, except the configurations are now restricted to be $k$-depth trees, i.e.,
trees of height at most $k$. 

We now show $\fastf{\Omega_k}$-hardness of the coverability problem for both these models
by giving a reduction from coverability for $k$-NRCS. To this end, we first need three ingredients regarding coverability in $k$-NRCS as well as coverability in $k$-depth broadcast networks, which we shall now discuss.

\subsection{First Ingredient: Simple coverability for $k$-NRCS}
\label{subsec:simple-coverability}
The first ingredient that we shall need is the notion of simple coverability.
A simple coverability problem is just like the coverability problem, except that
the initial and final configurations are trees having one node labelled by states $q_{init}$
and $q_f$ respectively. If the coverability instance is true in this case, we simply say that $q_{init}$ can cover $q_f$.
We can immediately see that the simple coverability problem
is as hard as the coverability problem, by standard reductions.

Indeed, let $\net$ be any $k$-NRCS and let $C$ be any configuration of $\net$. 
Let $q_r$ be the label of the root of $C$. We can now construct two gadgets
$\net_{fwd}^C$ and $\net_{bwd}^C$ with the following properties:
\begin{itemize}
    \item Both $\net_{fwd}^C$ and $\net_{bwd}^C$ contain all the states of $\net$ (along with some more auxiliary states).
    \item In $\net_{fwd}^C$, starting from a configuration with a single node labelled by a state $q_{init} \notin \net$, we can reach a configuration $C'$ where the root is labelled by some state of $\net$ iff $C' = C$.
    \item In $\net_{bwd}^C$, starting from any configuration $C'$, we can reach a configuration with the root labelled by a state $q_f \notin \net$ iff $C'$ covers $C$.
\end{itemize}

Equipped with these two gadgets, it is now clear how to reduce coverability to simple coverability: Given an instance $(\net,C,C')$ of coverablity, we construct $\net_{fwd}^C,
\net_{bwd}^{C'}$ and then take our new $k$-NRCS to be $\net'$ which first executes
$\net_{fwd}^C$, then upon reaching a configuration where the root is labelled by some state in $\net$ switches to executing $\net$ and then
finally decides to non-deterministically switch to executing $\net_{bwd}^{C'}$ at some point.
We then take our initial and final configurations to be trees with one node
labelled by $q_{init}$ and $q_f$ respectively. By the above properties, it is clear 
that in $\net'$ the initial configuration can cover the final configuration
iff in $\net$ $C$ can cover $C'$.

All that is needed is to construct the above gadgets $\net_{fwd}^C$ and $\net_{bwd}^C$.
To this end, let $B_1,B_2,\cdots,B_n$ be the branches of $C$. Note that all of these branches share the root, whose state we shall take to be $q$.
Let $B_1',B_2',\cdots,B_n'$ be the same as these branches, except the label of the root
is now $(q',1),\dots,(q',n)$. In $\net_{fwd}^C$, we will have the following transitions,
which successively add the branches $B_1,B_2,\cdots,B_n$
\begin{align*}
(q_{init}) \xrightarrow{u} B'_1 \qquad  
(q',i) \xrightarrow{u} B'_{i+1} \text{ for every } i \qquad
(q',n) \xrightarrow{u} q'
\end{align*}

Similarly, we can construct $\net_{bwd}^C$ to have the following transitions,
which successively remove the branches $B_1,B_2,\cdots,B_n$
\begin{align*}
q' \xrightarrow{u} (q',n) \qquad
B'_{i+1} \xrightarrow{u} (q',i) \text{ for every } i \qquad
B'_1 \xrightarrow{u} q_f
\end{align*}
From the construction, it is clear that $\net_{fwd}^C$ and $\net_{bwd}^C$ satisfy their required properties.

\subsection{Second Ingredient: Lossy semantics of $k$-NRCS}\label{subsubsec:lossy-semantics}

The next ingredient that we shall require is a new semantics for $k$-NRCS
called the \emph{lossy semantics}. This is similar to the lossy semantics for $k$-NCS~\cite{DeckerT16, Balasubramanian21} and is explained below.

Let $\net = (Q,\delta_u,\delta_r)$. A \emph{lossy step} of $\net$ between two configurations
$C$ and $D$ is said to occur if there exists $C'$ such that $C \ge_{is} C'$ and $C' \act{} D$.
Intuitively, this means that we are first allowed to drop some subtrees from $C$ to reach a configuration $C'$ and then take an usual step from there to $D$.
Using this notion of a lossy step relation, we can now define runs, reachability and coverability
in the same manner as was done using the usual $\act{}$ step relation.

Because of the compatibility property of $k$-NRCS (Proposition~\ref{prop:compatibility}), it immediately follows that a configuration $C$ can cover a configuration $C'$ in the usual semantics of $\net$ iff $C$ can cover $C'$ in the lossy semantics of $\net$. This observation will be useful for our reduction.

\subsection{Third Ingredient: Known reduction between $k$-NCS and $k$-depth broadcast networks}

 The final ingredient that we shall need is a known reduction between coverability in $k$-\emph{NCS}, i.e., $k$-NRCS without any reset transitions and coverability in $k$-depth broadcast networks. To this end, let $\mathcal{N} = (Q,\delta_u)$ be a $k$-NCS, (note that it has no resets). In~\cite{Balasubramanian21}, it has been shown, how to start from $\mathcal{N}$
 and construct a $k$-depth broadcast protocol $\mathcal{P}$ that satisfies certain properties.
 We will now describe these properties, giving exactly enough detail that will be sufficient for our purposes.

 \paragraph{States and configurations} $\mathcal{P}$ will have states and configurations
 that satisfy the following properties:
\begin{itemize}
    \item For every state $q$ of $\mathcal{N}$, $\mathcal{P}$ has states
    of the form $(q,i)$ for every $0 \le i \le k$. Furthermore, it also has 
    states $(start,i)$ and $(finish,i)$ for every $0 \le i \le k$. Such states
    are called the good states of $\mathcal{P}$. Note that each one of these two states
    is a pair, where the first element (called the base) belongs to $Q \cup \{start,finish\}$ and the second element (called the grade) belongs to $\{0,1,\cdots,k\}$.
    ($\mathcal{P}$ will also have other auxiliary states, but they are not relevant for our discussion here).
    \item A good configuration of $\mathcal{P}$ is any configuration $C$ such that 1) $C$ is a tree of height at most $k$, 2)  $C$ is labelled only by good states, 3) the label of the root of $C$ has grade 0 and, 4) every node
    at distance $i$ from the root has a label of grade $i$.
    \item Every good configuration $C$ of $\mathcal{P}$ can be uniquely mapped
    to a configuration $\mathbb{E}(C)$ of $\mathcal{N}$ as follows: First, 
    remove all the grades in all the labels of $C$ and just keep the bases.
    Next, remove all the nodes from $C$ containing $start$ or $finish$ as their bases
    and from the resulting forest, pick the tree containing the root. 
    In this way, we get a unique configuration $\mathbb{E}(C)$ of $\mathcal{N}$.
\end{itemize}

Having described the states and configurations of $\mathcal{P}$ that are of interest to us,
we now move on to describing the transitions of $\mathcal{P}$, which are defined relative
to transitions of $\mathcal{N}$.

\paragraph{Transitions} Let $t = (p_0,\dots,p_i) \xdashrightarrow{u} (q_0,\dots,q_j)$ be a transition of $\mathcal{N}$ with $w = \max(i,j)$. For the sake of uniformity, if $i < j$, we let $p_\ell = start$ for every $i < \ell \le j$. Similarly, if $i > j$, we let $q_\ell = finish$ for every $j < \ell \le i$.
For every transition $t$ of $\mathcal{N}$, $\mathcal{P}$ has a subset of transitions $S_t$ over an alphabet
$\{begin^t_\ell, end^t_\ell : 0 \le \ell \le w\}$. Any transition in any subset $S_t$ will always ensure the following properties.
\begin{itemize}
    \item Any transition of $S_t$ will never change the grade of a node, i.e., if an agent was labelled
    with some grade $g$ before firing a transition, then its grade will remain $g$ after firing.
    \item No agent at a state with base $finish$ can ever broadcast a message.
\end{itemize}

For the transition $t$, we say that a run between two good configurations $\gamma$ and $\gamma'$ of $\mathcal{P}$ is $S_t$-good iff there exists a path $v_0,\dots,v_w$ from the root in $\gamma$ labelled by $(p_0,0),\cdots,(p_w,w)$ such that the run between $\gamma$ and $\gamma'$ is of the form  
$$\gamma \act{v_0, begin^t_0} \gamma_1 \act{v_1, begin^t_1} \cdots \gamma_w \act{v_w, begin^t_w} \beta$$
and 
$$\beta \act{v_w, end^t_w} \beta_w \act{v_{w-1},end^t_{w-1}} \beta_{w-1} \cdots \beta_1 \act{v_0, end^t_0} \gamma'$$

Having described the transitions of $\mathcal{P}$, we now move on to stating the following two 
\emph{simulation properties}, which were proven in~\cite{Balasubramanian21}.

\begin{itemize}
    \item Suppose $C \act{t} C'$ is a step in $\mathcal{N}$. Let $\gamma$ be any good configuration of $\mathcal{P}$ such that $\mathbb{E}(\gamma) = C$ and $\gamma$ has a path $v_0,\dots,v_w$ from the root in $\gamma$ labelled by $(p_0,0),\cdots,(p_w,w)$. Then, there is a $S_t$-good run from $\gamma$ to $\gamma'$ such that $\mathbb{E}(\gamma') = C'$ and the only difference between $\gamma$ and $\gamma'$
    is that the nodes $v_0,\dots,v_w$ are now labelled by $(q_0,0),\cdots,(q_w,w)$.
    \item Suppose $\gamma$ is a good configuration and $\gamma \act{*} \gamma'$ is a run such that
    $\gamma'$ is the first configuration after $\gamma$ along the run in which the root is labelled by 
    a state with base $q \in Q$. Then, $\gamma'$ is a good configuration and the run between $\gamma$
    and $\gamma'$ is a $S_t$-good run for some transition $t$. Further, we also have
    that there is a lossy step between $\mathbb{E}(\gamma)$ and $\mathbb{E}(\gamma')$ in $\mathcal{N}$.
\end{itemize}

\subsection{Our reduction from $k$-NRCS to $k$-depth broadcast networks}

Using all the three ingredients from above, we now present our reduction from coverability in $k$-NRCS to coverability in $k$-depth broadcast networks.
By the first two ingredients it suffices to reduce from the simple coverability problem for 
$k$-NRCS over the lossy semantics.

Let $\mathcal{N} = (Q,\delta_u,\delta_r)$ be a $k$-NRCS. We convert $\mathcal{N}$ into a $k$-NCS $\mathcal{N'}$ by taking every reset transition $t := (p_0,\dots,p_i) \xdashrightarrow{r, p} (q_0,\dots,q_i)$
and changing it into $t_{rename} := (p_0,\dots,p_i) \xdashrightarrow{u} (q_0,\dots,q_i)$.
We now use the reduction mentioned in the third ingredient to get
a protocol $\mathcal{P'}$ for $\mathcal{N'}$ that satisfies the simulation properties.

Recall that, for every transition $t'$ of $\mathcal{N'}$, $\mathcal{P'}$ has a subset of transitions $S_{t'}$ over the alphabet $\{begin^t_\ell, end^t_\ell : 0 \le \ell \le w\}$. We now modify $\mathcal{P'}$
as follows: For every transition of the form $t_{rename}:= (p_0,\dots,p_i) \xdashrightarrow{u} (q_0,\dots,q_i)$ (obtained form the reset transition $t := (p_0,\dots,p_i) \xdashrightarrow{r, p} (q_0,\dots,q_i)$), we consider the subset of transitions $S_{t_{rename}}$ and we modify it by adding the following reception transition to it $(p,i+1) \act{?end^{t_{rename}}_{i}} (finish,i+1)$.
We call this modified subset as $S_t$ and call this new protocol $\mathcal{P}$. 

Similar to the previous subsection, for any transition $t$ of $\mathcal{N}$, we can define
the notion of an $S_t$-good run between good configurations $\gamma$ and $\gamma'$ of $\mathcal{P}$.
More precisely, let $t$ be any transition of $\mathcal{N}$. 
If $t = (p_0,\dots,p_i) \xdashrightarrow{u} (q_0,\dots,q_j)$ is an update transition let $w = \max(i,j)$.
If $t = (p_0,\dots,p_i) \xdashrightarrow{r,p} (q_0,\dots,q_i)$ is a reset transition let $w = i$.
Recall that we use $p_{i+1},\dots,p_w$ to denote $start$ and $q_{j+1},\dots,q_w$ to denote $finish$.
With this in mind,  we say that a run between two good configurations $\gamma$ and $\gamma'$ of $\mathcal{P}$ is $S_t$-good iff there exists a path $v_0,\dots,v_w$ from the root in $\gamma$ labelled by $(p_0,0),\cdots,(p_w,w)$ such that the run between $\gamma$ and $\gamma'$ is of the form  
$$\gamma \act{v_0, begin^t_0} \gamma_1 \act{v_1, begin^t_1} \cdots \gamma_w \act{v_w, begin^t_w} \beta$$
and 
$$\beta \act{v_w, end^t_w} \beta_w \act{v_{w-1},end^t_{w-1}} \beta_{w-1} \cdots \beta_1 \act{v_0, end^t_0} \gamma'$$

We now prove various properties of $\mathcal{P}$.

\subsubsection{Completeness properties of $\mathcal{P}$}

Using the first simulation property of $\mathcal{P'}$ we can prove the following two \emph{completeness} properties for $\mathcal{P}$.
\begin{itemize}
    \item Suppose $C \act{t} C'$ is a step in $\mathcal{N}$ for some update transition $t = (p_0,\dots,p_i) \xdashrightarrow{u} (q_0,\dots,q_j)$ with $w = \max(i,j)$. 
    Let $\gamma$ be any good configuration of $\mathcal{P}$ such that $\mathbb{E}(\gamma) = C$ and $\gamma$ has a path $v_0,\dots,v_w$ from the root in $\gamma$ labelled by $(p_0,0),\cdots,(p_w,w)$. Then, there is an $S_t$-good run from $\gamma$ to $\gamma'$ such that $\mathbb{E}(\gamma') = C'$ and the only difference between $\gamma$ and $\gamma'$ is that the nodes $v_0,\dots,v_w$ are now labelled by $(q_0,0),\cdots,(q_w,w)$.
    \item Suppose $C \act{t} C'$ is a step in $\mathcal{N}$ for some reset transition $t = (p_0,\dots,p_i) \xdashrightarrow{r,p} (q_0,\dots,q_i)$. Let $\gamma$ be any good configuration of $\mathcal{P}$ such that $\mathbb{E}(\gamma) = C$ and $\gamma$ has a path $v_0,\dots,v_i$ from the root in $\gamma$ labelled by $(p_0,0),\cdots,(p_i,i)$. Then, there is an $S_{t}$-good run from $\gamma$ to $\gamma'$ such that $\mathbb{E}(\gamma') = C'$ and the only difference between $\gamma$ and $\gamma'$
    is that the nodes $v_0,\dots,v_i$ are now labelled by $(q_0,0),\cdots,(q_i,i)$ and all the children
    of $v_i$ that were labelled by $(p,i+1)$ are now labelled by $(finish,i+1)$.
\end{itemize}

The first one is immediate: If $C \act{t} C'$ is a step in $\mathcal{N}$ for some update transition $t$,
then $C \act{t} C'$ is also a step in $\mathcal{N'}$. By the first simulation property of $\mathcal{P'}$,
we get the desired $S_t$-good run between $\gamma$ and $\gamma'$. Since the subset $S_t$ is unchanged between $\mathcal{P'}$ and $\mathcal{P}$, it follows that the same run is also an $S_t$-good run in $\mathcal{P}$.

The second one is proven as follows: Suppose $C \act{t} C'$ is a step in $\mathcal{N}$ for some reset transition $t = (p_0,\dots,p_i) \xdashrightarrow{r,p} (q_0,\dots,q_i)$. Let $v_0,\dots,v_i$ be the path in $C$ where $t$ was applied.
Notice that $C \act{t_{rename}} C''$ is a step in $\mathcal{N'}$ where $C''$ is the same as $C$, except there could be children of $v_i$ labelled by $p$ in $C''$. By the first simulation property of $\mathcal{P'}$, we get an $S_{t_{rename}}$-good run between $\gamma$ and some configuration $\gamma'$
as follows:
$$\gamma \act{v_0, begin^t_0} \gamma_1 \act{v_1, begin^t_1} \cdots \gamma_i \act{v_i, begin^t_i} \beta$$
and 
$$\beta \act{v_i, end^t_i} \beta_i \act{v_{i-1},end^t_{i-1}} \beta_{i-1} \cdots \beta_1 \act{v_0, end^t_0} \beta_0 := \gamma'$$

If we fire the exact same steps from $\gamma$ in $\mathcal{P}$, then we get the following $S_t$-good run
$$\gamma \act{v_0, begin^t_0} \gamma_1 \act{v_1, begin^t_1} \cdots \gamma_i \act{v_i, begin^t_i} \beta$$
and 
$$\beta \act{v_i, end^t_i} \beta_i' \act{v_{i-1},end^t_{i-1}} \beta_{i-1}' \cdots \beta_1' \act{v_0, end^t_0} \beta_0' := \gamma''$$

where each $\beta_\ell'$ is the same as $\beta_\ell$ except that all the children of $v_i$ that were labelled by $(p,i+1)$ in $\beta_\ell$  are now labelled by $(finish,i+1)$. It follows then that
we get a $S_t$-good run between $\gamma$ and $\gamma'$ where $\mathbb{E}(\gamma') = C'$, which completes the proof of the second case.

\subsubsection{Soundness properties of $\mathcal{P}$}

By a similar argument as the previous part, using the second simulation property of $\mathcal{P'}$, we can prove the following \emph{soundness} property for $\mathcal{P}$.
\begin{quote}
     Suppose $\gamma$ is a good configuration and $\gamma \act{*} \gamma'$ is a run such that
    $\gamma'$ is the first configuration after $\gamma$ along the run in which the root is labelled by 
    a state with base $q \in Q$. Then, $\gamma'$ is a good configuration and the run between $\gamma$
    and $\gamma'$ is a $S_{t}$-good run for some transition $t$ of $\mathcal{N}$. Furthermore, 
    there is a lossy step between $\mathbb{E}(\gamma)$ and $\mathbb{E}(\gamma')$ in $\mathcal{N}$ using $t$.
\end{quote}

Indeed, let $\gamma \act{v_0, a_0} \gamma_1 \act{v_1, a_1} \dots \act{v_n, a_n} \gamma_n := \gamma'$ be such a run in $\mathcal{P}$. Compared to $\mathcal{P'}$, in $\mathcal{P}$ we have only added
reception transitions, which could lead an agent to a state with base $finish$, from which no broadcasts
can ever be made. Hence, it must also be possible to get a run
$\gamma \act{v_0, a_0} \gamma_1' \act{v_1, a_1} \dots \act{v_n, a_n} \gamma_n' := \gamma''$
in $\mathcal{P'}$. By the second simulation property of $\mathcal{P'}$, this is a $S_{t'}$-good run
for some transition $t'$ of $\mathcal{N'}$ such that there is a lossy step between $\mathbb{E}(\gamma)$ and $\mathbb{E}(\gamma'')$ in $\mathcal{N'}$ using $t'$. We consider two cases:
\begin{itemize}
    \item Suppose $t' = t$ for some $t \in \mathcal{N}$. Hence, we have changed nothing in the sets
    $S_{t'}$ between $\mathcal{P'}$ and $\mathcal{P}$. Therefore, each $\gamma_\ell' = \gamma_\ell$
    and so the original run was also an $S_t$-good run of $\mathcal{P}$. Furthermore, this also implies that there is a lossy step between $\mathbb{E}(\gamma)$ and $\mathbb{E}(\gamma')$ in $\mathcal{N}$ using $t$.
    \item Suppose $t' = t_{rename}$ for some reset transition $$t = (p_0,\dots,p_i) \xdashrightarrow{r, p} (q_0,\dots,q_i) \in \mathcal{N}$$ Notice then that the only difference between $\gamma_\ell'$ and $\gamma_\ell$ might be that some states that were labelled by $(p,i+1)$ in $\gamma'_\ell$ are now labelled by $(finish,i+1)$. Hence, this implies that the original run was an $S_t$-good run of $\mathcal{P}$. Furthermore, this also implies that there is a lossy step between $\mathbb{E}(\gamma)$ and $\mathbb{E}(\gamma')$ in $\mathcal{N}$ using $t$.
\end{itemize}

This completes the proof of the soundness property.

\subsubsection{Wrapping up}

Using these two properties, the following immediately follows~\cite[Theorem 12]{Balasubramanian21}.
\begin{proposition}\label{prop:iff-runs}
    A state $q_{init}$ can cover a state $q_f$ in $\mathcal{N}$ (in the sense of simple coverability) iff the state $(q_f,0)$
    can be covered in the broadcast network $\mathcal{P}$ from some good initial configuration, i.e., a good configuration where the root is labelled by $(q_{init},0)$ and all children at level $i$ are labelled by $(start,i)$.
\end{proposition}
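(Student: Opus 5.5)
The statement is the equivalence in Proposition~\ref{prop:iff-runs}. I would prove the two directions separately, using the completeness properties of $\mathcal{P}$ for the forward direction and the soundness property for the converse. By the second ingredient (Subsection~\ref{subsubsec:lossy-semantics}), simple coverability in $\mathcal{N}$ coincides with simple coverability under the lossy semantics. So on the converse side it suffices to produce a lossy run of $\mathcal{N}$ from $q_{init}$ to a configuration whose root is labelled $q_f$.

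\emph{($\Rightarrow$).} Fix a run $q_{init} = C_0 \act{t_1} C_1 \cdots \act{t_m} C_m$ of $\mathcal{N}$ with $C_m$ rooted at $q_f$. Every node ever created along this run is created by an incrementing transition. So I would choose the initial good configuration $\gamma_0$ to contain, below the root $(q_{init},0)$, a reservoir of spare nodes labelled $(start,i)$. This reservoir is a tree of height $k$ in which every branching is large enough to supply, along every path that some incrementing transition needs, a fresh path of $start$ nodes. Since the run is finite, such a bounded tree exists, and it can be chosen in advance by induction on the run.

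Then $\mathbb{E}(\gamma_0) = C_0$. Inductively I maintain a good $\gamma_s$ with $\mathbb{E}(\gamma_s) = C_s$ that still contains enough unused $start$ paths. For each step $t_{s+1}$, the required labelled path $v_0,\dots,v_w$ exists in $\gamma_s$: the first $i$ nodes are the ones used in $C_s$, and the remaining nodes come from the $start$ reservoir. The appropriate completeness property (update or reset) then yields an $S_{t_{s+1}}$-good run to some $\gamma_{s+1}$ with $\mathbb{E}(\gamma_{s+1}) = C_{s+1}$. For resets, the children labelled $(p,i+1)$ turn into $finish$ nodes, and $\mathbb{E}$ discards these. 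The final root is labelled $(q_f,0)$, so that state is covered.

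\emph{($\Leftarrow$).} Take a run of $\mathcal{P}$ from a good initial configuration $\gamma_0$ covering $(q_f,0)$. Split it at the successive configurations in which the root carries a base from $Q$. Since $\gamma_0$ is good, applying the soundness property repeatedly shows that each such configuration $\gamma_s$ is good. It also shows that each consecutive segment is $S_t$-good and induces a lossy step $\mathbb{E}(\gamma_s) \to \mathbb{E}(\gamma_{s+1})$ in $\mathcal{N}$.

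Here $\mathbb{E}(\gamma_0)$ is the single node $q_{init}$, because all other nodes of $\gamma_0$ are $start$ nodes. So we obtain a lossy run of $\mathcal{N}$ from $q_{init}$. It remains to argue that the configuration covering $(q_f,0)$ can be taken to be one of the $\gamma_s$. A state of grade $0$ only occurs at the root, and the root changes base only through the $S_t$ protocols. I would check from the described shape of $S_t$-good runs (the root is touched only by $begin_0$ and $end_0$) that the root carries a $Q$-base exactly at the segment boundaries, modulo intermediate auxiliary states. Hence the root of some $\gamma_s$ is $(q_f,0)$, so the lossy run reaches $q_f$ at the root, and by the second ingredient $q_{init}$ covers $q_f$.

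\textbf{Main obstacle.} The delicate step is the bookkeeping in ($\Rightarrow$): constructing a single initial topology with enough $start$ nodes so that every incrementing transition finds a fresh correctly graded path. I would handle it by a backward analysis of the finite run, which gives a bound $N$ on the number of nodes ever created under any node. I would then take the complete $N$-ary $start$ tree of height $k$ below every node that could be used. Since $\mathbb{E}$ ignores $start$ nodes, these extra nodes are harmless.
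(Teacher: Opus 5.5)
Your proposal is correct and is essentially the paper's proof. For ($\Rightarrow$) it uses a good initial configuration in which every node has $N$ $start$-children (for a run of length $N$) together with the completeness properties; for ($\Leftarrow$) it splits the run at configurations whose root has a base in $Q$ and applies the soundness property and the equivalence of lossy and ordinary coverability. In fact you pair these properties with the directions correctly, whereas the paper's text swaps the words \emph{completeness} and \emph{soundness}.

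Your extra check that the root has a $Q$-base only at segment boundaries is unnecessary. The claim it rests on, that the root is touched only by $begin^t_0$ and $end^t_0$, is also doubtful, since the root receives its child's broadcasts too. It suffices that a configuration with root $(q_f,0)$ is a split point by definition, and grade preservation already ensures that only the root can carry $(q_f,0)$.
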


\begin{proof}
    The right-to-left implication follows by taking a run in $\mathcal{P}$ covering $(q_f,0)$, splitting this run into parts where the root has a label with base $q \in Q$
    and then applying the completeness properties of $\mathcal{P}$ to each of those parts. 
    (Recall that lossy coverability is the same as coverability).

    The left-to-right implication is true by the following argument: Take a run
    $C_0 \act{t_1} C_1 \act{t_2} \dots \act{t_N} C_N$
    covering $q_f$ from $q_{init}$ in $\mathcal{N}$. 
    Take a good initial configuration $\gamma_0$ such that each node has $N$ children. By an induction on $i$ from 0 to $N$ and using the soundness property of $\mathcal{P}$, we can construct
    a sequence of configurations $\gamma_0,\gamma_1,\dots,\gamma_N$ satisfying the following properties:
    \begin{itemize}
        \item $\mathbb{E}(\gamma_i) = C_i$
        \item For each maximal branch $B = (p_0,\dots,p_j)$ of $C_i$, there
        is a unique branch $((p_0,0),\dots,(p_j,j))$ of $\gamma_i$ and from
        the node labelled with $(p_j,j)$, there are at least $(N-\ell)$ different paths
        labelled by $(start,j+1),\dots,(start,k)$.
        \item Each $\gamma_i \act{*} \gamma_{i+1}$.
    \end{itemize}
    This concludes the proof of the left-to-right implication and therefore of the proposition as well.
\end{proof}

Hence, we have given the desired reduction from simple coverability of $k$-NRCS over the lossy semantics to coverability of $k$-depth broadcast networks. This completes the proof of the desird lower bound for $k$-depth broadcast networks.

\section{Proofs for Subsection~\ref{subsec:freeze-LTL}}
\label{sec:appendix-freeze-LTL}
\freezeLTL*

In order to prove this theorem, we first set up the formal syntax and semantics
of freeze LTL with $k$-ordered attributes. 

\subsection{Syntax and semantics of freeze LTL}
\label{subsec:appendix-syntax-freeze-LTL}

Fix some $k \in \mathbb{N}$, a finite set of atomic propositions $\Sigma$ 
and an infinite set $\mathbb{D}$. A $k$-attributed data letter (or simply data letter) is a tuple $(A,d_1,\dots,d_k)$ where $A \subseteq \Sigma$ and each $d_i \in \mathbb{D}$. A data word is a sequence of data letters. 

The syntax of freeze LTL with $k$-ordered attributes is all formulas generated according to the following grammar
$$\varphi ::= a  \ | \ \lnot \varphi \ | \ \varphi \land \varphi \ | \ X \ \varphi \ | \ \varphi \ U \ \varphi \ | \ G \varphi \ | \ F \varphi \ | \ \downarrow^i \varphi \ | \ \uparrow^i $$
\newcommand{\data}{\mathbf{d}}
where $a \in \Sigma$ and $1 \le i \le k$. The semantics of such formulas is defined 
in terms of data words as follows. 

We say that $\mathbf{d}$ is a \emph{partial data valuation} if $\mathbf{d}$ is a function from $\{1,\dots,k\}$ to $\mathbb{D} \cup \{\bot\}$
where $\bot$ is a special element not in $\mathbb{D}$. Intuitively a partial data valuation is like a $k$-tuple of values from $\mathbb{D}$, except that some of the values
can now be undefined ($\bot$). A complete data valuation
is a partial data valuation which does not use $\bot$ in its range. Given any complete
data valuation $\data$ and any $i \le k$, we can get the partial data valuation
$\data_{|i}$ as the mapping such that $\data_{|i}(j) = \data(j)$ if $j \le i$ 
and $\bot$ otherwise.

Now, let $w = (A_1,\mathbf{d_1}), (A_2,\mathbf{d_2}), \dots, (A_n,\mathbf{d_n})$ be a data word where each $\mathbf{d_j} = (d_j^1,\dots,d_j^k)$. Let $1 \le i \le n$ and $\mathbf{d}$ be a partial data valuation. We define the semantics of $(w,i,\mathbf{d})$ satisfying a formula $\varphi$ as follows

\begin{align*}
    (w,i,\data) & \models a   && \iff \qquad   a \in A_i\\ 
    (w,i,\data) & \models \lnot \varphi && \iff \qquad   (w,i,\data) \not\models \varphi\\ 
    (w,i,\data) & \models \varphi_1 \land \varphi_2 && \iff \qquad   (w,i,\data) \models \varphi_1 \text{ and } (w,i,\data) \models \varphi_2\\
    (w,i,\data) & \models X \ \varphi && \iff \qquad   (w,i+1,\data) \models \varphi\\
    (w,i,\data) & \models \varphi_1 \ U \ \varphi_2 && \iff \qquad   \exists  j \in [i,n] \text{ s.t. } (w,j,\data) \models \varphi_2 \text{ and } \\ &&& \qquad \qquad \qquad \forall \ell \in [i,j-1], (w,\ell,\data) \models \varphi_1\\
    (w,i,\data) & \models G \ \varphi && \iff \qquad  \forall \ell \in [i,n],  (w,\ell,\data) \models \varphi\\
    (w,i,\data) & \models F \ \varphi && \iff \qquad  \exists  \ell \in [i,n],  (w,\ell,\data) \models \varphi\\
    (w,i,\data) & \models \downarrow^x \varphi && \iff \qquad (w,i,\data_{i|x}) \models \varphi\\
    (w,i,\data) & \models \uparrow^x && \iff \qquad \data_{|x} = \data_{i|x}\\
\end{align*}

Given a freeze LTL formula with $k$-ordered attibutes, we say that it is satisfiable if there is some data
word $w$ such that $(w,0,\mathbf{\bot})$ satisfies it. Here $\mathbf{\bot}$ is the
partial data valuation that maps everything in $\{1,\dots,k\}$ to $\bot$.
The satisfiability problem is to decide if a given freeze LTL formula is satisfiable.

\subsection{Upper Bound for Satisfiability of Freeze LTL}
\label{subsec:appendix-upper-bound-freeze-LTL}

We now show an $\fastf{\Omega_k}$ upper bound for the satisfiability problem. To this end, we use a result of~\cite[Theorem 12]{DeckerT16} which showed that satisfiability for any formula in freeze LTL with $k$-ordered attributes can be converted to coverability for a $(k+1)$-NCS, i.e., a $(k+1)$-NRCS without any reset transitions.
The particular $(k+1)$-NCS that they construct has the property that
any configuration $C$ reachable from the initial configuration $C_{init}$ has 
at most 2 children of the root~\cite[Sections C.2 and C.3]{DeckerT15-fullversion}.  We now show that any such $(k+1)$-NCS can be simulated by a $k$-NRCS as follows. The intuition is that in the $k$-NRCS that we construct, by means of resets, the root will simply simulate the 2 children without explicitly keeping them. This will allow us to save a level, thereby giving us a $k$-NRCS. We now explain this in detail.

Let $\mathcal{N} = (Q,\delta_u)$ be a $(k+1)$-NCS satisfying the property that in 
any configuration $C$ reachable from a configuration $C_{init}$, there are at most 2 children of the root. We will now construct a $k$-NRCS $\mathcal{N'} = (Q',\delta_u',\delta_r')$ 
as follows: Let $\bot, \checkmark_1, \checkmark_2$ be elements not in $Q$ and let $Q_\bot = Q \cup \{\bot\}, M = \{\checkmark_1,\checkmark_2\}$.
We set $Q' = (Q \times Q_\bot \times Q_\bot) \cup (Q \times M) \cup Q$.  
Before we define the transitions $\delta'$ of $\mathcal{N'}$, we set up some notation.

\subsubsection{Good configurations}

A configuration $C'$ of $\mathcal{N'}$ is called good if the following conditions are satisfied:
\begin{itemize}
    \item The label of the root is of the form $(q,a,b) \in Q \times Q_\bot \times Q_\bot$.
    Intuitively, the $a,b$ keep track of the role of the 2 children of the root in $\mathcal{N}$.
    \item The labels of the children of the root are of the form $(q,m) \in Q \times M$. Intuitively, if $m = \checkmark_i$, then this node would have been a child
    of the $i^{th}$ original child of the root in $\mathcal{N}$.
    \item The labels of all of the other nodes belong to $Q$.
    \item Suppose the label of the root is $(q,a,b)$
    \begin{itemize}
        \item If $a = b = \bot$ then the root has no children.
        \item If $a \neq \bot, b = \bot$, then the label of each child of the root is of the form $(q',\checkmark_1)$ for some $q' \in Q$. Similarly if $a = \bot, b \neq \bot$,
        then the label of each child of the root is of the form $(q',\checkmark_2)$.
    \end{itemize}
\end{itemize}

For each good configuration $C'$ of $\mathcal{N'}$,
we can assign a configuration $C := \mathbb{E}(C')$ of $\mathcal{N}$ as follows.
Let $(q,a,b)$ be the label of the root $r$.
\begin{itemize}
    \item If $a = b = \bot$, then $C = C'$, except the label of $r$ is now just $q$.
    \item Suppose exactly one of $a, b$ is $\bot$ (Without loss of generality, we can assume $b$ is $\bot$. The construction is the same for the other case, except $\checkmark_1$ below has to be replaced by $\checkmark_2$). Let $v_1,\dots,v_n$ be the children of the root $r$
    with labels $(q_1,\checkmark_1),\dots,(q_n,\checkmark_1)$. 
    Then we construct $C$ as follows:
    We first add a new child $v$ to $r$ and replace every edge $(r,v_i)$
    with $(v,v_i)$. 
    Then, we label the root as $q$, label the new child $v$ as $a$,
    label each node $v_i$ as $q_i$ and
    keep the other nodes as it is.
    \item Suppose $a, b \in Q$.
    Let $u_1,\dots,u_m$ be the children of the root $r$ with labels
    $(p_1,\checkmark_1),\dots,(p_m,\checkmark_1)$ and $v_1,\dots,v_n$ be the children of the root $r$
    with labels $(q_1,\checkmark_2),\dots,(q_n,\checkmark_2)$. Then we construct $C$ as follows:
    We first add two new children $u,v$ to $r$ and replace edges $(r,u_i)$
    with $(u,u_i)$ and edges $(r,v_i)$ with $(v,v_i)$ respectively. 
    Then, we label the root as $q$, label the new children $u,v$ as $a,b$ respectively,
    label each node $u_i$ as $p_i$, label each node $v_i$ as $q_i$ and
    keep the other nodes as it is.
\end{itemize}

\subsubsection{Simulation properties}

We will now specify the transitions of $\mathcal{N'}$. Before that,
we first define a new type of transition called \emph{generalized reset transition}.
A generalized reset transition is a transition of the form
$$t := (p_0,\dots,p_i) \xdashrightarrow{gr, S} (q_0,\dots,q_i)$$
where $S$ is any subset of states. The semantics of such a transition $t$ is given as follows:
We say that $C \act{t} C'$ if there is a path $v_0,\dots,v_i$ in $C$ starting at the root
such that for each $0 \le \ell \le i$, $v_\ell$ is labelled by $p_\ell$ and $C'$ is obtained
from $C$ by 1) for every $0 \le \ell \le i$, changing the label of $v_\ell$ to $q_\ell$
and 2) removing every subtree rooted at a child $v$ of $v_i$ such that $v$ has a label from $S$.
Intuitively, a generalized reset transition allows us to perform a reset on a subset $S$ of labels simultaneously. 

From the semantics, it is clear that any generalized reset transition can be simulated by a series of reset transitions; for instance if $t = (p_0,\dots,p_i) \xdashrightarrow{gr, S} (q_0,\dots,q_i)$ with $S = \{s_1,\dots,s_m\}$, then this can be replaced by 
$m$ transitions $t_1,\dots,t_m$ with 
$$t_1 = (p_0,\dots,p_i) \xdashrightarrow{r, s_1} (t^1,\dots,t^1)$$
$$t_{j+1} = (t^j,\dots,t^{j}) \xdashrightarrow{r, s_{j+1}} (t^{j+1},\dots,t^{j+1})$$
for every $1 \le j < m-1$ and then finally 
$$t_m = (t^{m-1},\dots,t^{m-1}) \xdashrightarrow{r, s_{m}} (q_0,\dots,q_i)$$

For the sake of conciseness, in our definition of $\mathcal{N'}$, we will use generalized reset transitions. For each transition $t$ of $\mathcal{N}$, we will construct a subset of transitions $S_t$ of $\mathcal{N'}$ such that the following simulation properties hold:
\begin{itemize}
    \item Suppose $C \act{t} D$ is a step in $\mathcal{N}$. Then 
    for any $C'$ such that $\mathbb{E}(C') = C$, there exists $D'$
    such that $\mathbb{E}(D') = D$ and $C' \act{t'} D'$ for some $t' \in S_t$
    \item Suppose $C' \act{t'} D'$ is a step in $\mathcal{N'}$ 
    for some $t' \in S_t$ such that $C'$
    is a good configuration. Then $D'$ is a good configuration and
    $\mathbb{E}(C') \act{t} \mathbb{E}(D')$ is a step in $\mathcal{N}$.
\end{itemize}

These two properties ensure that a configuration $C_f$ can be covered in $\mathcal{N}$ from $C_{init}$ iff 
some configuration $C'_f$ such that $\mathbb{E}(C'_f) = C_{f}$ can be covered from some configuration $C'_{init}$ such that $\mathbb{E}(C'_{init}) = C_{init}$ in $\mathcal{N'}$.
For any configuration $C$, note that there are only a fixed number of configurations
$C'$ such that $\mathbb{E}(C') = C$. Hence, these two properties would then show that coverability in $\mathcal{N}$ can be solved by a constant number of coverability queries in $\mathcal{N'}$, which we know is in $\fastf{\Omega_k}$, thereby giving the required bound
Hence, it only suffices to construct transitions $\delta'$ that satisfy the above simulation properties.

\subsubsection{Transitions of $\mathcal{N'}$}

Let $t = (p_0,p_1,\dots,p_i) \xdashrightarrow{u} (q_0,q_1,\dots,q_j)$ be a transition of the $(k+1)$-NRCS $\mathcal{N}$. Without loss of generality we can assume that if $j = 0$ then $i = 1$. Indeed, if $j = 0$ and $i > 1$ then $t$ can be replaced by $t' = (p_0,p_1,\dots,p_i) \xdashrightarrow{u} (p_0',p_1',p_2',\dots,p_i')$ and $t'' = (p_0',p_1') \xdashrightarrow{u} (p_0')$. 

Recall the intuition that, in $\mathcal{N'}$, the root will also simulate the role of its children (which in this case corresponds to $p_1, q_1$).
Corresponding to this intuition, $\mathcal{N'}$ will have the following transitions $S_t$, which we divide into multiple cases.

\paragraph{Case 1: $i = 0$} In this case, we would have the following transitions $S_t$:
\begin{equation}
    ((p_0,a,\bot)) \xdashrightarrow{u} ((q_0,a,q_1), (q_2,\checkmark_2), q_3, \dots, q_j)
\end{equation}
\begin{equation}
    ((p_0,\bot,b)) \xdashrightarrow{u} ((q_0,q_1,b), (q_2,\checkmark_1), q_3, \dots, q_j)
\end{equation}
Further if $j = 0$ we would also have
\begin{equation}
    ((p_0,a,b)) \xdashrightarrow{u} ((q_0,a,b))
\end{equation}
It can be easily seen that $S_t$ satisfies the required simulation properties in this case.

\paragraph{Case 2: $i \ge 1, j \ge 1$} In this case, we would have the following transitions $S_t$:
\begin{equation}
    ((p_0,p_1,b),(p_2,\checkmark_1),p_3,\dots,p_i) \xdashrightarrow{u} ((q_0,q_1,b), (q_2,\checkmark_1), q_3, \dots, q_j)
\end{equation}
\begin{equation}
    ((p_0,a,p_1),(p_2,\checkmark_2),p_3,\dots,p_i) \xdashrightarrow{u} ((q_0,b,q_1), (q_2,\checkmark_2), q_3, \dots, q_j)
\end{equation}

It can be easily seen that $S_t$ satisfies the required simulation properties in this case.

\paragraph{Case 3: $i \ge 1, j = 0$} In this case, as remarked before, we can assume that $i = 1$. In this case, we would have the following transitions 
$S_t$:
\begin{equation}
    ((p_0,p_1,b),(p_2,\checkmark_1)) \xdashrightarrow{gr, (Q \times \{\checkmark_1\})} ((q_0,\bot,b))
\end{equation}
\begin{equation}
    ((p_0,a,p_1),(p_2,\checkmark_2)) \xdashrightarrow{gr, (Q \times \{\checkmark_2\})} ((q_0,a,\bot))
\end{equation}

It can be easily seen that $S_t$ satisfies the required simulation properties in this case.

Hence, this completes the construction of the transitions of $\mathcal{N'}$ satisfying the simulation properties. As discussed before, this is sufficient to reduce coverability in $\mathcal{N}$ to a constant number of instances of coverability in $\mathcal{N'}$.
This shows the required upper bound.

\subsection{Lower Bound for Satisfiability of Freeze LTL}
\label{subsec:appendix-lower-bound-freeze-LTL}

We now show an $\fastf{\Omega_k}$ lower bound for the satisfiability problem.
To this end, we use a result of~\cite[Theorem 13]{DeckerT16} which gives a reduction
from coverability in $k$-NCS to satisfiability of freeze LTL with $k$-ordered attributes.
This result uses an encoding of runs of a $k$-NCS by means of data words.
We recall that enocding here and show that it can be adapted to also handle
runs of $k$-NRCS.

Let $\mathcal{N} = (Q,\delta_u,\delta_r)$ be a $k$-NRCS. As stated in Subsubsection~\ref{subsubsec:lossy-semantics}, coverability over the lossy semantics in $\mathcal{N}$ is the same as coverability over the usual semantics in $\mathcal{N}$ and so it suffices to only concentrate on lossy runs of $\mathcal{N}$. 
Note that any lossy run of $\mathcal{N}$ can be written down as a sequence of configurations $C_0, C_0', C_1, C_1', \dots, C_n, C_n'$ where each $C_i' \le_{is} C_i$ and each $C_i' \act{t_i} C_{i+1}$ for some transition $t_i \in \delta_u \cup \delta_r$.
We now build an encoding of such a run as a data word in a step-by-step manner as follows.

\subsubsection{Encoding a single configuration.} We say that a data word $w$ is an encoding of a configuration $C$ if it satisfies the following conditions:
\begin{itemize}
    \item $w$ contains exactly $2b$ many letters where $b$ is the number of maximal branches of $C$.
    \item Every odd letter of $w$ contains the atomic proposition $odd$
    and every even letter of $w$ contains the atomic proposition $even$.
    \item For every maximal branch $B = (q_0,\dots,q_i)$ of $C$ there are exactly two letters of $w$ given by
    $odd_B = (X,d_1,\dots,d_k)$ and $even_B = (Y,d_1',\dots,d_k')$, where $odd_B$ appears in an odd position and $even_B$ appears next to $odd_B$.
    \item The sets $X$ and $Y$ in $odd_B$ and $even_B$ contain the atomic propositions $(q_0,0),(q_1,1),\dots,(q_i,i),(-,i+1),\dots,(-,k)$. Furthermore, no other atomic
    propositions of the form $(q,j)$ are present for any $0 \le j \le k$.
    \item The data values of $odd_B$ and $even_B$ are mutually exclusive, i.e., no $d_j$
    is equal to any $d_\ell'$.
    \item For any two maximal branches $B, B'$ and any $1 \le j \le k$, 
    the $j^{th}$ data value
    of $odd_B$ and $odd_{B'}$ are the same iff they share a common prefix
    of length $j$ from the root. Similarly, for $even_B$ and $even_{B'}$.
\end{itemize}

Now, note that these conditions stay the same irrespective of whether $\mathcal{N}$ contains reset transitions or not. Therefore, if we let $\mathcal{N}_{no-r}$ be the same as $\mathcal{N}$
except it has no reset transitions, then a data word encodes a configuration of $\mathcal{N}_{no-r}$
iff it encodes the same configuration of $\mathcal{N}$. Since we know that
there is a formula of freeze LTL which is satisfied by exactly words encoding configurations~\cite[Section D.1]{DeckerT15-fullversion}, it follows that
\begin{proposition}
    There is a formula $\Phi_{Conf}$ of freeze LTL with $k$-ordered attributes such that
    $\Phi_{Conf}$ is satisfied by a word $w$ iff it is an encoding of some configuration
    of $\mathcal{N}$.
\end{proposition}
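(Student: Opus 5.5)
I will write $\Phi_{Conf}$ as a conjunction of freeze LTL formulas, one for each of the six conditions in the definition of an encoding. The argument only uses the observation already made: the conditions mention labels and data values but never the transitions. Hence the formula of~\cite[Section D.1]{DeckerT15-fullversion} for $\mathcal{N}_{no-r}$ can be used essentially verbatim. Still, I would rebuild it so the proof is self-contained and so the atomic propositions $(q,j)$ and $(-,j)$ match the state set of $\mathcal{N}$.

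The first conjuncts are purely propositional LTL. $odd$ holds at the first position. The letters alternate between $odd$ and $even$ ($G(odd \to X\,even)$, $G(even \to (X\,odd \lor \lnot X\,\top))$). The word ends on an $even$ letter. At every position exactly one of $(q,j)$ for $q\in Q\cup\{-\}$ holds for each level $j$, $(q,0)$ holds with $q\neq -$, and once $(-,j)$ holds so does $(-,j')$ for all $j'>j$. An $odd$ letter and its successor $even$ letter carry the same propositions; this is expressible as $G(odd \to \bigwedge_{p}(p \leftrightarrow X p))$ over the finitely many propositions $p$. Together these give the first four conditions: the word has even length $2b$, and each odd/even pair describes a well-formed labelled branch.

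The data conditions are written with freeze quantifiers. Mutual exclusivity is $G(odd \to \bigwedge_{j}\downarrow^j \lnot X\,\uparrow^j)$; ordered attributes make equality of the first $j$ values the relevant test, and distinct first values already give disjointness. The prefix condition needs two directions. If two $odd$ letters agree on the first $j$ data values, they agree on the labels $(q,0),\dots,(q_{j-1},j-1)$ and $(q_j,j)$ and neither has $(-,j)$ there: $G(odd \to \downarrow^j\, G(odd \land \uparrow^j \to \psi_j))$, where $\psi_j$ compares labels by a finite disjunction over the remembered label at the outer position. Conversely, branches that agree on the label prefix must share the data prefix, except where the tree genuinely branches. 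Here I would follow Decker--Thoma and identify nodes with data prefixes, so the tree is \emph{defined} by the data and the "iff" reduces to consistency of labels for equal prefixes. Distinct branches must be distinct maximal paths: $G(odd \to \downarrow^k G(X\,odd\ldots))$ forbids two $odd$ letters with identical full data prefix up to their depth. The same formulas apply to $even$ letters.

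The main obstacle is the converse of the prefix condition. Encoding "same prefix of length $j$ from the root" only makes sense if the node at level $j$ is identified with its data prefix, and leaves at depth $i<k$ must be treated via the $(-,\cdot)$ padding. I would handle this by proving a small lemma: for any word satisfying the conjuncts, the map sending a data prefix to its label is a well-defined labelled tree whose maximal branches correspond bijectively to the $odd$ letters, and likewise for the $even$ letters. Every satisfying word therefore encodes a configuration. For the other direction, any configuration yields such a word by choosing fresh data values per node, separately for $odd$ and $even$ copies.
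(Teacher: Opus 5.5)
Your first paragraph is exactly the paper's proof, and the paper adds nothing further. The defining conditions of an encoding never mention transitions, so a word encodes a configuration of $\mathcal{N}$ iff it encodes the same configuration of $\mathcal{N}_{no-r}$, and the Decker--Thoma formula (their Section D.1) can be reused.

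The self-contained reconstruction you sketch afterwards, however, does not define the set of encodings. The main gap is that the last condition requires the $even$ letters to reflect the branching of the \emph{same} configuration $C$ as the $odd$ letters. You constrain the two copies separately and tie them together only through the labels of adjacent letters. A counterexample:
\begin{itemize}
\item Take $k=2$ and four letters whose two odd/even pairs carry the labels $(q_0,0),(a,1),(b,2)$ and $(q_0,0),(a,1),(c,2)$.
\item Let the two $odd$ letters share their first data value, let the two $even$ letters have different first data values, and make all other values fresh.
\item Every conjunct you list holds.
\item Yet the odd copy describes a root with one $a$-child having children $b$ and $c$, while the even copy describes a root with two $a$-children having children $b$ and $c$ respectively. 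So the word encodes no configuration.
\end{itemize}
The missing requirement is that odd positions $i<i'$ agree on the first $j$ values iff $i+1$ and $i'+1$ do. This compares data at four positions, and with the single register of freeze LTL there is no evident direct formula for it.

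A possible repair is an ordering convention. In each copy, force branches through a common node to be listed contiguously, e.g.\ $G(odd\to\downarrow^j G((odd\land\lnot\uparrow^j)\to G(odd\to\lnot\uparrow^j)))$. Then compare only consecutive branches via $G(odd\to(\downarrow^j XX\uparrow^j\leftrightarrow X\downarrow^j XX\uparrow^j))$. But this shrinks the set of words that count as encodings, so it would have to be written into the definition on which the later step formulas rely. This is precisely the detail the paper avoids by citing Decker--Thoma.

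There are also smaller slips in the reconstruction.
\begin{itemize}
\item $G$ includes the current position, where $\uparrow^j$ holds trivially. So $G(odd\to\downarrow^j G(odd\land\uparrow^j\to\psi_j))$, with ``neither has $(-,j)$'' inside $\psi_j$, forces every branch to have depth $k$. Use $XG$ instead. Since the logic is future-only, also state each pairwise check so that it applies whichever of the two letters comes first. In particular this matters for maximality: two distinct letters agreeing on the first $j$ values, $0\le j<k$, must both avoid $(-,j+1)$.
\item Letters with different first data values are never forced to agree on the root label; add $\bigvee_{q\in Q}G\,(q,0)$.
\item For the converse direction you must show that every encoding satisfies each conjunct. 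Showing that every configuration admits some satisfying word is not enough.
\end{itemize}
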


\subsubsection{Encoding a lossy step} Using the above encoding, we will now encode a lossy step of the machine $\mathcal{N}$, i.e., we will encode a tuple $(C,t,C')$
which satisfies the conditions that there exists $C'' \le_{is} C'$ such that $C'' \act{t} C$. 
(Note that $C$ is the configuration we move \textbf{to} after firing t).

For any transition $t \in \delta_u$ (i.e., a non-reset transition of $\mathcal{N}$), we say that a word $w$ encodes a tuple $(C,t,C')$ if it satisfies the following conditions:
\begin{itemize}
    \item $w$ contains at least $3$ occurrences of a letter of the form $(\{\$\},d_1,\dots,d_k)$. Such letters are called special letters.
    \item The subword of $w$ in between the first and second occurrence
    of a special letter encodes $C$. The subword of $w$ in between
    the second and third occurrence of a special letter encodes $C'$.
    \item There exists $C'' \le_{is} C'$ such that $C'' \act{t} C$. 
    Furthermore, suppose $v_0,\dots,v_j$ is the path along which $t$ was applied to in $C$.
    Then 
    \begin{itemize}
        \item If a maximal branch $B$ of $C$ contains a prefix of this path up till depth $1 \le \ell \le j$, then the letter $even_B$ in the encoding of $C$ in $w$ also contains the atomic propositions $t,\checkmark_1,\dots,\checkmark_\ell$.
        \item No other letter contains any of the atomic propositions $t,\checkmark_1,\dots,\checkmark_k$.
    \end{itemize}
\end{itemize}

By~\cite[Sections D.2 and D.3]{DeckerT15-fullversion}, we have that
\begin{proposition}
    For any transition $t \in \delta_u$, there is a formula $\Phi_{t-step}$ which is satisfied by a word $w$ iff $w$ encodes a tuple $(C,t,C')$.
\end{proposition}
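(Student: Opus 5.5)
The plan is to derive this proposition from the corresponding result for $k$-NCS in~\cite[Sections D.2 and D.3]{DeckerT15-fullversion}, in the same way that $\Phi_{Conf}$ was obtained above. Let $\mathcal{N}_{no-r}$ again be $\mathcal{N}$ with its reset transitions removed. The first step is to check that, for a fixed $t \in \delta_u$, the notion ``$w$ encodes $(C,t,C')$'' for $\mathcal{N}$ is literally the same as the notion for $\mathcal{N}_{no-r}$. Each clause of the definition refers only to three things:
\begin{itemize}
\item the special letters $\$$;
\item the encodings of $C$ and $C'$, which by the previous paragraph coincide for $\mathcal{N}$ and $\mathcal{N}_{no-r}$;
\item the existence of some $C'' \le_{is} C'$ with $C'' \act{t} C$, together with the placement of the propositions $t,\checkmark_1,\dots,\checkmark_k$ along the path $v_0,\dots,v_j$.
\end{itemize}
The step relation $\act{t}$ for an update transition $t$ is defined purely from $t$ itself, so it is identical in $\mathcal{N}$ and in $\mathcal{N}_{no-r}$. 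The induced-subgraph order $\le_{is}$ and the configuration spaces of the two systems also coincide. Hence the set of words encoding $(C,t,C')$ is the same for both systems.

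Given this, I would take $\Phi_{t-step}$ to be exactly the formula that Decker and Thoma construct for the transition $t$ of the $k$-NCS $\mathcal{N}_{no-r}$. Their correctness statement (``satisfied iff $w$ encodes such a tuple'') then transfers verbatim to $\mathcal{N}$.

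To make the argument self-contained, I would also recall the shape of their formula and check that it uses no feature tied to the absence of resets. It is a conjunction of the following parts:
\begin{itemize}
\item a formula forcing at least three special letters;
\item $\Phi_{Conf}$ relativised, via $U$, to the block before the second $\$$ and to the block before the third $\$$;
\item local constraints saying that the letters carrying $t,\checkmark_1,\dots,\checkmark_\ell$ form exactly one root path of $C$ with labels $(q_0,0),\dots,(q_j,j)$;
\item a correspondence formula.
\end{itemize}
The correspondence formula states that every branch of $C$ not touched by $t$ has a partner branch in $C'$, using freeze operators: $\downarrow^i$ stores the first $i$ data values at a letter of the $C$-block, and $F(\dots\uparrow^i\dots)$ looks for that prefix in the $C'$-block. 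The stored prefixes then witness the injection defining $C'' \le_{is} C'$. The marked path is handled in the same way, but with its pre-labels $(p_0,0),\dots,(p_i,i)$ required in $C'$.

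The main obstacle is precisely this correspondence formula. One must ensure that the prefix identifications realised by equal data values yield an injection that preserves edges and labels, which is what the induced-subgraph condition needs. One must also treat incrementing and decrementing transitions correctly, where $C$ has respectively more or fewer levels than $C'$ along the path. Rather than re-derive this, I would rely on the cited construction. The only genuinely new point to verify is that restricting to $t \in \delta_u$ makes the semantics match. Reset transitions are deliberately excluded from this proposition; they will later need the extra $G$-based constraint forbidding surviving $p$-children.
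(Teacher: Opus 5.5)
Your proposal is correct and follows the paper's route: the paper also obtains $\Phi_{t-step}$ for $t \in \delta_u$ by citing \cite[Sections D.2 and D.3]{DeckerT15-fullversion}. Your explicit check that the step relation of an update transition $t$, the order $\le_{is}$, and the configuration encodings are the same for $\mathcal{N}$ and its reset-free version $\mathcal{N}_{no-r}$ is exactly the transfer argument the paper leaves implicit here, as it did for $\Phi_{Conf}$.
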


Now, we show that such a formula can also be obtained for any reset transition $t = (p_0,\dots,p_i) \xdashrightarrow{r,p} (q_0,\dots,q_i)$. Corresponding to the reset transition $t$, we define a non-reset transition $t_{rename} = (p_0,\dots,p_i) \xdashrightarrow{u} (q_0,\dots,q_i)$. Note that $t_{rename}$ need not be in $\mathcal{N}$.
However, $t_{rename}$ satisfies the following useful property.

\begin{proposition}\label{prop:t-rename}
    For any $C, C'$, there exists $C''$ such that $C'' \le_{is} C'$ and $C'' \act{t} C$
    iff there exists $D''$ such that $D'' \le_{is} C'$ and $D'' \act{t_{rename}} C$
    and along the path $v_0,\dots,v_i$ in $C$ where $t_{rename}$ was applied to, there is no child of $v_i$ labelled by $p$.
\end{proposition}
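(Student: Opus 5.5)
We prove the two directions of Proposition~\ref{prop:t-rename} separately, using only the definitions of the step relations for reset and renaming transitions and of $\le_{is}$ (deletion of subtrees).

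For the forward direction, suppose $C'' \le_{is} C'$ and $C'' \act{t} C$ along a path $v_0,\dots,v_i$. Let $D''$ be $C''$ with every subtree rooted at a child of $v_i$ labelled $p$ removed. Then $D'' \le_{is} C'' \le_{is} C'$, and transitivity of $\le_{is}$ gives $D'' \le_{is} C'$. The path $v_0,\dots,v_i$ is untouched, since the reset only removes children of $v_i$ and $i<k$. So $t_{rename}$ can be fired along the same path in $D''$. It relabels $v_0,\dots,v_i$ to $q_0,\dots,q_i$ and produces exactly $C$, because the reset step would have deleted precisely the subtrees we already removed. In the resulting configuration, $v_i$ has no child labelled $p$.

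One subtlety needs care. If $q_i = p$ for a child label, nothing changes, because the labels of children are untouched. If some $v_{i+1}$-level node is relabelled to $p$, that cannot happen either, because $t_{rename}$ only relabels $v_0,\dots,v_i$. So the children of $v_i$ in $C$ are exactly the children of $v_i$ in $D''$. Hence the side condition "no child of $v_i$ labelled $p$ in $C$" holds.

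For the backward direction, suppose $D'' \le_{is} C'$ and $D'' \act{t_{rename}} C$ along $v_0,\dots,v_i$, and suppose $v_i$ has no child labelled $p$ in $C$. Since $t_{rename}$ does not touch children of $v_i$, $v_i$ also has no child labelled $p$ in $D''$. Firing the reset $t$ from $D''$ along the same path therefore removes nothing beyond relabelling, and yields exactly $C$. Taking $C'' := D''$ finishes this direction.

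The only point needing attention is keeping track of which nodes carry which labels before and after the step. In particular, we must check that the children of $v_i$ are identical in the pre- and post-configurations of both $t$ and $t_{rename}$ (up to the reset's deletions). This follows directly from the semantics, since both transitions only relabel the path itself. No other obstacle is expected.
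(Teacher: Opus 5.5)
Your proof is correct and matches the paper's argument. In the forward direction you delete the $p$-labelled children of $v_i$ from $C''$ and fire $t_{rename}$ along the same path, and in the backward direction you take $C'' := D''$ and note that the reset removes nothing, since the children of $v_i$ keep their labels under $t_{rename}$; the sentence about ``$q_i = p$ for a child label'' is muddled and can be dropped, because $v_i$ is never one of its own children.
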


\begin{proof}
    Suppose there exists $C''$ such that $C'' \le_{is} C'$ and $C'' \act{t} C$.
    Let $v_0,v_1,\dots,v_i$ be the path along which $t$ is fired in $C''$.
    We construct $D''$ by taking $C''$ and removing all the sub-trees rooted at the children of $v_i$ labelled by $p$. Then from $D''$ we fire $t_{rename}$ along the
    path $v_0,v_1,\dots,v_i$. By construction we have $D'' \act{t_{rename}} C$
    and along the path $v_0,\dots,v_i$ in $C$ where $t_{rename}$ was applied to, there is no child of $v_i$ labelled by $p$.

    Suppose there exists $D'' \le_{is} C'$ and $D'' \act{t_{rename}} C$
    and along the path $v_0,\dots,v_i$ in $C$ where $t_{rename}$ was applied to, there is no child of $v_i$ labelled by $p$. We simply take $C'' = D''$ and apply $t$ along the
    same path. In this way, it is clear that $C'' \act{t} C$ and so we are done.
\end{proof}

Motivated by this proposition, we make the following definition: For any reset transition $t$, we say that a word $w$ encodes a tuple $(C,t,C')$ if it satisfies the following conditions:
\begin{itemize}
    \item $w$ contains exactly $3$ occurrences of a letter of the form $(\{\$\},d_1,\dots,d_k)$. Such letters are called special letters.
    \item The subword of $w$ in between the first and second occurrence
    of a special letter encodes $C$. The subword of $w$ in between
    the second and third occurrence of a special letter encodes $C'$.
    \item There exists $C'' \le_{is} C'$ such that $C'' \act{t_{rename}} C$. 
    Furthermore, suppose $v_0,\dots,v_i$ is the path along which $t_{rename}$ was applied to in $C$.
    Then 
    \begin{itemize}
        \item If a maximal branch $B$ of $C$ contains a prefix of this path up till depth $1 \le \ell \le i$, then the letter $even_B$ in the encoding of $C$ in $w$ contains the atomic propositions $t_{rename},\checkmark_1,\dots,\checkmark_\ell$.
        \item If a maximal branch $B$ of $C$ contains a prefix of this path up till depth $i$, then the letter $even_B$ in the encoding of $C$ in $w$ does not contain the atomic proposition $(p,i+1)$.
        \item No other letter contains any of the atomic propositions $t_{rename},\checkmark_1,\dots,\checkmark_k$.
    \end{itemize}
\end{itemize}

From the above definition, it is clear that the following property called \emph{Property A}
holds:
\begin{quote}
    Suppose $t = (p_0,\dots,p_i) \xdashrightarrow{r,p} (q_0,\dots,q_i)$ is a reset transition. A word $w$ encodes a tuple $(C,t,C')$ iff $w$ encodes the tuple $(C,t_{rename},C')$ and for every letter $even_B$ appearing in the encoding of $C$ in $w$ containing the 
    atomic propositions $t_{rename},\checkmark_1,\dots,\checkmark_i$,
    $even_B$ must not have $(p,i+1)$ as a proposition.
\end{quote}

Therefore, using Property A and Proposition~\ref{prop:t-rename} we get,
\begin{proposition}
    For ant reset transition $t$, a word $w$ encodes a tuple $(C,t,C')$ iff there exists
    $C''$ such that $C'' \le_{is} C'$ and $C'' \act{t} C$.
\end{proposition}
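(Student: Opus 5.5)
We must prove the last statement: for any reset transition $t$, a word $w$ encodes $(C,t,C')$ iff there exists $C''\le_{is} C'$ with $C''\act{t} C$. Read literally, the right-hand side omits the structural conditions on $w$ (special letters, sub-words encoding $C$ and $C'$, the markings). As with $\Phi_{t\text{-step}}$, we therefore take the claim to be about words $w$ that already satisfy these structural conditions for $C$ and $C'$. The plan is to chain Property~A with Proposition~\ref{prop:t-rename}, one direction at a time.

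For the forward direction, assume $w$ encodes $(C,t,C')$.
\begin{enumerate}
\item By the definition, there is $D''\le_{is} C'$ with $D''\act{t_{rename}} C$ along a path $v_0,\dots,v_i$ in $C$. Every maximal branch $B$ of $C$ passing through $v_i$ has $even_B$ marked with $t_{rename},\checkmark_1,\dots,\checkmark_i$ and not containing $(p,i+1)$.
\item Every child $u$ of $v_i$ lies on some maximal branch $B$ through $v_i$. The letter $even_B$ contains $(\mathrm{label}(u),i+1)$.
\item Hence no child of $v_i$ is labelled $p$, since otherwise some marked $even_B$ would contain $(p,i+1)$.
\item The right-to-left direction of Proposition~\ref{prop:t-rename} then yields $C''\le_{is}C'$ with $C''\act{t} C$.
\end{enumerate}

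For the converse, assume $C''\le_{is}C'$ with $C''\act{t}C$.
\begin{enumerate}
\item The left-to-right direction of Proposition~\ref{prop:t-rename} gives $D''\le_{is}C'$ with $D''\act{t_{rename}}C$ along a path $v_0,\dots,v_i$ such that $v_i$ has no child labelled $p$ in $C$.
\item Given a word $w$ with the structural conditions, mark exactly the letters $even_B$ with $B$ through the prefix of this path to depth $\ell$ by $t_{rename},\checkmark_1,\dots,\checkmark_\ell$. This makes $w$ encode $(C,t_{rename},C')$.
\item Each marked $even_B$ with $\ell=i$ either has $(q,i+1)$ for some child labelled $q\ne p$, or has $(-,i+1)$ if $v_i$ is a leaf. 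In both cases it lacks $(p,i+1)$, by the uniqueness clause of the configuration encoding.
\item Property~A then shows that $w$ encodes $(C,t,C')$.
\end{enumerate}

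The main obstacle is the bookkeeping linking "the even letter of a branch through $v_i$ lacks $(p,i+1)$" to "$v_i$ has no $p$-child". It relies on the clause that each letter carries exactly the propositions $(q_j,j)$ of its branch and no others, and on the fact that every child of $v_i$ extends to a maximal branch. This must be stated carefully so that the equivalence is exact in both directions. The reverse implication must also respect the path guaranteed by Proposition~\ref{prop:t-rename}, not an arbitrary one. To close the argument, I would note that the conditions in Property~A are expressible by conjoining $\Phi_{t_{rename}\text{-step}}$ with a $G$-formula forbidding $(p,i+1)$ on letters marked $t_{rename}\wedge\checkmark_i$. This yields the formula $\Phi_{t\text{-step}}$ for reset transitions as well.
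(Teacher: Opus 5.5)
Your proof is correct and follows the paper's route: the paper gets the proposition in one line from Property A and Proposition~\ref{prop:t-rename}, and you make explicit the step it leaves implicit, namely that the even letters of the branches through $v_i$ avoid $(p,i+1)$ exactly when $v_i$ has no $p$-labelled child in $C$. Two small tidy-ups: in the last step of the converse, check the definition directly (your step-3 reasoning applies to every branch through $v_i$, marked or not) rather than citing Property A, because for a root reset ($i=0$) no letter carries $t_{rename}$ and Property A's side condition becomes vacuous; and since the markings are tied to the path supplied by Proposition~\ref{prop:t-rename}, the converse is really existential in $w$, as your re-marking step already reflects, so the markings should not be counted among the conditions fixed in advance.
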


Furthermore, because of Property A we can easily show that
\begin{proposition}
    For any reset transition $t$, there is a formula $\Phi_{t-step}$ which is satisfied by a word $w$ iff $w$ encodes a tuple $(C,t,C')$.
\end{proposition}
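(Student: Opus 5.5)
Since $\Phi_{t_{rename}\text{-}step}$ already exists by the proposition for non-reset transitions (the construction of~\cite[Sections D.2 and D.3]{DeckerT15-fullversion} never uses that the transition belongs to $\mathcal{N}$, only its shape), the plan is to set
$$\Phi_{t\text{-}step} := \Phi_{t_{rename}\text{-}step} \land \Psi_t,$$
where $\Psi_t$ expresses the extra condition of Property~A. Property~A then gives correctness at once: $w$ encodes $(C,t,C')$ iff $w$ encodes $(C,t_{rename},C')$ and every letter $even_B$ of the encoding of $C$ carrying $t_{rename},\checkmark_1,\dots,\checkmark_i$ lacks $(p,i+1)$.

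\textbf{Writing $\Psi_t$.} The letters that must be constrained are exactly the $even$ letters between the first and second special letter that carry $t_{rename}\land\checkmark_1\land\cdots\land\checkmark_i$. By the encoding of a lossy step, no other letter carries $t_{rename}$ together with checkmarks. We can therefore drop the positional restriction and take
$$\Psi_t := G\big((even \land t_{rename} \land \checkmark_1 \land \cdots \land \checkmark_i) \rightarrow \lnot (p,i+1)\big).$$
Here $\rightarrow$ abbreviates the usual boolean combination. This is exactly the use of the global operator $G$ mentioned in Subsection~\ref{subsec:freeze-LTL}. If one prefers not to depend on the no-other-letter clause, the antecedent can be guarded by ``before the second special letter'', which is expressible with $U$ over the proposition $\$$.

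\textbf{Where care is needed.} The main point to check is the case $i=k$. In that case $(p,i+1)$ is not an atomic proposition, but a reset at depth $k$ is excluded by definition, since reset transitions have $i<k$. For $i<k$, the rest is a direct verification: a word satisfies $\Psi_t$ iff no branch through the end node $v_i$ of the applied path continues with a child labelled $p$. By the encoding of configurations, this holds iff $v_i$ has no $p$-child in $C$, which is what Proposition~\ref{prop:t-rename} requires.
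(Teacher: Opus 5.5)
Your construction is the paper's own. You conjoin $\Phi_{t_{rename}\text{-}step}$ with a $G$-constraint forbidding $(p,i+1)$ on the marked $even$ letters and appeal to Property~A. The paper writes $t$ in the antecedent where the encoding actually places $t_{rename}$, so your version is the consistent one.

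However, the argument fails in one case. That case is not $i=k$, which cannot occur, but $i=0$, a reset at the root. In the encoding of a lossy step, the markers $t_{rename},\checkmark_1,\dots,\checkmark_\ell$ are put on $even_B$ only for $1\le\ell\le i$, and no other letter may carry them. For $i=0$ this range is empty, so no letter carries $t_{rename}$. Your antecedent $even\land t_{rename}$ is then never satisfied, and $\Psi_t$ is trivially true. Yet every branch contains the depth-$0$ prefix of the path, so the definition of encoding $(C,t,C')$ requires that no $even$ letter of the encoding of $C$ contain $(p,1)$. Hence for $i=0$ your formula accepts encodings of $(C,t_{rename},C')$ in which the root of $C$ still has $p$-children. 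Your closing claim that $\Psi_t$ holds iff $v_i$ has no $p$-child is therefore false there. Equivalently, Property~A as stated fails for $i=0$, and the paper's one-line proof has the same hole. This case is not marginal: root resets are exactly what replace the zero-tests of the Minsky machine in the lower-bound construction, so the hardness reduction depends on it.

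The fix follows your own guarding remark, except that the positional guard must replace the marker $t_{rename}$ rather than be added to it. Evaluated at the first special letter, take $\Psi_t := X\big(((even\land\checkmark_1\land\cdots\land\checkmark_i)\rightarrow\lnot(p,i+1))\;U\;\$\big)$. This constrains exactly the $even$ letters of the encoding of $C$, where the only checkmarks present are those of the current step. For $i\ge 1$ it selects precisely the branches through $v_i$. For $i=0$ the empty conjunction makes it forbid $(p,1)$ on every branch, i.e.\ the root has no $p$-child. Localising also stops the constraint from ranging over later configuration segments when $\Phi_{t\text{-}step}$ is used inside $\Phi_{seq}$.
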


\begin{proof}
    Let $t = (p_0,\dots,p_i) \xdashrightarrow{r, p} (q_0,\dots,q_i)$.
    By Property A, $\Phi_{t-step}$ can simply be defined as the conjunction of $\Phi_{t_{rename}-step}$
    and $G((even \land t \land \bigwedge_{1 \le \ell \le i} \checkmark_\ell  )\implies \lnot (p,i+1))$.
\end{proof}

\subsubsection{Encoding lossy runs} We can now use the encoding a lossy step to encode lossy runs. Indeed, we say that a word $w$ encodes a sequence $C_n, t_n, C_{n-1}, t_{n-1}, C_{n-2}, \dots, t_1, C_0$ if
\begin{itemize}
    \item $w$ has at least $n+2$ special letters.
    \item The word between any $(i+1)^{th}$ special letter and $(i+3)^{th}$ special 
    letter encodes the tuple $(C_{n-i},t_{n-i},C_{n-i-1})$.
\end{itemize}

Note that a word $w$ can encode a sequence $C_n, t_n, C_{n-1}, \dots, t_0, C_0$ iff
there is a lossy run of $\mathcal{N}$ of the form $C_n \ge_{is} C_n' \act{t_n} C_{n-1} \ge_{is} C_{n-1}' \act{t_{n-1}} \dots \act{t_0} C_0$. 

It is easy to see that we can build a formula that recognizes only words $w$ encoding such sequences. Indeed, we just need to specify that any subword of $w$ starting at $\$$ must encode a lossy step, except for the last two $\$$, which can be done using the formula

$$\Phi_{seq} := G(\$ \implies \bigvee_{r \in \delta} \Phi_{r-step} \lor \lnot X F \$ \lor \lnot (X(F(\$ \land \lnot XF\$)))$$

\subsubsection{Reducing simple coverability of $k$-NRCS to satisfiability}

We now reduce simple coverability of $k$-NRCS
to satisfiability of freeze LTL formulas. Recall that in the simple coverability problem,
we are given a $k$-NRCS $\net$ and two states $q_{init}$ and $q_f$ and we want
to checke if starting from a configuration with a single node labelled by $q_{init}$
we can reach a configuration covering a single node labelled by $q_f$. As shown in Subsection~\ref{subsec:simple-coverability}, this is $\fastf{\Omega_k}$-hard.

Note that the answer to the simple coverability problem is yes iff there is a data word
$w$ encoding a sequence $C_n, t_n, C_{n-1} , \cdots, t_0, C_0$ with $C_n$ being a configuration where the root is labelled by $q_f$ and $C_0$ being a configuration with a single node labelled by $q_{init}$. Therefore, we just need to take the formula $\Phi_{seq}$
from the previous part and attach to it the following two constraints:
\begin{itemize}
    \item The second letter of the word encoding the sequence must represent a branch of a configuration whose root node is labelled by $q_f$ (Remember that the first letter of the word must be a special letter with $\$$)
    $$X((q_f,0))$$
    \item The penultimate letter of the word encoding the sequence must represent the initial configuration (Remember that the last letter of the word must be a special letter with $\$$)
    $$(X(\$) \land \lnot XF(\$) )\implies ((q_{init,0}) \land \bigwedge_{q \in Q, 1 \le \ell \le k} \lnot (q,\ell))$$
\end{itemize}

By the property of the formula $\Phi_{seq}$, it follows that $\Phi_{seq}$ along
with these constraints are satisfiable by a word iff the answer to the simple coverability problem is yes. This completes the required reduction
and proves the $\fastf{\Omega_k}$-hardness of satisfiability of freeze LTL with $k$-ordered attributes.

\end{document}